\definecolor{bleu_sombre}{rgb}{0,0,0.6} 
\definecolor{bs}{rgb}{0,0,0.6}  
\definecolor{rouge_sombre}{rgb}{0.8,0,0}
\definecolor{rs}{rgb}{0.8,0,0}
\definecolor{vert_sombre}{rgb}{0,0.6,0}
\definecolor{vs}{rgb}{0,0.6,0}
\theoremstyle{plain} 
\newtheorem{theorem}{Theorem}[section]
\newtheorem{lemma}[theorem]{Lemma}
\newtheorem{coro}[theorem]{Corollary}
\newtheorem{proposition}[theorem]{Proposition}
\newtheorem*{Assumption}{Condition}
\theoremstyle{definition}
\newtheorem{remark}[theorem]{Remark}
\newtheorem{definition}[theorem]{Definition}
\newcommand{\iu}{{\rm i}}
\newcommand{\Psitp}{ \Psi}
\newcommand{\DETAILS}[1]{}
\newcommand{\supp}{\operatorname{supp}}
\newcommand{\Ran}{\operatorname{Ran}}
\renewcommand{\Re}{\mathrm{Re}}
\newcommand{\C}{\mathbb{C}}
\newcommand{\R}{\mathbb{R}}
\newcommand{\N}{\mathbb{N}}
\newcommand{\cB}{{\mathcal{B}}}
\newcommand{\cH}{{\mathcal{H}}}
\newcommand{\cM}{{\mathcal{M}}}
\newcommand{\Q}{{Q}}
\renewcommand{\d}{\mathrm{d}}
\newcommand{\SO}{\mathrm{SO}}
\newcommand{\cE}{\mathcal{E}}
\newcommand{\cI}{\mathcal{I}}
\newcommand{\cW}{\mathcal{W}}
\newcommand{\cZ}{\mathcal{Z}}
\newcommand{\ii}{\infty}
\newcommand{\eps}{{\varepsilon}}
\newcommand{\U}{{U}}
\newcommand{\V}{{V}}
\newcommand{\h}{\theta}
\newcommand{\lk}{L_{\mathrm{cut}}}
\newcommand{\customlabel}[2]{%
   \protected@write \@auxout {}{\string \newlabel {#1}{{#2}{\thepage}{#2}{#1}{}} }%
   \hypertarget{#1}{#2} %
}
\renewcommand{\leq}{\leqslant}	
\renewcommand{\geq}{\geqslant}
\newcommand{\cF}{\mathcal{F}}
\newcommand{\cG}{{\mathcal{G}}}
\newcommand\pscal[1]{{\ensuremath{\left\langle #1 \right\rangle}}}
\def\R{{\mathbb R}}
\def\N{{\mathbb N}}
\def\Hilb{{\mathcal H}}
\def\Dom{{\mathcal D}}
\def\({\left(}
\def\){\right)}
\def\<{\left\langle}
\def\>{\right\rangle}
\def\eps{\varepsilon}
\numberwithin{equation}{section}
\newcommand{\dd}{\mathrm{d}}
\newcommand{\ut}{u_{\tau}}
\newcommand{\mt}{m_{\tau}}
\newcommand{\elt}{l_{\tau}}
\newcommand{\nt}{n_{\tau}}
\newcommand{\uti}{u_{\tau}^{-1}}
\newcommand{\dist}{\mathrm{dist}}
\newcommand{\tb}{\tau_0}
\newcommand{\taut}{{\tau_t}}
\newcommand{\Psitb}{\Psi_{\tb}}
\newcommand{\be}{\begin{equation}}
\newcommand{\ee}{\end{equation}}
\newcommand{\bea}{\begin{eqnarray}}
\newcommand{\eea}{\end{eqnarray}}
\newcommand{\bee}{\begin{eqnarray*}}
\newcommand{\eee}{\end{eqnarray*}}
\def\eps{\varepsilon}
\title{On Boundedness of  Isomerization paths for Non- and Semirelativistic Molecules}
\author{Ioannis Anapolitanos}
\author{Marco Olivieri}
\author{Sylvain Zalczer}
\begin{document}








\maketitle
\begin{abstract}
	This article focuses on isomerizations of molecules, \textit{i.e.} chemical reactions during which a molecule is transformed into another one with atoms in a different spatial configuration. We consider the special case in which the system breaks into two submolecules whose internal geometry is solid during the whole procedure. We prove, under some conditions, that the distance between the two submolecules stays bounded during the  reaction.  To this end, we provide an asymptotic expansion of the interaction energy between two molecules, including multipolar interactions and the van der Waals attraction. In addition to this static result, we proceed to a quasistatic analysis to investigate the variation of the energy when the nuclei move.  This paper extends \cite{al} in two directions. The first one is that we relax  the assumption that the ground state eigenspaces of the submolecules have  to fulfill. The second one is that we allow semirelativistic kinetic energy as well.
\end{abstract}
\tableofcontents

\bigskip

An isomerization is a chemical reaction with the property that the reactant is a single molecule with the same atoms as the product, but in a different spatial configuration. One simple example is the isomerization HCN $\rightarrow$ CNH. Isomerizations play, especially in organic and organometallic chemistry,  an important role,  see \emph{e.g.} \cite{isom-organo}. The question of how much energy is needed for an isomerization to take place is very fundamental.  The study of it occupies a large amount of the numerical computations in quantum chemistry, see for example~\cite{GSK-07} and references therein.

 As explained in~\cite[Chapter~17]{jensen}, real world chemical systems, involving a large number of particles, are very complicated and some approximations have to be made. We use here the Born-Oppenheimer approximation: we assume that electronic and nuclear motions occur at different time scales, so that one can first solve the Schrödinger equation for electrons with fixed nuclei, and then study the dynamics of the nuclei with a potential corresponding to the electronic energy. In this second step, we treat nuclei as classical particles. The Born-Oppenheimer approximation relies on the fact that nuclei are much heavier than electrons and it is customary 
 to use it for slow chemical reactions.  The system is thus considered in a \emph{quasistatic} way: the electrons are assumed to be in equilibrium at any moment of the motion of the nuclei.
   The physical setting that we consider is for zero temperature,  but it should be reasonable for small temperatures.  A discussion about the Born-Oppenheimer approximation and its limits, from a chemistry point of view, can be found in~\cite[Section~3.1]{jensen}. For a discussion from a mathematical point of view, we refer  to \cite{teufel}.

   Another usual approximation is to neglect the relativistic effects. In the non-relativistic setting, the kinetic energy of a particle with mass $m$ is given by the differential operator  $-(\hbar^2/2m)\Delta$, where $\Delta$ denotes the Laplace operator. 
 Nevertheless, the relativistic effects can be important, becoming stronger and stronger when the atomic numbers of involved atoms increase. Thus, if relativistic correction terms  are needed only for very precise calculations for light atoms such as hydrogen, for atoms as heavy as silver or copper, relativistic effects cannot be neglected. For elements of the sixth row of the periodic table such as gold, lead or mercury, relativistic effects alter the chemical and physical behavior qualitatively, see \cite{Aut}. Examples of relativistic effects are on molecular geometry, especially the length of covalent bonds, but they also give better values for dissociation energies of bonds and ionization potentials and explain common-life phenomena such as the yellow color of gold or the low melting point of mercury. Surveys about relativistic effects in structural chemistry can be found in \cite{Pyk} and \cite{Aut}. The specific case of isomerizations has been studied for molecules containing heavy atoms such as platinum~\cite{isom-platin}, gold~\cite{isom-gold} or silver~\cite{isom-silver} and more generally for organometallic molecules~\cite{isom-organo}.

In order to be able to easily use variational formulations for relativistic systems, we will use the semirelativistic Hamiltonian, where the kinetic energy is obtained by applying to the classical relativistic kinetic energy $E=\sqrt{c^2 
	p^2+m^2c^4}-mc^2$ the standard quantization $p\rightarrow-\iu\hbar\nabla$, which gives the  pseudo-differential operator
\[T=\sqrt{-c^2\hbar^2\Delta+m^2c^4}-mc^2.\]

          As it is well-known, the operator $T$ is non local, which makes it more difficult to deal with than the standard Laplacian. Moreover, it is not possible to include external electromagnetic fields in a relativistically invariant way since, in the associated evolution equation, space and time derivatives are not symmetric. That is why Dirac introduced the operator bearing his name to describe relativistic quantum particles. Chemists usually use it  in their computational study of relativistic effects, see for example~\cite{Pyk} and~\cite{Aut}. Nevertheless, it is well known that the Dirac operator is not bounded from below, making it impossible to use a simple variational formulation, see for example~\cite{thaller}. That is why we consider here the semirelativistic kinetic energy described by the operator $T$.
Hamiltonians associated with this kinetic energy have been studied as early as in the 70s, for example by Herbst in the case of an hydrogenoid ion~\cite{Herbst}. Lieb and Yau in \cite{LY} and Fefferman and de la Llave \cite{FeLla} discussed its stability in many-particle systems.

  The general mathematical framework to study isomerizations is similar in the non-relativistic and in the semirelativistic settings. Two isomers of a molecule correspond to two stable configurations of the entire system. In these configurations, the ground state energy of the system has local minima with respect to the positions of the nuclei. In the Born-Oppenheimer approximation, the isomerization reaction is described by a path of the nuclei connecting these two local minima. The difference between the maximum and initial ground state energies along one path corresponds to the amount of energy needed to bring the system from one state to the other. The \emph{activation energy} is the lowest energy needed for the reaction to happen, and it is obtained by minimizing the maximal energy over all possible paths linking the two stable configurations.  Figure~\ref{fig:optpath} provides a graphical representation  of a hypothetical optimal path.
  \begin{figure}[t]
  \begin{subfigure}{0.45\textwidth}
    \begin{tikzpicture}
     \node[draw] at (-0.3,0) {H};
    \draw (0, 0) -- (1.4, 0);
     \node[draw] at (1.65,0) {C};
    \draw (1.9, 0) -- (3.1, 0);
    \node[draw] at (3.35,0) {N};
        \draw (3.6, 0) -- (5, 0);
    \draw[green] (0, 0) .. controls (2.5,1) .. (5, 0);\draw[red] (0, 0) .. controls (1.75, -2) and (2, -0.5) .. (5, 0);
    \draw[blue] (0, 0) .. controls (1, 1.2) and (4, 1.3) .. (5, 0);
    \end{tikzpicture}
    \subcaption{Hypothetical reaction paths for the isomerization  HCN $\rightarrow$ CNH. The hydrogen atom can take different paths to go from one side to the other.}
  \end{subfigure}
     \begin{subfigure}{0.45\textwidth}
      \includegraphics[width=\textwidth]{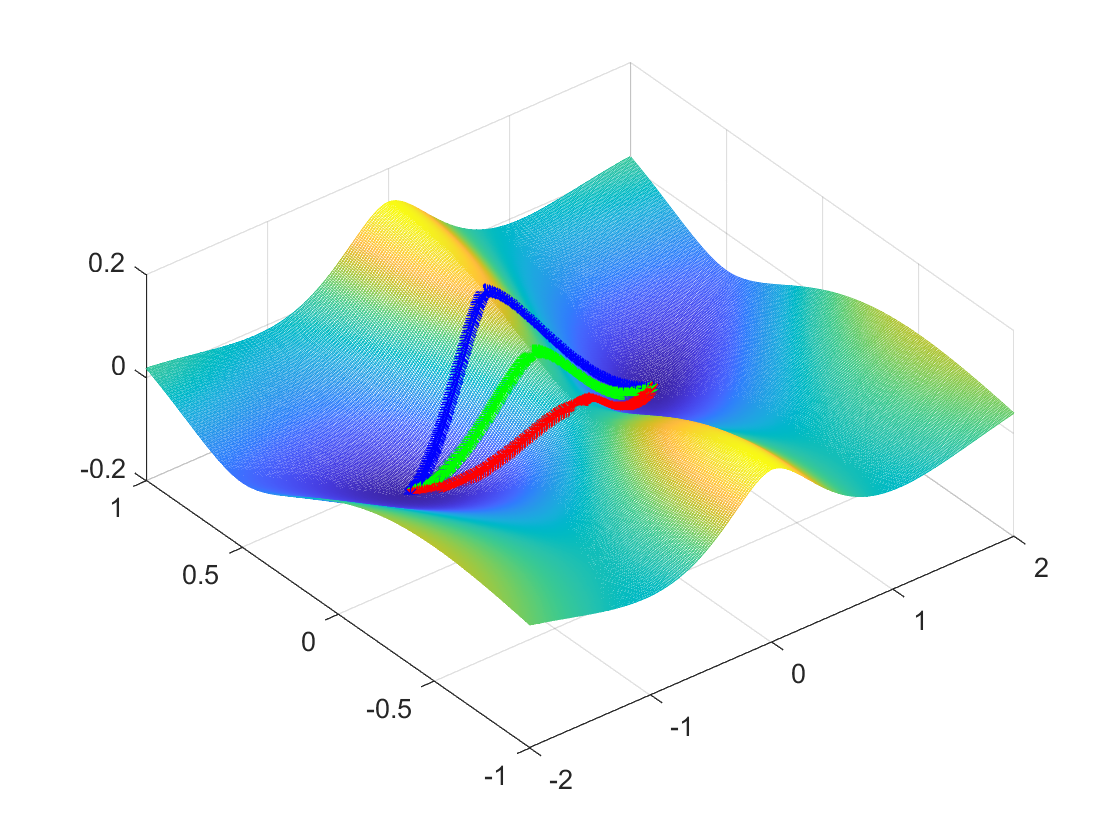}
	
	\subcaption{Here is a simplified hypothetical representation of the energy as a function of the relative position of the atoms. We see that the red path is optimal. We want to prove the existence of such an optimal path.}
	\end{subfigure}
	\caption{Existence of an optimal path.\label{fig:optpath}}\end{figure}
 While chemists work hard to compute the activation energy, from a mathematics point of view even its existence, namely the existence of an optimal path, is an open problem. 

From the point of view of critical point theory, an isomerization corresponds to a \emph{mountain pass problem}~\cite{AmbRab-73,Struwe,Jabri-03}, as Figure~\ref{fig:optpath} illustrates. The difficulties in this situation are, however, very different from the ones present in most cases of the mathematical literature where mountain pass problems are considered. Usually, the mountain pass theorem is used to prove existence of a critical point of a functional, implying existence of a solution to a PDE. In this situation, the configuration space is infinite dimensional. In our case, the mountain pass problem has, in its simpler formulation, a finite dimensional configuration space. However, as we will explain, proving that  an optimizing sequence  of paths stays on a bounded set is in our situation often difficult, and it is the main issue that we deal with in this paper. This boundedness would physically mean that isomerizations take place without breaking the whole system infinitely apart. This was conjectured and proven in some specific cases in  ~\cite{Lewin-04b,Lewin-PhD,Lewin-06}.  Another particularity of the mountain pass problem that we consider is that it would be  important to have an optimal path, as we discussed above.
The mountain pass theorem, however, does not imply the existence of such a path, even if the configuration space is finite dimensional.

One of the forces that play an important role in our analysis is the van der Waals force. It is a purely quantum and universal force that occurs when two molecules are placed at a distance $L$ from each other, where $L$ is not too small.
It is originated by an attractive potential of the order $1/L^6$ and it is thus
 an attractive force of the order of $1/L^7$. It is induced as a result of quantum correlations between the two molecules.
  A first proof of existence of an asymptotic expansion of the interaction energy of two neutral atoms in powers of the inverse distance between their nuclei  was given in \cite{MorSim-80}, assuming non-degeneracy of the ground state eigenspaces of the atoms. 
  An upper bound of the ground state energy of the system  was proved in ~\cite{LieThi-86}
  using an intricate test function. It implies  existence of an attractive potential that is, in the order of $L$, at least as strong as van der Waals potential for some orientations.
  The exact expression  in the leading order
  has  been more recently derived for individual non-relativistic atoms  in~\cite{AnaSig-17}, assuming that the ground state energies of the atoms are irreducible. Irreducible roughly means that the ground states of the atoms are unique up to spin permutations, see \emph{e.g.} Definition 1 and the related discussion in \cite{al}.
     In \cite{Anapolitanos-16}, the entire remainder was analyzed in the spinless case in order to extract information  on its dependence on the number of atoms. The long-range asymptotics was studied coupled with the number of atoms going to infinity. Progress in this direction would be important for the study of gases.
 For the case of two atoms, the results of \cite{AnaSig-17} were improved in \cite{AnaLewRot-19} in the following way: the monotonicity and derivatives of the interaction energy were studied
 and the irreducibility assumption was dropped for one of the two atoms.  In \cite{anbhundert},  the  interaction energy of a molecule with a half infinite plate was analyzed.
  In \cite{bhv}, the van der Waals force was derived for  atoms with semirelativistic kinetic energy, and the results also included an expansion with higher order terms  including \textit{e.g.} the Axilrod-Teller-Muto
three-body potential for  atoms, see also \cite{thesisMH}. In \cite{CaSc} the leading coefficient of the van der Waals interaction of two hydrogen atoms was studied and numerically approximated with rigorous proofs on the error estimates. In \cite{CaCoSc} the analysis was extended for coefficients of higher order terms. In \cite{CaCoSc}, it was moreover shown that  the scattering states of the hydrogen atom, namely the states associated with the continuous spectrum of the Hamiltonian, have a major contribution to the leading coefficient of the van der Waals expansion. Note that the results described above are \emph{static}, \emph{i.e.} they do not consider any motion of the nuclei.

In \cite{al}, the upper bound of \cite{LieThi-86} was improved and the results of \cite{AnaSig-17} were extended to the case of molecules, still in a non-relativistic model.
 The existence of the van der Waals force implies that \emph{all} neutral molecules can bind in the Born-Oppenheimer approximation, that is, the energy always has a global minimum with respect to the nuclear positions. This is because the energy of a molecule which splits into pieces is necessarily higher than the lowest possible energy, due to the van der Waals force. This was already observed in \cite{LieThi-86}.

However, studying isomerizations has additional difficulties, even in the simplest case that the system splits into two  solid submolecules, which is the case that we consider.
 This is because if two molecules are far apart, there usually exist some orientations for which they repel, for example if they have dipole moments oriented in opposite directions.  Here, the  information that the static results described above provide is not sufficient to conclude and we need to conduct a quasistatic analysis, in which nuclei are allowed to move. This was  already one of the main difficulties in the study of isomerizations in \cite{Lewin-04b,Lewin-06,al}. 

In~\cite{Lewin-04b},  Lewin proved  boundedness of all reaction paths, for a system containing two rigid molecules, each having a non-degenerate ground state with a non-vanishing dipole moment. A key point of the analysis is that the critical points of the dipole-dipole interaction with positive energy all have a Morse index greater than or equal to 2, \textit{i.e.} at least two directions of decrease.  In~\cite{Lewin-06}, the completely different situation of a molecule with only one moving atom (like for the reaction HCN $\rightarrow$ CNH) was treated. If the single atom escapes to infinity, then only the van der Waals force pertains since, by symmetry, the atomic ground state has no multipole moment in average. 
All other cases were left open in~\cite{Lewin-04b,Lewin-06}. 

In \cite{al}, the results  of ~\cite{Lewin-04b,Lewin-06} were extended in many directions, and important progress was made for the case of a molecule composed of two rigid submolecules. The most important results were in the case when the two molecules have low order multipole moments, that is, such that the leading energy of the system is of the order $1/L^p$ with $p<6$. In this case, the van der Waals force does not dominate in the long range asymptotics and the molecules may repel each other depending on their orientations. That was the main difficulty in \cite{al}.
 On the other hand, the authors proved that if sufficiently many of the multipole moments of the two molecules vanish, then the van der Waals force dominates in the long range asymptotics, as physically expected. In that case, the boundedness of reaction paths followed quite immediately from the dominance of the van der Waals force.

Here, we extend the results of \cite{al} in two directions. The first one is that we relax the assumption of  irreducibility of the ground state eigenspaces of the molecules. The new assumption is more natural and it states that the multipole moments of the molecules are independent of the choice of a normalized ground state. The second one is that we  consider molecules with semirelativistic kinetic energy  as well. As we explained above, isomerizations can involve molecules with heavy atoms, for which the relativistic effects of the electrons become important. Our approach  requires to extend results of \cite{al} and \cite{bhv} to the case of semirelativistic molecules on the van der Waals force, and relax the irreducibility assumption made in \cite{al} and \cite{bhv}. This direction is of interest on its own. Further difficulties in extending the results of \cite{al}
 arise on computing derivatives with respect to rotations of a variant of the ground state energy of the system, not only due to the nonlocality of the semirelativistic Laplacian, but also because, in the semirelativistic case, ground states have less regularity than in the non-relativistic case. In particular, in Section \ref{sec:firstderivative}, the approach of \cite{al} to take into account the quasistatic behavior does not seem to work in the semirelativistic setting,  while the results of~\cite{bhv} are purely static. We use ideas of \cite{Hunziker-86}, which Hunziker used to prove smoothness of the ground state energy for bosonic non-relativistic systems.  We  substantially rework  and refine these ideas so that we can quantify the derivatives and so that the analysis is applicable to the semirelativistic case as well.  In the non-relativistic case, our strategy  is simpler than the one of \cite{al} and quantifies higher order derivatives as well.

  On the other hand, the careful study of the multipole-multipole interactions in Section~4 of~\cite{al} does not depend on the kind of kinetic energy and is thus immediately applicable to our setting.

In the next section, we define the system properly and describe the main open questions. Then we state our new results. The rest of the paper is devoted to the proof of our theorems.

\bigskip

\section{Model and main results}\label{sec:modelresults}

\subsection{Schrödinger Hamiltonian for molecules}

We consider a molecule composed of 
\begin{itemize}
\item $N$ electrons of charge $e$ with position variables $x:=(x_1,...,x_N) \in \R^{3N}$;
\item $M$ nuclei of atomic numbers $\cZ:=(Z_1 ,..., Z_M ) \in \N^M$, and positions $Y:=(y_1,...,y_M) \in \R^{3M}$ with $y_j\neq y_k$ for $j\neq k$. 
\end{itemize}

The total nuclear charge will be denoted by
\begin{equation}\label{nuclearcharge}
|\cZ|:=\sum_{i=1}^M Z_i.
\end{equation}

As explained in the introduction, we work in the Born-Oppenheimer approximation. In a first step,
 the nuclei are fixed and pointwise, and we  study the spectrum of a Hamiltonian $H_N(Y,\cZ)$ only describing  electrons. In a second step, we
 consider the ground state energy of  $H_N(Y,\cZ)$ as an effective potential for the nuclei in order to investigate properties of reaction paths. To define $H_N(Y,\cZ)$, we need the Hilbert space associated with the electrons, which is  
\begin{equation*}
\mathcal{H}:=\bigotimes_{j=1}^N L^2\left(\R^3 \times \left\{-\frac{1}{2}, \frac{1}{2}\right\}\right),
\end{equation*}
with the inner product corresponding to the quadratic form given by
\begin{equation*}
 \langle\psi,\psi\rangle :=\sum_{s_1\in\{\pm1/2\}}...\sum_{s_N\in\{\pm1/2\}}\int_{\R^3}...\int_{\R^3}|\psi(x_1,s_1,x_2,s_2,...,x_N,s_N)|^2 \dd x_1 \dd x_2... \dd x_N.
\end{equation*}
The introduction of the set $\{-1/2,1/2\}$ is due to the fact that an electron has 2 spin states.

The usual models of quantum mechanics do not consider relativistic effects.
However, as explained in the introduction, when the  molecule contains heavy atoms, it is important to take these effects into account. Since we want to consider as well molecules without heavy atoms, we introduce both the non-relativistic  and the semirelativistic kinetic energy operators, which are, in physical units and for the $j-$th electron, 
\begin{equation}\label{def:Tj}
T_j:=\begin{cases}\sqrt{-\hbar^2c^2\Delta_j+m^2c^4}-mc^2, \quad &\text{ semirelativistic case \textbf{(SR)}}, \\ -\frac{\hbar^2}{2m} \Delta_j, \quad &\text{ non-relativistic case \textbf{(NR)}}, \end{cases}
\end{equation}
where $\Delta_j $ is the usual Laplace operator in the variable $x_j\in\R^3$, $\hbar$ is the reduced Planck's constant, $c$ the speed of light and $m$ the mass of the electron. The square root of the differential operator is defined through Fourier transformation: for $\psi\in H^1(\R^3)$ and almost every $x\in\R^3$, we have in the \textbf{(SR)} case \begin{equation}\label{def:T}T\psi(x):= \left(\mathcal{F}^{-1}
  \left[( \sqrt{c^2\hbar^2|\cdot|^2+m^2c^4}-mc^2)\mathcal{F} \psi\right]\right)(x), 
 \end{equation}
where $\mathcal{F}$ denotes the Fourier transformation on $L^2(\R^3)$ 
          \begin{equation}\label{def:Fourier}
         (\mathcal{F}\psi)(p):=\frac{1}{(2\pi)^{3/2}}\int_{\mathbb{R}^3}\mathrm{e}^{-\iu p\cdot x}\psi(x)\mathrm{d} x.
         \end{equation}
Of course, $T_j$ can also be defined with the Fourier transformation as a sesquilinear form on $H^{\frac{1}{2}}(\R^3)$.
          Note that, due to the fermionic nature of electrons, we consider the cases where all  kinetic energies are semirelativistic or all are non-relativistic. Since we consider a system with $N$ electrons, we introduce 
the $N$-particle Hamiltonian
\begin{equation}\label{def:HN}
H_{N}(Y,\cZ):= \sum_{j=1}^N T_j + \hbar c\alpha I_{N}(Y,\cZ),
\end{equation}
where $\alpha:=\frac{e^2}{4\pi\epsilon_0\hbar c}$ is a dimensionless physical constant, which is called the fine structure constant ($\alpha\approx1/137$). The electric potential is
\begin{equation}\label{def:IN}
I_{N}(Y,\cZ):= - \sum_{i=1}^M \sum_{j=1}^N \frac{ Z_i}{|x_j-y_i|} + \sum_{1 \leq i < j \leq N} \frac{1}{|x_i-x_j|} + \sum_{1 \leq i < j \leq M }\frac{Z_i Z_j}{|y_i-y_j|},
\end{equation}
where the summands respectively represent attraction between the electrons and the nuclei, repulsion between the electrons and repulsion between the nuclei.

In the non-relativistic case, $H_N(Y,\cZ)$ is always self-adjoint with  operator domain $H^2(\mathbb{R}^{3N})$ and bounded from below: see for example~\cite[Theorem~6.2]{Lewin-lecture},  or~\cite[Section~11.1]{teschl}, where the proof is given only in the case of atoms but can be easily adapted to molecules. In the semirelativistic case,
it was proved by Lieb and Yau in \cite{LY} and Fefferman and de la Llave in \cite{FeLla} that, if $\alpha$ is small enough, including the physical case $\alpha\approx1/137$, and, for all $k \in \{1,\dots,M\}$,
\begin{equation}\label{con-SA}
Z_k < \frac{2}{\pi\alpha},
\end{equation}
then there exists $b<1$ such that, for all $\psi\in H^{1/2}(R^{3N})$,
\begin{equation}\label{rel-bound}
 \langle\psi,I_{N}(Y,\cZ)\psi \rangle \leq b \Big\langle\psi,\sum_{i=1}^NT_i\psi\Big\rangle,
\end{equation}
uniformly in $Y$. 
As a consequence,
$H_{N}(Y,\cZ)$ is bounded from below and self-adjoint with form domain $H^{\frac{1}{2}}(\mathbb{R}^{3N})$. We note here that the inequality \eqref{rel-bound} does not seem to provide information on the operator domain of  $H_{N}(Y,\cZ)$. In particular, the operator domain may  contain functions that are not in $H^1$. The $\sum_{i=1}^N T_i$ operator boundedness of $I_{N}(Y,\cZ)$ with relative bound less than 1 would guarantee that the  operator domain of $H_N(Y,Z)$ is $H^1(\R^{3N})$. Such operator boundedness follows from the Hardy inequality if $\alpha$ is small enough but it is not clear to us how small $\alpha$ should be for large molecules. In particular, it is not clear if the physical value $\alpha \approx \frac{1}{137}$  would be included. For this reason, we avoid this assumption.  In fact, it is not certain that $H^1$ regularity would make our analysis easier.

In the rest of the paper, we will assume that we have chosen units such that 
\begin{equation}\label{hcm1}
\hbar=c=m=1.
\end{equation}
 Moreover, we  assume that $\alpha$ is small enough  and that~\eqref{con-SA} holds, so that $H_N(Y,\cZ)$  is bounded from below and we are not going to mention these assumptions explicitly. 

For a more precise analysis, one needs to consider the fermionic nature of the electrons. To this end, we consider  the permutation group $S_N$ of $\{1,\ldots, N\}$. 
We denote by 
$ \bigwedge_{j=1}^N L^2\left(\R^3 \times \left\{-\frac{1}{2},\frac{1}{2}\right\}\right)
$
the Hilbert space of antisymmetric square-integrable functions $\Psi(x_1,s_1,\dots ,x_N,s_N)$ with spin, that is, such that
\begin{align}
\pi\cdot \Psi(x_1,s_1, ...,x_N, s_N) :=\Psi(x_{\pi(1)}, s_{\pi(1)}, \dots , x_{\pi(N)}, s_{\pi(N)}) \label{def:piPsi} 
=(-1)^\pi\,\Psi(x_{1},s_1,\dots ,x_{N}, s_N) 
\end{align}
for any permutation $\pi\in S_N$, and we make use of the orthogonal projection 
on this space
\begin{equation*}
Q_N:  \bigotimes_{j=1}^N L^2\left(\R^3 \times \left\{-\frac{1}{2},\frac{1}{2}\right\}\right) \to \bigwedge_{j=1}^N L^2\left(\R^3 \times \left\{-\frac{1}{2},\frac{1}{2}\right\}\right),
\end{equation*}
where
\begin{equation}\label{def:Q}
Q_N \Psi 
=\frac{1}{N!} \sum_{\pi \in S_N} (-1)^{\pi}\pi\cdot\Psi,
\end{equation}
with $\pi\cdot\Psi$ defined in~\eqref{def:piPsi}. Then, since the Hamiltonian $H_N(Y,\cZ)$, defined in \eqref{def:HN}, commutes with $Q_N$, we can consider the fermionic Hamiltonian defined by
\begin{equation} \label{def:Hstat}
\hat{H}_{N}(Y,\cZ) := H_{N}(Y,\cZ) |_{\mathrm{Ran}Q_N} 
\end{equation}
which acts on the Hilbert space  
\begin{equation}\label{def:QNH}
 Q_N \mathcal{H} = \bigwedge_{j=1}^N L^2\left(\R^3 \times \left\{-\frac{1}{2},\frac{1}{2}\right\}\right).
\end{equation}
 
 The form domain of the operator $\hat{H}_N(Y,\cZ)$ is
\begin{equation}\label{def:domainHamilton}
Q\big(\hat{H}_N(Y,\cZ)\big)=\begin{cases}
H^{\frac{1}{2}}\left((\R^3\times\{\pm1/2\})^N\right) \cap Q_N \mathcal{H}, \quad &\textbf{(SR)}\\
H^1 \left((\R^3\times\{\pm1/2\})^N\right) \cap Q_N \mathcal{H}, \quad &\textbf{(NR)} 
\end{cases}
\end{equation}
where we recall that \textbf{(NR)}, \textbf{(SR)} respectively stand for non-relativistic and semirelativistic case, as defined by~\eqref{def:Tj}. 

We denote by 
\begin{equation}\label{def:gse}
\boxed{E_N(Y,\cZ):=\min\sigma\big(\hat{H}_N(Y,\cZ)\big)}
\end{equation}
the bottom of the spectrum of $\hat{H}_N(Y,\cZ)$, which is finite since the operator is bounded from below. Recall that $|\cZ|$ was defined in \eqref{nuclearcharge}.
In the \textbf{(NR)} case, when $N<|\cZ|+1$ (neutral or positively charged molecules), the HVZ~\cite{Hunziker-66,VanWinter-64,Zhislin-60} and Zhislin/Zhislin-Sigalov theorems ~\cite{Zhislin-60,ZhiSig-65}  imply that $E_N(Y,\cZ)$ is an eigenvalue of $\hat{H}_N(Y,\cZ)$, lying strictly below its essential spectrum:
$$E_N(Y,\cZ)<\min\sigma_{\rm ess}\big(\hat{H}_N(Y,\cZ)\big)=E_{N-1}(Y,\cZ),$$
and that the ground state eigenspace of $\hat{H}_N(Y,\cZ)$ is finite-dimensional.
Moreover, exponential decay of eigenfunctions associated with eigenvalues below the bottom of the essential spectrum is known,  see \textit{e.g.}~\cite{Combes-73,Simon-74,Hof-77,FroHer-82,Griesemer-04}. 
  In the \textbf{(SR)} case, analogous statements were proven in \cite{bhv} but only for atoms. Previous results on exponential decay of eigenfunctions of semirelativistic Hamiltonians had already been proven by Carmona, Masters and Simon in~\cite{CMS-90} and by Nardini in~\cite{Nardini1} and \cite{Nardini2}. Nevertheless, \cite{Nardini1} applied only in the 2-body case, \cite{CMS-90} does not include the case of Coulomb potentials and \cite{Nardini2} does not take fermionic statistics into account and was not applicable in the setting of \cite{bhv}. Closely following \cite{bhv}, we extend their results for molecules in the simpler setting where we do not need to work with irreducible representations of $S_N$. Most of the parts of these proofs are simple adaptations of the analysis of \cite{bhv} and earlier works but for convenience of the reader we shall include standard arguments. 
  
   As the theorems are known in the \textbf{(NR)} case, we state them here independently of the case.  We begin with the extension of the HVZ theorem to the \textbf{(SR)} case.
  \begin{theorem}\label{thm:HVZ}
  	We have that
  	\begin{equation*}
  	\sigma_{\text{ess}}(\hat{H}_{N}(Y,\cZ))=[\inf \sigma (\hat{H}_{N-1}(Y,\cZ)), \infty  ).
  	\end{equation*}
  \end{theorem}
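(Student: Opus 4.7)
The plan is to prove the two standard inclusions of the HVZ theorem separately, closely following the atomic strategy of~\cite{bhv}. The extension to molecules is transparent: the nuclear-nuclear repulsion in $I_N(Y,\cZ)$ is a constant once $Y$ is fixed, and the electron-nucleus Coulombic terms are controlled by~\eqref{rel-bound} exactly as in the atomic case. Moreover, because the paper deliberately avoids decomposing the antisymmetric sector into $S_N$-irreducibles, the projector $Q_N$ commutes with every spatial cut-off used below, so fermionic statistics only enters through the choice of ground states and does not interfere with the localization.

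For the inclusion $[E_{N-1}(Y,\cZ),\infty)\subseteq\sigma_{\mathrm{ess}}(\hat{H}_N(Y,\cZ))$, I would construct an explicit Weyl sequence. Given $\lambda = E_{N-1}(Y,\cZ) + \mu$ with $\mu \geq 0$, pick a near-ground state $\psi_\varepsilon$ of $\hat{H}_{N-1}(Y,\cZ)$ with exponential decay (established as in \cite{CMS-90,Nardini2,bhv} via Agmon-type arguments, and re-derived in the molecular setting in the same section of the paper) and a normalized $\phi_\varepsilon \in H^{1/2}(\R^3)$ of compact support whose Fourier transform concentrates on the level set $\sqrt{|p|^2+1}-1 = \mu$, so that $\langle\phi_\varepsilon,T\phi_\varepsilon\rangle$ is close to $\mu$. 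Translating $\phi_\varepsilon$ by a vector of norm $R \to \infty$ and antisymmetrizing with $\psi_\varepsilon$ yields a normalized, weakly vanishing sequence $\Phi_{\varepsilon,R}$. Disjointness of supports and exponential decay make all Coulomb cross terms vanish in norm, and the non-local kinetic energy produces only $o_R(1)$ errors across the two factors, so $\|(\hat{H}_N(Y,\cZ)-\lambda)\Phi_{\varepsilon,R}\| \to 0$ along a suitable sequence.

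For the reverse inclusion, I would use a geometric partition of unity $\{J_\alpha\}_{\alpha=0}^N$ on $\R^{3N}$ with $\sum J_\alpha^2 = 1$, where $J_0$ is supported in $\{\max_j |x_j| \leq R\}$ and, for $j \geq 1$, $J_j$ is supported on the Ruelle-Simon cone where particle $j$ is farthest from the nuclei and separated from the other electrons by at least $R/2$. The central identity is the semirelativistic IMS-type formula
\begin{equation*}
\sum_{\alpha=0}^N J_\alpha\, \hat{H}_N(Y,\cZ)\, J_\alpha = \hat{H}_N(Y,\cZ) + K_R,
\end{equation*}
to be established on $H^{1/2}$ in quadratic-form sense. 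On $\supp J_j$ for $j\geq 1$, all Coulomb potentials involving $x_j$ are $O(R^{-1})$, so up to $o_R(1)$ errors $J_j\hat{H}_N(Y,\cZ)J_j$ is bounded below by $E_{N-1}(Y,\cZ)\, J_j^2$ via the fermionic $(N-1)$-electron Hamiltonian acting on the other variables plus $T_j \geq 0$. The central piece $J_0\hat{H}_N(Y,\cZ) J_0$ contributes only to the discrete spectrum because its resolvent is relatively compact thanks to Rellich combined with the fractional Sobolev embedding on the compact set $\{\max_j|x_j|\leq R\}$. Applied to a Weyl sequence for $\lambda \in \sigma_{\mathrm{ess}}(\hat{H}_N(Y,\cZ))$, these estimates force $\lambda \geq E_{N-1}(Y,\cZ)$ by letting $R\to\infty$.

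The main obstacle is controlling the localization error $K_R$. In the non-relativistic case this reduces pointwise to $\sum_{\alpha,j}|\nabla J_\alpha|^2$, but in the semirelativistic case it is a genuinely non-local operator built from the commutators $[T_j, J_\alpha]$. I would exploit the subordination identity expressing $\sqrt{-\Delta+1}$ as an integral of resolvents $(-\Delta+s+1)^{-1}$, as used in \cite{LY,bhv}: this reduces $[T_j, J_\alpha]$ to a superposition of resolvent commutators whose kernels can be estimated explicitly in terms of $\nabla J_\alpha$ and a Schur-type bound. The resulting estimate shows that $K_R$ is $T$-form-bounded with arbitrarily small relative bound and vanishes in form on sequences with uniformly bounded $T^{1/2}$-norms. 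Once this quantitative IMS-type inequality is in place, the rest is a routine molecular adaptation of the classical HVZ argument.
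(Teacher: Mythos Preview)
Your outline is essentially correct and follows the same two-inclusion architecture as the paper, but there are several technical differences worth noting.

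For the easy inclusion, you invoke exponential decay of a near-ground state of $\hat H_{N-1}$. The paper avoids this entirely: it simply takes an approximate eigenfunction $\phi_{N-1,\epsilon}$ via Weyl's criterion and then uses that compactly supported functions are dense in $\Dom(\hat H_{N-1})$ for the graph norm (proved directly from the commutator bound $\|[T_i,\chi_R^{\otimes N-1}]\|\leq C/R$). This is cleaner and sidesteps any circularity with the exponential-decay theorem that is proved in the same appendix.

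For the hard inclusion, your partition $\{J_\alpha\}_{\alpha=0}^N$ is of Ruelle--Simon type (one function per escaping electron), whereas the paper uses the $2^N$-fold partition $J_{\beta,R}$, $\beta\in\{0,1\}^N$, recording for each electron whether it is close to or far from the origin. Both work, but note that your description ``separated from the other electrons by at least $R/2$'' is not needed and, taken literally, would leave part of configuration space uncovered; the electron--electron repulsion is positive and can simply be dropped in the lower bound.

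For the localization error you propose the subordination/resolvent-integral representation of $\sqrt{-\Delta+1}$. The paper instead proves a direct Fourier-side estimate (Lemma~\ref{lem:expcommut1overL}): since the symbol $\widetilde T(p)=\sqrt{|p|^2+1}$ is $1$-Lipschitz, one gets $\|[T,\zeta_R]\|_{\cB(L^2)}\leq (2\pi)^{-3/2}\||\cdot|\,\cF(\zeta_R)\|_{L^1}\leq C/R$ immediately, which yields the $O(1/R)$ IMS error without any integral representation. This is more elementary than the subordination route and already sufficient for the argument.
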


The next theorem and corollary provide exponential decay of (approximate) eigenfunctions to eigenvalues below the essential spectrum.  For $x\in\R^d$, we denote by $|x|$ the Euclidean norm of $x$. The ball of radius $R >0$ centered in $y \in \mathbb{R}^d$ is denoted by
\begin{equation*}
B_R(y) := \{x \in \mathbb{R}^d\,|\; |x-y| \leq R\}.
\end{equation*}

\begin{theorem}[Exponential decay of approximate eigenfunctions]\label{thm:expdecaystat}
	We define the  ionization threshold of $\hat{H}_N(Y,\cZ)$ as
	\begin{equation}\label{ionthreshstatistic}
		\widetilde{\Sigma}:= \lim _{R \rightarrow \infty} \inf_{\substack{\psi \in D\left(\hat{H}_N(Y,\cZ)\right) \setminus\{0\},\,\\ \supp \psi \cap B_R(0)= \emptyset}} \frac{\langle \psi, \hat{H}_{N}(Y,\cZ) \psi \rangle}{\langle \psi, \psi \rangle}.
	\end{equation}
For any fixed $\mu < \widetilde{\Sigma}$, let $(\gamma_s)_{s \in \cI}$ be  a family of functions in $\Dom\big(\hat{H}_N(Y,\cZ)\big)$ satisfying  for all $s \in \cI$,
	\begin{equation}\label{gs:unif}
\|\gamma_s\|_{L^2}\leq C, 	\qquad 	( \hat{H}_{N}(Y,\cZ) -\mu) \gamma_s = \Gamma_s.
	\end{equation}
  Here $C>0$ is a constant and  $(\Gamma_s)_{s \in \cI}$ is a family of functions such that there exists  $C_1,a>0$ such that $\|e^{a |.|} \Gamma_s\|_{ L^2(\R^{3N})}\leq C_1$, for all $s \in \cI$. Then $\exists b, D>0$ such that, for all $s \in \cI$
	\begin{equation}\label{def:expdecay}
		\|e^{b |.|}  \gamma_s\|_{ L^2(\R^{3N})}\leq D.
	\end{equation}
\end{theorem}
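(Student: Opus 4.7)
The proof strategy is a classical Agmon/Combes–Thomas type exponential-decay argument, combined with an IMS-style localization controlled by the ionization threshold $\widetilde\Sigma$. In the non-relativistic case it is textbook material; the real work is to push this through the non-local operator $T=\sqrt{-\Delta+1}-1$, following closely the approach of \cite{bhv} (and ultimately \cite{CMS-90}).

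First I would introduce a family of bounded smooth regularizations $F_n$ of $b|x|$ with $|\nabla F_n|\leq b$ uniformly, $F_n \leq b n$, and $F_n \to b|x|$ pointwise. The idea is to work with $v_{s,n}:=e^{F_n}\gamma_s$, which is in $L^2$ for each fixed $n$, and to obtain an estimate on $\|v_{s,n}\|$ that is uniform in $s$ and $n$; then Fatou's lemma gives \eqref{def:expdecay}. The basic identity is
\begin{equation*}
\Re\langle v_{s,n},(e^{F_n}\hat{H}_N(Y,\cZ)e^{-F_n}-\mu)v_{s,n}\rangle=\Re\langle v_{s,n},e^{F_n}\Gamma_s\rangle,
\end{equation*}
the right-hand side being bounded by $\|e^{a|\cdot|}\Gamma_s\|\,\|v_{s,n}\|\leq C_1\|v_{s,n}\|$ as soon as $b\leq a$.

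The key analytic input is a conjugation bound of the form
\begin{equation*}
\Re\, e^{F_n}\,T\,e^{-F_n}\geq T-\delta(b),\qquad \delta(b)\xrightarrow[b\to 0]{}0,
\end{equation*}
as an inequality of quadratic forms on $H^{1/2}$. In the \textbf{(NR)} case this is immediate from an integration by parts. In the \textbf{(SR)} case I would derive it from the subordination identity $\sqrt{-\Delta+1}-1=\frac{1}{\pi}\int_0^\infty t^{-1/2}(-\Delta+1)(-\Delta+1+t)^{-1}\,\dd t-1$, together with a resolvent expansion that controls $e^{F_n}(-\Delta+1+t)^{-1}e^{-F_n}$ by comparison with $(-\Delta+1+t)^{-1}$, exploiting that the perturbation introduced by the conjugation is of order $b$. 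This step is essentially the content of the semirelativistic Combes–Thomas analysis in \cite{bhv} and is where I expect the main technical difficulty, because the Coulomb potentials $I_N(Y,\cZ)$ are only form bounded by $T$ and not operator bounded in a controlled way.

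Next I would pick a quadratic partition of unity $J_1^2+J_2^2=1$ on $\R^{3N}$ with $J_1$ supported in $B_R(0)$ and $J_2$ supported outside $B_{R-1}(0)$. An IMS-type formula, adapted to the non-local $T$ by the same subordination trick (the localization error is controlled by $\|\nabla J_i\|_\infty^2\to 0$ as $R\to\infty$), yields
\begin{equation*}
\hat{H}_N(Y,\cZ)=J_1\hat{H}_N(Y,\cZ)J_1+J_2\hat{H}_N(Y,\cZ)J_2+\mathcal{E}_R,
\end{equation*}
with $\|\mathcal{E}_R\|\to 0$ as $R\to\infty$. On $\Ran J_2$, the variational definition \eqref{ionthreshstatistic} of $\widetilde\Sigma$ gives $\langle J_2 v_{s,n},(\hat H_N-\mu)J_2 v_{s,n}\rangle\geq (\widetilde\Sigma-\mu-\eps)\|J_2 v_{s,n}\|^2$ for $R$ sufficiently large. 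Choosing $b$ and $\eps$ small so that $\widetilde\Sigma-\mu-\eps-\delta(b)-\|\mathcal E_R\|>0$, combining the above, and using that on $\supp J_1$ the weight satisfies $e^{F_n}\leq e^{bR}$, so $\|J_1 v_{s,n}\|\leq e^{bR}\|\gamma_s\|\leq e^{bR}C$, I obtain
\begin{equation*}
(\widetilde\Sigma-\mu-\eps-\delta(b)-\|\mathcal E_R\|)\|v_{s,n}\|^2\leq C_1\|v_{s,n}\|+K(R,b),
\end{equation*}
with $K(R,b)$ independent of $n$ and $s$. This yields the uniform bound on $\|v_{s,n}\|=\|e^{F_n}\gamma_s\|$, and letting $n\to\infty$ completes the proof. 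The main obstacle, as indicated, is the conjugation/localization estimate for the non-local operator $T$ in presence of Coulomb singularities; everything else is a careful bookkeeping of the by-now standard Agmon scheme.
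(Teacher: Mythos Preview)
Your proposal is correct and follows essentially the same Agmon/Combes--Thomas scheme as the paper, whose proof also closely follows \cite{bhv}. The paper's organization is slightly leaner: it uses the product-structured regularized weight $F_{\nu,\eps}(x)=\sum_i \nu|x_i|/(1+\eps|x_i|)$ and combines the exterior cutoff and the exponential into a single multiplier $\xi_\eps=(1-\chi_R^{\otimes N})e^{F_{\nu,\eps}}$, so that the $\widetilde\Sigma$ lower bound applies directly to $\xi_\eps\gamma_s$ without a separate IMS step and the only analytic input is the commutator bound $|\Re\langle\xi_\eps\gamma_s,\sum_i[T_i,\xi_\eps]\gamma_s\rangle|\leq NC_\nu(L_\chi/R+\nu)^2\|e^{F_{\nu,\eps}}\gamma_s\|^2$ from \cite[Lemma~C.10]{bhv}; incidentally, your worry about the Coulomb singularities is misplaced, since $I_N(Y,\cZ)$ is a multiplication operator and commutes with every weight and cutoff involved, so only the kinetic part $T$ generates nontrivial commutator terms.
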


Note that the fact that $\gamma_s$ is not assumed to be an exact eigenfunction of $\hat{H}_N(Y,\cZ)$ can be useful, as one may prove with Theorem  \ref{thm:expdecaystat} that
an exponentially decaying function remains exponentially decaying after multiplication with a resolvent of $\hat{H}_N(Y,\cZ)$. 

\medskip

\begin{coro}[Exponential decay of  eigenfunctions]\label{cor:expeigendecay}
	Choosing $ \Gamma = 0$, the above theorem implies that any  eigenfunction $\gamma$ of $\hat{H}_{N}(Y,\cZ)$ with associated eigenvalue below its essential spectrum is exponentially decaying, in the sense that~\eqref{def:expdecay} holds.
\end{coro}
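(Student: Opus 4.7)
The plan is to apply Theorem~\ref{thm:expdecaystat} to the one-element family consisting of the eigenfunction itself. Let $\gamma$ be an eigenfunction of $\hat{H}_N(Y,\cZ)$ with eigenvalue $\lambda$ strictly below $\inf\sigma_{\text{ess}}(\hat{H}_N(Y,\cZ))$. I take $\cI = \{0\}$, $\gamma_0 := \gamma$, $\mu := \lambda$, and $\Gamma_0 := (\hat{H}_N(Y,\cZ) - \mu)\gamma = 0$. The uniform $L^2$ bound on $\gamma_0$ is trivial, and the bound $\|e^{a|\cdot|}\Gamma_0\|_{L^2} \leq C_1$ holds for every $a, C_1 > 0$ since $\Gamma_0 \equiv 0$.

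The only genuine step is to check the applicability condition $\mu < \widetilde{\Sigma}$, i.e.\ $\lambda < \widetilde{\Sigma}$. By hypothesis $\lambda < \inf\sigma_{\text{ess}}(\hat{H}_N(Y,\cZ))$, which by Theorem~\ref{thm:HVZ} equals $\inf\sigma(\hat{H}_{N-1}(Y,\cZ))$, so it suffices to establish the Persson-type lower bound $\widetilde{\Sigma} \geq \inf\sigma(\hat{H}_{N-1}(Y,\cZ))$. The standard route is an IMS-type localization: pick a smooth partition of unity $\chi_1^2 + \chi_2^2 = 1$ on $\R^{3N}$ with $\chi_1$ compactly supported around the nuclear cluster, and for any admissible $\psi$ supported outside $B_R(0)$, compare $\langle\psi, \hat{H}_N(Y,\cZ)\psi\rangle$ to $\langle\psi, \hat{H}_{N-1}(Y,\cZ)\psi\rangle$ plus the nonnegative kinetic energy of the electron farthest from the nuclei, letting $R\to\infty$ so the localization error and the residual attractive Coulomb interactions with that far electron vanish.

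The main obstacle is the control of the IMS localization error in the semirelativistic setting, where $T_j = \sqrt{1-\Delta_j}-1$ is non-local and the elementary commutator identity valid for the Laplacian has no direct counterpart. This is however well understood and routinely handled via commutator estimates with the Fourier multiplier $T_j$, as already used in \cite{LY} and \cite{bhv}; indeed this Persson characterization is almost certainly established as part of the proof of Theorem~\ref{thm:HVZ} itself. Once $\lambda < \widetilde{\Sigma}$ is secured, Theorem~\ref{thm:expdecaystat} immediately yields constants $b, D > 0$ with $\|e^{b|\cdot|}\gamma\|_{L^2(\R^{3N})} \leq D$, which is exactly \eqref{def:expdecay} for the single element $\gamma$, concluding the proof.
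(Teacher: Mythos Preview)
Your approach is exactly the paper's: the corollary is stated there as an immediate consequence of Theorem~\ref{thm:expdecaystat} upon taking $\Gamma=0$, with no further argument given. You are in fact more careful than the paper, since you isolate and address the one nontrivial hypothesis $\lambda<\widetilde{\Sigma}$; the paper leaves the Persson-type inequality $\widetilde{\Sigma}\geq\inf\sigma_{\mathrm{ess}}(\hat H_N(Y,\cZ))$ implicit, but your sketch (IMS localization to peel off the far electron, with the semirelativistic localization error controlled as in Proposition~\ref{thm:imsloc} and the hard part of the HVZ proof) is the right way to close that gap.
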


\medskip
The following theorem is an extension of Zhislin's theorem to the \textbf{(SR)} case.
\begin{theorem}\label{thm:Zhislin}
	 If $|\cZ| > N - 1 $, then $ \hat{H}_{N}(Y,\cZ)$ has infinitely many eigenvalues below the infimum of its essential spectrum. In particular, it has a ground state and the ground state eigenspace is finite-dimensional.
\end{theorem}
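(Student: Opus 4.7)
The plan is to adapt the classical strategy of Zhislin~\cite{Zhislin-60,ZhiSig-65} to the semirelativistic setting, following closely the atomic analysis in~\cite{bhv} but generalized to several nuclei. I want to construct, for every $k\in\N$, a $k$-dimensional subspace of trial functions whose Rayleigh quotient is strictly below $\inf\sigma_{\mathrm{ess}}(\hat{H}_N(Y,\cZ))=E_{N-1}(Y,\cZ)$, the latter equality coming from Theorem~\ref{thm:HVZ}; the min-max principle then yields $k$ eigenvalues below the essential spectrum, and since $k$ is arbitrary, infinitely many. An induction on $N$ is natural: the base case $N=1$ reduces to the well-known fact that $T-\alpha|\cZ|/|x|$ (modulo a bounded perturbation from the finite spread of the nuclei) has infinitely many negative eigenvalues provided $\alpha|\cZ|<2/\pi$, by Herbst~\cite{Herbst}. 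For $N\geq 2$, I assume by induction that $\hat{H}_{N-1}(Y,\cZ)$ has a normalized ground state $\psi_{N-1}$, which decays exponentially by Corollary~\ref{cor:expeigendecay}.

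The trial states are antisymmetrizations of $\psi_{N-1}$ tensored with a single-particle orbital $\varphi\in H^{1/2}(\R^3)\otimes\C^2$ supported outside a large ball $B_R(0)$. Set $\Psi_\varphi:=\sqrt{N}\,Q_N(\psi_{N-1}\otimes\varphi)$ and expand $\langle\Psi_\varphi,(\hat{H}_N(Y,\cZ)-E_{N-1})\Psi_\varphi\rangle$ using the eigenvalue equation for $\psi_{N-1}$. The leading contribution reduces to $\langle\varphi,h_{\mathrm{eff}}\varphi\rangle$, where
\begin{equation*}
h_{\mathrm{eff}}:=T-\sum_{i=1}^M\frac{\alpha Z_i}{|x-y_i|}+\alpha\int_{\R^3}\frac{\rho_{N-1}(x')}{|x-x'|}\,\dd x'
\end{equation*}
is the effective one-body operator felt by the added electron and $\rho_{N-1}$ is the one-body density of $\psi_{N-1}$. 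On $\supp\varphi\subset\R^3\setminus B_R(0)$ with $R$ much larger than the diameter of the molecule, a Newton/multipole expansion yields $h_{\mathrm{eff}}\leq T-\alpha q/|x|+O(|x|^{-2})$ with $q:=|\cZ|-(N-1)\geq 1$. Since $\alpha$ is assumed small enough so that $\alpha q<2/\pi$, the one-body operator $T-\alpha q/|x|$ has infinitely many negative eigenvalues accumulating at $0^-$ (again by Herbst). Taking $k$ of them and multiplying by a smooth cutoff localizing them outside $B_R(0)$ produces $k$ linearly independent orbitals $\varphi_1,\dots,\varphi_k$ whose Rayleigh quotients for $h_{\mathrm{eff}}$ are bounded above by some $-\delta_k<0$ once $R$ is large enough.

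To lift to the $N$-body problem, I check that the states $\Psi_{\varphi_j}$ are approximately orthonormal (the deviation controlled by $e^{-cR}$ via the exponential decay of $\psi_{N-1}$) and that the exchange terms arising from antisymmetrization are bounded by $e^{-cR}\|\varphi\|_{H^{1/2}}^2$ for the same reason. Hence on $\mathrm{span}(\Psi_{\varphi_1},\dots,\Psi_{\varphi_k})$, the operator $\hat{H}_N(Y,\cZ)-E_{N-1}$ is strictly negative for $R$ large, and min-max yields at least $k$ eigenvalues below $E_{N-1}$. Since $k$ is arbitrary, the spectrum below $E_{N-1}$ is infinite, so in particular $\inf\sigma(\hat{H}_N(Y,\cZ))$ is an isolated eigenvalue of finite multiplicity, which gives both the existence of a ground state and the finite-dimensionality of its eigenspace.

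The hardest step will be controlling the exchange cross terms produced by the nonlocal operator $T$: unlike $-\Delta$, the operator $T$ does not admit a Leibniz rule under localization, so terms of the form $\langle\psi_{N-1}\otimes\varphi,\,T_j\,(\pi\cdot(\psi_{N-1}\otimes\varphi))\rangle$ for a non-trivial permutation $\pi$ must be bounded using a Fourier or heat-kernel representation of $T$ combined with the (essentially) disjoint supports of $\psi_{N-1}$ and $\varphi$. This is the challenge already faced for atoms in~\cite{bhv}, and I expect the adaptation to molecules to require only that the Newton/multipole step above be carried out carefully so that the difference $\sum_i Z_i/|x-y_i|-|\cZ|/|x|$ is controlled by an integrable tail at infinity, ensuring an effective Coulombic attraction of strength $\alpha q$ is seen by $\varphi$ at the relevant scale.
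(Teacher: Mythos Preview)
Your inductive scheme and the form of the trial states $Q_N(\psi_{N-1}\otimes\varphi)$ match the paper's proof. The substantive difference is in how the $k$ one-body orbitals are produced. The paper does \emph{not} appeal to Herbst's spectrum of $T-\alpha q/|x|$. Instead it fixes a single profile $f\in C_c^\infty$ supported in an annulus and uses the scaled copies $f_R(x)=R^{-3/2}f(x/R)$; for the existence of one eigenvalue it estimates $\langle f_R,Tf_R\rangle\leq\langle f_R,-\tfrac12\Delta f_R\rangle=O(R^{-2})$ against the Coulomb gain $-\alpha(|\cZ|-N+1)/R+O(R^{-2})$, and for $n$ eigenvalues it places orbitals at the geometrically separated radii $2^{2n+1},\dots,2^{3n}$. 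The off-diagonal $T_N$-couplings between these orbitals are then killed by the explicit exponentially decaying integral kernel of $T$ (Lemma~\ref{expdecay:realT}), giving errors of order $e^{-2^{2n}}$. This last step is exactly the point that goes beyond~\cite{bhv}, which only produces \emph{one} eigenvalue below the essential spectrum (see the Remark after Theorem~\ref{thm:Zhislin}); so your reference to~\cite{bhv} for the nonlocal exchange control does not cover the multi-orbital situation you need.

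Your proposal has a genuine gap at the sentence ``taking $k$ of them and multiplying by a smooth cutoff localizing them outside $B_R(0)$ produces $k$ linearly independent orbitals whose Rayleigh quotients for $h_{\mathrm{eff}}$ are bounded above by some $-\delta_k<0$ once $R$ is large enough.'' Fixed eigenfunctions of $T-\alpha q/|x|$ are concentrated near the origin; as $R\to\infty$ the cutoff removes almost all of their mass and there is no reason the Rayleigh quotient remains negative. What you would actually need is that the \emph{high-lying} eigenfunctions (eigenvalues accumulating at $0^-$) become spatially extended at a rate that beats the growth of $R$, a quantitative statement you do not supply and which in effect reproduces the scaling argument the paper carries out directly. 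A secondary issue: under the paper's hypotheses only $\alpha Z_k<2/\pi$ is assumed for each nucleus, so $\alpha q=\alpha(|\cZ|-N+1)$ and in particular $\alpha|\cZ|$ (your base case) may exceed $2/\pi$, making $T-\alpha q/|x|$ unbounded below and the black-box appeal to Herbst's eigenfunctions unavailable. This does not obstruct a variational argument restricted to orbitals supported far from the origin, but it reinforces that the route through Herbst's eigenfunctions is not the right one here.
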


\medskip
We prove Theorems \ref{thm:HVZ}, \ref{thm:expdecaystat} and \ref{thm:Zhislin} in Appendix \ref{pfthmspec}.
\begin{remark}
	Note that, in \cite{bhv}, a variant of Theorem \ref{thm:Zhislin} was proven for \textbf{(SR)} atoms, but only  the existence of at least one eigenvalue below the bottom of the essential spectrum was shown. Due to the nonlocality of the \textbf{(SR)} kinetic energy operator, we have to  modify their argument in order to prove that there are infinitely many eigenvalues below the bottom of the essential spectrum. 
\end{remark}
From Theorems~\ref{thm:HVZ} and~\ref{thm:Zhislin}, it follows that,  for every $Y=(y_1,...,y_M) \in \R^{3M}$ with $y_j\neq y_k$ for $j\neq k$, there exists at least one eigenfunction $\Psi\in D\big(\hat{H}_N(Y,\cZ)\big)$, such that 
\[\hat{H}_N(Y,\cZ)\Psi=E_N(Y,\cZ)\,\Psi.\]

 Moreover, the eigenspace associated with $E_N(Y,\cZ)$ has finite dimension.

\bigskip
\subsection{Isomerizations in the case of two rigid submolecules: the model}

From now on, we consider only globally neutral systems, \textit{i.e.} the case where 
\begin{equation*}
N= |\cZ|.
\end{equation*}
 
Moreover, we assume that the molecule is composed of two rigid submolecules, which can only be rotated and translated with respect to each other. This restriction is unsatisfactory as this has no physical justification and, in many isomerizations, there is splitting to more than
two solid submolecules. However, removing the assumption seems to be very hard. In some particular cases, \emph{e.g.}  reaction HCN $\to$ CNH, the assumption is reasonable. Some bigger organic molecules are made of 2 parts, which, if isomerization happens, should approximately remain rigid. This should be the case, for example, for the cis-trans isomerization of azobenzene, see \emph{e.g.}~\cite{TASVZB}. We refer to the introduction  and appendix~C of \cite{al} for some discussion of the general case of several submolecules.

Let $Y_1:=(y_1,...,y_{M_1})\in\R^{3M_1}$, $Y_2:=(y_{M_1+1},...,y_M)\in \R^{3M_2}$ be the nuclear positions of the two solid submolecules, where $M_1 + M_2=M$.  The collections of their atomic numbers are denoted by $\cZ_1:=(Z_1,...,Z_{M_1})\in\N^{M_1}$, $\cZ_2 := (Z_{M_1+1},...,Z_M)\in\N^{M_2}$.

 We assume that each of these molecules is neutral, which means that the $i$-th molecule  has $N_i=|\cZ_i|$ electrons, where $|\cZ_i|$ is defined  as in \eqref{nuclearcharge}.
 For $i=1,2$, we can define the Hamiltonian for the $i$-th molecule as in~\eqref{def:HN}
 \begin{equation}\label{def:Hi}H_i:=H_{|\cZ_i|}(Y_i,\cZ_i),\end{equation} 
  its version with statistics  as in~\eqref{def:Hstat} 
  \begin{equation}
  \label{def:Hstati}\hat{H}_i:=\hat{H}_{|\cZ_i|}(Y_i,\cZ_i)
  \end{equation}
  and its ground state energy
 \begin{equation}\label{def:Ei}E_i:=E_{|\cZ_i|}(Y_i,\cZ_i),\end{equation}
in the same way as in \eqref{def:gse}.
We then consider all possible ways of placing these two molecules in space. Without loss of generality, we can place the first molecule at the origin and the second one at a distance $L$ in the direction $e_1:=(1,0,0)$, and simply rotate the two molecules using $U,V\in \SO(3)$ (see Figure~\ref{fig:rigid_mol}). Our sole variables are therefore $(L,U,V)\in (0,\ii)\times \SO(3)\times \SO(3)$. For shortness, we introduce the new variable
$$\tau:=(L,U,V)\in (0,\ii)\times \SO(3)\times \SO(3)$$
and denote by
\begin{equation}\label{def:Ytau}
Y(\tau):=(Uy_1,\dots, U y_{M_1}, V y_{M_1+1}+Le_1,\dots, V y_{M}+Le_1 ),
\end{equation}
the nuclear positions, as well as by
\begin{equation}\label{def:Etau}\boxed{\cE_\tau:=E_N\big(Y(\tau),\cZ\big) = \inf \sigma (\hat{H}_N(Y(\tau),\cZ))}\end{equation}
the corresponding ground state energy. 
 The Hamiltonian has now the explicit expression
\begin{equation}\label{def:Htau}
H_\tau:=H_N(Y(\tau), \cZ)=  H_{1,\tau} + H_{2,\tau} + I_{\tau},
\end{equation}
where $H_{1,\tau}$ $H_{2,\tau}$ and $I_{\tau}$ are, respectively,  the Hamiltonians for the first and the second molecule and the interaction among them. Explicitly, they are given by
\begin{equation}
\label{def:H1tau}H_{1,\tau} := \sum_{j=1}^{N_1} T_j - \sum_{j=1}^{N_1}\sum_{k=1}^{M_1}\frac{ Z_k \alpha}{|x_j - Uy_k|} + \sum_{1 \leq i < j \leq N_1} \frac{\alpha}{|x_j - x_i|}  + \sum_{1\leq k<l\leq M_1}\frac{Z_kZ_l\alpha}{|Uy_k-Uy_l|},
\end{equation}
\begin{multline}
H_{2,\tau} :=  \sum_{j=N_1+1}^N T_j - \sum_{j=N_1 + 1}^{N}\sum_{k=M_1+1}^M\frac{Z_k \alpha}{|x_j - V y_k - L e_1|} + \sum_{N_1 + 1 \leq i < j \leq N} \frac{\alpha}{|x_j - x_i|}\\\label{def:H2tau}
+\sum_{M_1+1\leq k<l\leq M}\frac{Z_kZ_l\alpha}{|Vy_k-Vy_l|},
\end{multline}
\begin{multline}
I_{\tau} := -\sum_{j=1}^{N_1}\sum_{k=M_1+1}^M\frac{Z_k \alpha}{|x_j - Vy_k - L e_1|} - \sum_{j=N_1 + 1}^N \sum_{k=1}^{M_1}\frac{ Z_k \alpha}{|x_j - U y_k|} \\
+\sum_{k=1}^{M_1} \sum_{l={M_1+1}}^{M} \frac{ Z_k Z_l \alpha}{|U y_k-V y_l -Le_1|}+\sum_{i=1}^{N_1} \sum_{j=N_1 + 1}^N \frac{\alpha}{|x_i - x_j|}.\nonumber
\end{multline}
 Note that there is here a small abuse of notation, since $H_{j,\tau}$ can act either on  $L^2(\R^{3N_j})$ or on the whole $L^2(\R^{3N})$. We define as well the versions with statistics of the Hamiltonians by
\begin{equation}\label{Hhat}
\hat{H}_{\tau}:= H_{\tau}|_{\mathrm{Ran}Q_{N}}, \qquad \hat{H}_{1,\tau}:= H_{1,\tau}|_{\mathrm{Ran}Q_{N_1}}, \qquad \hat{H}_{2,\tau}:= H_{2,\tau}|_{\mathrm{Ran}Q_{N_2}},
\end{equation}
where $Q_{N_i}$ is the antisymmetrizer for $N_i$ particles, as defined in~\eqref{def:Q}.

\begin{figure}[t]
\centering
\includegraphics[width=9cm]{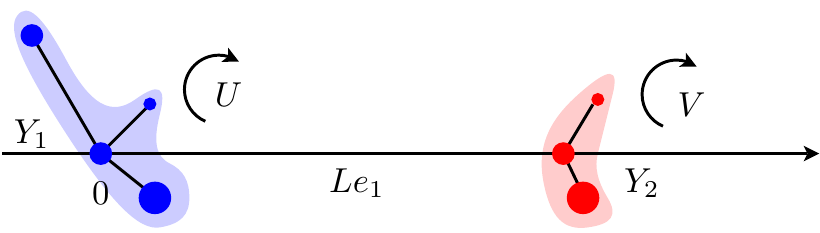}
\caption{The case of two rigid molecules (the picture was taken from \cite{al} and was created by Mathieu Lewin).\label{fig:rigid_mol}}
\end{figure}

 Since we are interested in isomerizations, we have to assume that there exist different isomers for the considered molecule. This is done with the following condition.
 
 \medskip
 
  \begin{Assumption}\customlabel{min}{\textbf{(Min)}}
  There exist two local minima $\tau_0,\tau_1\in (0,\ii)\times \SO(3)\times \SO(3)$ of $\tau \mapsto \cE_{\tau}$ that are strict with respect to the relative displacements $\tau$ that do not leave $Y(\tau)$ invariant. 
 \end{Assumption}
 
\medskip 
 
This condition is natural, as it means that the reactant and the product of the isomerization are stable molecules, in the sense that energy is needed to change their structure.

We consider the \textit{mountain pass level}
\begin{equation}
\boxed{ c:=\inf_{\substack{\tau(0)=\tau_0,\\ \tau(1)=\tau_1}}\;\max_{t\in[0,1]}\;\cE_{\tau(t)}}
 \label{eq:mountain_pass}
\end{equation}
where, as usual, it is understood that $\tau:[0,1]\to (0,\ii)\times \SO(3)\times \SO(3)$ is continuous. 
If $\tau_0$ gives the shape of the reactant of the isomerization, the difference $c-\cE_{\tau_0}$ can be understood as the least energy needed for the isomerization to happen. The existence of a continuous path $\widetilde{\tau}$ with $\widetilde{\tau}(0)=\tau_0, \widetilde{\tau}(1)=\tau_1$ and $c=\max_{t\in[0,1]}\;\cE_{\widetilde{\tau}(t)}$ would imply that the needed energy for the interaction can be indeed optimized. As we mentioned above, this is, to our best of knowledge, an open problem.

In our setting, we consider paths where the system is in a ground state at each time. However, it makes sense to let the electrons evolve along the path as well. For this reason, considering two local minima $\tau_0$ and $\tau_1$ as in Condition~\ref{min}, we choose two ground states $\overline\Psi_0$ and $\overline\Psi_1$ corresponding to the ground state energies $\cE_{\tau_0}$ and $\cE_{\tau_1}$, respectively, and define a new mountain pass level by
\begin{equation}
	\boxed{c':=\inf_{\substack{
				(\tau,\Psi)(0)=(\tau_0,\overline\Psi_0)\\ (\tau,\Psi)(1)=(\tau_1,\overline\Psi_1)}}\;\max_{t\in[0,1]} \pscal{\Psi(t),H_{\tau(t)}\Psi(t)}}
	\label{eq:mountain_pass_level_general_case_with_Psi}
\end{equation}
which is independent of the choices of $\overline\Psi_0$, $\overline\Psi_1$ and obviously larger than or equal to $c$. Recall that $H_{\tau(t)}$ is defined by~\eqref{def:Htau}.
It is understood here that $\tau(t)$ is a  continuous path as before and that $\Psi(t)$ is a continuous path of wavefunctions in the form domain of $\hat{H}_{\tau(t)}$ (which is in fact independent of $t$, as stated in~\eqref{def:domainHamilton}), such that $\|\Psi(t)\|_{L^2}=1$ for all $t\in[0,1]$. 

We note that the mountain pass problem \eqref{eq:mountain_pass_level_general_case_with_Psi} is an infinite dimensional mountain pass problem, and even if existence of an optimizer for the problem \eqref{eq:mountain_pass} were known, existence of an optimizer for  \eqref{eq:mountain_pass_level_general_case_with_Psi} would not immediately follow.
Nevertheless, it is proven in \cite[Theorem~4]{Lewin-04b} (see also~\cite[Appendix~A]{al}) that  $$c'=c.$$
Moreover, existence of a bounded minimizing sequence $(\tau_n)$ for~\eqref{eq:mountain_pass} implies existence of a bounded minimizing sequence $(\tau_n,\Psi_n)$ for~\eqref{eq:mountain_pass_level_general_case_with_Psi}. If an optimal path $(\tau_n,\Psi_n)$ exists, the point of its maximum is called a \emph{transition state} in chemistry. In the \textbf{(NR)} setting, Theorem~4 of~\cite{Lewin-04b} provides, under condition~\ref{min} and boundedness of a minimizing sequence, existence of such a transition state in a more general setting: it is not necessary that the system splits into only two rigid submolecules. We conjecture that this holds in the \textbf{(SR)} case too.

\bigskip
\subsection{Case of two rigid molecules: new results}

In this section, we state our  results, which generalize those in~\cite{al,Lewin-04b,Lewin-06}. 

We begin by giving an asymptotic expansion of the energy of the 2 interacting submolecules, before stating our main theorem.

\subsubsection{Expansion of the energy: multipole interactions and the van der Waals force}
In this section, we expand the energy $\cE_\tau=\cE_{(L,U,V)}$ for large $L$ up to order $L^{-6}$ and get the van der Waals energy as well as all the lower order multipolar energies. In particular, we prove that there always exists a non-vanishing term of order $L^{-6}$. In the analysis, error terms depending on $L, U, V$ are going to appear. Since we very often need to bound them uniformly in $U,V$ with functions of $L$, we introduce the following notation:
Let $g=g(L,U,V)$ be a function with values on a normed vector space $W$ and $f:\R^+ \to \R^+$. We say that
 \begin{equation*}
 g = O_{(L^\infty,W)} (f(L)),
 \end{equation*}
if there exist  constants $C, D >0 $ such that if $L \geq D$ then
  \begin{equation*}
  \|g(L,U,V)\|_{W} \leq C f(L),
  \end{equation*}
  uniformly in $U,V$.
  
   Furthermore, since we consider not only static but also quasistatic setting, we often need to control derivatives of quantities with respect to the position as well. To this purpose, let us introduce the set of matrices\begin{equation}\label{def:so3}
                                          \Gamma:=\{M\in\mathfrak{so}(3)\,|\;|M_{ij}|\leq 1,\; \forall i,j \in\{1,2,3\}\},
                                          \end{equation}
where recall that $\mathfrak{so}(3)$, the Lie algebra of $\SO(3)$, is the set of 3-by-3 real antisymmetric matrices. 
  We  say that
   \begin{equation}\label{C2error}
   g = O_{(C^\infty,W)} (f(L))
   \end{equation}
   if there exists a constant $D >0$ such that,  for all $n \in \N \cup \{0\}$, there exists a constant $C_n$ such that, for all $L \geq D$, $(U,V) \in \SO(3)^2$, all  $A,B \in \Gamma$ and all $t$ in a neighborhood of $0$, we have
    \begin{equation}\label{def:C2error}
  \left\|\frac{\d^n}{\d t^n}g(L,e^{tA}U,e^{tB}V)\right\|_{W}  \leq C_n f(L).
    \end{equation} 
  If $W=\R$ or $W=\C$, then $W$ is going to be omitted from the notation and we will write $O_{L^\infty} (f(L))$ or $O_{C^\infty} (f(L))$.
When we write $O_{(L^\infty,W)}(e^{-dL})$ or $O_{(C^\infty,W)}(e^{-dL})$, we will mean, even if it is not mentioned, that there exists a $d>0$ such that we have the respective estimates. The constant $d$ might change from one equation to the other.
 The definition of regularity given here is formulated differently in~\cite{al}, where geodesics are considered. Nevertheless, both definitions are equivalent, since the geodesics on $\mathrm{SO}(3)$ are given by the exponential map (in the sense of Lie groups). An explanation of this fact can be found in~\cite[Section~2.2]{AlexBett}.

In order to be able to formulate our theorems, we need to introduce the multipolar interactions. We recall here the  expansions introduced in~\cite[Section~2.1]{al}, to which it can be referred for a more detailed explanation. The reader already aware of it can directly go to Definition~\ref{def:FNVMM}.

For a bounded measure $\rho$ decaying faster than any polynomial at infinity, we can define the $2^n$-pole moment associated with this measure as a tensor of order $n$ on $\R^3$ from the Taylor expansion of $\rho*1/|\cdot|$:
\begin{equation}\label{def:multipole}
\mathcal{M}^{(n)}_\rho(h_1,...,h_n) := \frac{(-1)^n}{n!}\int_{\R^3}|x|^{2n + 1} \left[(h_1 \cdot \nabla_x)...(h_n\cdot \nabla_x)  \frac{1}{|x|} \right] \dd \rho(x).
\end{equation}

We will use multipole moments defined by~\eqref{def:multipole} in the case where the measures are charge densities, which we define now. Given a state $\psi\in Q_N\Hilb$, where recall that $Q_N\Hilb$ has been defined in~\eqref{def:QNH}, we define  the \emph{electronic density} on $\R^3$ by
\begin{equation}\label{def-elec-dens}
 \rho_\psi(x):=N\sum_{s_1\in\{\pm \frac12\}}...\sum_{s_N\in\{\pm \frac12\}}\int_{\R^3}...\int_{\R^3}|\psi(x,s_1,x_2,s_2,...,x_N,s_N)|^2\dd x_2... \dd x_N,
\end{equation}
where we multiplied the first marginal by $N$ instead of summing over all marginals, which we can do because  electrons are fermions.

Given  normalized ground states $\Psi_j$ of $\hat{H}_j$, $j=1,2$, as defined in~\eqref{def:Hstati}, we can define the \emph{molecular charge densities} $\rho_{j}^{\Psi_j}$, $j=1,2$ by
	\begin{equation}\label{def:MCD}\rho_{1}^{\Psi_1}(x):=\sum_{i=1}^{M_1}Z_{i}\delta_{Y_i}(x)-\rho_{\Psi_1}(x), \qquad \rho_{2}^{\Psi_2}(x):=\sum_{i=M_1+1}^{M}Z_{i}\delta_{Y_i}(x)-\rho_{\Psi_2}(x).\end{equation}
Note that they depend, in general, on the choice of the ground states $\Psi_1$ and $\Psi_2$.

We introduce the \textit{multipolar interaction} associated with two bounded measures $\rho_1$ and $\rho_2$ on $\mathbb{R}^3$ decaying faster than any polynomial at infinity to be 
\begin{multline}\label{multipol-coeff}
\cF^{(n,m)}(\rho_1, \rho_2):=K_{m,n}\sum_{j_1,...,j_n}\sum_{k_1,...,k_m}\mathcal{M}^{(n)}_{\rho_1}(e_{j_1},...,e_{j_n})\mathcal{M}^{(m)}_{\rho_2}(e_{k_1},...,e_{k_m})\times \\
\times \left(\partial_{z_{j_1}}...\partial_{z_{k_m}}\frac{1}{|z+e_1|}\right)\bigg|_{z=0},
\end{multline} where $K_{m,n}=(-1)^m/\Pi_{i=1}^m(2i-1)\Pi_{j=1}^n(2j-1)$, and define then the \emph{multipolar interactions} between the two molecules in respective states $\Psi_1$ and $\Psi_2$ to be $\mathcal{F}^{(n,m)}(\rho_1^{\Psi_1}, \rho_2^{\Psi_2})$.

\smallskip
The following Lemma is key in order to apply the expansion of the Coulomb interaction and exploit the dependence of the energy on the multipole moments. As we will see,  the $2^n-$pole-$2^m-$pole interaction decays as $1/L^{n+m+1}$. More explicitly, the charge-charge, dipole-charge, dipole-dipole and dipole-quadrupole interactions are proportional to $L^{-1}$, $L^{-2}$, $L^{-3}$ and $L^{-4}$, respectively. The quadrupole-quadrupole and dipole-octopole interactions are proportional to $L^{-5}$. For technical reasons it is useful to cut the ground state energies of. To this end we define
a localizing function on $\mathbb{R}^3$ by
\begin{equation}\label{def:chiL}
\chi_L(x) := \chi \left( \frac{|x|}{L}\right), \qquad\text{where } \chi (r)= \begin{cases}
1, &\text{if} \; r \leq 1/10\\
0, &\text{if} \; r \geq 1/8  
\end{cases}
\end{equation}
with $\chi \in C^{\infty}(\mathbb{R})$ and $\mathrm{Ran}\chi = [0,1]$. Let 
\begin{equation}\label{def:Phij}
\Phi_j:=\frac{\chi_L^{\otimes N_j} \Psi_j}{\|\chi_L^{\otimes N_j} \Psi_j\|}, \quad  j=1,2,
\end{equation}
where recall that $N_1,N_2$ are the numbers of the electrons of the first and second submolecule, respectively. 
Similarly as in \eqref{def:MCD} we can define $\rho_j^{\Phi_j}$ and we note that
global charge neutrality implies that 
 \begin{equation}\label{charge-neut}
\rho_{j}^{\Phi_j}(\R^3)=0, \qquad j=1,2.
 \end{equation}    
\begin{lemma}\label{thm:multipol2}
	 For any $U\in \SO(3)$, we define the rotated measure $U\rho_j^{\Phi_j}$ ($j=1,2)$ by $U\rho_j^{\Phi_j}(D):=\rho_j^{\Phi_j}(\{x, U^{-1}x\in D\})$ for any Borel set $D$. For any $K \geq 2$ and $U,V \in \SO(3)$, we have
	\begin{equation*}
	\int_{\mathbb{R}^6} \frac{\mathrm{d} \rho_1^{\Phi_1}(x) \mathrm{d} \rho_2^{\Phi_2}(y)}{| Le_1 + V y - U x|} = \sum_{2 \leq n+m \leq K-1} \frac{\cF^{(n,m)}( U\rho_1^{\Phi_1}, V\rho_2^{\Phi_2})}{L^{n+m+1}} +  O_{C^\infty}\left(\frac{1}{L^{K+1}} \right).
	\end{equation*}
\end{lemma}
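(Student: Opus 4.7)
\bigskip

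The plan is to combine a direct Taylor expansion of the Coulomb kernel with the harmonicity of $1/|x|$. The key structural input is that, by construction, $\chi_L$ is supported in the ball of radius $L/8$, so $\rho_j^{\Phi_j}$ is supported in $\overline{B_{L/8}(0)}$ for $j=1,2$. In particular, for $(x,y)$ in the support of $\rho_1^{\Phi_1}\otimes\rho_2^{\Phi_2}$ and $U,V\in\SO(3)$, the displacement $w:=Vy-Ux$ satisfies $|w|\leq L/4$, which places us well inside the region of analyticity of $w\mapsto 1/|Le_1+w|$.

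First, I would apply Taylor's formula with integral remainder to the smooth function $g(z):=1/|z+e_1|$ at $z=0$, obtaining
\[
\frac{1}{|Le_1+w|} \;=\; \sum_{k=0}^{K-1}\frac{1}{k!\,L^{k+1}}\sum_{j_1,\dots,j_k} w_{j_1}\cdots w_{j_k}\bigl(\partial_{j_1}\cdots\partial_{j_k}g\bigr)(0) \;+\; R_K(L,w),
\]
where $|R_K(L,w)|\leq C_K |w|^K/L^{K+1}$ uniformly on $|w|\leq L/4$. Substituting $w=Vy-Ux$ and expanding each $w_{j_i}=(Vy)_{j_i}-(Ux)_{j_i}$ via the multinomial theorem, I would regroup the terms according to the number $n$ of occurrences of $-Ux$ and $m$ of $Vy$, using the symmetry of the partial derivatives of $g$ to turn the combinatorial factor $\binom{n+m}{n}/(n+m)!$ into $1/(n!\,m!)$. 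After integrating against $\d\rho_1^{\Phi_1}(x)\,\d\rho_2^{\Phi_2}(y)$, the $(n,m)$-term yields the contraction of the \emph{Cartesian} tensor moments $\int (Ux)_{j_1}\cdots(Ux)_{j_n}\d\rho_1^{\Phi_1}$ and $\int (Vy)_{k_1}\cdots(Vy)_{k_m}\d\rho_2^{\Phi_2}$ against $\partial_{j_1}\cdots\partial_{k_m}g(0)$.

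The next step is to identify this with $L^{-(n+m+1)}\cF^{(n,m)}(U\rho_1^{\Phi_1},V\rho_2^{\Phi_2})$ as defined in~\eqref{multipol-coeff}. The bridge is the harmonicity of $g$ off $-e_1$: the tensor $\partial_{j_1}\cdots\partial_{j_k}g(0)$ is totally traceless, so its contraction with any totally symmetric tensor equals its contraction with the traceless projection of that tensor. The latter is precisely the harmonic multipole $\mathcal{M}^{(n)}$ of \eqref{def:multipole}, because the polynomial $|x|^{2n+1}(h_1\cdot\nabla)\cdots(h_n\cdot\nabla)(1/|x|)$ is exactly the degree-$n$ harmonic (traceless) projection of $(h_1\cdot x)\cdots(h_n\cdot x)$, up to the combinatorial factor absorbed in the constant $K_{m,n}$. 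Matching the numerical constants carefully gives the announced identity.

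Finally, I would upgrade the pointwise remainder bound to the $C^\infty$ error bound required by the notation~\eqref{def:C2error}. Since $U$ and $V$ enter only through the smooth maps $x\mapsto Ux$ and $y\mapsto Vy$ applied in the compactly supported integrand, differentiation under the integral sign is immediate. A direction of differentiation $U\mapsto e^{tA}U$ with $A\in\Gamma$ acts on the integrand by replacing $Ux$ with $AUx$ (up to higher $t$-powers), which does not increase powers of $|Ux|/L$. Hence each rotational derivative of $R_K$ satisfies a bound of the same form $O(L^{-K-1})$, uniformly in $U,V\in\SO(3)$, yielding the $O_{C^\infty}(L^{-K-1})$ estimate after taking $K$ one unit larger than the highest $n+m$ kept in the sum.

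The main technical subtlety I expect is the third step, namely turning the raw Cartesian Taylor expansion into the harmonic-moment form prescribed by~\eqref{multipol-coeff}: one has to keep track of the combinatorial constants $K_{m,n}$ and verify the claim about the traceless projection, for which the identity $\Delta(1/|x|)=0$ on $\R^3\setminus\{0\}$ is the decisive ingredient. The remaining steps (bounding the remainder, differentiating under the integral) are routine once the support property and analyticity region have been identified.
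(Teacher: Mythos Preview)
Your approach is essentially the paper's: Taylor-expand the Coulomb kernel about $Le_1$, regroup the homogeneous pieces into the multipolar interactions $\cF^{(n,m)}$, and bound the integral remainder. The paper delegates the identification of the $(n,m)$-block with $\cF^{(n,m)}$ to \cite[Appendix~B.2]{al}; your traceless/harmonicity argument is a correct substitute for that citation. You should also mention the (easy) use of charge neutrality~\eqref{charge-neut} to kill the $n=0$ and $m=0$ terms, which is why the displayed sum starts only at $n+m\geq 2$.

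There is, however, one genuine gap in your remainder estimate. The pointwise bound $|R_K(L,w)|\leq C_K|w|^K/L^{K+1}$ is correct, but integrating it against $\d|\rho_1^{\Phi_1}|\,\d|\rho_2^{\Phi_2}|$ using only the support information $|w|\leq L/4$ yields merely $O(1/L)$: the cut-off electronic densities live on a ball of radius $\sim L$, so a priori $\int|x|^K\,\d|\rho_j^{\Phi_j}|$ could grow like $L^K$. What is actually needed is that these $K$-th moments are bounded \emph{uniformly in $L$}, and this is precisely where the paper invokes the exponential decay of ground states (Corollary~\ref{cor:expeigendecay}), which transfers from $\Psi_j$ to $\Phi_j$ and hence to the electronic part of $\rho_j^{\Phi_j}$. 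The same ingredient is needed for the $C^\infty$ upgrade: each $t$-derivative of $e^{tA}Ux$ produces an additional factor of size $|x|$, so the rotational derivatives of the remainder involve higher moments, which again must be controlled via decay rather than the support bound. Finally, your closing remark about ``taking $K$ one unit larger'' is unnecessary: with an order-$K$ Taylor remainder and uniformly bounded moments you already obtain $O_{C^\infty}(L^{-K-1})$ directly.
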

\begin{proof}
 The proof is a modification of the one of Lemma~2.2 of~\cite{al}, whose statement was similar but only with  $C^2$ control of the error term. We note that for all
  $x,y \in B(0, \frac{L}{8})$ and all $U,V \in \SO(3)$ the function $s \mapsto \frac{1}{|L e_1 + s(Vy - Ux)|}$ is in $C^{\infty}([0,1])$. From Taylor's Theorem we find 
 \begin{align}\nonumber \frac{1}{|Le_1+Vy-Ux|}=\sum_{i=0}^{K-1} \frac{1}{i!}&((Vy-Ux)\cdot \nabla)^i\frac{1}{|Le_1|}  \\ \label{Taylorsatz} + \int_0^1 \int_0^{s_1} \dots & \int_0^{s_{K-1}}  ((Vy-Ux)\cdot \nabla)^{K}\frac{1}{|Le_1+s_{K}((Vy-Ux))|}\d s_{K} \dots \d s_1,
\end{align}
where $((Vy-Ux)\cdot \nabla)^i\frac{1}{|Le_1|}$ is, by abuse of notation, a directional derivative of order $i$ of the Coulomb potential at the point $Le_1$.
Since by \eqref{def:Phij} both $\rho_j^{\Phi_j}, j=1,2$ are supported in $B(0, \frac{L}{8})$ it follows after integrating with respect to the corresponding measures  that
	\begin{align}\label{Taylorintegral} 
	 \int_{\mathbb{R}^6} \frac{\mathrm{d} \rho_1^{\Phi_1}(x) \mathrm{d} \rho_2^{\Phi_2}(y)}{| Le_1 + V y - U x|} =  \sum_{i=0}^{K-1} B_i(U,V) + R, 
	 \end{align}
with
	 \begin{align}\label{def:K}
	  R:= \int_{\mathbb{R}^6} \int_0^1 \int_0^{s_1} \dots  \int_0^{s_{K-1}}  ((Vy-Ux)\cdot \nabla)^{K}\frac{1}{|Le_1+s_{K}(Vy-Ux)|}\d s_{K} \dots \d s_1 \mathrm{d} \rho_1^{\Phi_1}(x) \mathrm{d} \rho_2^{\Phi_2}(y)
	\end{align}
	and
 \begin{equation}\label{def:Bi}
  B_i(U,V):=\frac{1}{i!}\int_{\R^6}(Ux_{a_1}-Vy_{a_1})\dots(Ux_{a_i}-Vy_{a_i}) \partial_{a_1}\dots\partial_{a_i}\frac{1}{|Le_1|} \d \rho_1^{\Psi_1}(x)\d \rho_2^{\Psi_2}(y),
 \end{equation}
where we used Einstein's summation convention on the indices $a_1,\dots,a_i$ and $\partial_{a_1}\dots\partial_{a_i}\frac{1}{|Le_1|}$  means partial derivative of the Coulomb potential at the point $Le_1$.
But we know from~\cite[Appendix~B2]{al} that
\begin{equation}\label{eq:evalB}
 B_i(U,V)=\frac{1}{L^{i+1}}\sum_{j=0}^{i}\cF^{(j,i-j)}(U\rho_1^{\Phi_1},V \rho_2^{\Phi_2}).
\end{equation}
We note that the proof of \eqref{eq:evalB} is based on decomposing the integrand of the right hand side of \eqref{def:Bi} in terms that are homogeneous of degree $j$ in $x$ and of degree $i-j$ in $y$.   
From \eqref{def:multipole}, \eqref{multipol-coeff}, \eqref{charge-neut}  
it follows that 
\begin{equation}\label{B0B10}
\cF^{(i,0)}(U\rho_1^{\Phi_1},V \rho_2^{\Phi_2})=\cF^{(0,i)}(U\rho_1^{\Phi_1},V \rho_2^{\Phi_2})=0.
\end{equation}
Inserting~\eqref{eq:evalB} and \eqref{B0B10} into \eqref{Taylorintegral} we arrive at the statement of Lemma \ref{thm:multipol2} provided that
\begin{equation}\label{eq:KCinfty}
R=O_{C^\infty}\left(\frac{1}{L^{K+1}}\right),
\end{equation}
which is what remains to prove. To prove \eqref{eq:KCinfty} we first observe that for all $h \in \R^3$ and $\xi \in \R^3 \setminus \{0\}$ we have
\begin{equation}\label{dirder}
 ( h \cdot \nabla)^{K}\frac{1}{|\xi|}=\frac{g^{K}\big(h,\xi\big)}{|\xi|^{K+1}},
 \end{equation}
 where $g^{K}: \R^3 \times \R^3 \to \R$ is a smooth function which is homogeneous of degree $K$ in $h$ and homogeneous of degree 0 in $\xi$ and therefore there exists $C_K>0$ such that
 \begin{equation}\label{gKcontrol}
 \Big|g^{K}\big(h,\xi\big)\Big| \leq C_K |h|^K, \quad  \forall  h \in \R^3, \forall \xi \in \R^3 \setminus \{0\}. 
 \end{equation}
 Applying \eqref{dirder} and \eqref{gKcontrol} for $h=Vy-Ux$ and $\xi= Le_1+s(Vy-Ux)$ we find that 
 \begin{equation}\label{derivcontrol}
\bigg| ((Vy-Ux)\cdot \nabla)^{K}\frac{1}{|Le_1+s_{K}(Vy-Ux)|}\bigg| \leq C_K \frac{|Vy-Ux|^K}{|Le_1+s_{K}(Vy-Ux)|^{K+1}}.
 \end{equation}
  We also have for all $(x,y) \in \supp \rho_1^{\Phi_1} \times \supp \rho_2^{\Phi_2}$
\begin{equation}\label{denominatorcontrol}
|Le_1+s(Vy-Ux)| \geq \frac{3L}{4}.
\end{equation} 
Inserting \eqref{derivcontrol} and \eqref{denominatorcontrol} in
 \eqref{def:K} and using Corollary \ref{cor:expeigendecay}, we find that
  $R=O_{L^\infty}\left(\frac{1}{L^{K+1}}\right)$.  
  To control derivatives $R$ with respect to rotations we notice   that the dominated convergence Theorem is applicable in the right hand side of \eqref{def:K} as the densities  $\rho_j^{\Phi_j}$ are in $L^1$ and the derivatives with respect to rotations of  $((Vy-Ux)\cdot \nabla)^{K}\frac{1}{|Le_1+s_{K}(Vy-Ux)|}$   are bounded on the support of the densities.  Derivatives with respect to rotations of  $g^{K}$ can be controlled similarly as in \eqref{gKcontrol}  and any derivative of $\frac{1}{|Le_1+s_{K}(Vy-Ux)|^{K+1}}$ with respect to rotations provides an even bigger decay in $L$. We omit the details. With this we can conclude the proof of \eqref{eq:KCinfty} and therefore of Lemma \ref{thm:multipol2}.
\end{proof}

\bigskip

Lemma \ref{thm:multipol2} implies that the leading term of the expansion of the Coulomb interaction is given by the first multipole-multipole interaction which does not vanish. It is therefore important to know the first non-vanishing $2^n$-pole  moment of each submolecule.

\begin{definition}[First non-vanishing multipole moment]\label{def:FNVMM}
	For $k\in\{1,2\}$, let $\Psi_k$ be a normalized ground state of $\hat{H}_k$ and let $\rho_k^{\Psi_k}$ denote the corresponding molecular  charge density.  We call $n_k$ the smallest integer $n\geq1$ such that $\cM^{(n)}_{\rho_{\Psi_k}}\neq0$, see \eqref{def:multipole}. If all the multipole moments vanish, we let $n_k:=+\ii$.
\end{definition}

 Note that, in principle, $n_k$ depends on the choice of a normalized ground state $\Psi_k$.  Nevertheless, we will often assume that it is in fact not the case, at least for the leading term.
\begin{Assumption}\customlabel{poles}{\textbf{(Poles)}}
We assume that one of the following conditions holds
\begin{description}
\item[\customlabel{polesall}{\textbf{(Poles-all)}}] The numbers $n_1, n_2$ are independent of the choice of normalized ground states $\Psi_1, \Psi_2$ of $\hat{H}_1,\hat{H}_2$, respectively, and so are all their multipole moments $\cM^{(n)}_{\rho_{\Psi_1}}, \cM^{(n)}_{\rho_{\Psi_2}}$ for all $n\leq 4$.

\item[\customlabel{poleslead}{\textbf{(Poles-lead)}}] The numbers $n_1, n_2$ are independent of the choice of normalized ground states $\Psi_1, \Psi_2$ of $\hat{H}_1,\hat{H}_2$, respectively, and so are their leading multipole moments $\cM^{(n_1)}_{\rho_{\Psi_1}}, \cM^{(n_2)}_{\rho_{\Psi_2}}$. 
\end{description} 
\end{Assumption}

\begin{remark}
 Condition \ref{poles} follows from the assumption of the irreducibility of the ground state eigenspaces that was assumed in previous works like~\cite{al}. Condition~\ref{poles} is weaker, at least we do not see a reason why Condition \ref{polesall} or \ref{poleslead} should imply irreducibility. However, proving even Condition \ref{poleslead} is in the general case, to our best of knowledge, an open problem and it is known only for the hydrogen atom and molecule and for the helium atom,  at least in the non-relativistic case,  see \textit{e.g.}~\cite[Section~XIII.12]{ReeSim4}.
	 In these cases, irreducibility is also known. However, we find Condition \ref{poles}  more natural from a physical perspective, as chemistry handbooks speak about \emph{the} dipole or quadrupole etc. moment of a molecule, see for example~\cite{CRC}.
	 If we ignore the fermionic statistics, the ground state eigenspaces of the molecules are, at least in the non-relativistic case, non-degenerate, see again ~\cite[Section~XIII.12]{ReeSim4}. As a consequence,  Condition \ref{poles} is automatically satisfied. 	  
\end{remark}

Another assumption,  inspired by physical evidence, is that the configuration of lowest energy is the one where each of the two submolecules is neutral.
\begin{Assumption}\customlabel{neutr}{\textbf{(Neutr)}}
		The system satisfies the \emph{neutral best configuration assumption}: namely $(Y_1,\cZ_1)$ and $(Y_2,\cZ_2)$ are such that the only configuration of the system with the minimum energy possible is the  one where the charges are distributed such that the molecules are neutral:
		\begin{equation}\label{BNC}
		E_{|\cZ_1|} (Y_1, \cZ_1)+ E_{|\cZ_2|}(Y_2, \cZ_2) < \min_{\substack{N_1' \neq |\cZ_1|,\\ N_1' + N_2'=N}} \left(E_{N_1'} (Y_1, \cZ_1)+ E_{N_2'}(Y_2, \cZ_2)\right),
		\end{equation}
	where the energies appearing in the inequality are defined in the same way as  $E_N(Y,\cZ)$ in \eqref{def:gse}.
\end{Assumption} 
It is a famous conjecture that~\eqref{BNC} holds, see \emph{e.g.} the introduction of  \cite{AnaSig-17} for some discussions. If this is not the case, the mountain pass problem is actually rather easy, see \cite[Thm.~4]{Lewin-04b}.

From the results of~\cite{Morgan-79,MorSim-80,Lewin-04b}, we have
\begin{equation}
\boxed{\lim_{L\to\ii}\cE_{(L,U,V)}=E_\ii:=\min_{N_1'+N_2'=N}\big(E_{N_1'}(Y_1,\cZ_1)+E_{N_2'}(Y_2,\cZ_2)\big)}
\label{eq:limit_L_infinity}
\end{equation}
uniformly in $U,V\in \SO(3)$, where $E_{N_j}(Y_j,\cZ_j)$, $j=1,2$ is defined in the same way as $E_N(Y,\cZ)$ in \eqref{def:gse}.

In order to state the theorem about the asymptotic expansion of the energy, we introduce the following objects.
\begin{itemize}

\item $H_\infty$ is the sum of the Hamiltonians of the two submolecules, without interaction:
\begin{equation}\label{def:Hinfty}
 H_\infty:=\hat{H}_1\otimes \mathbbm{1}^{\otimes 3N_2} +   \mathbbm{1}^{\otimes 3N_1} \otimes \hat{H}_2,
\end{equation}
where $\hat{H}_1, \hat{H}_2$ were defined in \eqref{def:Hstati} and $\mathbbm{1}$ is the identity operator on $L^2(\R)$. 

	\item $f_{(U,V)}$ is the dipole interaction function 
	\begin{align}
	f_{(U,V)}(x_1, \ldots, &x_N) := U D_1(x_1, \ldots, x_{N_1})\cdot V D_2(x_{N_1 + 1}, \ldots, x_{N})  \nonumber \\
	&-3 (e_1 \cdot UD_1(x_1, \ldots, x_{N_1})) (e_1 \cdot VD_2(x_{N_1+1}, \ldots, x_{N})) \label{expansion:fUV}
	\end{align}
	where
	\begin{align*}
	D_1(x_1, \ldots, x_{N_1}) &:=  \sum_{j=1}^{N_1} x_j - \sum_{m=1}^{M_1} Z_m y_m;\\
	D_2(x_{N_1+1}, \dots, x_N) &:= \sum_{j= N_1 + 1}^{N} x_j - \sum_{m=M_1 + 1}^{M} Z_m y_m,
	\end{align*}
	are the \emph{instantaneous dipole moments} of the submolecules	and
	\begin{align*}
	U D_1(x_1, \ldots, x_{N_1}) &:= D_1(U^{-1}x_{1}, \ldots, U^{-1} x_{N_1}), \\
	V D_2(x_{N_1+1}, \ldots, x_{N}) &:= D_2(V^{-1}x_{N_1+1}, \ldots, V^{-1} x_{N}).
	\end{align*}
 Note that these instantaneous dipole moments are a function of the position, which can never be identically 0. They should not be confused with the  permanent dipole moments, which are zero if the positive and negative charges have the same barycenter. 

	\item Let 
	\begin{equation}\label{def:GSEigenspace}
	\mathcal{G}_j:=\mathrm{Ker}(\hat{H}_j-E_j),\ j=1,2,
	\end{equation}
	 denote the ground state eigenspace of the $j$-th molecule, where $\hat{H}_j$ and $E_j$ have been defined in~\eqref{def:Hstati} and~\eqref{def:Ei}. Furthermore, we denote by $\Pi$ the orthogonal projection onto $\mathcal{G}_1 \otimes \mathcal{G}_2$.
	 Let also 
	  	\begin{equation}\label{def:Hinftyperp}
	  	H_\infty^\bot:= \Pi^\bot 	H_\infty  \Pi^\bot,
	  	\end{equation}
	  	where $\Pi^\bot:=1-\Pi$ is the orthogonal projection on $(\mathcal{G}_1 \otimes \mathcal{G}_2)^\bot$. 
	\item For $\phi \in \mathcal{G}_1 \otimes \mathcal{G}_2$ with $\|\phi\|=1$, we define the \emph{van der Waals coefficient}
\begin{equation}\label{def:VdWterm}
C_{\mathrm{vdW}}(\phi,U,V) := \langle \Pi^{\bot} f_{(U,V)} \phi , (H_\infty^\bot -E_1-E_2)^{-1}  \Pi^{\bot} f_{(U,V)} \phi \rangle,
\end{equation}
with $E_\infty$ defined in~\eqref{eq:limit_L_infinity}, and 
\begin{equation}\label{eq:CvdW}
C_{\mathrm{vdW}}(U,V):= \max_{\phi \in \mathcal{G}_1 \otimes \mathcal{G}_2, \|\phi\|=1} C_{\mathrm{vdW}}(\phi,U,V).
\end{equation}

\item For the multipole-multipole terms $\cF^{(m,n)}(U \rho_{\Psi_1}, V \rho_{\Psi_2})$, where $\Psi_j \in \mathcal{G}_j$, if there is independence of the choice of normalized ground states implied by \ref{poles}, we will write $\cF^{(m,n)}(U, V)$.

\end{itemize}
We remark that the functions $f_{(U,V)}$ and thus $C_{\mathrm{vdW}}(\phi,U,V)$ for each fixed $\phi$ depend continuously on $U,V$. Note that, if~\ref{neutr} holds, then \eqref{def:Ei},~\eqref{BNC} and \eqref{eq:limit_L_infinity} imply that 
\begin{equation}\label{eq:Eiinf}
E_\infty=E_1+E_2=\min \sigma(H_\ii).
\end{equation}

Recalling that we have denoted $\tau=(L,U,V)$, we have the following results.

\begin{theorem}[Multipolar/van der Waals expansion of the energy]\label{thm:vanderwaals}\hfill
\begin{description}
\item[(a)] 
 Let $\Psi_1$ and $\Psi_2$ be any two normalized ground states of, respectively, $\hat{H}_1$ and $\hat{H}_2$. Then, 
\begin{equation}
 \cE_\tau\leq E_1+ E_2 + \sum_{2\leq n+m\leq 5}\frac{\cF^{(n,m)}(U\rho_{\Psi_1},V\rho_{\Psi_2})}{L^{n+m+1}} 
 -\frac{C_{\rm vdW}(\Psi_1 \otimes \Psi_2,U,V)}{L^6}+O_{L^\infty}\left(\frac{1}{L^7}\right).
 \label{eq:upper}
\end{equation}
Furthermore, the function 
$C_{\rm vdW}(\Psi_1 \otimes \Psi_2,U,V)$
is strictly positive for all $(U,V)$ in $\SO(3)\times \SO(3)$.

\item[(b)] If in addition Conditions \ref{neutr} and \ref{polesall} hold, then
\begin{equation}
\cE_\tau = E_\infty+\sum_{2\leq n+m\leq 5}\frac{\cF^{(n,m)}(U,V)}{L^{n+m+1}}
-\frac{C_{\rm vdW}(U,V)}{L^6}+O_{L^\infty}\left(\frac{1}{L^7}\right)
\label{eq:lowerupper}.
\end{equation}
\end{description}
\end{theorem}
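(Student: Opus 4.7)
My plan is a variational argument built around a trial state that implements first-order perturbation theory in the dipole-dipole coupling. Fix arbitrary normalized ground states $\Psi_j\in\mathcal{G}_j$ and form the rotated, localized wavefunctions $\Phi_1,\Phi_2$ as in \eqref{def:Phij}, so that the associated densities sit in $B(0,L/8)$ and Lemma~\ref{thm:multipol2} applies. With $\Phi_0:=\Phi_1\otimes\Phi_2$ (rotations $U,V$ understood), I take
\begin{equation*}
\Psi_{\mathrm{tr}}:=Q_N\Bigl(\Phi_0-L^{-3}(H_\infty^\bot-E_1-E_2)^{-1}\Pi^\bot f_{(U,V)}\Phi_0\Bigr),
\end{equation*}
with the cross-exchange corrections introduced by $Q_N$ being of size $O(e^{-dL})$ by Corollary~\ref{cor:expeigendecay}. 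Since $\hat{H}_\tau=\hat{H}_{1,\tau}+\hat{H}_{2,\tau}+I_\tau$, the diagonal contribution of $\Phi_0$ gives $E_1+E_2+\langle\Phi_0,I_\tau\Phi_0\rangle$, and Lemma~\ref{thm:multipol2} expands the last term as the multipole sum in~\eqref{eq:upper} with error $O(L^{-7})$. The dipole-dipole part of $I_\tau$ is $L^{-3}f_{(U,V)}+O(L^{-4})$ by a direct expansion, and the cross terms between $\Phi_0$ and its $L^{-3}$ correction produce exactly $-L^{-6}C_{\mathrm{vdW}}(\Psi_1\otimes\Psi_2,U,V)$; replacing $\Phi_j$ by $\Psi_j$ in the multipole moments costs only exponentially small terms, again by Corollary~\ref{cor:expeigendecay}. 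Strict positivity of $C_{\mathrm{vdW}}(\Psi_1\otimes\Psi_2,U,V)$ reduces to $\Pi^\bot f_{(U,V)}(\Psi_1\otimes\Psi_2)\ne 0$: if equality held, iterated multiplication by a position coordinate would preserve the finite-dimensional eigenspace $\mathcal{G}_1\otimes\mathcal{G}_2$, contradicting the exponential decay of its elements.

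\textbf{Part (b): matching lower bound.} Under \ref{neutr}, identity \eqref{eq:Eiinf} gives a strict spectral gap $\gamma:=\mathrm{dist}(E_\infty,\sigma(H_\infty^\bot))>0$, enabling a Feshbach--Schur reduction: letting $P=\Pi$, the energy $\cE_\tau$ is characterized as the smallest eigenvalue of
\begin{equation*}
F_\tau(\cE_\tau):=P(\hat{H}_\tau-\cE_\tau)P-PI_\tau P^\bot\bigl(P^\bot(\hat{H}_\tau-\cE_\tau)P^\bot\bigr)^{-1}P^\bot I_\tau P.
\end{equation*}
A Neumann-type expansion of the resolvent in the small operator $P^\bot I_\tau P^\bot/\gamma$ combined with Lemma~\ref{thm:multipol2} expresses $F_\tau(E_\infty)-E_\infty P$ to order $L^{-6}$. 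Under \ref{polesall}, the multipole moments $\cF^{(n,m)}$ for $2\le n+m\le 5$ act as scalars on $\mathcal{G}_1\otimes\mathcal{G}_2$, so the first-order Feshbach contribution is $\sum_{2\le n+m\le 5}L^{-(n+m+1)}\cF^{(n,m)}(U,V)P$, while the second-order Feshbach term contributes $-L^{-6}\Pi f_{(U,V)}(H_\infty^\bot-E_\infty)^{-1}\Pi^\bot f_{(U,V)}\Pi$ whose largest eigenvalue is precisely $C_{\mathrm{vdW}}(U,V)$ by \eqref{eq:CvdW}. Diagonalizing this symmetric matrix on the finite-dimensional space $\mathcal{G}_1\otimes\mathcal{G}_2$ and matching with part~(a) yields \eqref{eq:lowerupper}.

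\textbf{Main obstacle.} The hardest step is controlling the Feshbach resolvent uniformly in $(U,V)\in\SO(3)^2$ and $L$ in the semirelativistic setting, since the non-local kinetic energy does not provide $H^1$ bounds on $\mathrm{Ran}\Pi^\bot$. Instead, the form bound \eqref{rel-bound} with constant $b<1$ from \cite{LY,FeLla} combined with the exponential localization of Corollary~\ref{cor:expeigendecay} (applied to approximate eigenfunctions through a resolvent identity, i.e.\ the variant with non-zero $\Gamma_s$) controls the Neumann series term by term. A secondary subtlety, absent in the non-degenerate case of \cite{AnaSig-17,bhv}, is that even under \ref{polesall} the effective second-order operator on $\mathcal{G}_1\otimes\mathcal{G}_2$ is a matrix rather than a scalar; the maximum in \eqref{eq:CvdW} is exactly what is needed to reconcile this with the upper bound, where the choice of $\Psi_1\otimes\Psi_2$ is free.
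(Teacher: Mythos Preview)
Your overall architecture matches the paper's: variational upper bound with a first-order perturbative trial state for (a), Feshbach--Schur reduction for (b). But two steps have genuine gaps.

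In part (b), taking $P=\Pi$ does not work as written. First, $\Ran\Pi=\mathcal{G}_1\otimes\mathcal{G}_2$ does not sit inside $Q_N\mathcal{H}$, so at minimum you need the antisymmetrized and shifted version $\Pi_\tau$ with range $Q_NK_\tau\chi_L^{\otimes N}(\mathcal{G}_1\otimes\mathcal{G}_2)$. More seriously, with any such ``bare'' projection the nonlinear Feshbach term is $O(L^{-6})$, the \emph{same} order as $C_{\mathrm{vdW}}$, so you must identify its leading part exactly; your proposed Neumann expansion in $P^\bot I_\tau P^\bot$ does not converge because $I_\tau$ carries Coulomb singularities and is not norm-small on $\Ran P^\bot$. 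The paper's device is to use instead a \emph{dressed} projection $P_\tau$ with range $Q_NK_\tau W_\tau(\mathcal{G}_1\otimes\mathcal{G}_2)$, where $W_\tau$ already contains the resolvent correction $\xi_\tau$: this upgrades $\|P_\tau^\bot(\hat H_\tau-E_\infty)P_\tau\|$ from $O(L^{-3})$ to $O(L^{-4})$, so the nonlinear term becomes $O(L^{-8})$ and can be discarded, and the van der Waals contribution sits entirely in $P_\tau\hat H_\tau P_\tau$, which is exactly the quantity computed in (a). Separately, your assertion that under \ref{polesall} the multipole terms ``act as scalars'' on $\mathcal{G}_1\otimes\mathcal{G}_2$ is the right conclusion but requires proof: \ref{polesall} fixes only the diagonal entries $\langle\Psi_1\otimes\Psi_2,\tilde I_\tau\,\Psi_1\otimes\Psi_2\rangle$ for normalized pure tensors, and the paper needs a complex polarization-identity argument to show the off-diagonal matrix elements are $O(L^{-7})$.

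In part (a), your positivity argument does not close. Knowing $f_{(U,V)}(\Psi_1\otimes\Psi_2)\in\mathcal{G}_1\otimes\mathcal{G}_2$ for one specific vector does not let you iterate: you have no reason to expect $f_{(U,V)}$ to map the \emph{whole} eigenspace into itself. The paper instead proves (Lemma~\ref{lem:notGS}) that for any nonzero $(b_1,\dots,b_{N_k})$ the operator $B=\sum_j b_j\cdot x_j$ cannot fix an eigenvector of $H_k$, by computing $[T_j,x_j^\ell]=-i(\sqrt{-\Delta_j+1})^{-1}\partial_{x_j^\ell}$ in Fourier variables; a tensor linear-dependence argument then forces $\Pi^\bot f_{(U,V)}(\Psi_1\otimes\Psi_2)\neq 0$. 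A smaller point: your trial state does not localize the resolvent correction, so the claim that cross-exchange terms from $Q_N$ are $O(e^{-dL})$ is not immediate in the \textbf{(SR)} case (the kinetic energy is nonlocal); the paper inserts explicit cutoffs $\chi_L^{\otimes N}$, $\chi_{4L/3}^{\otimes N}$ in $W_\tau$ precisely to make the supports disjoint in a variable \emph{other} than the one carrying $T_j$.
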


\bigskip

   If, instead of Condition \ref{polesall}, we assume the weaker Condition \ref{poleslead}, we prove the following theorem providing a weaker form of equation \eqref{eq:lowerupper}, which intervenes in the proof of the main Theorem \ref{thm:with-multipoles} below. 
\begin{theorem}\label{thm:vanderwaalsbis}
	Assume Conditions \ref{neutr} and  \ref{poleslead}.   If $n_1 + n_2 <5$, then
	\begin{equation}\label{eq:thmbis}
	\cE_\tau = E_\infty +\frac{\cF^{(n_1,n_2)}(U,V)}{L^{n_1+n_2+1}} + O_{L^\infty}\left(\frac{1}{L^{n_1 + n_2 + 2}}\right).
	\end{equation}
\end{theorem}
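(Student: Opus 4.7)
The plan is to establish matching upper and lower bounds on $\cE_\tau$ agreeing to leading order with the stated error $O_{L^\infty}(1/L^{n_1+n_2+2})$. Because $n_1+n_2<5$ implies $n_1+n_2+2\leq 6$, both the van der Waals term $C_{\mathrm{vdW}}/L^6$ and the remainder $O_{L^\infty}(1/L^7)$ of Theorem~\ref{thm:vanderwaals}(a) are already absorbed into this error, and only the non-vanishing sub-van-der-Waals multipole contributions need to be tracked.

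For the upper bound I would fix any normalized ground states $\Psi_j\in\cG_j$, $j=1,2$, and apply Theorem~\ref{thm:vanderwaals}(a). By Definition~\ref{def:FNVMM}, $\cM^{(n)}_{\rho_{\Psi_j}}=0$ for $n<n_j$, so inspection of~\eqref{multipol-coeff} shows that $\cF^{(n,m)}(U\rho_{\Psi_1},V\rho_{\Psi_2})=0$ unless both $n\geq n_1$ and $m\geq n_2$. The lowest-order surviving term in the multipole sum of~\eqref{eq:upper} thus corresponds to $(n,m)=(n_1,n_2)$, and under \ref{poleslead} depends only on the ground-state-independent leading moments, so it may be written as $\cF^{(n_1,n_2)}(U,V)/L^{n_1+n_2+1}$. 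All other surviving multipole terms have $n+m\geq n_1+n_2+1$ and contribute $O_{L^\infty}(1/L^{n_1+n_2+2})$. Combined with $E_1+E_2=E_\infty$ from~\eqref{eq:Eiinf} (which uses \ref{neutr}), this yields the required upper bound.

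The main obstacle will be the lower bound. I would take an exact ground state $\Psi_\tau$ of $\hat{H}_\tau$ (whose existence is guaranteed by Theorem~\ref{thm:Zhislin}) and apply a geometric localization on the electron configuration space $\R^{3N}$ via a partition of unity $\{J_A^2\}_{A\subseteq\{1,\dots,N\}}$ indexed by clusterings of electrons between the two nuclear centers, in the spirit of the Ruelle--Simon partition. An IMS-type formula adapted to the semirelativistic case as in~\cite{bhv} yields
\[
\cE_\tau=\sum_A\langle J_A\Psi_\tau,\hat{H}_\tau J_A\Psi_\tau\rangle + \mathcal{E}_{\mathrm{loc}}(L),
\]
with $\mathcal{E}_{\mathrm{loc}}(L)$ decaying faster than any power of $L^{-1}$ thanks to the exponential decay of $\Psi_\tau$ provided by Corollary~\ref{cor:expeigendecay}. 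Each non-neutral clustering $|A|\neq N_1$ contributes only exponentially small corrections, since \ref{neutr} guarantees a strictly positive gap between the corresponding two-cluster threshold and $E_\infty$ and forces $\|J_A\Psi_\tau\|$ itself to be exponentially small. For the neutral clustering $|A|=N_1$ I would apply Lemma~\ref{thm:multipol2} to the interaction $I_\tau$ with $K=n_1+n_2+1$, isolating $\cF^{(n_1,n_2)}(U,V)/L^{n_1+n_2+1}$ as the dominant contribution, while the kinetic and intra-cluster potential parts of $\langle J_{A}\Psi_\tau,\hat{H}_\tau J_A\Psi_\tau\rangle/\|J_A\Psi_\tau\|^2$ are bounded below by $E_1+E_2=E_\infty$ up to $O(1/L^{n_1+n_2+2})$. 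The hard part will be (i) justifying the localization formula with a sufficiently small remainder in the \textbf{(SR)} case despite the non-locality of $T$, (ii) quantifying the gap produced by~\ref{neutr} to discard non-neutral clusterings, and (iii) extracting the leading multipole coefficient $\cF^{(n_1,n_2)}(U,V)$, which is well-defined precisely because of \ref{poleslead}; this is the weakest assumption that suffices here, since we need only the leading term rather than the full expansion of Theorem~\ref{thm:vanderwaals}(b).
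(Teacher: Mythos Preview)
Your upper bound is correct and essentially matches the paper's. The lower bound, however, is where your approach diverges from the paper's and where a genuine gap appears. Your claim that the IMS localization error $\mathcal{E}_{\mathrm{loc}}(L)$ decays faster than any polynomial ``thanks to the exponential decay of $\Psi_\tau$ provided by Corollary~\ref{cor:expeigendecay}'' does not follow from that corollary: the constants $b,D$ in $\|e^{b|\cdot|}\Psi_\tau\|\leq D$ depend on the nuclear configuration $Y(\tau)$, hence on $L$. Indeed $\Psi_\tau$ has electrons concentrated near \emph{both} nuclear centers, which are distance $L$ apart, so there is no uniform-in-$L$ decay in $|x|$. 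What you actually need is that each electron is exponentially unlikely, uniformly in $L$, to sit in the transition region between the two clusters; this is an Agmon-type concentration estimate that is not provided by Corollary~\ref{cor:expeigendecay}. Without it the IMS error is only $O(1/L)$ (Proposition~\ref{thm:imsloc}) or $O(1/L^2)$ with the sharper estimate of~\cite{bhv}, which is far too large when $n_1+n_2\geq 2$.

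The paper avoids this entirely by using the Feshbach map (Theorem~\ref{teo:feschbach}) with the projection $P_\tau$ onto $\{Q_NK_\tau W_\tau\Psi:\Psi\in\cG_1\otimes\cG_2\}$. The gap condition $P_\tau^\bot H_\tau P_\tau^\bot-\cE_\tau\geq C>0$ (Lemma~\ref{lemma:ims}) makes the non-linear Feshbach term $O_{L^\infty}(1/L^8)$, and one is reduced (Proposition~\ref{prop:nopoles}) to minimizing $\langle\chi_L^{\otimes N}\Psi,\tilde I_\tau\chi_L^{\otimes N}\Psi\rangle - C_{\mathrm{vdW}}(\Psi,U,V)/L^6$ over the \emph{finite-dimensional} unit sphere of $\cG_1\otimes\cG_2$. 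Your hard point (iii) is then resolved by a polarization-identity argument: under~\ref{poleslead} one shows that the off-diagonal matrix elements of $\tilde I_\tau$ between orthonormal product ground states are $O(1/L^{n_1+n_2+2})$, so that $\langle\chi_L^{\otimes N}\Psi,\tilde I_\tau\chi_L^{\otimes N}\Psi\rangle=\cF^{(n_1,n_2)}(U,V)/L^{n_1+n_2+1}+O(1/L^{n_1+n_2+2})$ for \emph{every} normalized $\Psi\in\cG_1\otimes\cG_2$, not just factorized states. Your sketch gives no indication of how you would extract this coefficient starting from $J_A\Psi_\tau$, which is neither in $\cG_1\otimes\cG_2$ nor a product state, and to which Lemma~\ref{thm:multipol2} does not directly apply.
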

The fact that we assume only \ref{poles} and not irreducibility as in \cite{al} for Theorems \ref{thm:vanderwaals} (b) and \ref{thm:vanderwaalsbis} creates some additional difficulties in the proof, see Section \ref{sec:lowerbound} below for explanations.

\begin{proof}[Proof of Theorems \ref{thm:vanderwaals} and \ref{thm:vanderwaalsbis}]
The proof of both  theorems can be found in Section \ref{sec:vanderwaals}. We derive an upper and lower bound in Subsections \ref{subsec:upper} and \ref{subsec:lower}, respectively, and in Subsection \ref{sec:lowerbound} we collect the results to prove the theorems.
\end{proof}

\bigskip

\subsubsection{The main theorem}

The most natural result we would like to have would be the existence of a bounded optimal path. However, we do not obtain it with our analysis. Nevertheless, we prove the existence of a bounded min-maxing sequence, as defined  below.
\begin{definition}\label{def:minmax}
 A \emph{min-maxing sequence} is a sequence of paths $$\{\tau_n(\cdot)\}\subset C^0([0,1]; \R^+\times \SO(3) \times \SO(3))$$
such that
 \begin{equation}\label{eq:minmax}
  \lim_{n\to\infty}\max_{t\in[0,1]}\cE_{\tau_n(t)}=c,
 \end{equation}
 where $c$ was defined in  \eqref{eq:mountain_pass}.
\end{definition}

Now we are ready to state our main theorem.   Recall that $n_1,n_2$ indicate the first nonvanishing multipole moments, defined in Definition \ref{def:FNVMM}, and in general they can depend on the ground state even if this is normalized. Under Condition \ref{poles}, the leading multipole moments $\cM^{(n_1)}_{\rho_{\Psi_1}}, \cM^{(n_2)}_{\rho_{\Psi_2}}$ of the molecules are independent of the choice of the normalized ground state and we will write instead $\cM^{(n_1)}_{1}, \cM^{(n_2)}_{2}$. 

\bigskip

\begin{theorem}\label{thm:with-multipoles}
Let us assume Condition~\ref{neutr}. Moreover, let us assume that Condition \ref{poleslead} holds and $n_1 + n_2 < 5$. In the case  $n_k=3$ for $k=1$ or $k=2$, we also assume the implication
\begin{equation}\label{ass:octapole}
\cM^{(n_k)}_{k}(v,\cdot,\cdot)\equiv 0\implies v=0.
\end{equation}

Then, one can find a bounded min-maxing sequence of paths  for the min-max problem~\eqref{eq:mountain_pass}: there exists $\lk>0$ and a min-maxing sequence of paths $\tau_n(t)=(L_n(t), U_n(t),V_n(t))$ such that
$$L_n(t)\leq \lk$$
for all $n \in\N$ and all $t\in[0,1]$.
\end{theorem}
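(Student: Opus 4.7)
The plan is to follow the path-deformation strategy of~\cite[Section~5]{al}, using our Theorem~\ref{thm:vanderwaalsbis} as a drop-in replacement for the irreducibility-based expansion in~\cite{al}, so that the argument goes through under the weaker Condition~\ref{poleslead} and in both the \textbf{(NR)} and \textbf{(SR)} cases. Since $n_1+n_2<5$, Theorem~\ref{thm:vanderwaalsbis} gives, uniformly in $(U,V)\in\SO(3)^2$ and for $L$ larger than some $L_\star$,
\[
\cE_{(L,U,V)} - E_\infty \;=\; \frac{\cF^{(n_1,n_2)}(U,V)}{L^{n_1+n_2+1}} + O_{L^\infty}\!\left(\frac{1}{L^{n_1+n_2+2}}\right),
\]
so the sign and magnitude of $\cE_{(L,U,V)} - E_\infty$ at large $L$ are controlled by the angular function $\cF^{(n_1,n_2)}$ on $\SO(3)^2$. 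Starting from any min-maxing sequence $(\tau_n)$ for $c$, it suffices to modify each $\tau_n$ into a new path $\widetilde\tau_n$ whose $L$-coordinate is bounded by some fixed $\lk$ independent of $n$, at the cost of raising $\max_t \cE_{\tau_n(t)}$ by at most $\varepsilon_n\to 0$. The endpoints $\tau_0,\tau_1$ are bounded configurations by Condition~\ref{min}, so only the excursions of $\tau_n$ into $\{L>\lk\}$ need to be redesigned.

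The crucial topological input is the angular analysis carried out in Section~4 of~\cite{al}, which is purely algebraic and, as noted in the introduction, is directly applicable to our setting: every critical point of $(U,V)\mapsto\cF^{(n_1,n_2)}(U,V)$ on $\SO(3)^2$ at which $\cF^{(n_1,n_2)}>0$ has at least two directions of strict decrease. The assumption~\eqref{ass:octapole} is exactly what rules out index-$1$ saddles in the dipole–octopole case $n_k=3$, whereas the other subcases $n_1+n_2\in\{2,3\}$ and the quadrupole–quadrupole case are settled by direct inspection of the multipole formulas. Combined with the quasistatic derivative estimates for $\cE_\tau$ developed in Section~\ref{sec:firstderivative} (non-trivial in the \textbf{(SR)} case), this Morse-index-$\geq 2$ property transfers, for $L\geq L_\star$, to the full energy: angular critical points of $\cE_{(L,\cdot,\cdot)}$ with $\cE_{(L,U,V)}>E_\infty$ inherit at least two angular directions of strict decrease.

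Concretely, on a maximal excursion interval $[a,b]$ of $\tau_n$ into $\{L>\lk\}$, with $L_n(a)=L_n(b)=\lk$, the deformation proceeds in two steps. First, keeping $L_n(t)$ fixed, continuously deform $(U_n(t),V_n(t))$ via a pseudo-gradient flow of $\cF^{(n_1,n_2)}$ into a trajectory lying, away from small neighborhoods of the endpoints, in the attractive region $\{\cF^{(n_1,n_2)}\leq -\kappa\}$ for some fixed $\kappa>0$; the Morse-index-$\geq 2$ property is precisely what makes this rerouting possible without creating new angular maxima of $\cE_\tau$. Second, in the attractive region, pull $L_n(t)$ monotonically down to $\lk$; by the above expansion, for $\lk$ large enough that the $O_{L^\infty}(L^{-(n_1+n_2+2)})$ remainder is dominated by $\kappa/(2L^{n_1+n_2+1})$, this step only decreases $\cE_\tau$. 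The main obstacle is the gluing near the endpoints $a,b$, where the angular coordinates of the original path may themselves sit in the repulsive set $\{\cF^{(n_1,n_2)}>0\}$: the deformation must match $\tau_n$ at $t=a,b$ without overshooting the local maximum of $\cE_{\tau_n}$ near those times, and this requires combining the quasistatic continuity estimates with a Lewin-type deformation lemma. The algebraic treatment of the critical dimension $n_1+n_2=4$ via~\eqref{ass:octapole} is the other delicate point; once these two ingredients are in place, the construction of $\widetilde\tau_n$ is a direct transcription of the min-max scheme of~\cite{Lewin-04b,al}.
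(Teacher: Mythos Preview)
Your proposal identifies the right ingredients (Theorem~\ref{thm:vanderwaalsbis}, the quasistatic analysis of Section~\ref{sec:firstderivative}, and the angular study of $\cF^{(n_1,n_2)}$ from \cite[Section~4]{al}), but the deformation scheme you outline differs from the paper's and leaves the part you yourself flag as ``the main obstacle'' unresolved. The paper does \emph{not} deform the whole excursion while keeping $L_n(t)$ varying, nor does it invoke a Morse-index-$\geq 2$ statement for positive-energy critical points. Instead, it discards the excursion entirely: at the first and last times $t_0,t_1$ where $\widetilde L_n(t)=\lk$, one works exclusively on the fixed sphere $L=\lk$ and builds a new angular path connecting $(\widetilde U_n(t_0),\widetilde V_n(t_0))$ to $(\widetilde U_n(t_1),\widetilde V_n(t_1))$. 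The key device is the notion of \emph{local pseudo-minimum} (Definition~\ref{def:locpseu}): by Lemma~\ref{lem:flowpseudolocmin} each endpoint can be joined to a pseudo-minimum of $\cE_{\lk}(\cdot,\cdot)$ along a path where the energy never rises above its value at that endpoint. This is what circumvents the ``gluing'' problem you identify.

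The direction of the quasistatic transfer is also opposite to what you write. Proposition~\ref{prop:fullexp} does not push angular information from $\cF^{(n_1,n_2)}$ to $\cE_L$; rather, it shows that a pseudo-minimum of $\cE_{\lk}$ is, up to $O(1/L)$, an approximate critical point of $\cF^{(n_1,n_2)}$ with almost non-negative Hessian. One then invokes two purely algebraic facts from \cite[Section~4]{al}, stated here as Propositions~\ref{prop:localmin} and~\ref{prop:connected}: any approximate local minimum of $\cF^{(n_1,n_2)}$ satisfies $\cF^{(n_1,n_2)}\leq -\delta$ (this is where~\eqref{ass:octapole} enters), and the sublevel set $\{\cF^{(n_1,n_2)}<-\delta\}$ is path-connected. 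Combining these with Theorem~\ref{thm:vanderwaalsbis} gives an angular path between the two pseudo-minima along which $\cE_{\lk}<E_\infty$, so the replaced segment has maximum at most $\max\{\cE_{\widetilde\tau_n(t_0)},\cE_{\widetilde\tau_n(t_1)},E_\infty\}$, which suffices under the standing assumption $c=E_\infty$. Your two-step scheme (angular flow at varying $L$, then radial pull-down) would require an additional argument that the flowed trajectory remains continuous in $t$ and that the pull-down does not increase the energy near the endpoints where $\cF^{(n_1,n_2)}$ may still be positive; the paper's construction avoids both issues.
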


\bigskip

\begin{figure}[t]
	\centering
	\includegraphics[width=9cm]{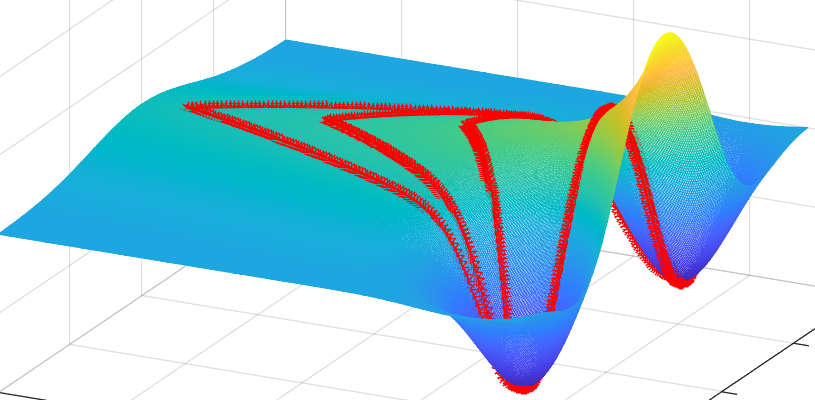}

	\caption{ We present here a hypothetical simplified graph of $\cE=\cE_\tau$. The paths which are drawn would be  the first terms of a min-maxing sequence. The distance of the submolecules along this sequence of paths cannot be uniformly bounded. As a consequence, the conclusion of Theorem \ref{thm:with-multipoles} \label{fig:unboundedness} would not hold. 
    Compare with Figure~\ref{fig:optpath}, where an optimal path exists. }

\end{figure}

\begin{remark}
    Theorem \ref{thm:with-multipoles}, aside from being an important step towards proving existence of an optimal path, also has a meaning on its own. To see this we refer to Figure \ref{fig:unboundedness} showing a hypothetical simplified graph of $\cE=\cE_\tau$ as an example in which a min-maxing sequence has to be unbounded. In this case, optimization of the energy would happen through infinite separation of the submolecules, a situation that would be unphysical. Moreover, a transition state would not exist. We recall, however, that Theorem \ref{thm:with-multipoles} together with Theorem 4 in \cite{Lewin-04b} implies existence of a transition state  in the \textbf{(NR)} case. As we mentioned above, we conjecture that Theorem 4 in \cite{Lewin-04b} holds in the \textbf{(SR)} case as well. If this is the case, then existence of a transition state follows similarly in the \textbf{(SR)} case. 
\end{remark}

\begin{remark} We believe that	the non-degeneracy assumption~\eqref{ass:octapole} on the octopole, in case this is the first non-vanishing multipole moment, is physical. We refer the interested reader to \cite{al} Lemma 1.9 and the related discussion. 
\end{remark}

 We have not been able to obtain a result if $n_1+n_2=5$, \textit{i.e.} when the leading multipole-multipole interaction has the same order as the van der Waals attraction. In the case where $n_1+n_2\geq6$, as soon as we have the static upper bound \eqref{eq:upper}, the following result of~\cite{al} can be adapted without any modification to the \textbf{(SR)} case, thus we will not give a proof of it here.

\begin{theorem}[\cite{al}, Theorem~1.7]
 Let us assume Condition~\ref{neutr} and that $\hat{H}_1, \hat{H}_2$ have ground states so that, for the leading multipole terms, we have $n_1 + n_2 \geq 6$.
 
 Then, one can find a bounded min-maxing sequence of paths $\tau_n(t)$ for the min-max problem~\eqref{eq:mountain_pass}.
\end{theorem}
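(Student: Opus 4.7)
The proof plan rests on the observation that when $n_1+n_2\geq 6$ the van der Waals attraction dominates the long-range energy, leaving a uniform \emph{negative} gap below $E_\infty$. This makes any min-maxing sequence either automatically bounded or trivially truncatable at a fixed height, and completely bypasses the quasistatic Morse-theoretic machinery of Theorem~\ref{thm:with-multipoles}.

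First I extract the uniform gap from Theorem~\ref{thm:vanderwaals}(a). For any $(n,m)$ with $n+m\leq 5$ the hypothesis $n_1+n_2\geq 6$ forces $n<n_1$ or $m<n_2$, so one of the multipole moments $\cM^{(n)}_{\rho_{\Psi_1}},\cM^{(m)}_{\rho_{\Psi_2}}$ vanishes by Definition~\ref{def:FNVMM}, and hence $\cF^{(n,m)}(U\rho_1^{\Psi_1},V\rho_2^{\Psi_2})=0$ for arbitrary normalized ground states $\Psi_1,\Psi_2$. Combined with $E_1+E_2=E_\infty$ from Condition~\ref{neutr}, the upper bound collapses to
\[\cE_\tau \leq E_\infty - \frac{C_{\rm vdW}(\Psi_1\otimes\Psi_2,U,V)}{L^6} + O_{L^\infty}(L^{-7}).\]
Since $C_{\rm vdW}$ is strictly positive and jointly continuous on the compact set $\SO(3)\times\SO(3)$, its minimum is strictly positive, and there exist constants $c_*>0$ and $L_0>0$ such that
\[\cE_{(L,U,V)} \leq E_\infty - \frac{c_*}{L^6}\qquad \text{for all } L\geq L_0 \text{ and } (U,V)\in\SO(3)^2.\]

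Now let $\tau_n(t)=(L_n(t),U_n(t),V_n(t))$ be any min-maxing sequence and set $M_n:=\max_t \cE_{\tau_n(t)}$, so that $M_n\to c$. Suppose first $c<E_\infty$. Set $\varepsilon:=(E_\infty-c)/2>0$; by the uniform convergence~\eqref{eq:limit_L_infinity} there exists $L_\varepsilon$ with $\cE_{(L,U,V)}>E_\infty-\varepsilon=(E_\infty+c)/2$ for all $L>L_\varepsilon$ and all $(U,V)$. Since $M_n<(E_\infty+c)/2$ for $n$ large, any $t$ with $L_n(t)>L_\varepsilon$ would contradict $\cE_{\tau_n(t)}\leq M_n$; hence $L_n(t)\leq L_\varepsilon$ for all $t$ and all large $n$, and the sequence is already bounded, with no modification required.

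In the opposite case $c\geq E_\infty$, I fix $L_{\rm cut}\geq L_0$ larger than the $L$-components of both $\tau_0$ and $\tau_1$, and define the truncated path $\tilde\tau_n(t):=(\min(L_n(t),L_{\rm cut}),U_n(t),V_n(t))$, which is continuous and still connects $\tau_0$ to $\tau_1$. On $\{t:L_n(t)>L_{\rm cut}\}$ the uniform bound yields $\cE_{\tilde\tau_n(t)}\leq E_\infty-c_*/L_{\rm cut}^6<E_\infty\leq c$, while on the complement $\cE_{\tilde\tau_n(t)}=\cE_{\tau_n(t)}\leq M_n$. Because $M_n\to c\geq E_\infty>E_\infty-c_*/L_{\rm cut}^6$, we have $\max_t\cE_{\tilde\tau_n(t)}\leq M_n\to c$ for large $n$, while the definition of the mountain pass level forces $\max_t\cE_{\tilde\tau_n(t)}\geq c$. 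Thus $\tilde\tau_n$ is a bounded min-maxing sequence with $L\leq L_{\rm cut}$, concluding the proof. The main obstacle, which vanishes entirely in the regime $n_1+n_2\geq 6$, is the possibility of repulsive multipole-multipole interactions at large distances competing with the attractive tail; here the uniform strict positivity of $C_{\rm vdW}$ from Theorem~\ref{thm:vanderwaals}(a) combined with compactness of $\SO(3)^2$ provides exactly the $-c_*/L^6$ gap that makes the case analysis above elementary.
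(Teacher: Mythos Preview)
Your proof is correct and follows precisely the route the paper sketches: it says the result ``can be adapted without any modification'' from \cite{al} ``as soon as we have the static upper bound~\eqref{eq:upper}'', and that is exactly what you do---the vanishing of all multipole terms with $n+m\le 5$ reduces \eqref{eq:upper} to a uniform $-c_*/L^6$ gap below $E_\infty$, after which the truncation at $L_{\rm cut}$ is elementary. One cosmetic remark: in the case $c<E_\infty$ you conclude the original sequence is bounded only for large $n$, so strictly speaking a finite number of initial terms may need to be discarded or replaced; this is of course harmless for a min-maxing sequence.
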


 Even though the static information of Theorems \ref{thm:vanderwaals} and \ref{thm:vanderwaalsbis} is very useful, it is by no means enough to prove Theorem \ref{thm:with-multipoles} on the quasistatic behavior of the system, when the molecule slowly changes its shape. The next section describes the additional difficulties arising in the quasistatic analysis and the way that we have been able to overcome them.

\bigskip  
  
\subsection{Sketch of the proof of Theorem~\ref{thm:with-multipoles}}
In this section, we give the proof of Theorem~\ref{thm:with-multipoles}, but defer the proof of many intermediate results to the rest of the paper. The general strategy is the same as in \cite{al}. In this sketch, we will also briefly explain additional difficulties coming in the \textbf{(SR)} case.  

We will only consider the case where 
\begin{equation}\label{Ass:general}
	E_\infty=c,
\end{equation}
with $c, E_\infty$  defined in \eqref{eq:mountain_pass}, \eqref{eq:limit_L_infinity}, respectively.
If \eqref{Ass:general} is not fulfilled, then proving boundedness is easy and already known from previous works, see \emph{e.g.} \cite[Theorem~4]{Lewin-04b}. Roughly speaking, the reason is that if $E_{\infty} \neq c$, then the energy of the system does not go to the mountain pass level $c$  when the molecules go infinitely apart. Therefore, the mountain pass level $c$ should be attained at a finite distance.

  We want to prove that there exists a min-maxing sequence $(\tau_n)$ for the mountain pass problem~\eqref{eq:mountain_pass},   such that, for all $n$ and $t$, 
 \begin{equation}\label{LnLcut}
 L_n(t) \leq \lk,
 \end{equation}
 for an appropriate $\lk>0$. 
  Definition~\ref{def:minmax} and~\eqref{eq:mountain_pass} imply that there exists a min-maxing sequence
  $(\widetilde{\tau}_n(t))_{n \in \mathbb{N}}=(\widetilde{L}_n(t),\widetilde{U}_n(t),\widetilde{V}_n(t))_{n \in \mathbb{N}}$, for which $\widetilde{L}_n(t)$ might be unbounded. We will modify it to construct a min-maxing sequence $(\tau_n(t))_{n \in \mathbb{N}}=(L_n(t),U_n(t),V_n(t))_{n \in \mathbb{N}}$ such that \eqref{LnLcut} is fulfilled. We consider $\lk$ as a parameter and, for a given $n$, we
  start by looking  at the first time $t_0$ and the last time $t_1$ for which $\widetilde{L}_n(t)=\lk$. These times depend on $\lk$ and $n$, but we do not emphasize this in our notation. 

We will construct a path $\tau_n(\cdot)$  which coincides with $\widetilde{\tau}_n(\cdot)$ outside $(t_0,t_1)$ and for which  $L_n(t)\equiv \lk$ for all $t\in[t_0,t_1]$. The difficulty is to link the two molecular orientations $(\widetilde{U}_n(t_0),\widetilde{V}_n(t_0))$ and $(\widetilde{U}_n(t_1),\widetilde{V}_n(t_1))$ by a continuous path in $\SO(3)\times \SO(3)$ on which the energy does not increase too much. More precisely, we want to find, for $\lk$ large enough, a continuous path $(U(t),V(t))_{t\in[t_0,t_1]}$ of rotations such that
$$(U(t_0),V(t_0))=(\widetilde{U}_n(t_0),\widetilde{V}_n(t_0)),\qquad (U(t_1),V(t_1))=(\widetilde{U}_n(t_1),\widetilde{V}_n(t_1)),$$
and
\begin{equation}
 \max_{t\in[t_0,t_1]}\cE_{(\lk,U(t),V(t))}
 \leq \max\big\{\cE_{(\lk,\widetilde{U}_n(t_{0}),\widetilde{V}_n(t_{0}))},\cE_{(\lk,\widetilde{U}_n(t_{1}),\widetilde{V}_n(t_{1}))}, E_\infty\big\}.
 \label{eq:goal_path_rotations}
\end{equation}
An arbitrary path would not work here, since there are orientations for which
the molecules repel each other. As a consequence, the maximal energy along the path could be bigger than $E_\ii$ and thus \eqref{eq:goal_path_rotations} would not hold. This is the case when $\cF^{(n_1,n_2)}(U,V)>0$, see \eqref{eq:thmbis}. If \eqref{eq:goal_path_rotations} holds, then, due to \eqref{Ass:general} and \eqref{eq:minmax}, $\tau_n$ is going to be a min-maxing sequence fulfilling \eqref{LnLcut}, as desired. 

We proceed as follows: 

\subsubsection*{\bf Step 1. Replacing the two points by local (pseudo-)minima} 
Our first step is to rotate the molecules at each point $t_0,t_1$, without changing the distance $\lk$, so that we reach a local minimum of the energy of the system with respect to rotations. In general, this is impossible if, in some directions, the energy takes values higher and lower than that of the current point, at arbitrarily small distances. Since we cannot exclude this  pathological behavior, we introduce, as in \cite{al}, the weaker notion of local pseudo-minimum. 

\begin{definition}[Local pseudo-minimum]\label{def:locpseu}
Let $\cW$ be any continuous function on $\SO(3)\times \SO(3)$. A point $(\overline U, \overline V) \in \SO(3)\times \SO(3)$ is called a \emph{local pseudo-minimum of $\cW$} if, for any  $A,B \in \Gamma$, as defined in~\eqref{def:so3}, there exists a sequence $t_n\to 0+$ with $t_n\neq0$ such that
$\cW(e^{t_n A} \overline U, e^{t_n B} \overline V) \geq \cW(\overline U, \overline V)$.
\end{definition}

In other words, a local pseudo-minimum is a point at which one cannot find a direction in $\SO(3)^2$ in which  $\cW$ strictly decreases. If $\cW$ is $C^2$ and has a local pseudo-minimum at $(\overline{U},\overline{V})$, one must then have 
\begin{equation}\label{cond:pseudomin}\frac{\d}{\d t}\cW(e^{tA}\overline U,e^{tB} \overline V)\big|_{t=0}=0,\qquad \frac{\d^2}{\d t^2}\cW(e^{tA}\overline U,e^{tB} \overline V)\big|_{t=0}\geq0
\end{equation}
for all  $A,B \in \Gamma$, but $(\overline U,\overline V)$ does not need to be a true local minimum. 
A counterexample can be found even for functions of one variable: if $$f(x)=x^4\sin(1/x) \chi_{\{x<0\}}(x) + x^4 \chi_{\{x>0\}}(x), \qquad x \in \R,$$
where $\chi_C$ denotes the characteristic function of the set $C$, then $f$ has a local pseudo-minimum at zero but not a local minimum at 0. 

The fact that we can reach a local pseudo-minimum can be shown with the help of the following lemma, already proven in \cite{al}. We state it for simplicity on $\SO(3)\times \SO(3)$ but it is  more general.

\begin{lemma}[\cite{al}, Lemma~1.10: linking any point to a local pseudo-minimum]\label{lem:flowpseudolocmin}
Let $\cW$ be a continuous function on $\SO(3)\times \SO(3)$. Let $(U,V)$ be a point that is not a local pseudo-minimum of $\cW$. Then there exists a continuous path  on $\SO(3)\times \SO(3)$ linking $(U,V)$ to a local pseudo-minimum $(\overline U,\overline V)$, such that the maximum value of $\cW$ on this path is $\cW(U,V)$.
\end{lemma}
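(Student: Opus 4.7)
The plan is to argue by a compactness-based minimization: find an endpoint within the sublevel set $S := \cW^{-1}((-\infty, \cW(U,V)])$ whose value is as small as possible and then show that any such minimizing endpoint must automatically be a local pseudo-minimum. Set $c_0 := \cW(U,V)$, note that $S$ is closed (hence compact) in $\SO(3)\times\SO(3)$, and let $\mathcal{P}$ be the set of continuous paths $\gamma:[0,1]\to S$ with $\gamma(0)=(U,V)$. Define
$$
w^* := \inf_{\gamma \in \mathcal{P}} \cW(\gamma(1)),
$$
which is finite since $\cW$ is bounded below on the compact space $\SO(3)\times\SO(3)$.

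First, I would show that $w^* < c_0$. Indeed, negating Definition~\ref{def:locpseu} at $(U,V)$ yields matrices $A,B\in\Gamma$ and some $\delta > 0$ such that $\cW(e^{tA}U, e^{tB}V) < c_0$ for every $t\in(0,\delta)$; the rescaled curve $t\mapsto(e^{t\delta A/2}U,e^{t\delta B/2}V)$ for $t\in[0,1]$ thus lies in $\mathcal{P}$ with endpoint strictly below $c_0$.

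Next, I would take a minimizing sequence $\gamma_n\in\mathcal{P}$. Compactness gives a subsequence of endpoints $p_n := \gamma_n(1)$ converging to some $\overline p = (\overline U,\overline V)$, and continuity of $\cW$ yields $\cW(\overline p) = w^*$. Because $w^* < c_0$, there is an open neighborhood $\mathcal N$ of $\overline p$ on which $\cW < c_0$; for $n$ large, $p_n\in\mathcal N$, and we may join $p_n$ to $\overline p$ by a short continuous path inside $\mathcal N\subset S$ (for instance, the product of geodesic arcs in each $\SO(3)$ factor). Concatenating $\gamma_n$ with this connector gives a path $\gamma^*\in\mathcal{P}$ from $(U,V)$ to $\overline p$. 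Since $\gamma^*$ stays in $S$ and starts at $(U,V)$, the maximum of $\cW$ along $\gamma^*$ is exactly $c_0 = \cW(U,V)$, as required.

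It remains to verify that $\overline p$ is a local pseudo-minimum. If not, then by the negation of Definition~\ref{def:locpseu} there exist $A,B\in\Gamma$ and $\delta>0$ such that $\cW(e^{tA}\overline U,e^{tB}\overline V) < w^*$ for every $t\in(0,\delta)$. Appending to $\gamma^*$ the curve $t\mapsto(e^{tA}\overline U,e^{tB}\overline V)$ for $t\in[0,\delta/2]$ yields a new path still in $\mathcal{P}$ (since the extension lies strictly below $w^*\leq c_0$) whose endpoint value is strictly less than $w^*$, contradicting the definition of $w^*$. The delicate point to spell out carefully is precisely this last step: the correct reading of the negation of local pseudo-minimality provides a whole interval $(0,\delta)$ on which $\cW$ strictly decreases, not merely a subsequence $t_n\to 0+$, and it is this uniform descent that guarantees the appended curve remains inside $S$ so the extension is a legitimate element of $\mathcal{P}$. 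With this subtlety handled, $\overline p$ is a local pseudo-minimum and $\gamma^*$ is the desired path.
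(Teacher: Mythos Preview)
Your proof is correct. The paper does not give its own proof of this lemma; it simply cites \cite{al}, Lemma~1.10, so there is no in-paper argument to compare against. Your compactness-based minimization over the set of admissible path endpoints is a clean and natural approach: the key observations---that the negation of Definition~\ref{def:locpseu} yields a full interval $(0,\delta)$ of strict descent (not merely a subsequence), and that this allows both the construction of an initial descending path and the final contradiction---are correctly identified and used. The only place to be slightly more explicit is the connector in step~7: one should note that since $\SO(3)\times\SO(3)$ is a Riemannian manifold, for $n$ large enough the short geodesic from $p_n$ to $\overline p$ lies in a geodesic ball about $\overline p$ contained in $\mathcal N$; you hint at this but it is worth stating.
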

 
We will apply these results to the function $\cE_{\lk}$, where, for $L>2\max\{|y_j|:j \in \{1,\dots,M\}\}$,
\begin{equation}\label{ELUV}
\cE_{L}(U,V):=\cE_{(L,U,V)}.
\end{equation}
Note that the last restriction on $L$ makes sure that no nuclei coincide.  
Using Lemma \ref{lem:flowpseudolocmin},  we can connect the point $(\widetilde{U}_n(t_0), \widetilde{V}_n(t_0))$ to a local pseudo-minimum $(\overline U_0,\overline V_0)$ of $\cE_{\lk}$, with a path along which the energy stays below the value $\cE_{(\lk,\widetilde{U}_n(t_0), \widetilde{V}_n(t_0))}$. We proceed analogously for $t_1$ and connect $(\widetilde{U}_n(t_1),\widetilde{V}_n(t_1))$ to a local pseudo-minimum $(\overline U_1,\overline V_1)$ of $\cE_{\lk}$, with a path along which the energy stays below $\cE_{(\lk,\widetilde{U}_n(t_1), \widetilde{V}_n(t_1))}$. In the next steps, we will study the properties of these local pseudominima and finally connect them by a path, without increasing the energy too much.

\medskip

\subsubsection*{\bf Step 2. Properties of the leading multipolar interaction at the local pseudominima $(\overline U_0,\overline V_0)$ and $(\overline U_1,\overline V_1)$}
In Step 1, we have managed to reach two points $(\overline U_0,\overline V_0)$ and $(\overline U_1,\overline V_1)$ which are local pseudo-minima of the energy $\cE_{\lk}$. In this step we prove that 
\begin{equation}\label{Eestimates}\cE_{(\lk,\overline U_0,\overline V_0)}<E_\ii,\qquad \cE_{(\lk,\overline U_1,\overline V_1)}<E_\ii,\end{equation}
independently of $n$, provided we had chosen $\lk$ large enough. This means that
 the configurations $(\lk,\overline U_0,\overline V_0),(\lk,\overline U_1,\overline V_1)$ have an energy strictly below the dissociation threshold $c=E_\ii$ -- recall \eqref{eq:mountain_pass} and \eqref{Ass:general}. More precisely, we show that $\exists \delta>0$ such that
  \begin{equation}\label{Festimates}
\cF^{(n_1,n_2)}(\overline U_0,\overline V_0) \leq -\delta,\qquad \cF^{(n_1,n_2)}(\overline U_1,\overline V_1) \leq -\delta,
\end{equation}
where we recall that $n_1$ and $n_2$ are the indices for the first non-vanishing multipoles of the two neutral submolecules, defined in Definition~\ref{def:FNVMM} and the $\cF^{(n_1,n_2)}$ are the multipolar interactions defined by~\eqref{multipol-coeff}. The estimate \eqref{Eestimates} follows  from \eqref{Festimates} and the expansion in Theorem~\ref{thm:vanderwaalsbis}. We now explain how to prove \eqref{Festimates}. We first need the following proposition.

\medskip

\begin{proposition}[Leading multipolar energy at a local pseudo-minimum]\label{prop:fullexp}
Assume that Condition \ref{poleslead} holds. Then, for any $\delta >0$, there exists an $L_m>0$  such that, if $L>L_m$ and $(\overline U,\overline V)$ is any local pseudo-minimum of $\cE_L$, defined in \eqref{ELUV}, then we have 

\begin{equation}\label{eq:almostlocminSR}
\left|\frac{\d}{\d t}\cF^{(n_1,n_2)}(e^{tA}\overline U,e^{tB}\overline V)\big|_{t=0}\right| \leq \delta, \qquad \frac{\d^2}{\d t^2}\cF^{(n_1,n_2)}(e^{tA}\overline U,e^{tB}\overline V) \big|_{t=0} \geq -\delta
\end{equation}
 for all   $A,B \in \Gamma$, recall \eqref{def:so3}.

\end{proposition}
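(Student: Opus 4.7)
My plan is to combine a $C^2$-regularity statement for the ground-state energy $(U,V) \mapsto \cE_L(U,V)$ at the pseudo-minimum $(\overline U, \overline V)$ (a result coming from the quasistatic analysis of Section~\ref{sec:firstderivative}) with a $C^2$-refinement of the asymptotic expansion of Theorem~\ref{thm:vanderwaalsbis}.

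Given such regularity, for each $A, B \in \Gamma$ the function $h(t) := \cE_L(e^{tA}\overline U, e^{tB}\overline V)$ is $C^2$ at $t=0$, and I would first argue that the pseudo-minimum condition forces the usual first- and second-order conditions $h'(0) = 0$ and $h''(0) \geq 0$. Indeed, if $h'(0) < 0$, then by continuity of $h'$ near $0$ and the fundamental theorem of calculus, $h(t) < h(0)$ for all small $t > 0$, contradicting the existence of a sequence $t_n \to 0^+$ with $h(t_n) \geq h(0)$; applying the same argument in the direction $(-A, -B) \in \Gamma$ gives the reverse inequality, so $h'(0) = 0$. The inequality $h''(0) \geq 0$ then follows from dividing $h(t_n) - h(0) = \tfrac{1}{2} h''(0) t_n^2 + o(t_n^2) \geq 0$ by $t_n^2$ and letting $n \to \infty$.

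The next step would be to invoke a strengthening of Theorem~\ref{thm:vanderwaalsbis} in which the $O_{L^\infty}$ remainder is upgraded to $O_{C^\infty}(1/L^{n_1+n_2+2})$, uniformly over $(\overline U, \overline V) \in \SO(3)^2$ and $A, B \in \Gamma$. Writing $g(t) := \cF^{(n_1,n_2)}(e^{tA}\overline U, e^{tB}\overline V)$, this gives $h(t) = E_\infty + g(t)/L^{n_1+n_2+1} + r(t)$ with $g$ smooth and bounded along with its derivatives (uniformly in $A, B, \overline U, \overline V$), and with $|r'(0)|, |r''(0)| \leq C/L^{n_1+n_2+2}$. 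The identity $h'(0) = 0$ then yields $|g'(0)| = L^{n_1+n_2+1} |r'(0)| \leq C/L$, and the inequality $h''(0) \geq 0$ yields $g''(0) \geq -L^{n_1+n_2+1} r''(0) \geq -C/L$. Choosing $L_m \geq C/\delta$ delivers both bounds simultaneously, uniformly in $A, B \in \Gamma$.

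The main obstacle is precisely this regularity input: the $C^2$ dependence of $\cE_L$ on rotations, and the $C^\infty$ propagation of the remainder through the multipolar expansion. Both are highly non-trivial in the semirelativistic setting because of the nonlocality of $T$ and the reduced Sobolev regularity of the ground states. As the authors themselves emphasize, the approach of \cite{al} does not adapt directly here, and one must quantify and refine the Hunziker-type ideas of \cite{Hunziker-86}; once that machinery is in place, the proposition is a short corollary via the argument above.
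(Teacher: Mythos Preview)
Your argument has the right final step, but the ``regularity input'' you plan to assume is not what the paper proves, and this is not a mere technicality. You propose to establish $C^2$ regularity of $\cE_L$ itself and to upgrade the remainder in Theorem~\ref{thm:vanderwaalsbis} to $O_{C^\infty}$. Neither of these is available: $\cE_L$ is defined through an infimum (the ground state eigenspace may be degenerate) and the paper explicitly notes at the start of Section~\ref{sec:firstderivative} that ``in principle, $\cE_L$ could be non-differentiable, due to the infimum in its definition''. Likewise, the expansion~\eqref{eq:thmbis} comes via Proposition~\ref{prop:nopoles}, whose right-hand side involves a minimum over $\Psi\in\cG_1\otimes\cG_2$, so there is no reason for it to inherit $C^\infty$ control. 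The Hunziker machinery by itself does not yield smoothness of the actual ground state energy in this fermionic, possibly degenerate setting.

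The paper's key idea, which your proposal misses, is to \emph{bypass} regularity of $\cE_L$ entirely. One constructs, via~\eqref{Phitautdef}, an explicit family $\Phi_\tau$ such that the Rayleigh quotient $\tilde\cE_\tau = \langle\Phi_\tau,H_\tau\Phi_\tau\rangle/\|\Phi_\tau\|^2$ is a smooth \emph{upper bound} for $\cE_\tau$ that coincides with $\cE_\tau$ at $\tau_0=(L,\overline U,\overline V)$ (because $\Phi_{\tau_0}$ is taken to be a true ground state). These two facts alone force $(\overline U,\overline V)$ to be a local pseudo-minimum of $\tilde\cE_\tau$ as well, so one obtains $h'(0)=0$ and $h''(0)\geq 0$ for $h(t)=\tilde\cE_{\tau_t}$ without ever differentiating $\cE_L$. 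The Hunziker-type analysis (Proposition~\ref{reg:Etilde}) is then applied to $\tilde\cE_\tau$, not to $\cE_\tau$, and yields the $C^\infty$ expansion~\eqref{eq:regEtilde} for this auxiliary function. From there your final paragraph applies verbatim. The substitution $\cE_L\leadsto\tilde\cE_L$ is the missing idea.
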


\begin{proof}
The proof can be found in Section \ref{sec:firstderivative}.
\end{proof}

\medskip
Proposition \ref{prop:fullexp} makes it possible to extract information on the change of the ground state energy with respect to the orientations close to a local pseudo-minimum.
Intuitively, it says that a local minimum of $\cE_L$ is close to being a local minimum of the leading multipole-multipole term $\cF^{(n_1,n_2)}$. This was already proven in \cite{al} in the \textbf{(NR)} case under an irreducibility assumption on the ground state eigenspaces $\cG_1, \cG_2$. 
Nevertheless, it has not been possible for us to adapt these arguments to the \textbf{(SR)} case. Indeed,  in the \textbf{(NR)} case, the eigenstates are in $H^2$. In the \textbf{(SR)} case, it is not even clear that they are in $H^1$. In addition, the nonlocality of the kinetic energy operator makes things more complicated too.
 We also emphasize the fact that, for the results of \cite{bhv}, it was not necessary to differentiate the energy and therefore their methods are not applicable for the proof of Proposition \ref{prop:fullexp}.  
 We overcome these difficulties by using ideas in~\cite{Hunziker-86}, that Hunziker used to prove that the ground state energy of $H_N(Y,\cZ)$ depends analytically on $Y$ in the \textbf{(NR)} case and if the fermionic statistics is ignored. We rework these ideas, so that they provide quantitative information for derivatives and are applicable in the  \textbf{(SR)} case as well. Our proof of Proposition \ref{prop:fullexp}
 in the \textbf{(NR)} case is simpler than the one in \cite{al} and provides information for higher order derivatives as well. Modifying an idea of \cite{al} we could take the fermionic statistics into account.

 The rest  of the proof is the same as in ~\cite{al}, as Propositions \ref{prop:localmin}
 and \ref{prop:connected} below were proven in \cite[Section~4]{al} independently of the kind of the kinetic energy and of the irreducibility assumption. Proving Propositions  \ref{prop:localmin}
 and \ref{prop:connected} was the hardest part of the analysis of \cite{al} but, since there are no modifications needed in our setting, we will only state them and explain how we can conclude  the proof with them. 
The first result is rather intuitive, since the average of $\cF^{(n_1,n_2)}$ over all orientations is zero.

\begin{proposition}[\cite{al}, Proposition~1.12]\label{prop:localmin}
Let $m$, $n\in\N$ with $n+m\in\{2,3,4\}$.
If $n$ or $m$ is $3$, assume that the condition on the octopole moment stated in Theorem~\ref{thm:with-multipoles} holds. Then 
there exists a $\delta>0$  such that,  if  $\cF^{(n,m)}$ fulfills \eqref{eq:almostlocminSR}, then
\begin{equation*}
\cF^{(n,m)}(\overline U,\overline V) \leq -\delta. 
\end{equation*}
\end{proposition}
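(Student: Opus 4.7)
The plan is to argue by contradiction and compactness. Suppose no uniform $\delta > 0$ works. Then there exist a sequence $\delta_k \downarrow 0$ and points $(\overline U_k, \overline V_k) \in \SO(3)^2$ at which \eqref{eq:almostlocminSR} holds with $\delta_k$ in place of $\delta$, while $\cF^{(n,m)}(\overline U_k, \overline V_k) > -\delta_k$. Since $\SO(3)^2$ is compact and $\cF^{(n,m)}$ is smooth (indeed polynomial in the matrix entries of $U, V$), passing to a convergent subsequence produces a limit point $(\overline U_*, \overline V_*)$ at which every first directional derivative along one-parameter subgroups $e^{tA}, e^{tB}$ (with $A, B \in \Gamma$) vanishes, every second derivative is non-negative, and $\cF^{(n,m)}(\overline U_*, \overline V_*) \geq 0$. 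It therefore suffices to rule out the existence of such a point.

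The main structural input is that $\cF^{(n,m)}$ has zero mean on $\SO(3)^2$ with respect to the Haar measure. This follows from formula \eqref{multipol-coeff} and the fact that the multipole tensors $\cM^{(n)}_\rho$ for $n \geq 1$ are symmetric and traceless, so averaging their rotated versions against Haar measure returns zero by Schur's lemma applied to the associated non-trivial irreducible representation of $\SO(3)$. Since the proposition is non-vacuous only when the multipoles in question do not vanish, $\cF^{(n,m)}$ is not identically zero and hence is strictly negative somewhere. The question becomes whether it can possess a non-negative-valued ``critical point'' with non-negative Hessian in all directions.

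To answer this, I would exploit the polynomial structure. For $n + m \leq 4$, the function $\cF^{(n,m)}(U, V)$ is a polynomial in the matrix entries of $U$ and $V$ of total degree at most $n + m$, with the multipole tensors $\cM^{(n_1)}_1, \cM^{(n_2)}_2$ entering linearly. Choosing coordinates that put these tensors in a normal form (principal axes) and using the parametrisation $U = e^{tA}\overline U_*$, $V = e^{tB}\overline V_*$, I would expand $\cF^{(n,m)}$ to second order in $t$ along each direction $(A, B) \in \mathfrak{so}(3)^2$. The vanishing of all first derivatives and the non-negativity of all second derivatives produce a system of algebraic constraints on $\overline U_*, \overline V_*$, which I would analyse case by case -- dipole-dipole ($n = m = 1$), dipole-quadrupole ($\{n, m\} = \{1, 2\}$), quadrupole-quadrupole ($n = m = 2$) and dipole-octopole ($\{n, m\} = \{1, 3\}$) -- and in each case show that these constraints, combined with non-vanishing of the multipoles, force $\cF^{(n,m)}(\overline U_*, \overline V_*) < 0$. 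The non-degeneracy hypothesis \eqref{ass:octapole} on the octopole, valid when $n$ or $m$ equals $3$, is used precisely to exclude a pathological configuration in which the octopole couples trivially along some rotation direction.

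The hard part will be this case analysis, especially the quadrupole-quadrupole and dipole-octopole cases, which require carefully unpacking the contractions in \eqref{multipol-coeff} and using both the symmetry and the tracelessness of the multipole tensors to convert local constraints at $(\overline U_*, \overline V_*)$ into global algebraic identities that contradict the non-vanishing of $\cM^{(n_1)}_1, \cM^{(n_2)}_2$. The other ingredients -- compactness, the mean-zero property, and the polynomial expansion -- are essentially standard once the structure has been set up.
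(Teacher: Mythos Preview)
The paper does not supply its own proof of this proposition: it simply cites \cite[Section~4]{al}, noting that the argument there is independent of the kinetic-energy choice and the irreducibility hypothesis, and remarking that this proposition (together with Proposition~\ref{prop:connected}) was ``the hardest part of the analysis of \cite{al}''. Your outline --- a compactness reduction to a genuine critical point $(\overline U_*,\overline V_*)$ of $\cF^{(n,m)}$ with positive semidefinite Hessian and non-negative value, followed by a case-by-case algebraic analysis of the multipole contractions --- is exactly the scheme carried out in \cite{al}, so in that sense you have identified the right strategy and correctly located where the work lies.

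Two comments. First, the mean-zero observation is correct but plays no real role: it tells you $\cF^{(n,m)}$ takes negative values, but a smooth non-constant function on a compact manifold always has critical points with positive Hessian at which the value is the global minimum, so the existence of negative values is not the issue. The content is entirely in showing that \emph{every} critical point with non-negative Hessian has strictly negative value, and this requires the explicit algebra you defer. Second, your proposal stops precisely where the difficulty begins: the dipole--dipole case is classical, but the quadrupole--quadrupole and dipole--octopole cases in \cite{al} occupy many pages and rely on careful normal-form reductions of the traceless symmetric tensors together with a Morse-index argument (critical points with Morse index $\leq 1$ must have negative value, and one rules out index $0$ or $1$ at non-negative levels). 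Nothing in your sketch is wrong, but the substantive verification --- which the present paper explicitly declines to reproduce --- is absent.
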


\medskip

Using Propositions~\ref{prop:fullexp} and~\ref{prop:localmin}, we obtain that, if $\lk$ is large enough, there exists $\delta>0$ such that \eqref{Festimates} holds
and, in view of~\eqref{eq:thmbis},
\begin{equation*}
	\cE_{(\lk,\overline U_0,\overline V_0) }\leq E_\infty- \frac{\delta}{2}, \qquad   	\cE_{(\lk,\overline U_1,\overline V_1)} \leq E_\infty- \frac{\delta}{2},
\end{equation*}
provided that $\lk$ is large enough. 
\medskip

\subsubsection*{\bf Step 3. Linking the local pseudominima  $(\overline U_0,\overline V_0)$ and $(\overline U_1,\overline V_1)$ with a path of low energy}
 We have so far managed to show that the leading multipole-multipole interaction term of the two new points $(\overline U_0,\overline V_0)$ and $(\overline U_1,\overline V_1)$ is negative, away from zero.  We  connect $(\overline U_0,\overline V_0)$ and $(\overline U_1,\overline V_1)$ with a path along which $\cF^{(n_1,n_2)}$ stays negative, which  implies, if $\lk$ is large enough, that  $\cE_{(\lk,U(t),V(t))}<E_\infty$ for all $t$, by Theorem~\ref{thm:vanderwaalsbis}. Note that, for this part, the non-degeneracy assumption \eqref{ass:octapole} for the octopole moment is not needed. The existence of such a path is given in the following proposition. 

\begin{proposition}[\cite{al} Proposition 1.13, Connectedness of $\{\cF^{(n,m)}\leq -\delta\}$]\label{prop:connected}
Let $n,m \in \mathbb{N}$ with $n+m \in \{2,3,4\}$. Then there exists $\delta_0>0$ such that, for all $0<\delta<\delta_0$, the set $\{(\U,\V) \in \SO(3): \cF^{(n,m)}(\U,\V) <-\delta\}$ is nonempty and pathwise connected.
\end{proposition}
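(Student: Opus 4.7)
My plan is to first establish non-emptiness via a zero-average argument, and then prove pathwise connectedness by an explicit case-by-case analysis exploiting the small degree of $\cF^{(n,m)}$ as a polynomial in the entries of $U,V$.

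For non-emptiness, note that $\cF^{(n,m)}$ is a non-trivial polynomial on $\SO(3) \times \SO(3)$ (the leading multipole tensors are non-zero by the standing assumption) whose integral against the bi-invariant Haar measure on $\SO(3) \times \SO(3)$ vanishes: indeed, averaging $U \mapsto U\cM^{(n)}_1$ against Haar measure on $\SO(3)$ returns the isotropic component of the tensor, which vanishes for any traceless symmetric tensor of order $n \geq 1$ (and multipole tensors are traceless, since $1/|x|$ is harmonic away from $0$). Hence $\cF^{(n,m)}$ takes negative values, and by compactness of $\SO(3) \times \SO(3)$ it attains a negative minimum $c_* := \min \cF^{(n,m)} < 0$. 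Continuity then guarantees that, for every $0 < \delta < -c_*$, the sublevel set $\{\cF^{(n,m)} < -\delta\}$ is non-empty.

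For pathwise connectedness I would proceed case-by-case on $(n,m)$ with $n+m \in \{2,3,4\}$. Using invariance under the $\mathrm{SO}(2) \times \mathrm{SO}(2)$ stabilizer of $e_1$ in each factor and a normalization of the principal axes of each multipole tensor, $\cF^{(n,m)}$ reduces to a trigonometric polynomial of degree $n+m \leq 4$ in a small number of Euler angles, whose critical set can be determined by explicit calculus. In the dipole-dipole case $(1,1)$, for instance, $\cF^{(1,1)}(U,V) = U\mu_1 \cdot V\mu_2 - 3(e_1 \cdot U\mu_1)(e_1 \cdot V\mu_2)$ is, up to stabilizers, a quadratic form on $S^2 \times S^2$ with minimum value $-2|\mu_1||\mu_2|$ attained on the two tori $(U\mu_1, V\mu_2) = (+e_1, +e_1)$ and $(-e_1, -e_1)$. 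These tori can be linked inside the sublevel set by an explicit \emph{counter-rotating} path: taking $U(t)\mu_1 = |\mu_1|(\cos \pi t, \sin \pi t, 0)$ and $V(t)\mu_2 = |\mu_2|(\cos \pi t, -\sin \pi t, 0)$, one computes $\cF^{(1,1)} = -|\mu_1||\mu_2|(1+\cos^2 \pi t)$, which stays $\leq -|\mu_1||\mu_2|$ throughout. Analogous but more elaborate explicit constructions handle $(1,2)$, $(2,1)$, $(2,2)$, $(1,3)$, $(3,1)$.

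The main obstacle is precisely this case-by-case work. The critical set of $\cF^{(n,m)}$ tends to be intricate, and the minimum set often decomposes into several components (as with the two tori above), so one must exhibit explicit bridges in the sublevel set and verify in each case that no barrier of non-negative $\cF^{(n,m)}$ lies in the way. Finally, to obtain the uniform statement for all $0 < \delta < \delta_0$ rather than just for $\delta$ near $-c_*$, I would choose $\delta_0$ strictly smaller than $|c_\#|$, where $c_\# < 0$ denotes the largest negative critical value of $\cF^{(n,m)}$ other than $c_*$; the topology of sublevel sets changes only at critical values, so connectedness then transfers across the entire interval $(0,\delta_0)$ by a standard Morse-theoretic continuation argument on the compact real-analytic manifold $\SO(3)\times \SO(3)$.
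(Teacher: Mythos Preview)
The paper does not give its own proof of this proposition; it quotes it from \cite{al} (Proposition~1.13 there, proved in Section~4) and explicitly describes that proof as ``the hardest part of the analysis of \cite{al}''. Your overall strategy --- a mean-zero argument for non-emptiness, then reduction modulo the $\mathrm{SO}(2)\times\mathrm{SO}(2)$ stabiliser of $e_1$ to trigonometric polynomials in Euler angles, followed by case-by-case explicit path constructions --- is indeed the approach taken in \cite{al}, and your treatment of the dipole--dipole case $(1,1)$ is essentially the argument given there.

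The genuine gap is exactly where you yourself flag it: the sentence ``Analogous but more elaborate explicit constructions handle $(1,2),(2,1),(2,2),(1,3),(3,1)$'' is not a proof but a hope, and in \cite{al} each of these cases requires its own non-trivial analysis (the quadrupole--quadrupole case $(2,2)$ and the octopole cases $(1,3),(3,1)$ are substantially longer than the dipole case, with the octopole analysis being the reason the non-degeneracy hypothesis~\eqref{ass:octapole} enters Proposition~\ref{prop:localmin}). There is no uniform shortcut that reduces the higher cases to the dipole one. Your closing Morse-theoretic argument is also not quite right as stated: $\cF^{(n,m)}$ is never a Morse function on $\SO(3)\times\SO(3)$, since the $\mathrm{SO}(2)\times\mathrm{SO}(2)$ symmetry forces every critical point to sit in a critical submanifold of positive dimension. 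The gradient-flow retraction between regular values still shows that the diffeomorphism type of $\{\cF^{(n,m)}<-\delta\}$ is constant on intervals of regular values, so that part can be salvaged; but you must then verify, case by case, that there are no critical values in $(c_\#,0)$ at which a new component is born, and that your bridge lies below the level $-\delta_0$ you choose --- which again forces the full critical-point analysis you have deferred.
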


Thus, we can connect $(\overline U_0,\overline V_0)$ and $(\overline U_1,\overline V_1)$ with a path along which the leading term of the interaction energy is negative.
If $\lk$ is large enough, then the difference $\cE_{(\lk,U(t),V(t))}-E_\infty$ remains negative along the path connecting $(\overline U_0,\overline V_0)$ and $(\overline U_1,\overline V_1)$.
We have thus found a new sequence
$\{\tau_n\}_{n \in \mathbb{N}}$, $\tau_n = (L_n, U_n, V_n)$ where $L_n \leq L_{\text{cut}}$ for any $n \in \N$ and for which \begin{equation}\label{eq:limsequence}\max_{t\in[0,1]}\cE_{\tau_n(t)}\leq\max(E_\infty,\max_{t\in[0,1]}\cE_{\widetilde{\tau}_n(t)}).\end{equation}

But, since we have assumed that $(\widetilde{\tau}_n)_{n \in \mathbb{N}}$ is a min-maxing sequence, we know that \[\lim_{n\to\infty}\max_{t\in[0,1]}\cE_{\widetilde{\tau}_n(t)}=c.\] 
We have assumed too, in~\eqref{Ass:general}, that $E_\infty=c$. We arrive thus from~\eqref{eq:limsequence} at
\[\limsup\max_{t\in[0,1]}\cE_{\tau_n(t)}\leq c, 
\] which implies that $(\tau_n)_{n \in \mathbb{N}}$ is a min-maxing sequence as well. 
\qed

\begin{figure}[t]
\centering
\includegraphics[width=10cm]{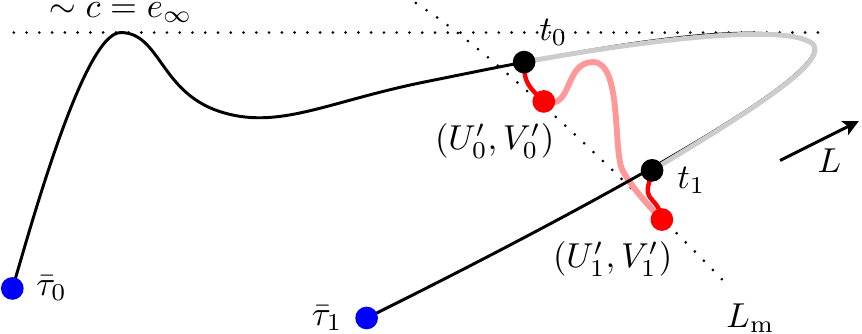}
\caption{Graphical representation of the proof of Theorem~\ref{thm:with-multipoles} in  Case 1 (the picture was taken from \cite{al} and was created by Mathieu Lewin).\label{fig:cut_path2}}
\end{figure}

\bigskip

\section{Proof of Theorems \ref{thm:vanderwaals} and \ref{thm:vanderwaalsbis}}\label{sec:vanderwaals}

\subsection{Upper bound: Proof of Theorem \ref{thm:vanderwaals}(a)}\label{subsec:upper}

In this section, we prove the first part of Theorem \ref{thm:vanderwaals} in the \textbf{(SR)} case. The analogous result for the \textbf{(NR)} case has already been proved in Section 2.2 of \cite{al}. We estimate the  interaction energy $\cE_\tau-E_1-E_2$ from above by the associated quadratic form calculated on a  suitable trial function. 

Let us introduce
\begin{equation*}
H_{\infty,\tau}:= H_{1,\tau} \otimes \mathbbm{1}^{\otimes 3N_2} +  \mathbbm{1}^{\otimes 3N_1}  \otimes H_{2,\tau}, 
\end{equation*}
with $H_{j,\tau}$, defined  in \eqref{def:H1tau}, \eqref{def:H2tau}:   $H_{\infty,\tau}$ is the Hamiltonian describing the 2 submolecules without interaction between each other. 
It will often be  useful to work in coordinates in which  $H_{\infty,\tau}$ does not depend on $\tau$.
For that reason, we introduce the unitary transformation $K_\tau: L^2(\R^{3N}) \to L^2(\R^{3N})$ defined by
\begin{align}\nonumber
(K_\tau \Psi)(x_1,\dots,x_{N_1},x_{N_1+1}, \dots,& x_N)\\ :=\label{def:Ktau}\Psi(U^{-1}x_1,\dots,U^{-1}x_{N_1}, & V^{-1} (x_{N_1+1}-Le_1),\dots,V^{-1} (x_{N}-Le_1)).
\end{align}
 In words, $K_\tau$ applies to the first $N_1$ electrons the transformation applied by $\tau$ to the nuclei of the first molecule and to the last $N_2$   electrons the transformation applied by $\tau$ to the nuclei of the second molecule. 
 We observe that
\begin{equation}\label{HinftytauHinfty}
 H_\infty = K_\tau^{-1} H_{\infty,\tau} K_\tau, 
\end{equation}
where we recall that $H_\infty$ was defined in   \eqref{def:Hinfty}. In light of \eqref{HinftytauHinfty} and \eqref{def:Htau}, it makes sense to introduce
\begin{equation}\label{def:Itilde}
\tilde{I}_{\tau} :=  K_\tau^{-1} I_\tau  K_\tau,
\end{equation}
which depends on $\tau$ even though the left-hand side of \eqref{HinftytauHinfty} does not. Explicitly, we have
\begin{multline}
\tilde{I}_{\tau} = -\sum_{i=1}^{N_1}\sum_{l=M_1+1}^M\frac{Z_l \alpha}{|U x_i - Vy_l - L e_1|} - \sum_{j=N_1 + 1}^N \sum_{k=1}^{M_1}\frac{ Z_k \alpha}{|U y_k-Vx_j -Le_1|} \\
+\sum_{k=1}^{M_1} \sum_{l={M_1+1}}^{M} \frac{ Z_l Z_k \alpha}{|U y_k-V y_l -Le_1|}+\sum_{i=1}^{N_1} \sum_{j=N_1 + 1}^N \frac{\alpha}{|Ux_i - Vx_j-Le_1|}. \label{eq:interactionmodified}
\end{multline}
Recall that $\chi_L$ was defined in \eqref{def:chiL}.
 We define on $L^2(\R^{3N})$ the operators
\begin{equation}\label{def:xi}
\xi_\tau:= (H_{\infty}^\bot -E_1-E_2)^{-1} \Pi^\bot \tilde{I}_{\tau} \,\chi^{\otimes N}_{L},
\end{equation}
and 
\begin{equation}\label{def:Wtau}
W_\tau:=\chi_L^{\otimes N}-\chi_{4L/3}^{\otimes N}\xi_\tau=\Big(1 - \chi_{4L/3}^{\otimes N} (H_\infty^\bot -E_1-E_2)^{-1} \Pi^\bot \tilde{I}_{\tau}\Big) \chi_L^{\otimes N},
\end{equation}
where $H_{\infty}^\bot, \Pi^\bot $ were defined in~\eqref{def:Hinftyperp} and below \eqref{def:Hinftyperp}, respectively.

We consider a normalized function $\Psitp \in \cG_1 \otimes \cG_2$, where we recall that $\cG_1, \cG_2$ were defined in \eqref{def:GSEigenspace}. For this section, the interesting case is $\Psitp= \Psi_1 \otimes \Psi_2$, where
 $\Psi_1$ and $\Psi_2$  are  in $\cG_1$ and $\cG_2$, respectively.  However, as most of our calculations are useful for the next sections for general normalized  $\Psitp  \in \cG_1 \otimes \cG_2$,
 we will work in this section for general $\Psitp$ as well, and we will specify the parts of the calculations for which the factorized form $\Psitp= \Psi_1 \otimes \Psi_2$ is necessary. 
  As we will prove below, in~\eqref{eq:normfunctionorder}, for $\Psi\in\cG_1\otimes\cG_ 2$, $W_\tau\Psi$ is a small perturbation of $\Psi$. More precisely, it is an approximation of the ground state given by the Feshbach map, see the brief sketch of the proof of Theorem~1.5 in~\cite{Anapolitanos-16}. The cut-off functions make some calculations easier.
 
 Since ground states of $\hat{H}_1$ and $\hat{H}_2$ have at least $H^{1/2}$ regularity, $\Psi \in \cG_1 \otimes \cG_2$ has  $H^{1/2}$ regularity as well  and, by~\cite[Theorem~7.18]{LL}, this is the case for $\chi_L^{\otimes N}\Psi$ too. Since the function $\tilde{I}_\tau$  as well as all its derivatives are bounded on the support of $ \chi_L^{\otimes N}\Psitp$, we see by~\eqref{def:Wtau} that $ W_\tau \Psitp$ is the sum of 2 terms, each of them being in $H^{1/2}(\R^{3N})$. Since $K_\tau$,  defined in~\eqref{def:Ktau}, is unitary in $H^{1/2}(\R^{3N})$,   $K_\tau  W_\tau \Psitp$ is in $H^{1/2}(\R^{3N})$ as well, so it is in the form domain of $H_\tau$, see \eqref{def:domainHamilton}.  As a consequence, 
 	\begin{equation}\label{testh12}
\frac{Q_N K_\tau  W_\tau \Psitp}{\|Q_N K_\tau  W_\tau \Psitp\|} \in H^\frac{1}{2}(\R^{3N}),
   \end{equation}
 where $Q_N$ is defined by~\eqref{def:Q}, can be used as a test function. A similar test function was used in \cite{Anapolitanos-16} and \cite{al} with different notation. 
By definition of the ground state energy, see \eqref{def:Etau}-\eqref{def:Htau}, we have that
\begin{align}\label{ineq:Etau}
\mathcal{E}_\tau -E_1-E_2 &\leq \frac{\langle Q_N K_\tau  W_\tau \Psitp , (H_{\tau} -E_1-E_2) Q_N K_\tau  W_\tau \Psitp\rangle}{\|Q_N K_\tau  W_\tau \Psitp\|^2}.    
\end{align}
 Note that $Q_N$ does not commute with $K_\tau$.  

We first get rid of the antisymmetrizer $Q_N$ in the right-hand side of \eqref{ineq:Etau}. How to do this is not as obvious as in the \textbf{(NR)} case since the Hamiltonian is not local. For this reason, we will explain it in detail. 
First, since $Q_N$ is a projection and it commutes with $H_\tau-E_1-E_2$, \[
 \langle Q_N K_\tau  W_\tau \Psitp , (H_{\tau} -E_1-E_2) Q_N K_\tau  W_\tau \Psitp\rangle = \langle Q_N K_\tau  W_\tau \Psitp , (H_{\tau} -E_1-E_2)  K_\tau  W_\tau \Psitp\rangle.                                                                
                                                                \]
In view of~\eqref{def:Q}, we can write
\begin{equation}\label{Q:0case}
 \langle Q_N  K_\tau  W_\tau \Psitp , (H_{\tau} -E_1-E_2)   K_\tau W_\tau \Psitp\rangle=\sum_{\sigma\in S_N}\frac{(-1)^\sigma}{N!}\langle \sigma\cdot  K_\tau  W_\tau \Psitp , (H_{\tau} -E_1-E_2)  K_\tau  W_\tau \Psitp\rangle,
\end{equation}
where we recall that $\sigma\cdot K_\tau W_\tau \Psitp$ has been defined in~\eqref{def:piPsi}.

We now have to distinguish between two cases: when $\sigma(\{1,...,N_1\})=\{1,...,N_1\}$, \textit{i.e.} when the permutation $\sigma$ leaves invariant the set of electrons associated with each of the submolecules, and when $\sigma(\{1,...,N_1\})\neq\{1,...,N_1\}$.

In  the first case, we have by construction, for such a $\sigma$, that $\sigma\cdot K_\tau  W_\tau \Psitp=(-1)^\sigma K_\tau  W_\tau \Psitp$. Indeed, the ground states $\Psi_1$ and $\Psi_2$ are antisymmetric and $H_\infty^\bot$, $K_\tau$ and $W_\tau$   are invariant under these permutations of variables. 
As a consequence
\begin{multline}\sum_{\sigma(\{1,...,N_1\})=\{1,...,N_1\}}(-1)^\sigma\langle \sigma\cdot K_\tau  W_\tau \Psitp , (H_{\tau} -E_1-E_2)  K_\tau  W_\tau \Psitp\rangle\\=N_1!N_2!\langle  K_\tau  W_\tau \Psitp , (H_{\tau} -E_1-E_2)  K_\tau  W_\tau \Psitp\rangle.\label{Q:1case}\end{multline}

On the other hand, if $\sigma(\{1,...,N_1\})\neq\{1,...,N_1\}$, then, due to   the cut-offs $\chi_{L}^{\otimes N}$, $\chi_{4L/3}^{\otimes N}$ and the translation in the definition of $K_\tau$ in \eqref{def:Ktau},  the functions $ K_\tau  W_\tau \Psitp$ and $\sigma\cdot K_\tau  W_\tau \Psitp$ have disjoint supports. Observing additionally that the only non-local terms in $H_\tau-E_\infty$ are the kinetic energy operators, we find that 
\begin{align}\label{nonlocal}\langle \sigma\cdot K_\tau  W_\tau \Psitp , (H_{\tau} -E_1-E_2)  K_\tau  W_\tau \Psitp\rangle =\sum_{j=1}^N\langle \sigma\cdot K_\tau  W_\tau \Psitp , T_j K_\tau  W_\tau \Psitp\rangle.\end{align}
But, for each $\sigma$ and $j$, we can find $k\neq j$ such that, either $k\in\{1,...,N_1\}$ and $\sigma(k)\in \{N_1+1,...,N\}$, or  $k\in\{N_1+1,...,N\}$ and $\sigma(k)\in\{1,...,N_1\}$. Thus, $  K_\tau  W_\tau \Psitp$ and $\sigma\cdot K_\tau  W_\tau \Psitp$ have disjoint supports in the $k$-th variable and, as a consequence,  
\begin{equation}\label{Q:2case}
 \langle \sigma\cdot K_\tau  W_\tau \Psitp , T_j   K_\tau  W_\tau \Psitp\rangle=0, \qquad \forall j \in \{1,\dots,N\}. 
 \end{equation}
 From~\eqref{Q:0case}, \eqref{Q:1case}, \eqref{nonlocal} and \eqref{Q:2case}, we find that
 \begin{equation}\langle Q_N K_\tau  W_\tau \Psitp , (H_{\tau} -E_1-E_2) Q_N  K_\tau  W_\tau \Psitp\rangle=\frac{\langle  K_\tau  W_\tau \Psitp , (H_{\tau} -E_1-E_2)   K_\tau  W_\tau \Psitp\rangle}{\binom{N}{N_1}}.\label{end-antisym}\end{equation}
A similar but easier calculation gives
\begin{equation}\label{norm-QNphi}\|Q_N K_\tau  W_\tau \Psitp\|^2=\frac{\| K_\tau  W_\tau \Psitp\|^2}{\binom{N}{N_1}}.\end{equation}

With~\eqref{end-antisym} and~\eqref{norm-QNphi}, we find 
\begin{align}\label{eq:getridofQ}\frac{\langle Q_N K_\tau   W_\tau \Psitp , (H_{\tau} -E_1-E_2) Q_N  K_\tau  W_\tau \Psitp\rangle }{\|Q_N  K_\tau  W_\tau \Psitp\|^2}    
 =\frac{\langle  K_\tau  W_\tau \Psitp , (H_{\tau} -E_1-E_2)  K_\tau  W_\tau \Psitp\rangle}{\| K_\tau  W_\tau \Psitp\|^2},\end{align}
which, using the fact that $K_\tau$ is unitary together with 
 \eqref{HinftytauHinfty}, \eqref{def:Itilde} and \eqref{def:Htau}, gives 
\begin{align} \label{gs:trialfunction} \frac{\langle Q_N K_\tau  W_\tau \Psitp , (H_{\tau} -E_1-E_2) Q_N  K_\tau  W_\tau \Psitp\rangle    }{\|Q_N  K_\tau  W_\tau \Psitp\|^2}
   =\frac{\langle   W_\tau \Psitp , (H_{\infty}  +\tilde{I}_\tau-E_1-E_2)  W_\tau \Psitp\rangle}{\|  W_\tau \Psitp\|^2}.\end{align}

In the following, we compute the right-hand side of~\eqref{gs:trialfunction}.  Moreover, we prove that it depends smoothly on $U$ and $V$, which will be very important for Section~\ref{sec:firstderivative}.

 We remark that, in $ W_\tau \Psitp$, the rotations $U$ and $V$ are present only in the interaction $\tilde{I}_\tau$ in $\xi_\tau \Psitp$.

Recall that $\Pi$ is the orthogonal projection on $\cG_1 \otimes \cG_2$. Since all functions on $\Ran \Pi$ are exponentially decaying in the sense of Corollary \ref{cor:expeigendecay}, we have 
\begin{equation}
 [\Pi,\chi_{4L/3}^{\otimes N}]=O_{(C^\infty,\cB(L^2))}(e^{-dL}), \quad
 [\Pi,\chi_{L}^{\otimes N}]=O_{(C^\infty,\cB(L^2))}(e^{-dL}) \label{commut-project},
\end{equation}
where $\cB(L^2)$ denotes the bounded operators on $L^2(\R^{3N})$ and we recall the notation \eqref{C2error}.
The $C^\infty$ estimate is here totally for free, since the left-hand sides do not depend on $U$ and $V$.
 Since
 by~\eqref{def:xi} \begin{equation}
  \Pi^\bot\xi_\tau=\xi_\tau\label{Pixi},         
            \end{equation} 
  we can use \eqref{commut-project} to find that 
\begin{align}\label{psiRtau}
 \langle\chi_L^{\otimes N}\Psitp,\chi_{4L/3}^{\otimes N} \xi_\tau \Psitp\rangle= \langle\chi_L^{\otimes N}\Pi^\perp\Psitp,\chi_{4L/3}^{\otimes N} \xi_\tau \Psitp\rangle + O_{C^\infty}(e^{-dL})= O_{C^\infty}(e^{-dL}),
\end{align}
where in the last step we have used that $\Psitp\in\Ran \Pi$. From \eqref{def:Wtau} and \eqref{psiRtau},
\begin{align}
\|  W_\tau \Psitp\|^2 &= \|\chi_L^{\otimes N} \Psitp\|^2 + \|\chi_{4L/3}^{\otimes N} \xi_\tau \Psitp\|^2 +  O_{C^\infty}(e^{-dL}) \label{eq:normfunction}.
\end{align}
 By Corollary~\ref{cor:expeigendecay}, there exists some $c>0$ such that
 \begin{equation}\label{expdecayGS}
  e^{c|\cdot|}\Psitp\in L^2(\R^{3N}).
 \end{equation}
Consequently, as $\Psi$ is normalized,
 \begin{equation}\label{norm:GS}
 \chi_L^{\otimes N} \Psitp=\Psitp+O_{(C^\infty,L^2(\R^{3N}))}(e^{-dL}).
 \end{equation}
The $C^\infty$ estimate follows again, since the left-hand side does not depend on $U$ and $V$.
Using \eqref{Taylorsatz} for $N=2$ it follows that
\begin{equation}\label{interleadterm1}
 \tilde{I}_{\tau} \chi_L^{\otimes N} \Psitp=  \frac{f_{(U,V)}}{L^3} \chi_L^{\otimes N} \Psitp+ O_{(C^\infty, L^2(\R^{3N}))}\left( \frac{1}{L^4} \right),
\end{equation}
 where $f_{(U,V)}$ is defined in  \eqref{expansion:fUV}. See \textit{e.g.} in Equation~(70) of~\cite{al}, for a derivation of the leading term. The $C^\infty$ smallness of the error term can be proven similarly as equation \eqref{eq:KCinfty}.
 
 Using \eqref{interleadterm1} and \eqref{norm:GS}, we arrive at
 \begin{equation}\label{interleadterm}
 \tilde{I}_{\tau} \chi_L^{\otimes N} \Psitp=  \frac{f_{(U,V)}}{L^3}  \Psitp+ O_{(C^\infty, L^2(\R^{3N}))}\left( \frac{1}{L^4} \right).
 \end{equation}
Since $(H_\infty^\bot-E_1-E_2)^{-1} \Pi_\tau^\bot$ is bounded from $L^2(\R^{3N})$ to itself and independent of $\tau$,
Equations~\eqref{def:xi} and~\eqref{interleadterm} imply that
\begin{equation}\label{norm:Rtau} \xi_\tau \Psitp=O_{(C^\infty, L^2(\R^{3N}))}\left( \frac{1}{L^3}\right).\end{equation}
Combining~\eqref{eq:normfunction},~\eqref{norm:GS}, and~\eqref{norm:Rtau}, we obtain
\begin{align}
 W_\tau \Psitp=\Psi +  O_{(C^\infty,L^2(\R^{3N}))}\left( \frac{1}{L^3}\right), \qquad \| W_\tau \Psitp\|^2=1 +  O_{C^\infty}\left( \frac{1}{L^6}\right).\label{eq:normfunctionorder}
\end{align}

We now calculate the numerator of the right-hand side of~\eqref{gs:trialfunction}. We will start with a decomposition of 
 $(H_\infty+\tilde{I}_\tau -E_1-E_2) W_\tau \Psitp$. Using \eqref{def:Wtau}, we find
 \begin{equation}\label{eq:HIE}
 (H_\infty +\tilde{I}_\tau-E_1-E_2) W_\tau \Psitp= (H_\infty+\tilde{I}_\tau -E_1-E_2) \chi_L^{\otimes N}  \Psitp - (H_\infty+\tilde{I}_\tau -E_1-E_2) \chi_{4L/3}^{\otimes N} \xi_\tau \Psitp.
 \end{equation}

On the one hand, since $(H_\infty -E_1-E_2)\Psitp=0$ and $\chi_L^{\otimes N}$ commutes with the potentials, we have 
\begin{align}\label{eq:HminusEW}
(H_\infty +\tilde{I}_\tau-E_1-E_2) \chi_L^{\otimes N}  \Psitp=\sum_{j=1}^N[T_j, \chi_L^{\otimes N}]\Psitp +\tilde{I}_\tau  \chi_L^{\otimes N}  \Psitp  .\end{align}

On the other hand, 
\begin{align}\nonumber
 (H_\infty+\tilde{I}_\tau -E_1-E_2) \chi_{4L/3}^{\otimes N} \xi_\tau \Psitp &=\sum_{j=1}^N[T_j, \chi_{4L/3}^{\otimes N}] \xi_\tau  \Psitp
 \\ \label{eq:HIExi} &\quad + \chi_{4L/3}^{\otimes N} (H_\infty -E_1-E_2)  \xi_\tau \Psitp+  \tilde{I}_\tau \chi_{4L/3}^{\otimes N} \xi_\tau \Psitp.
 \end{align}

Using~\eqref{Pixi}  
and that $H_\infty$ commutes with $\Pi^\bot$, we find that
\begin{equation}\label{zeta2}
(H_\infty -E_1-E_2) \xi_\tau \Psitp = ( H_\infty^\bot-E_1-E_2) \xi_\tau \Psitp,
\end{equation}
where recall that $H_\infty^\bot$ was defined in \eqref{def:Hinftyperp}.
From~\eqref{def:xi} and \eqref{zeta2} we obtain that
\begin{align}\label{Hxi}
( H_\infty -E_1-E_2) \xi_\tau \Psitp=\Pi^\bot\chi_{L}^{\otimes N} \tilde{I}_\tau 	\Psitp=\chi_{L}^{\otimes N} \tilde{I}_\tau 	\Psitp - \Pi \chi_{L}^{\otimes N} \tilde{I}_\tau 	\Psitp.
\end{align}      
Since, due to \eqref{def:chiL}, $\chi_{4L/3}$ is equal to 1 on the support of $\chi_L$, we have
\begin{equation}\label{eq:chichi}
 \chi_{4L/3}\chi_L=\chi_L.
\end{equation}
The equalities \eqref{Hxi} and~\eqref{eq:chichi} imply 
\begin{align}\label{Hxi1}
\chi_{4L/3}^{\otimes N}( H_\infty -E_1-E_2) \xi_\tau \Psitp=\chi_{L}^{\otimes N} \tilde{I}_\tau 	\Psitp - \chi_{4L/3}^{\otimes N} \Pi \chi_{L}^{\otimes N} \tilde{I}_\tau 	\Psitp.
\end{align} 
Inserting \eqref{Hxi1} in \eqref{eq:HIExi}, we obtain 
\begin{align}\nonumber
(H_\infty+\tilde{I}_\tau -E_1-E_2) \chi_{4L/3}^{\otimes N} \xi_\tau \Psitp= \sum_{j=1}^N[T_j,  &
 \chi_{4L/3}^{\otimes N}] \xi_\tau  \Psitp + \chi_{L}^{\otimes N} \tilde{I}_\tau 	\Psitp
 \\  \label{eq:HminusE} & 
  - \chi_{4L/3}^{\otimes N} \Pi \chi_{L}^{\otimes N} \tilde{I}_\tau 	\Psitp +  \tilde{I}_\tau \chi_{4L/3}^{\otimes N} \xi_\tau \Psitp.
\end{align}
Inserting now~\eqref{eq:HminusEW} and~\eqref{eq:HminusE} in  \eqref{eq:HIE}, we arrive at 
\begin{equation}
(H_\infty + \tilde{I}_\tau -E_1-E_2) W_\tau  \Psitp=\sum_{j=1}^N[T_j, \chi_L^{\otimes N}]\Psitp - \sum_{j=1}^N[T_j, \chi_{4L/3}^{\otimes N}] \xi_\tau  \Psitp    + \chi_{4L/3}^{\otimes N} \Pi \chi_{L}^{\otimes N} \tilde{I}_\tau 	\Psitp - \tilde{I}_\tau \chi_{4L/3}^{\otimes N} \xi_\tau \Psitp.\label{eq:HminusEWdec}
\end{equation}
We remark that
 \begin{equation*}   
\left\|\sum_{j=1}^N[T_j ,\chi_L^{\otimes N}]\Psi\right\|^2=\left\langle \sum_{j=1}^N[T_j ,\chi_L^{\otimes N}]\Psi,\sum_{j=1}^N[T_j \,,\,\chi_L^{\otimes N}]\Psi\right\rangle.
 \end{equation*}
 But, according to  Lemma~\ref{lem:expcommut1overL}, the left-hand factor of the scalar product is in $L^2$ and its norm can be controlled uniformly in $L$. Therefore, due to \eqref{expdecayGS}, we can apply  \eqref{expsmall} and find that 
 \begin{equation}\label{eq:TcomChiPsi}
\sum_{j=1}^N[T_j ,\chi_L^{\otimes N}]\Psi=O_{(C^\infty, L^2(\R^{3N}))}(e^{-dL}).
 \end{equation} Here, the $C^\infty$ regularity is trivial since the left-hand side does not depend on $U$ and $V$. Using  \eqref{eq:HminusEWdec}-\eqref{eq:TcomChiPsi} we find
\begin{multline}
(H_\infty + \tilde{I}_\tau -E_1-E_2) W_\tau  \Psitp=- \sum_{j=1}^N[T_j, \chi_{4L/3}^{\otimes N}] \xi_\tau  \Psitp    + \chi_{4L/3}^{\otimes N} \Pi \chi_{L}^{\otimes N} \tilde{I}_\tau 	\Psitp  \\ - \tilde{I}_\tau \chi_{4L/3}^{\otimes N} \xi_\tau \Psitp + O_{(C^\infty, L^2(\R^{3N}))}(e^{-dL})\label{est:HIEW}.
\end{multline}

In order to estimate the numerator of the right-hand side of~\eqref{gs:trialfunction}, we first take the scalar product of the terms of the right-hand side of~\eqref{est:HIEW} with $\chi_{4L/3}^{\otimes N}\xi_\tau\Psitp$.
We begin by observing that, thanks to Lemma~\ref{lem:expcommut1overL} and~\eqref{norm:Rtau}, 
\begin{equation}\label{eq:OL4-1}
 \sum_{j=1}^N[T_j, \chi_{4L/3}^{\otimes N}] \xi_\tau  \Psitp=O_{(C^\infty, L^2(\R^{3N}))}\left(\frac{1}{L^4}\right)
\end{equation}
which, using~\eqref{norm:Rtau} once more, gives
\begin{equation}\label{PSxi1}
 \left\langle \chi_{4L/3}^{\otimes N}\xi_\tau\Psitp,\sum_{j=1}^N[T_j, \chi_{4L/3}^{\otimes N}] \xi_\tau  \Psitp\right\rangle=O_{C^\infty}\left(\frac{1}{L^7}\right).
\end{equation}
We find from the expression of $\tilde{I}_\tau$ in~\eqref{eq:interactionmodified} that\[\tilde{I}_\tau\chi_{4L/3}^{\otimes N}=O_{(C^\infty, L^{\infty}(\R^{3N}))}\left(\frac{1}{L}\right),\] which combined with~\eqref{norm:Rtau}, gives that
\begin{equation}\label{eq:OL4-2}
 \tilde{I}_\tau \chi_{4L/3}^{\otimes N} \xi_\tau \Psitp=O_{(C^\infty, L^2(\R^{3N}))}\left(\frac{1}{L^4}\right).
\end{equation}
\eqref{norm:Rtau} and \eqref{eq:OL4-2} imply
\begin{equation}\label{PSxi2}
\langle \chi_{4L/3}^{\otimes N} \xi_\tau \Psitp,  \tilde{I}_\tau \chi_{4L/3}^{\otimes N} \xi_\tau \Psitp\rangle=O_{C^\infty}\left(\frac{1}{L^7}\right).
\end{equation}

Moreover, by~\eqref{Pixi},
\begin{equation}\label{PSxi3}
 \langle \chi_{4L/3}^{\otimes N} \xi_\tau \Psitp, \chi_{4L/3}^{\otimes N} \Pi \chi_{L}^{\otimes N} \tilde{I}_\tau 	\Psitp\rangle =\langle \Pi^\bot \xi_\tau \Psitp, [(\chi_{4L/3}^{\otimes N})^2, \Pi] \chi_{L}^{\otimes N} \tilde{I}_\tau 	\Psitp\rangle =O_{C^\infty}(e^{-dL})
\end{equation}
since $[(\chi_{4L/3}^{\otimes N})^2, \Pi]$ can be bounded as in~\eqref{commut-project}. In the last step we have also used \eqref{interleadterm} and \eqref{norm:Rtau}. 

Collecting~\eqref{PSxi1}, \eqref{PSxi2} and \eqref{PSxi3}, we find, with the help of \eqref{norm:Rtau} and~\eqref{est:HIEW}, that
\begin{equation}\label{PSxitot}
 \langle \chi_{4L/3}^{\otimes N} \xi_\tau \Psitp,(H_\infty + \tilde{I}_\tau -E_1-E_2) W_\tau  \Psitp\rangle=O_{C^\infty}\left(\frac{1}{L^7}\right).
\end{equation}

Taking now a scalar product of the terms of the right-hand side of~\eqref{est:HIEW} with $\chi_L^{\otimes N} \Psi$, we first  find
\begin{align}\label{eq:sp1}
\left\langle \chi_L^{\otimes N} \Psi, \sum_{j=1}^N[T_j, \chi_{4L/3}^{\otimes N}] \xi_\tau  \Psitp \right\rangle= -\left\langle \sum_{j=1}^N[T_j, \chi_{4L/3}^{\otimes N}]\chi_L^{\otimes N} \Psi,  \xi_\tau  \Psitp \right\rangle =    O_{C^\infty}(e^{-dL}),
\end{align}
by arguing as in the proof of \eqref{eq:TcomChiPsi} and using~\eqref{norm:Rtau}.
Then, we see that, by~\eqref{def:xi},
\[\langle \chi_{L}^{\otimes N} 	\Psitp,\tilde{I}_\tau \chi_{4L/3}^{\otimes N} \xi_\tau \Psitp\rangle=\langle\tilde{I}_\tau\chi_L^{\otimes N} \Psitp,\chi_{4L/3}^{\otimes N}\Pi^\bot(H_\infty^\bot -E_1-E_2)^{-1} \Pi^\bot \tilde{I}_{\tau}\chi_L^{\otimes N} \Psitp\rangle.\]
As a consequence, using \eqref{eq:chichi} and applying~\eqref{interleadterm},
we arrive at
\begin{equation}\label{chilpsiI}\langle \chi_{L}^{\otimes N} 	\Psitp,\tilde{I}_\tau \chi_{4L/3}^{\otimes N} \xi_\tau \Psitp\rangle=\frac{\langle f_{(U,V)}\Psitp,\Pi^\bot(H_\infty^\bot -E_1-E_2)^{-1} \Pi^\bot f_{(U,V)} \Psitp\rangle}{L^6}+O_{C^\infty}\left(\frac{1}{L^7}\right),\end{equation}
 where we also used that $(H_\infty^\bot -E_1-E_2)^{-1}$ is bounded and independent of $\tau$.
The definition of the van der Waals constant $C_{\mathrm{vdW}}$  given in~\eqref{def:VdWterm}, together with \eqref{chilpsiI} implies
\begin{equation}\label{res:int2}\langle \chi_{L}^{\otimes N} 	\Psitp,\tilde{I}_\tau \chi_{4L/3}^{\otimes N} \xi_\tau \Psitp\rangle=\frac{C_{\mathrm{vdW}}(\Psitp,U,V)}{L^6}+O_{C^\infty}\left(\frac{1}{L^7}\right).\end{equation}
With \eqref{est:HIEW}, \eqref{eq:sp1} and~\eqref{res:int2}, we find that
\begin{multline}\nonumber
\langle \chi_{L}^{\otimes N} 	\Psitp,(H_\infty + \tilde{I}_\tau -E_1-E_2) W_\tau  \Psitp\rangle=-\frac{C_{\mathrm{vdW}}(\Psitp,U,V)}{L^6}  +\langle\chi_{L}^{\otimes N} 	\Psitp,  \chi_{4L/3}^{\otimes N} \Pi \chi_{L}^{\otimes N} \tilde{I}_\tau 	\Psitp\rangle+O_{C^\infty}\left(\frac{1}{L^7}\right),
\end{multline}
which together with~\eqref{PSxitot} and~\eqref{def:Wtau} gives
\begin{equation}\label{PSWtot2}
\langle W_{\tau} 	\Psitp,(H_\infty + \tilde{I}_\tau -E_1-E_2) W_\tau  \Psitp\rangle=-\frac{C_{\mathrm{vdW}}(\Psitp,U,V)}{L^6}  +\langle\chi_{L}^{\otimes N} 	\Psitp,  \chi_{4L/3}^{\otimes N} \Pi \chi_{L}^{\otimes N} \tilde{I}_\tau 	\Psitp\rangle+O_{C^\infty}\left(\frac{1}{L^7}\right).
\end{equation}

Using~\eqref{commut-project}, \eqref{eq:chichi}  and  the fact that  $\Pi\Psi=\Psi$, we find
\begin{equation}\label{PichiPsi}
 \langle\chi_{L}^{\otimes N} 	\Psitp,  \chi_{4L/3}^{\otimes N} \Pi \chi_{L}^{\otimes N} \tilde{I}_\tau 	\Psitp\rangle=\langle\chi_{L}^{\otimes N} 	\Psitp,    \tilde{I}_\tau \chi_{L}^{\otimes N}	\Psitp\rangle+O_{C^\infty}(e^{-dL}).
\end{equation}
\eqref{PSWtot2} and \eqref{PichiPsi} imply
\begin{align}\label{PSWtot}
\langle W_{\tau} 	\Psitp,(H_\infty + \tilde{I}_\tau -E_1-E_2) W_\tau  \Psitp\rangle=\langle\chi_{L}^{\otimes N} 	\Psitp,    \tilde{I}_\tau \chi_{L}^{\otimes N}	\Psitp\rangle-\frac{C_{\mathrm{vdW}}(\Psitp,U,V)}{L^6}+O_{C^\infty}\left(\frac{1}{L^7}\right).
\end{align}

So far, we have worked with a general normalized state $\Psi \in \cG_1 \otimes \cG_2$. Using ~Lemma~\ref{thm:multipol2} with the molecular charge densities defined by~\eqref{def:MCD}, we obtain in the special case $\Psi=\Psi_1 \otimes \Psi_2$
\begin{align}\nonumber
\langle 	\chi_L^{\otimes N} \Psi_1 \otimes \Psi_2 ,    \tilde{I}_{\tau}   	\chi_L^{\otimes N} \Psi_1 \otimes \Psi_2 \rangle 
=&\sum_{2 \leq m+n \leq 5}\frac{\mathcal{F}^{(n,m)}(U\rho_{\chi_L^{\otimes N_1}\Psi_1}, V\rho_{\chi_L^{\otimes N_2}\Psi_2})}{L^{m+n+1}} + O_{C^\infty} \left( \frac{1}{L^7}\right)\\
=&\sum_{2 \leq m+n \leq 5}  \frac{\mathcal{F}^{(n,m)}(U\rho_{\Psi_1}, V\rho_{\Psi_2})}{L^{m+n+1}} + O_{C^\infty} \left( \frac{1}{L^7}\right),\label{res:int1}
\end{align}
where, in the last step we could remove the cut-off functions due to the exponential decay of the ground states. Indeed,  \eqref{expdecayGS} implies, in view of~\eqref{multipol-coeff}, that $\mathcal{F}^{(n,m)}(U\rho_{\chi_L^{\otimes N_1}\Psi_1}, V\rho_{\chi_L^{\otimes N_2}\Psi_2})$ and $ \mathcal{F}^{(n,m)}(U\rho_{\Psi_1}, V\rho_{\Psi_2})$ are exponentially close to each other.

Inserting~\eqref{res:int1} in~\eqref{PSWtot} gives, for $\Psitp=\Psi_1 \otimes \Psi_2$,
\begin{multline}
  \langle  W_\tau \Psi_1 \otimes \Psi_2,  (H_\infty+\tilde{I}_\tau -E_1-E_2)  W_\tau \Psi_1 \otimes \Psi_2\rangle=
\sum_{1 \leq m+n \leq 5}\frac{\mathcal{F}^{(n,m)}(U\rho_{\Psi_1}, V\rho_{\Psi_2})}{L^{m+n+1}}  \\ \label{eq:est-num}  -\frac{C_{\mathrm{vdW}}(\Psi_1 \otimes \Psi_2,U,V)}{L^6}+ O_{C^\infty} \left( \frac{1}{L^7}\right).
\end{multline}
 
 Using~\eqref{ineq:Etau}, \eqref{gs:trialfunction}, and \eqref{eq:est-num}, we find, choosing $\Psitp=\Psi_1 \otimes \Psi_2$, that
 \begin{multline}
  \cE_\tau-E_1-E_2\leq\frac{1}{\| W_\tau \Psi_1 \otimes \Psi_2 \|^2}\Bigg(
  \sum_{1 \leq m+n \leq 5}\frac{\mathcal{F}^{(n,m)}(U\rho_{\Psi_1}, V\rho_{\Psi_2})}{L^{m+n+1}} \\
  -\frac{C_{\mathrm{vdW}}(\Psi_1\otimes\Psi_2,U,V)}{L^6}+ O_{C^\infty} \left( \frac{1}{L^7}\right)\Bigg)\label{preupper}. \end{multline}

Note that the norm of the trial function gives a negligible contribution because, from \eqref{eq:normfunctionorder},
\begin{equation}\label{oneovernormvarphi}
\frac{1}{\| W_\tau \Psitp\|^2} =\frac{1}{ 1  +  O_{C^\infty}\left( \frac{1}{L^6}\right) } = 1 +  O_{C^\infty}\left( \frac{1}{L^6}\right).
\end{equation}

From \eqref{preupper} and \eqref{oneovernormvarphi} for $\Psitp=\Psi_1 \otimes \Psi_2$, we immediately arrive at \eqref{eq:upper}.
\qed

\medskip

The next step is to prove that, for all $\Psi\in\cG_1\otimes\cG_2$, $U$ and $V$, the van der Waals constant $C_{\mathrm{vdW}}(\Psi,U,V)$, defined in~\eqref{def:VdWterm}, is positive. The definition clearly implies that $C_{\mathrm{vdW}}(\Psi,U,V)\geq0$ for all states $\Psi$ and for all $U$ and $V$. Moreover, it is zero if and only if $\Pi^\perp f_{(U,V)}\Psi=0$. This would mean that $f_{(U,V)}\Psi $ is a ground state of $\hat{H}_1+ \hat{H}_2$. We will prove that it is not the case, with the help of the following lemma. It is similar to \cite[Lemma~2.4]{al}, which holds in the \textbf{(NR)} case. Nevertheless, since we cannot use the Leibniz rule in the \textbf{(SR)} case, we need a modification of the proof.
\begin{lemma}\label{lem:notGS}
	Let $\Psi\in L^2(\R^{3N_k})\backslash\{0\}$ ($k=1,2$) be such that $ H_k\Psi=E\Psi$ for some  $E$, where the operators $H_k$ have been defined in~\eqref{def:Hi}. For a vector $(b_1,...,b_{N_k})\in\R^{3N_k}\backslash\{0\}$, we define the operator $B$ by
\begin{equation*}
(B\Psi)(x) := \sum_{j=1}^{N_k}b_j\cdot x_j \Psi(x).
\end{equation*}	
Then, $H_kB\Psi\neq EB\Psi$. 
\end{lemma}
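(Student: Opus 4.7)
The plan is to argue by contradiction. Suppose $H_k B\Psi = EB\Psi$; combined with $H_k\Psi = E\Psi$ this yields, by subtraction,
\[
[H_k, B]\Psi = H_k(B\Psi) - B(H_k\Psi) = EB\Psi - EB\Psi = 0.
\]
All Coulomb terms in $H_k$ are multiplication operators and therefore commute with $B$, which is itself multiplication by $\sum_j b_j\cdot x_j$. Hence the commutator reduces to the kinetic part,
\[
[H_k, B] = \sum_{j=1}^{N_k} [T_j,\, b_j\cdot x_j],
\]
and the problem becomes showing that this operator applied to $\Psi$ cannot vanish.

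The next step is to compute $[T_j,\, b_j\cdot x_j]$ in the Fourier representation. Since $T_j$ acts as Fourier multiplication by the symbol $t(p_j) = \sqrt{|p_j|^2+1}-1$ in the \textbf{(SR)} case (respectively $|p_j|^2/2$ in the \textbf{(NR)} case), and since a component of $x_j$ corresponds to $i\partial_{p_j^{(\mu)}}$, a direct Fourier computation on Schwartz functions gives that $[T_j,\, b_j\cdot x_j]$ is the Fourier multiplier with symbol $-i\, b_j\cdot\nabla_{p_j} t(p_j)$. In the \textbf{(SR)} case this symbol equals $-i\, b_j\cdot p_j/\sqrt{|p_j|^2+1}$, whose modulus is bounded by $|b_j|$, so $[T_j,\, b_j\cdot x_j]$ extends to a bounded operator on $L^2(\R^{3N_k})$. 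Summing over $j$, the equation $[H_k,B]\Psi = 0$ reads, on the Fourier side,
\[
f(p)\,\hat\Psi(p) = 0 \text{ a.e.,} \qquad f(p) := \sum_{j=1}^{N_k} b_j\cdot \nabla_{p_j} t(p_j).
\]

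The conclusion will follow from a real-analyticity argument. Since $|p_j|^2+1$ is strictly positive, the symbol $t(p_j)$ is real analytic on $\R^3$, hence $f$ is real analytic on $\R^{3N_k}$. Because $(b_1,\dots,b_{N_k})\neq 0$, some $b_j\neq 0$; setting $p_\ell = 0$ for $\ell\neq j$ and choosing $p_j$ parallel to $b_j$ exhibits a point where $f\neq 0$, so $f\not\equiv 0$. The zero set of a nontrivial real-analytic function on $\R^{3N_k}$ has Lebesgue measure zero, so $f\hat\Psi = 0$ almost everywhere forces $\hat\Psi = 0$ a.e., and therefore $\Psi = 0$, contradicting $\Psi\neq 0$.

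The main technical point is justifying the commutator identity in the \textbf{(SR)} case, since the Leibniz-based argument of \cite[Lemma~2.4]{al} — which relies on $\Psi\in H^2$ in the \textbf{(NR)} case — is not available due to the nonlocality of $T_j$. I plan to circumvent this by exploiting that, in the \textbf{(SR)} case, each $[T_j,\, b_j\cdot x_j]$ is a bounded operator on $L^2$, so the identity $[H_k,B]\Psi = H_k(B\Psi)-B(H_k\Psi)$ holds as an equality in $L^2$ whenever $B\Psi\in D(H_k)$ — which is implicit in the hypothesis $H_kB\Psi = EB\Psi$. With this observation in place, the Fourier computation above goes through unambiguously and the argument concludes as described.
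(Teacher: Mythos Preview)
Your proof is correct and follows essentially the same route as the paper: reduce to $\sum_j[T_j,B]\Psi=0$, pass to Fourier space to obtain a multiplier equation $f(p)\hat\Psi(p)=0$ with $f(p)=\sum_j b_j\cdot p_j/\sqrt{1+|p_j|^2}$, and conclude that $f$ vanishes only on a null set. The only difference is cosmetic: you invoke real analyticity of $f$ to get that its zero set has measure zero, whereas the paper uses the implicit function theorem (noting that some $\partial_{p_j}f\neq 0$) to describe the zero set locally as a union of graphs; both arguments are standard and yield the same conclusion.
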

\begin{proof}
 
Assume  that $H_kB\Psi= EB\Psi$. Then,  $\Psi$ satisfies the equations
	\begin{equation}\label{HPsi}H_k\Psi=\sum_{j=1}^{N_k}T_j\Psi+I_k\Psi=E\Psi
	\end{equation}
	and
	\begin{equation}\label{HBPsi}H_kB\Psi=\sum_{j=1}^{N_k}T_jB\Psi+I_k B\Psi=EB\Psi,
	\end{equation}
	where $I_k:=H_k-\sum_{j=1}^{N_k}T_j$ contains all the interaction terms.
	Applying $B$ to~\eqref{HPsi} and subtracting it from~\eqref{HBPsi}, we find that 
	\begin{equation}\label{commB}
	\sum_{j=1}^{N_k}[ T_j,B]\Psi=0.
	\end{equation}	
	Let us denote, for each $j=1,\dots,N_k$, $b_j=:(b_j^1,b_j^2,b_j^3)$ and $x_j=:(x_j^1,x_j^2,x_j^3)$.
	We have then that $[ T_j,B]=\sum_{k=1}^3b_j^k[T_j,x_j^k]$. Let us compute the commutator $[T_j,x_j^k]$.
	Using the definition of $T_j$ through the Fourier transformation in~\eqref{def:T}, we find that that
	\begin{equation*}[T_j,x_j^k]\psi=-\iu(\sqrt{-\Delta_j+1})^{-1}{\partial_{x_j^k}},
\end{equation*}
so~\eqref{commB} is equivalent to
\begin{equation*}
 \sum_{j=1}^{N_k}-\iu(\sqrt{-\Delta_j+1})^{-1}{b_j\cdot\nabla_j}\Psi=0.
\end{equation*}
Applying the Fourier transformation \eqref{def:Fourier}, we find
\begin{equation*}
\sum_{j=1}^{N_k}\frac{b_j\cdot p_j}{\sqrt{1+|p_j|^2}} \mathcal{F}\Psi(p)=0.
\end{equation*}
Assuming without loss of generality that $b_1 \neq 0$, we find that  $\partial_{p_1} \sum_{j=1}^{N_k}\frac{b_j\cdot p_j}{\sqrt{1+|p_j|^2}} \neq 0$. As a consequence, by the implicit function theorem, $\sum_{j=1}^{N_k}\frac{b_j\cdot p_j}{\sqrt{1+|p_j|^2}}$ is, when restricted on compact sets, zero only in a finite union of graphs. Therefore, it is nonzero almost everywhere. Consequently, $\mathcal{F}\Psi(p)=0$ almost everywhere, and thus $\Psi=0$, which contradicts the assumption $\Psi \neq 0$.
\end{proof}
We are now ready to prove the positivity of the van der Waals constant, following the proof of \cite[Proposition~2.3]{al}. For the sake of completeness, we give it here. Due to the definition of $f_{(U,V)}$ in~\eqref{expansion:fUV}, we have that 
\[f_{(U,V)}\Psitp=\sum_{n,m\geq0}a_{nm}\Phi_{1,n}\otimes\Phi_{2,m},\]
where  $(a_{nm})$ are some real coefficients, $\Phi_{j,0} \in \cG_j$,  and, for all $n>0$, $\Phi_{j,n} \in x_i^{k}\cG_j$, for some $i\in \{1,\dots,N\}$ and $k \in \{1,2,3\}$. Assuming that $f_{(U,V)}\Psitp$ is a ground state of $H_\infty$, we find that
\[\sum_{n,m\geq1}a_{nm}\Phi_{1,n}^\perp\otimes\Phi_{2,m}^\perp=0,\]
with $\Phi_{k,n}^\perp=\Pi_k^\perp\Phi_{k,n}$ where
$\Pi_k$ is the orthogonal projection on $\cG_k$. As a consequence, the functions $(\Phi_{1,n}^\perp\otimes\Phi_{2,m}^\perp)_{n,m\geq1}$ are linearly dependent and we have that, for dimensional reasons, at least one of the sets $(\Phi_{1,n}^\perp)_{n\geq1}$ or $(\Phi_{2,m}^\perp)_{m\geq1}$ is linearly dependent. If we assume, for example, that there exists a finite family of complex numbers $(b_n)$ such that
\[\sum_{n\geq1}b_n\Phi_{1,n}^\perp=0,\] then $\sum_{n\geq1}b_n\Phi_{1,n}$ is an eigenvector of $H_1$ associated with the eigenvalue $E_1$, which contradicts Lemma \ref{lem:notGS}.
\qed
\bigskip

\subsection{Lower bound: general case}\label{subsec:lower}
In this section, we prove the following proposition, for which we do not assume Condition \ref{poles}. We will use it as an intermediate result  to prove Theorems ~\ref{thm:vanderwaals} (b) and ~\ref{thm:vanderwaalsbis} in the next subsection.
\begin{proposition}\label{prop:nopoles}
Assume Condition \ref{neutr}. Then 	
\begin{equation}\label{eq:Etaunopoles}
\cE_\tau=E_\infty+\min_{\substack{\Psi\in\cG_1\otimes\cG_2\\\|\Psi\|=1}}\left(\langle 	\chi_L^{\otimes N}\Psi ,    \tilde{I}_{\tau}   	\chi_L^{\otimes N}\Psi\rangle-\frac{C_{\mathrm{vdW}}(\Psi,U,V)}{L^6}\right)+ O_{L^\infty} \left( \frac{1}{L^7}\right).
\end{equation}
\end{proposition}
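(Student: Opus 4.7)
My plan is to derive the upper-bound direction of \eqref{eq:Etaunopoles} as a direct byproduct of the analysis of Subsection~\ref{subsec:upper}, and to obtain the matching lower bound via a Feshbach--Schur reduction onto $\cG_1\otimes\cG_2$. The chain \eqref{ineq:Etau}--\eqref{PSWtot} was, as the text of Subsection~\ref{subsec:upper} explicitly stresses, carried out for a general normalized $\Psi\in\cG_1\otimes\cG_2$ and not only for factorised states. Combining it with Condition~\ref{neutr} (which gives $E_\infty = E_1+E_2$) yields, for every such $\Psi$,
\begin{equation*}
\cE_\tau\le E_\infty + \langle \chi_L^{\otimes N}\Psi,\tilde I_\tau\chi_L^{\otimes N}\Psi\rangle - \frac{C_{\mathrm{vdW}}(\Psi,U,V)}{L^6} + O_{L^\infty}\!\left(\frac{1}{L^7}\right).
\end{equation*}
Since $\cG_1\otimes\cG_2$ is finite-dimensional, the infimum over the unit sphere is attained and yields the ``$\le$'' direction of \eqref{eq:Etaunopoles}.

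For the matching lower bound I apply the Feshbach--Schur reduction to the operator $H:=H_\infty+\tilde I_\tau$, defined on the tensor product $\Ran Q_{N_1}\otimes\Ran Q_{N_2}$ of the antisymmetry subspaces of the two submolecules, and I let $\Pi$ denote the orthogonal projection onto $\cG_1\otimes\cG_2$. First, $\cE_\tau$ lies in $\sigma(H)$: it is a genuine eigenvalue of $\hat H_\tau$ by Theorem~\ref{thm:Zhislin}, and since $K_\tau$ commutes with permutations preserving each of the two groups of electrons, $K_\tau^{-1}(\Ran Q_N)\subset\Ran Q_{N_1}\otimes\Ran Q_{N_2}$ is an $H$-invariant subspace on which $H$ is unitarily equivalent to $\hat H_\tau$ via $K_\tau$. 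Second, Condition~\ref{neutr} gives $H_\infty\Pi=E_\infty\Pi$ with $E_\infty$ isolated in $\sigma(H_\infty)$ by a gap $g>0$ (supplied by Theorems~\ref{thm:HVZ} and~\ref{thm:Zhislin} applied to $\hat H_1,\hat H_2$); combining \eqref{interleadterm} with the exponential decay of states in $\Ran\Pi$ gives $\|\Pi^\bot\tilde I_\tau\Pi\|=O(L^{-3})$, while \eqref{rel-bound} controls $\Pi^\bot\tilde I_\tau\Pi^\bot$ relatively to $\Pi^\bot H_\infty\Pi^\bot$; for $L$ large this gap survives the perturbation uniformly in $(U,V)$, so that $\cE_\tau\notin\sigma(\Pi^\bot H\Pi^\bot)$ and the Feshbach operator
\begin{equation*}
F(\cE_\tau):=\Pi H\Pi - \Pi\tilde I_\tau\Pi^\bot\bigl(\Pi^\bot H\Pi^\bot-\cE_\tau\bigr)^{-1}\Pi^\bot\tilde I_\tau\Pi
\end{equation*}
is well-defined on the finite-dimensional space $\Ran\Pi$. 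The Feshbach--Schur theorem then yields $\cE_\tau\in\sigma(F(\cE_\tau))$, hence $\cE_\tau\ge \min\{\langle\Psi,F(\cE_\tau)\Psi\rangle:\Psi\in\cG_1\otimes\cG_2,\ \|\Psi\|=1\}$.

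It remains to expand $F(\cE_\tau)$ to order $L^{-7}$. A Neumann series for the reduced resolvent around $R_0:=(\Pi^\bot H_\infty\Pi^\bot-E_\infty)^{-1}\Pi^\bot$, combined with the a priori bound $\cE_\tau-E_\infty=O(L^{-3})$ supplied by the upper bound just proved and the identity $\Pi^\bot\tilde I_\tau\Pi=\Pi^\bot f_{(U,V)}\Pi/L^3+O(L^{-4})$ from \eqref{interleadterm}, gives
\begin{equation*}
\Pi\tilde I_\tau\Pi^\bot(\Pi^\bot H\Pi^\bot-\cE_\tau)^{-1}\Pi^\bot\tilde I_\tau\Pi=\frac{1}{L^6}\Pi f_{(U,V)}R_0 f_{(U,V)}\Pi+O_{L^\infty}\!\left(\frac{1}{L^7}\right),
\end{equation*}
whose quadratic form on a normalized $\Psi$ equals $C_{\mathrm{vdW}}(\Psi,U,V)/L^6$ by \eqref{def:VdWterm}. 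The remaining piece satisfies $\langle\Psi,(\Pi H\Pi-E_\infty\Pi)\Psi\rangle=\langle\chi_L^{\otimes N}\Psi,\tilde I_\tau\chi_L^{\otimes N}\Psi\rangle+O(e^{-dL})$ by Corollary~\ref{cor:expeigendecay}, so collecting the pieces reproduces the right-hand side of \eqref{eq:Etaunopoles}. The main technical obstacle will be the uniform-in-$(U,V)$ justification of this Neumann expansion: $\tilde I_\tau$ being an unbounded Coulomb potential, one cannot use naive operator-norm bounds but must combine the localisation estimate $\|\Pi^\bot\tilde I_\tau\Pi\|=O(L^{-3})$ with the form-boundedness \eqref{rel-bound} to control the mixed terms $R_0\Pi^\bot\tilde I_\tau\Pi^\bot R_0$, and verify that the spectral gap above $E_\infty$ survives the perturbation uniformly in the rotations.
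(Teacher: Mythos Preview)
Your upper bound is fine, and the idea of using a Feshbach reduction for the lower bound is correct. The genuine gap is in the Neumann expansion step. You want to replace the full reduced resolvent $(\Pi^\bot H\Pi^\bot-\cE_\tau)^{-1}$ by $R_0=(H_\infty^\bot-E_\infty)^{-1}\Pi^\bot$ up to an $O(L^{-1})$ error, but the perturbation you must absorb is $\Pi^\bot\tilde I_\tau\Pi^\bot+(E_\infty-\cE_\tau)$, and $\Pi^\bot\tilde I_\tau\Pi^\bot$ is \emph{not} small in $L$: it is $O(1)$ relative to $H_\infty$ (this is all \eqref{rel-bound} gives), because states in $\Ran\Pi^\bot$ are not localised near the origin and therefore see the Coulomb singularities of $\tilde I_\tau$ at distance $L$. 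Concretely, the first correction in your resolvent identity is
\[
\Pi\tilde I_\tau\Pi^\bot\bigl(\Pi^\bot H\Pi^\bot-\cE_\tau\bigr)^{-1}\Pi^\bot\tilde I_\tau\Pi^\bot R_0\Pi^\bot\tilde I_\tau\Pi,
\]
and a naive bound using $\|\Pi^\bot\tilde I_\tau\Pi\|=O(L^{-3})$ twice and form-boundedness for the middle factor gives only $O(L^{-6})$, the same order as $C_{\mathrm{vdW}}/L^6$. To improve this to $O(L^{-7})$ you would need to show that $R_0\Pi^\bot\tilde I_\tau\Psi$ is itself exponentially localised (uniformly in $L$), so that the next $\tilde I_\tau$ acts as $O(L^{-1})$ rather than $O(1)$; this requires a nontrivial application of Theorem~\ref{thm:expdecaystat} to $H_\infty$ and is nowhere hinted at in your sketch. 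A secondary issue: the gap $\Pi^\bot H\Pi^\bot-\cE_\tau\ge C$ does not follow from relative form-boundedness of $\tilde I_\tau$ alone (the relative bound is $<1$, not small), and in fact needs an IMS localisation argument together with Condition~\ref{neutr}, exactly as in Lemma~\ref{lemma:ims}.

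The paper circumvents both issues by applying Feshbach not with $\Pi$ but with the projection $P_\tau$ onto $\{Q_NK_\tau W_\tau\Psi:\Psi\in\cG_1\otimes\cG_2\}$, where $W_\tau=\chi_L^{\otimes N}-\chi_{4L/3}^{\otimes N}\xi_\tau$ already contains the first resolvent correction $\xi_\tau=R_0\Pi^\bot\tilde I_\tau\chi_L^{\otimes N}$. This builds the van der Waals term into the \emph{linear} Feshbach part $P_\tau H_\tau P_\tau$ (via \eqref{PSWtot}), while the key estimate \eqref{norm2term}, namely $\|P_\tau^\bot(H_\tau-E_\infty)Q_NK_\tau W_\tau\Psi\|=O(L^{-4})$, makes the nonlinear part $O(L^{-8})$ directly, with no need to expand the full resolvent. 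The gain from $O(L^{-3})$ to $O(L^{-4})$ in \eqref{norm2term} comes precisely from the cancellation encoded in $W_\tau$, cf.~\eqref{boundterm2bis}--\eqref{prodPPi}.
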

Recall that, if \ref{neutr} holds, then $E_\infty=E_1+E_2$. We will use this equality  without mentioning it, when we refer to equations of the previous subsection.  
  Note that, in the \textbf{(NR)} case, the  proposition easily follows  from the arguments of \cite{al}, even though it was not stated in this form.
  As a consequence, we consider here only the  \textbf{(SR)} case. 

In order to prove  a lower bound on the ground state energy, we want to make use of the Feshbach map method, which we recall here: let $H$ be a self-adjoint operator on a Hilbert space $\mathscr{H}$ with domain $\Dom(H)$, $P$ an orthogonal projection on $\mathscr{H}$ such that $\mathrm{Ran}~P \subset \Dom(H)$ and $H^{\bot} := P^{\bot} H P^{\bot}$, where $P^{\bot} := 1- P$. The \emph{Feshbach map} is an operator on $\Ran P$ defined in the following way: for any $E \in \mathbb{R}$ such that $(H^{\bot}-E)$ is invertible,
\begin{equation*}
F_P(E) := (P H P - P H P^{\bot} (H^{\bot} - E )^{-1} P^{\bot} H P)|_{\mathrm{Ran}P}.
\end{equation*}

The following well-known theorem, see \emph{e.g.} \cite[Section IV]{BFS}, shows how the Feshbach map is a useful tool to estimate the ground state energy of the system that we consider.

 \begin{theorem}\label{teo:feschbach}
	Suppose $H, \mathscr{D}(H), \mathscr{H}, P$ to be as above and that, for some $E \in \mathbb{R}$, there exists $C >0$ such that
\begin{equation*}
H^{\bot} - E \geq C >0.
\end{equation*}
Then 
\begin{equation*}
E \;\text{is an eigenvalue of } H \quad \iff \quad E \; \text{is an eigenvalue of } F_P(E).
\end{equation*}
Moreover, if, for some $\phi \in \Ran P \setminus \{0\}$, we have $F_P(E)\phi= E \phi$, then
\begin{equation*}
\phi-(H^{\bot} - E)^{-1}P^\bot H \phi,
\end{equation*}
is an eigenfunction of $H$ to the eigenvalue $E$.
\end{theorem}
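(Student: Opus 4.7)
The plan is to work with the orthogonal decomposition $\mathscr{H}=\Ran P\oplus\Ran P^\bot$ and view $H$ in block form: any $\psi\in\Dom(H)$ is written $\psi=\phi+\phi^\bot$ with $\phi=P\psi\in\Ran P\subset\Dom(H)$ and $\phi^\bot=P^\bot\psi\in\Ran P^\bot\cap\Dom(H)$. The hypothesis $H^\bot-E\geq C>0$ guarantees that $(H^\bot-E)$ is boundedly invertible on $\Ran P^\bot$ (its resolvent maps into $\Ran P^\bot\cap\Dom(H)$ because $H^\bot$ is self-adjoint on that subspace and $E$ lies outside its spectrum), so the Schur complement $F_P(E)$ is well defined on $\Ran P$.

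For the forward direction, I would suppose $H\psi=E\psi$ with $\psi\neq 0$ and, projecting with $P$ and $P^\bot$, obtain the coupled system $PHP\phi+PHP^\bot\phi^\bot=E\phi$ and $P^\bot HP\phi+(H^\bot-E)\phi^\bot=0$. The second equation, combined with invertibility of $H^\bot-E$, yields $\phi^\bot=-(H^\bot-E)^{-1}P^\bot HP\phi$; substituting into the first and comparing with the definition of $F_P(E)$ gives $F_P(E)\phi=E\phi$. The one delicate point is ruling out $\phi=0$: if $\phi$ vanished, then either $\phi^\bot=0$ (contradicting $\psi\neq 0$) or $\phi^\bot$ would be a nonzero eigenvector of $H^\bot$ with eigenvalue $E$, contradicting the coercivity $H^\bot-E\geq C>0$.

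For the converse, starting from $F_P(E)\phi=E\phi$ with $\phi\in\Ran P\setminus\{0\}$, I would simply verify that the candidate $\psi:=\phi-(H^\bot-E)^{-1}P^\bot H\phi$ provided in the statement works. It is nonzero since $P\psi=\phi\neq 0$, and it lies in $\Dom(H)$ by the remarks of the first paragraph. Applying $P$ to $H\psi$ reproduces exactly the identity $F_P(E)\phi=E\phi$, hence $PH\psi=E\phi=EP\psi$. For the orthogonal component, using the algebraic identity $H^\bot(H^\bot-E)^{-1}=\mathbf{1}+E(H^\bot-E)^{-1}$ on $\Ran P^\bot$, one finds $P^\bot H\psi=P^\bot H\phi-H^\bot(H^\bot-E)^{-1}P^\bot H\phi=-E(H^\bot-E)^{-1}P^\bot H\phi=EP^\bot\psi$. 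Adding the two projections yields $H\psi=E\psi$.

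The main obstacle is really only bookkeeping rather than substance: one must check that every expression stays in the correct domain (in particular that $\phi\in\Dom(H)$, which is given, and that $(H^\bot-E)^{-1}P^\bot H\phi$ lies in $\Ran P^\bot\cap\Dom(H)$, which follows from $E\notin\sigma(H^\bot)$). Once this is in hand, both directions reduce to the elementary Schur-complement algebra sketched above.
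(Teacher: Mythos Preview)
Your argument is the standard Schur-complement/block-decomposition proof and is correct. Note that the paper does not actually supply its own proof of this theorem: it states the result as well known and refers the reader to \cite[Section~IV]{BFS}, so there is nothing to compare against beyond observing that your approach is exactly the classical one underlying that reference.
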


Furthermore,  it turns out that, if $E$ is the ground state energy of $H$, then $E$ is the ground state energy of $F_P(E)$ and therefore there exists $\psi_0 \in \mathrm{Ran}P$ with $\|\psi_0\| =1$ such that
\begin{equation}\label{FeshGS}
E = \langle \psi_0 ,F_P(E)\psi_0\rangle = \min_{\substack{\psi \in \mathrm{Ran}P \\ \|\psi\| =1}}\langle \psi, F_P(E)\psi\rangle,
\end{equation}
see \emph{e.g.} \cite[Lemma 5.6]{anbhundert}.

In order to apply the Feshbach method to our case, we choose $\mathscr{H}= Q_N \cH$, $H=\hat{H}_\tau$, recall \eqref{def:QNH}, \eqref{Hhat},  and  $P = P_{\tau}$  the orthogonal projection with 
\begin{equation}\label{eq:RanP}
\Ran P_{\tau}:=
\Big\{ Q_N K_\tau W_\tau \Psi \,\;\Big|\, \; \Psi \in \mathcal{G}_1 \otimes \mathcal{G}_2\Big\},
\end{equation}
where  $ Q_N, K_\tau, W_\tau, \mathcal{G}_j$ were respectively defined in \eqref{def:Q}, \eqref{def:Ktau}, \eqref{def:Wtau} and \eqref{def:GSEigenspace}.

The next lemma is the analogous of \cite[Proposition 4.1]{AnaSig-17} in the \textbf{(SR)} case and shows that we are in the assumptions of Theorem \ref{teo:feschbach} thanks to Condition \ref{neutr}. 
\begin{lemma}\label{lemma:ims}
Let $P_{\tau}$ be the orthogonal projection defined by~\eqref{eq:RanP} and assume Condition \ref{neutr}. Then there exists $C>0$ such that, for $L$ large enough,
\begin{equation}\label{eq:gapconditionapplied}
P^{\bot}_{\tau}  H_{\tau}P^{\bot}_{\tau}-\cE_{\tau}   \geq C .
\end{equation}
\end{lemma}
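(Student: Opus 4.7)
I would follow closely the strategy of \cite[Proposition~4.1]{AnaSig-17}, adapting it to the nonlocal \textbf{(SR)} setting. The key preliminary observation is that \eqref{eq:normfunctionorder} yields $W_\tau\Psi = \Psi + O_{L^2}(L^{-3})$ for every unit $\Psi \in \cG_1 \otimes \cG_2$, uniformly in $(U,V)$. Consequently $P_\tau$ is, up to an $O(L^{-3})$ error in the sense of angles between subspaces, the orthogonal projection onto $Q_N K_\tau (\cG_1 \otimes \cG_2)$. A standard min-max perturbation argument then reduces \eqref{eq:gapconditionapplied} to proving the analogous estimate for this unperturbed projection. Conjugating by $K_\tau^{-1}$, which transforms $H_\tau$ into $H_\infty + \tilde I_\tau$ and is compatible with the antisymmetry (cf. the discussion around~\eqref{end-antisym}), this is in turn equivalent to the existence of a constant $C_0 > 0$, uniform in $(U,V)$, such that
\begin{equation*}
H_\infty + \tilde I_\tau \;\geq\; E_1+E_2+C_0 \quad \text{on}\quad (\cG_1 \otimes \cG_2)^\perp \cap Q_N\cH
\end{equation*}
for all $L$ large enough. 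Since \eqref{eq:upper} together with \eqref{eq:limit_L_infinity} yields $\cE_\tau = E_1+E_2 + O(L^{-3})$, such a uniform gap converts immediately into \eqref{eq:gapconditionapplied}.

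To establish the gap, I would introduce an IMS-type partition of unity indexed by the ways of distributing the $N$ electrons between the two clusters,
\begin{equation*}
1 \;=\; \sum_{I\subset\{1,\dots,N\}} J_{I}^2 \;+\; J_\infty^2,
\end{equation*}
where $J_I$ is supported in the region where the electrons labelled by $I$ lie within distance $L/4$ of the first cluster and the remaining ones within distance $L/4$ of the second, and $J_\infty$ is supported where at least one electron lies at distance $\geq L/3$ from both clusters. Applying a nonlocal IMS-type formula for $\sum_j T_j$, analogous to the one used in the proof of Theorem~\ref{thm:HVZ} in Appendix~\ref{pfthmspec} (see also \cite{LY,FeLla}), the operator $H_\infty + \tilde I_\tau$ decomposes as $\sum_I J_I (H_\infty+\tilde I_\tau) J_I + J_\infty(H_\infty+\tilde I_\tau)J_\infty$ plus a localization remainder whose operator norm on $H^{1/2}$ tends to $0$ as $L\to\infty$.

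Each localized piece is then bounded from below separately. For the balanced term $|I|=N_1$, I would use the spectral gap $\gamma := \min_{j=1,2}\bigl(\inf\sigma(\hat H_j|_{\cG_j^\perp})-E_j\bigr) > 0$, whose strict positivity follows from Theorems~\ref{thm:HVZ} and~\ref{thm:Zhislin}; on $(\cG_1\otimes\cG_2)^\perp$ this provides $H_\infty \geq E_1+E_2+\gamma$, while the inter-cluster interaction $J_I \tilde I_\tau J_I$ is harmless since $\tilde I_\tau$ is $O(L^{-1})$ in operator norm on the support of $J_I$. For an unbalanced term with $|I|=N_1'\neq N_1$, the inter-cluster attraction being also $O(L^{-1})$, the restriction of $H_\infty+\tilde I_\tau$ dominates $\hat H_{N_1'}(Y_1,\cZ_1)+\hat H_{N-N_1'}(Y_2,\cZ_2)-O(L^{-1})$, whose bottom equals $E_{N_1'}(Y_1,\cZ_1)+E_{N-N_1'}(Y_2,\cZ_2) \geq E_1+E_2+\delta$ for some $\delta > 0$ by Condition~\ref{neutr}. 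Finally, on the support of $J_\infty$, at least one electron is free, so a localized HVZ-type argument combined with the strict Zhislin inequality (applied cluster-by-cluster to rule out the cationic $(N_1', N_2')$ with $N_1'+N_2'=N-1$) yields a lower bound of $E_1+E_2+\delta''$ for some $\delta''>0$.

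\textbf{Main obstacle.} The principal technical difficulty is the nonlocality of the \textbf{(SR)} kinetic energy, which precludes the pointwise IMS decomposition $\sum_j [T_j,J_\alpha^2]\propto \sum_j|\nabla J_\alpha|^2$ available in the \textbf{(NR)} case. Instead one must use a nonlocal IMS inequality based on the Kato-type integral representation of $T_j$ and a careful bound on the double commutator $[J_\alpha,[J_\alpha,T_j]]$, showing that the resulting remainder is controlled in operator norm by a quantity vanishing with $L\to\infty$; this is precisely the kind of estimate deployed in Appendix~\ref{pfthmspec} for the proof of Theorem~\ref{thm:HVZ}. A secondary subtlety, specific to the present setting, is to control the proximity between $\Ran P_\tau$ and $Q_N K_\tau(\cG_1\otimes\cG_2)$ uniformly in $(U,V)\in\SO(3)\times\SO(3)$; here the quantitative estimates on $W_\tau$ derived in Subsection~\ref{subsec:upper}, in particular~\eqref{eq:normfunctionorder}, are essential.
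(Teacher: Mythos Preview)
Your overall IMS-based strategy (partition by electron distributions, use the spectral gaps of $\hat H_1,\hat H_2$ on the balanced sector, invoke Condition~\ref{neutr} on the unbalanced sectors, and an HVZ-type argument on the free-electron sector) is correct and is precisely what the paper cites from \cite[Proposition~4.1]{AnaSig-17}. However, your reduction step via conjugation by $K_\tau^{-1}$ is flawed: as the paper notes explicitly just after \eqref{ineq:Etau}, $Q_N$ does \emph{not} commute with $K_\tau$, so $K_\tau^{-1}Q_N\cH\neq Q_N\cH$ and the claimed equivalence with a gap estimate on $(\cG_1\otimes\cG_2)^\perp\cap Q_N\cH$ is not valid. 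The computation around \eqref{end-antisym} that you invoke works only for states whose two clusters have disjoint spatial supports, which is not the case for a general $\psi\in Q_N\cH$. Relatedly, your ``standard min-max perturbation'' from $P_\tau$ to the projection onto $Q_NK_\tau(\cG_1\otimes\cG_2)$ is too vague: $L^2$-proximity of finite-rank projections does not by itself transfer a spectral gap for the unbounded operator $H_\tau$; one also needs bounds on $(H_\tau-E_\infty)$ restricted to the ranges.

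The paper's route sidesteps both issues. It works not with $Q_NK_\tau(\cG_1\otimes\cG_2)$ but with the \emph{cut-off} projection $\Pi_\tau$ onto $\{Q_NK_\tau\chi_L^{\otimes N}\Psi:\Psi\in\cG_1\otimes\cG_2\}$; the compact supports make the antisymmetrization harmless exactly as in \eqref{end-antisym}--\eqref{norm-QNphi}, and the gap $\Pi_\tau^\perp H_\tau\Pi_\tau^\perp-E_\infty\geq C-C'/L$ then follows directly from the argument of \cite{AnaSig-17} (with the nonlocal IMS estimate \eqref{eq:imsloc} replacing the classical one). The transfer from $\Pi_\tau$ to $P_\tau$ is done by bounding $H_\tau(\Pi_\tau-P_\tau)$ in operator norm, decomposing it as $-(H_\tau-E_\infty)\Pi_\tau+(H_\tau-E_\infty)P_\tau-E_\infty(\Pi_\tau-P_\tau)$ and showing each piece is $O_{L^\infty}(L^{-3})$: the first two via \eqref{boundterm1bis} and \eqref{boundterm2}, the third via an explicit comparison of the Gram matrices \eqref{eq:spQnKtauWtau}--\eqref{eq:spQnKtau}. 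This is the quantitative content your ``min-max perturbation'' would need to supply.
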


\begin{proof}
If we follow the proof of~\cite[Proposition~4.1]{AnaSig-17}, we get that there exists $C,C'>0$ such that for all $L$ large enough
\begin{equation*}
\Pi^{\bot}_{\tau}  H_{\tau}\Pi^{\bot}_{\tau}-E_\infty   \geq \left(C-\frac{C'}{L}\right),
\end{equation*}
on $ \Q_N \cH$, where recall that $E_\infty$  is given by \eqref{eq:Eiinf}.
Here $\Pi_\tau$ is the orthogonal projection on
\begin{equation}\label{def:RanPitau}
\Ran \Pi_\tau:=\left\{ Q_N K_\tau \chi_L^{\otimes N} \Psi, \; \Psi \in \mathcal{G}_1 \otimes \mathcal{G}_2\right\}.
\end{equation}
 This together with \eqref{eq:Eiinf} and \eqref{eq:upper} implies that there exists $C''>0$ so that if $L$ is large enough then
\begin{equation}\label{oldgapcondition}
\Pi^{\bot}_{\tau}  H_{\tau}\Pi^{\bot}_{\tau}-\cE_\tau   \geq \left(C-\frac{C''}{L}\right),
\end{equation}
on  $ \Q_N \cH$.
  Even if the result of \cite{AnaSig-17} is about non-relativistic Hamiltonians, the proof can easily be adapted  to the \textbf{(SR)} case. Indeed, the only part of the proof where the fact that the kinetic energy operator is a Laplacian matters is the application of the IMS localization formula, which is replaced in our case by equation~\eqref{eq:imsloc}  in  \ref{appA}. \footnote{In fact, \cite{bhv} provided in Section 3 finer estimates that work for a broader class of localization functions in the \textbf{(SR)} case  but they are not needed in our setting.}

To extract \eqref{eq:gapconditionapplied} from \eqref{oldgapcondition}, we need to bound
\begin{equation}\label{decomp-norm-H}H_\tau(\Pi_\tau^\bot-P^{\bot}_{\tau})=H_\tau(P_{\tau}-\Pi_\tau)=-(H_\tau-E_\infty)\Pi_\tau +(H_\tau-E_\infty)P_{\tau} - E_\infty(\Pi_\tau-P_{\tau}).\end{equation}

We begin by bounding $(H_\tau-E_\infty)\Pi_\tau$. 
Due to \eqref{def:RanPitau}, we have
\begin{equation}\|(H_\tau-E_\infty)\Pi_\tau\|=\max_{\Psi \in \cG_1\otimes \cG_2, \|\Psi\|=1}\frac{\|(H_\tau-E_\infty)Q_NK_\tau\chi_L^{\otimes N}\Psi\|}{\|Q_NK_\tau\chi_L^{\otimes N}\Psi\|}\label{cov:HPitau}.
\end{equation}
Let us consider some $\Psi \in \cG_1\otimes \cG_2$ such that
$\|\Psi\|=1$. We can prove, as we have done in~\eqref{norm-QNphi}, that
\begin{equation}\label{eq:QNKPsi}
\|Q_NK_\tau\chi_L^{\otimes N}\Psi\|^2=\frac{1}{\binom{N}{N_1}}\|K_\tau\chi_L^{\otimes N}\Psi\|^2=\frac{1}{\binom{N}{N_1}}\|\chi_L^{\otimes N}\Psi\|^2=\frac{1}{\binom{N}{N_1}}+O_{C^\infty}(e^{-dL}),
\end{equation}
 where in the second step we used that $K_\tau$ is unitary and in the last step we used \eqref{norm:GS}.

We next bound $(H_\tau-E_\infty)Q_NK_\tau\chi_L^{\otimes N}\Psi$.
From \eqref{HinftytauHinfty},~\eqref{def:Itilde}, \eqref{def:Htau} and the fact that $Q_N$ commutes with $H_\tau$, we have 
\begin{equation}\label{eq:commQN}
(H_\tau-E_\infty)Q_NK_\tau= Q_N K_\tau (H_\infty + \tilde{I}_\tau-E_\infty).
\end{equation}
As $Q_N$ is an orthogonal projection and $K_\tau$ is unitary, \eqref{cov:HPitau} and \eqref{eq:commQN} imply 
\begin{equation}\label{ineq:QN}\|(H_\tau-E_\infty)\Pi_\tau\|\leq\max_{\Psi \in \cG_1\otimes \cG_2, \|\Psi\|=1}\frac{1}{\|Q_NK_\tau\chi_L^{\otimes N}\Psi\|}\|(H_\infty + \tilde{I}_\tau-E_\infty)\chi_L^{\otimes N}\Psi\|.
\end{equation}
But we know from \eqref{eq:Eiinf}, \eqref{eq:HminusEW}, \eqref{eq:TcomChiPsi} and~\eqref{interleadterm} that
\begin{equation}
 (H_\infty-E_\infty+\tilde{I}_\tau)\chi_L^{\otimes N}\Psi=O_{(C^\infty,L^2(\R^{3N}))}\left(\frac{1}{L^3}\right).\label{boundterm1bis}
\end{equation}
By~\eqref{eq:QNKPsi}, \eqref{ineq:QN} and~\eqref{boundterm1bis}, we find
\begin{equation}\label{bound:HPi}
 \|(H_\tau-E_\infty)\Pi_\tau\|=O_{L^\infty}\left(\frac{1}{L^3}\right).
\end{equation}
Note that, because of the max in~\eqref{ineq:QN}, we do not have $C^\infty$ estimate on the right-hand side of \eqref{bound:HPi}.

We now bound $(H_\tau-E_\infty)P_\tau$. 
Using~\eqref{eq:RanP} and arguing as in the proof of~\eqref{ineq:QN}, we find that
\begin{equation}\label{HminusEPnorm}
\|(H_\tau-E_\infty)P_\tau\|\leq \max_{\Psi \in \cG_1\otimes \cG_2, \|\Psi\|=1} \frac{\|(H_\infty + \tilde{I}_\tau-E_\infty) W_\tau \Psi\|}{\|Q_N K_\tau W_\tau \Psi\|}.
\end{equation}
 Using~\eqref{norm-QNphi}, \eqref{eq:normfunctionorder} and that $K_\tau$ is unitary, we arrive at
\begin{equation}\label{eq:QnKtauWtau}
\|Q_N K_\tau W_\tau \Psi\|^2=\frac{1}{ {N\choose N_1}}+O_ {C^\infty}\left(\frac{1}{L^6}\right),
\end{equation}
  uniformly on $\Psi \in \cG_1\otimes \cG_2$ with $\|\Psi\|=1$.
 
 We have, from \eqref{est:HIEW}, \eqref{eq:OL4-1} and~\eqref{eq:OL4-2}, that
 \begin{equation}
  (H_\infty + \tilde{I}_\tau - E_{\infty}) W_\tau  \Psitp= \chi_{4L/3}^{\otimes N} \Pi \chi_{L}^{\otimes N} \tilde{I}_\tau 	\Psitp +O_{(C^\infty, L^2(\R^{3N}))}\left(\frac{1}{L^4}\right)\label{boundterm2bis}
  \end{equation} and therefore, by~\eqref{interleadterm},
  \begin{equation}
  (H_\infty + \tilde{I}_\tau - E_{\infty}) W_\tau  \Psitp=O_{(C^\infty, L^2(\R^{3N}))}\left(\frac{1}{L^3}\right).\label{boundterm2}
 \end{equation}
Inserting~\eqref{eq:QnKtauWtau} and~\eqref{boundterm2} in~\eqref{HminusEPnorm}, we obtain
\begin{equation}\label{bound:HP}
 \|(H_\tau-E_\infty)P_\tau\|=O_{L^\infty}\left(\frac{1}{L^3}\right).
\end{equation}

Finally, let us bound $\|P_\tau-\Pi_\tau\|$. To this end, we remind  the following well-known linear algebra lemma:
 \begin{lemma}\label{lem:decomp-proj}
  Let $V$ be finite-dimensional subspace of a Hilbert space with inner product $\langle \cdot|\cdot \rangle$ and $(v_i)_{i=1,...,n}$ a basis of $V$. Let us denote by $G$ the Gram matrix, defined by $G_{ij}=\langle v_i|v_j \rangle$.
  Then $P_V$, the orthogonal projection on $V$, satisfies\begin{equation*}
  P_V=\sum_{i=1}^n  (G^{-1})_{ij}|v_i\rangle\langle v_j|                                              .\end{equation*}
\end{lemma}
Let now $\Psi_k, k=1,\dots,\dim\cG_1\otimes\cG_2$, be an orthonormal basis of $\cG_1 \otimes \cG_2$.
Arguing similarly as in the proof of \eqref{eq:QnKtauWtau}, we can find 
 \begin{equation}
 \label{eq:spQnKtauWtau}
 \langle Q_N K_\tau W_\tau \Psi_k,  Q_N K_\tau W_\tau \Psi_l \rangle =\frac{\delta_{kl} }{ {N\choose N_1}}+O_{C^\infty}\left(\frac{1}{L^6}\right).
 \end{equation}
 A similar argument gives that
 \begin{equation}
 \label{eq:spQnKtau}
 \langle Q_N K_\tau \chi_L^{\otimes N} \Psi_k,  Q_N K_\tau \chi_L^{\otimes N} \Psi_l \rangle =\frac{\delta_{kl} }{ {N\choose N_1}}+O_{C^\infty}\left(e^{-dL}\right).
 \end{equation}
 Note that the different order of the error terms in \eqref{eq:spQnKtauWtau} and \eqref{eq:spQnKtau} originates from the difference of the order in the error terms of the right-hand sides of
 \eqref{norm:GS} and \eqref{eq:normfunctionorder}.
As a consequence, for $L$ large enough, the functions $(Q_NK_\tau \chi_L^{\otimes N}\Psi_k)$ (resp. $(Q_NK_\tau W_\tau\Psi_k)$),  $k=1,\dots,\dim\cG_1\otimes\cG_2$, are linearly independent. Thus,   due to \eqref{def:RanPitau}, \eqref{eq:RanP}, they form a basis of $\Ran \Pi_\tau$ (resp. $\Ran P_\tau$).

Therefore, we can apply Lemma \ref{lem:decomp-proj}  to find  
\begin{equation}\label{dec:Pitau}
\Pi_\tau= \sum_{k,l=1}^{\dim\cG_1\otimes\cG_2} | Q_N K_\tau \chi_L^{\otimes N}\Psi_k \rangle g_\tau^{kl} \langle  Q_N K_\tau \chi_L^{\otimes N} \Psi_l|,
\end{equation} and
\begin{equation}\label{decomp:PL}
P_\tau= \sum_{k,l=1}^{\dim\cG_1\otimes\cG_2} | Q_N K_\tau W_\tau \Psi_k \rangle G_\tau^{kl} \langle  Q_N K_\tau  W_\tau \Psi_l|,
\end{equation}
where $g_\tau^{kl}, G_\tau^{kl}$ are the elements of the inverses of the respective Gram matrices.
  From \eqref{eq:spQnKtauWtau} and \eqref{eq:spQnKtau} we find 
  \begin{equation}\label{eq:Gdiff}
 g_\tau^{kl}={N\choose N_1}\delta_{kl} + O_{C^\infty}\left(e^{-cL}\right), \qquad G_\tau^{kl}={N\choose N_1}\delta_{kl} + O_{C^\infty}\left(\frac{1}{L^6}\right).
  \end{equation}

By \eqref{dec:Pitau}, \eqref{decomp:PL} and \eqref{def:Wtau} we find
\begin{align*}P_\tau-\Pi_\tau=&
-\sum_{k,l=1}^{\dim\cG_1\otimes\cG_2} |Q_N K_\tau \chi_{4L/3}^{\otimes N} \xi_\tau \Psi_k \rangle G_\tau^{kl}  \langle  Q_N K_\tau  W_\tau \Psi_l| \\& + \sum_{k,l=1}^{\dim\cG_1\otimes\cG_2} |Q_N K_\tau \chi_{L}^{\otimes N}  \Psi_k \rangle (G_\tau^{kl} -g_\tau^{kl}) \langle  Q_N K_\tau  W_\tau \Psi_l| 
\\& -  \sum_{k,l=1}^{\dim\cG_1\otimes\cG_2}|Q_N K_\tau \chi_{L}^{\otimes N}  \Psi_k \rangle g_\tau^{kl}  \langle  Q_N K_\tau \chi_{4L/3}^{\otimes N} \xi_\tau \Psi_l|,
\end{align*}
which together with \eqref{norm:Rtau} and \eqref{eq:Gdiff} implies
\begin{equation}\label{diffPPi}
 \|P_\tau-\Pi_\tau\|=O_{L^\infty}\left(\frac{1}{L^3}\right).
\end{equation}
Inserting~\eqref{bound:HPi},  \eqref{bound:HP}, and \eqref{diffPPi} into~\eqref{decomp-norm-H}, we find that 
\[\|H_\tau(\Pi_\tau^\perp-P_\tau^\perp)\|=O_{L^\infty}\left(\frac{1}{L^3}\right),\]
which, together with~\eqref{oldgapcondition}, proves that
\eqref{eq:gapconditionapplied} is satisfied for $L$ large enough. This concludes the proof of Lemma \ref{lemma:ims}.
\end{proof}

Due to Lemma \ref{lemma:ims},
we can use Theorem~\ref{teo:feschbach}, \eqref{eq:RanP} and \eqref{FeshGS} to obtain that  
\begin{equation}\label{gse:feshbach}
\cE_\tau =\min_{\substack{\Psi \in \cG_1\otimes\cG_2\\\|\Psi\|=1}}\frac{\langle Q_NK_\tau W_\tau\Psi, F_{P_\tau}(\cE_\tau) Q_NK_\tau W_\tau\Psi\rangle}{\|Q_NK_\tau W_\tau\Psi\|^2},
\end{equation}
where the Feshbach map $F_{P_\tau}$ is given in our setting by
\begin{equation}\label{def:feshbach1}
F_{P_\tau}(E) := P_{\tau} H_{\tau} P_{\tau} - P_{\tau} H_{\tau} P^{\bot}_{\tau} (H_{\tau}^{\bot} - E)^{-1} P^{\bot}_{\tau} H_{\tau} P_{\tau}.
\end{equation}
Here $H_\tau^\bot=P_\tau^\bot H_\tau P_\tau^\bot$.
The first term of the right-hand side of~\eqref{def:feshbach1} is  not a function of $E$ and the second one is a non-linear function of $E$.   In order to find a lower bound for $\cE_\tau$, we will bound from below the quadratic form in the right-hand side of  \eqref{gse:feshbach}. We start by estimating
\begin{equation}\label{term:L}
 \frac{\langle Q_NK_\tau W_\tau\Psi, P_{\tau} H_{\tau} P_{\tau}  Q_NK_\tau W_\tau\Psi\rangle }{\|Q_NK_\tau W_\tau\Psi\|^2}.
\end{equation}
 We use \eqref{gs:trialfunction} and \eqref{eq:RanP} which
 give that for all $\Psi\in\cG_1\otimes\cG_2$, 
 \begin{equation}\label{eq:linpart}
 \frac{\langle  Q_N K_\tau W_\tau \Psi, P_{\tau} H_{\tau} P_{\tau} Q_N K_\tau W_\tau \Psi\rangle}{\|Q_N K_\tau W_\tau \Psi\|^2 }= 
 \frac{\langle  W_\tau \Psi, (H_\infty+\tilde{I}_\tau)  W_\tau \Psi\rangle}{\| W_\tau \Psi\|^2}.
 \end{equation}
 Inserting \eqref{eq:Eiinf}, \eqref{PSWtot} and \eqref{oneovernormvarphi}
 in the right-hand side of \eqref{eq:linpart}, we find that
 \begin{align}
 \frac{\langle  Q_N K_\tau W_\tau \Psi, P_{\tau} H_{\tau} P_{\tau} Q_N K_\tau W_\tau \Psi\rangle}{\|Q_N K_\tau W_\tau \Psi\|^2 } =E_\infty+\langle 	\chi_L^{\otimes N}\Psi ,    \tilde{I}_{\tau}   	\chi_L^{\otimes N}\Psi\rangle \label{eq:linpart1}-\frac{C_{\mathrm{vdW}}(\Psi,U,V)}{L^6}+ O_{C^\infty} \left( \frac{1}{L^7}\right).
 \end{align}

 We continue by bounding from above
 \begin{equation}\label{term:NL}
 \frac{\langle Q_NK_\tau W_\tau\Psi, P_{\tau} H_{\tau} P^{\bot}_{\tau} (H_{\tau}^{\bot} - \cE_\tau)^{-1} P^{\bot}_{\tau} H_{\tau} P_{\tau}  Q_NK_\tau W_\tau\Psi\rangle }{\|Q_NK_\tau W_\tau\Psi\|^2},
 \end{equation}
 for $ \Psi \in \mathcal{G}_1 \otimes \mathcal{G}_2$.
We observe that, thanks to \eqref{eq:RanP}, \eqref{eq:gapconditionapplied} and the equality $P_{\tau}^{\bot} H_{\tau} P_{\tau}= P_{\tau}^{\bot} (H_{\tau} -E_\infty)P_{\tau}$, there exists $C>0$ such that, if $L$ is large enough,
\begin{multline}
\frac{\langle Q_NK_\tau  W_\tau \Psi , P_{\tau} H_{\tau} P^{\bot}_{\tau} (H_{\tau}^{\bot} - \cE_\tau)^{-1} P_{\tau}^{\bot} H_{\tau} P_{\tau}Q_N K_\tau W_\tau \Psi \rangle}{\|Q_NK_\tau  W_\tau \Psi\|^2}  \label{NLfesh1}
\leq C^{-1}\frac{\|P^\bot_{\tau} (H_{\tau} -E_\infty) Q_NK_\tau  W_\tau \Psi\|^2}{\|Q_N K_\tau W_\tau \Psi\|^2}.
\end{multline}
Using \eqref{eq:commQN} together with~\eqref{boundterm2bis}, we find
\begin{align}\nonumber
P^\bot_{\tau} (H_{\tau} -E_\infty) Q_N K_\tau  W_\tau \Psi&= P^\bot_{\tau} Q_NK_\tau (H_\infty +\tilde{I}_\tau -E_\infty)   W_\tau \Psi\\
\label{Feshcomm2}
&=P^\bot_{\tau} Q_N K_\tau \chi_{4L/3}^{\otimes N} \Pi \chi_{L}^{\otimes N} \tilde{I}_\tau 	\Psi  +O_{(L^\infty, L^2(\R^{3N}))}\left(\frac{1}{L^4}\right).
\end{align}
Let $\Psi=\Psi_1$ be a normalized element of $\cG_1\otimes\cG_2$.
 In order to bound $P^\bot_{\tau} Q_N K_\tau \chi_{4L/3}^{\otimes N} \Pi$, we will prove that
\begin{equation*}
 P_\tau^\bot Q_NK_\tau\chi_{4L/3}^{\otimes N} \Psi_1=O_{(L^\infty, L^2(\R^{3N}))}\left(\frac{1}{L^3}\right),
\end{equation*}
 uniformly on $\Psi_1 \in \cG_1\otimes \cG_2$ with $\|\Psi_1\|=1$.
 We begin by extending $\Psi_1$ to an orthonormal basis   $\Psi_1, \dots, \Psi_n$ of $\cG_1\otimes\cG_2$.
Using  \eqref{decomp:PL}, we see that
\begin{equation}\label{eq:PtauQKPsidec}
P_\tau Q_NK_\tau\Psi_1=\sum_{k,l=1}^n | Q_N K_\tau W_\tau \Psi_k \rangle G_\tau^{kl} \langle  Q_N K_\tau  W_\tau \Psi_l, Q_NK_\tau\chi_{4L/3}^{\otimes N} \Psi_1\rangle.
\end{equation}
Arguing as in the proof of \eqref{eq:QnKtauWtau}, we find 
\begin{equation}\label{eq:getagainridofQ}
\langle  Q_N K_\tau  W_\tau \Psi_l, Q_NK_\tau\chi_{4L/3}^{\otimes N} \Psi_1\rangle = \frac{\langle   W_\tau \Psi_l, \chi_{4L/3}^{\otimes N} \Psi_1\rangle}{{ N \choose N_1}}.
\end{equation}
But we have, due to~\eqref{expdecayGS},  \begin{equation}\label{eq:PSchi}
                           \langle\chi_L^{\otimes N}\Psi_l,\chi_{4L/3}^{\otimes N} \Psi_1\rangle=\delta_{l1}+O_{C^\infty}(e^{-dL}),
                        \end{equation}
and, due to~\eqref{Pixi}, \eqref{commut-project} and~\eqref{norm:Rtau},
\begin{equation}\label{eq:PSxi}
\langle \chi_{4L/3}^{\otimes N}\xi_{\tau}\Psi_l,\chi_{4L/3}^{\otimes N} \Psi_1\rangle=\langle [(\chi_{4L/3}^{\otimes N})^2, \Pi^\perp]\xi_{\tau}\Psi_l,\Psi_1\rangle=O_{C^\infty}(e^{-dL}),
\end{equation}
since $\Pi^\bot \Psi_1=0$. 
Using \eqref{eq:getagainridofQ}~\eqref{eq:PSchi}, \eqref{eq:PSxi} and \eqref{def:Wtau}, we find that
\begin{equation}\label{eq:delta1l}
\langle  Q_N K_\tau  W_\tau \Psi_l, Q_NK_\tau\chi_{4L/3}^{\otimes N} \Psi_1\rangle=\frac{\delta_{1l}}{{ N \choose N_1}}+O_{C^\infty}(e^{-dL}).
\end{equation}
With the help of \eqref{eq:Gdiff}, \eqref{eq:PtauQKPsidec} and \eqref{eq:delta1l}, we see that
\begin{equation}\label{eq:PQKpsi}
P_\tau Q_NK_\tau\Psi_1=Q_N K_\tau W_\tau \Psi_1+O_{(L^\infty, L^2(\R^{3N}))}\left(\frac{1}{L^6}\right).
\end{equation}
We obtain from~\eqref{eq:PQKpsi} and~\eqref{def:Wtau} that
\begin{align*}
P_\tau^\perp Q_NK_\tau\Psi_1=Q_NK_\tau(1-\chi_L^{\otimes N})\Psi_1+Q_NK_\tau\chi_{4L/3}^{\otimes N}\xi_{\tau}\Psi_1+O_{(L^\infty, L^2(\R^{3N}))}\left(\frac{1}{L^6}\right)=O_{(L^\infty, L^2(\R^{3N}))}\left(\frac{1}{L^3}\right),
\end{align*}
where in the last equality we used~\eqref{norm:Rtau} and \eqref{norm:GS}.
As a consequence,
\begin{equation} \label{prodPPi}
 \|P^\bot_{\tau} Q_N K_\tau \chi_{4L/3}^{\otimes N} \Pi\|=O_{L^\infty}\left(\frac{1}{L^3}\right).
\end{equation}

Inserting \eqref{prodPPi} and  \eqref{interleadterm1} in \eqref{Feshcomm2}, we obtain 
\begin{equation}\label{norm2term}
\|P^\bot_{\tau} (H_{\tau} -E_\infty) Q_N K_\tau  W_\tau \Psi\|=O_{L^\infty}\left(\frac{1}{L^4}\right),
\end{equation}
uniformly for all normalized $\Psi \in \cG_1 \otimes \cG_2$.

Using \eqref{norm2term}, \eqref{eq:QnKtauWtau} and \eqref{NLfesh1}, we arrive at
\begin{equation}\label{norm:NLterm}
\frac{\langle Q_N  K_\tau W_\tau \Psi , P_{\tau} H_{\tau} P^{\bot}_{\tau} (H_{\tau}^{\bot} - \cE_\tau)^{-1} P_{\tau}^{\bot} H_{\tau} P_{\tau}Q_N  K_\tau W_\tau \Psi \rangle}{\|Q_N K_\tau W_\tau \Psi\|^2}  = O_{L^\infty}\left( \frac{1}{L^8}\right).
\end{equation}

Using~\eqref{eq:linpart1}, \eqref{norm:NLterm}, \eqref{def:feshbach1} and \eqref{gse:feshbach}, we obtain \eqref{eq:Etaunopoles}. 
 This concludes the proof of Proposition \ref{prop:nopoles}.
\qed

\bigskip
\subsection{Proof of Theorems \ref{thm:vanderwaals} (b) and  \ref{thm:vanderwaalsbis} }\label{sec:lowerbound}
In this section, we prove Theorems~\ref{thm:vanderwaals} (b) and~\ref{thm:vanderwaalsbis} using Proposition~\ref{prop:nopoles}. As we explained below the statement of Proposition~\ref{prop:nopoles}, the latter holds in both the \textbf{(SR)} and the \textbf{(NR)} cases. Thus, we consider here both cases simultaneously. Note that, in this step, the analysis of~\cite{al} has to be changed, due to the weaker assumption~\ref{poles} for the ground state eigenspaces.

In view of Proposition \ref{prop:nopoles}, our goal is  to estimate the quadratic form
\begin{equation}\label{def:Qtau}
Q_\tau(\Psi):=\langle \chi_L^{\otimes N}\Psi, \tilde{I}_{\tau} \chi_L^{\otimes N}\Psi \rangle
\end{equation}
 on the unit ball of $\cG_1\otimes\cG_2$, under one of the two versions of Condition~\ref{poles}.
Let us first  assume Condition~\ref{polesall}. Together with~\eqref{res:int1}, it directly implies that, for all normalized $\Psi_1\in\cG_1$ and $\Psi_2\in\cG_2$,
\begin{equation}\label{polesall:I}Q_\tau(\Psi_1\otimes\Psi_2 )=\sum_{1 \leq m+n \leq 5}\frac{\mathcal{F}^{(n,m)}(U, V)}{L^{m+n+1}} + O_{C^\infty} \left( \frac{1}{L^7}\right),\end{equation}
where the notation $\mathcal{F}^{(n,m)}(U,V)$ was introduced right after Equation \eqref{eq:CvdW}.
Let us now consider the general case, \textit{i.e.} where $\Psi$ is not a factorized state.
For such a $\Psi$, since $\mathcal{G}_j$ are finite dimensional, we know that there exist orthonormal vectors $\{\Phi_1^{(k)}\}_{k=1}^{\dim\cG_1} \subseteq \mathcal{G}_1$ and $\{\Phi_2^{(l)}\}_{l=1}^{\dim\cG_2} \subseteq \mathcal{G}_2$ such that 
\begin{equation}\label{sumakl}
\Psi = \sum_{k=1}^{\dim\cG_1}\sum_{l=1}^{\dim\cG_2} a_{k,l} \Phi_{1}^{(k)} \otimes \Phi_2^{(l)}, \qquad \{a_{k,l}\}_{(k,l)} \subseteq \mathbb{C}, \; \sum_{k,l} |a_{k,l}|^2 =1.
\end{equation}

As a consequence, we can write that
\begin{align}
Q_\tau(\Psi)=& \sum_{p,q}\sum_{l,k} \overline{a_{p,q}} a_{l,k}\left\langle \chi_L^{\otimes N}\Phi_{1}^{(p)} \otimes \Phi_2^{(q)}, \tilde{I}_{\tau} \chi_L^{\otimes N}\Phi_{1}^{(l)} \otimes \Phi_2^{(k)} \right\rangle  \nonumber\\
=& \sum_{p,q}|a_{p,q}|^2Q_\tau\left(\Phi_{1}^{(p)} \otimes \Phi_2^{(q)}\right)  \nonumber \\
&+ \sum_{(p,q)\neq(k,l)}  \overline{a_{p,q}} a_{l,k}\left\langle \chi_L^{\otimes N}\Phi_{1}^{(p)} \otimes \Phi_2^{(q)}, \tilde{I}_{\tau} \chi_L^{\otimes N}\Phi_{1}^{(k)} \otimes \Phi_2^{(l)} \right\rangle\label{eq:I-Fesh}.
\end{align}

For the diagonal part of~\eqref{eq:I-Fesh},  we find, with the help of~\eqref{polesall:I} and~\eqref{sumakl}, that  
\begin{equation}\label{diagpart}
\sum_{p,q}|a_{p,q}|^2\Q_\tau\left(\Phi_{1}^{(p)} \otimes \Phi_2^{(q)}\right)  
=  \sum_{1 \leq n+m \leq 5}\frac{\mathcal{F}^{(n,m)}(U,V)}{L^{m+n+1}}+ O_{C^\infty}\left( \frac{1}{L^7}\right).
\end{equation}

We now prove that, for all $(p,q)\neq(k,l)$, 
\begin{equation}\label{outdiagpart}
 \left\langle \chi_L^{\otimes N}\Phi_{1}^{(p)} \otimes \Phi_2^{(q)}, \tilde{I}_{\tau} \chi_L^{\otimes N}\Phi_{1}^{(k)} \otimes \Phi_2^{(l)} \right\rangle=O_{C^\infty}\left( \frac{1}{L^7}\right).
\end{equation}
By the complex polarization identity, we find that
\begin{align}\nonumber
 4\left\langle \chi_L^{\otimes N} \Phi_{1}^{(p)}  \otimes \Phi_2^{(q)},  \tilde{I}_{\tau} \chi_L^{\otimes N}\Phi_{1}^{(k)} \otimes \Phi_2^{(l)} \right\rangle &=\left\langle \chi_L^{\otimes N}(\Phi_{1}^{(p)}+\Phi_{1}^{(k)}) \otimes \Phi_2^{(q)}, \tilde{I}_{\tau} \chi_L^{\otimes N}(\Phi_{1}^{(p)}+\Phi_{1}^{(k)}) \otimes \Phi_2^{(l)} \right\rangle \nonumber
 \\ &-\left\langle \chi_L^{\otimes N}(\Phi_{1}^{(p)}-\Phi_{1}^{(k)}) \otimes \Phi_2^{(q)}, \tilde{I}_{\tau} \chi_L^{\otimes N}(\Phi_{1}^{(p)}-\Phi_{1}^{(k)}) \otimes \Phi_2^{(l)} \right\rangle\nonumber 
  \\&+\iu \left\langle \chi_L^{\otimes N}(\iu\Phi_{1}^{(p)}+\Phi_{1}^{(k)}) \otimes \Phi_2^{(q)}, \tilde{I}_{\tau} \chi_L^{\otimes N}(\iu\Phi_{1}^{(p)}+\Phi_{1}^{(k)}) \otimes \Phi_2^{(l)} \right\rangle\nonumber
 \\-\iu\Big\langle \chi_L^{\otimes N}&(-\iu\Phi_{1}^{(p)}+\Phi_{1}^{(k)}) \otimes \Phi_2^{(q)}, \tilde{I}_{\tau} \chi_L^{\otimes N}(-\iu\Phi_{1}^{(p)}+\Phi_{1}^{(k)}) \otimes \Phi_2^{(l)} 
 \Big\rangle .\label{polident}
\end{align}

Let us first consider the case where $q=l$, and thus $p\neq k$.
We can rewrite the previous equation in the following way:
\begin{align*}
 4\left\langle \chi_L^{\otimes N}\Phi_{1}^{(p)} \otimes \Phi_2^{(q)}, \tilde{I}_{\tau}\chi_L^{\otimes N}\Phi_{1}^{(k)} \otimes \Phi_2^{(q)} \right\rangle &=Q_\tau\left((\Phi_{1}^{(p)}+\Phi_{1}^{(k)}) \otimes \Phi_2^{(q)}\right)-Q_\tau\left((\Phi_{1}^{(p)}-\Phi_{1}^{(k)}) \otimes \Phi_2^{(q)}\right)\\&\quad +\iu Q_\tau\left((\iu\Phi_{1}^{(p)}+\Phi_{1}^{(k)}) \otimes \Phi_2^{(q)}\right)-\iu Q_\tau\left((-\iu\Phi_{1}^{(p)}+\Phi_{1}^{(k)}) \otimes \Phi_2^{(q)}\right).
\end{align*}

But, since the four vectors $\Phi_{1}^{(p)}+\Phi_{1}^{(k)}$, $\Phi_{1}^{(p)}-\Phi_{1}^{(k)}$, $\iu\Phi_{1}^{(p)}+\Phi_{1}^{(k)}$, and $-\iu\Phi_{1}^{(p)}+\Phi_{1}^{(k)}$ are in $\cG_1$ and have  the same norm $\sqrt{2}$, we have, by~\eqref{polesall:I}, that the four values $Q_\tau\left((\Phi_{1}^{(p)}+\Phi_{1}^{(k)}) \otimes \Phi_2^{(q)}\right)$, $Q_\tau\left((\Phi_{1}^{(p)}-\Phi_{1}^{(k)}) \otimes \Phi_2^{(q)}\right)$, $Q_\tau\left((\iu\Phi_{1}^{(p)}+\Phi_{1}^{(k)}) \otimes \Phi_2^{(q)}\right)$, and $Q_\tau\left((-\iu\Phi_{1}^{(p)}+\Phi_{1}^{(k)}) \otimes \Phi_2^{(q)}\right)$ are all equal up to an error of order $O_{C^\infty}(1/L^7)$. As a consequence,
\begin{equation}\label{qeql}
 \left\langle \chi_L^{\otimes N}\Phi_{1}^{(p)} \otimes \Phi_2^{(q)}, \tilde{I}_{\tau} \chi_L^{\otimes N}\Phi_{1}^{(k)} \otimes \Phi_2^{(q)} \right\rangle=O_{C^\infty}\left( \frac{1}{L^7}\right).
\end{equation}

Let us now consider the case $q\neq l$ and $p\neq k$.
In view of~\eqref{polident}, we will prove that \[\left\langle \chi_L^{\otimes N}(\Phi_{1}^{(p)}+\Phi_{1}^{(k)}) \otimes \Phi_2^{(q)}, \tilde{I}_{\tau} \chi_L^{\otimes N}(\Phi_{1}^{(p)}+\Phi_{1}^{(k)}) \otimes \Phi_2^{(l)} \right\rangle=O_{C^\infty}\left( \frac{1}{L^7}\right).\]
All the other terms of the right-hand side of~\eqref{polident} can be similarly bounded, since $-\Phi_{1}^{(k)}$ and $\pm\iu\Phi_{1}^{(k)}$ are normalized ground states of $\hat{H}_1$ too.
We apply again a polarization identity, which gives
\begin{align*}
 4\Big\langle \chi_L^{\otimes N}(\Phi_{1}^{(p)} &+\Phi_{1}^{(k)})  \otimes \Phi_2^{(q)}, \tilde{I}_{\tau} \chi_L^{\otimes N}(\Phi_{1}^{(p)}+\Phi_{1}^{(k)}) \otimes \Phi_2^{(l)} \Big\rangle \\ = & Q_\tau\left((\Phi_{1}^{(p)}+\Phi_{1}^{(k)}) \otimes (\Phi_2^{(q)}+\Phi_2^{(l)})\right)-Q_\tau\left((\Phi_{1}^{(p)}+\Phi_{1}^{(k)}) \otimes (\Phi_2^{(q)}-\Phi_2^{(l)})\right)\\
 &+\iu Q_\tau\left((\Phi_{1}^{(p)}+\Phi_{1}^{(k)}) \otimes (\iu\Phi_2^{(q)}+\Phi_2^{(l)})\right)-\iu Q_\tau\left((\Phi_{1}^{(p)}+\Phi_{1}^{(k)}) \otimes (-\iu\Phi_2^{(q)}+\Phi_2^{(l)})\right)
\end{align*}
With the same argument as for~\eqref{qeql}, we find that  the 4  quadratic forms in the right-hand side of the last equation are equal up to  $O_{C^\infty}(1/L^7)$ and therefore
\begin{equation*}
\left \langle \chi_L^{\otimes N}\Phi_{1}^{(p)} \otimes \Phi_2^{(q)}, \tilde{I}_{\tau} \chi_L^{\otimes N}\Phi_{1}^{(k)} \otimes \Phi_2^{(l)} \right\rangle=O_{C^\infty}\left( \frac{1}{L^7}\right).
\end{equation*}
The remaining case $p=k$, $q\neq l$ can be handled just in the same way as the case $p\neq k$, $q=l$. This concludes the proof of~\eqref{outdiagpart}.

From~\eqref{def:Qtau}, \eqref{eq:I-Fesh}, \eqref{diagpart}, and~\eqref{outdiagpart}, we find that
\begin{equation}\label{Ipart}\left\langle 	\chi_L^{\otimes N}\Psi ,    \tilde{I}_{\tau}   	\chi_L^{\otimes N}\Psi\right\rangle=\sum_{1 \leq m+n \leq 5}\frac{\mathcal{F}^{(n,m)}(U, V)}{L^{m+n+1}} + O_{C^\infty} \left( \frac{1}{L^7}\right).
\end{equation}
Inserting~\eqref{Ipart} 
in~\eqref{eq:Etaunopoles}, we find
\begin{equation*}
\cE_\tau = E_{\infty} + \sum_{1 \leq m+n \leq 5}\frac{\mathcal{F}^{(n,m)}(U,V)}{L^{m+n+1}}
 - \frac{C_{\text{vdW}}(U,V)}{L^6} +  O_{L^\infty} \left( \frac{1}{L^7}\right),
\end{equation*}
where we recall that $C_{\text{vdW}}( U,V)$ was defined in~\eqref{eq:CvdW}.
This concludes the proof of  point (b) of Theorem~\ref{thm:vanderwaals}.

We now prove Theorem \ref{thm:vanderwaalsbis}. In the case where $n_1+n_2<5$ and we assume only Condition~\ref{poleslead},  the proof of~\eqref{eq:thmbis} is almost the same as the one of~\eqref{eq:lowerupper}. The only change is that~\eqref{polesall:I}, \eqref{diagpart} and therefore~\eqref{Ipart} are no longer  valid. We  have the following 
 modifications:
 \eqref{polesall:I} becomes
\begin{equation*}Q_\tau\left(\Psi_1\otimes\Psi_2 \right)=\frac{\mathcal{F}^{(n_1,n_2)}(U, V)}{L^{n_1+n_2+1}} + O_{C^\infty} \left( \frac{1}{L^{n_1+n_2+2}}\right),\end{equation*} and thus \eqref{diagpart} becomes
\begin{equation*}
\sum_{p,q}|a_{p,q}|^2\Q_\tau\left(\Phi_{1}^{(p)} \otimes \Phi_2^{(q)}\right)  
=  \frac{\mathcal{F}^{(n_1,n_2)}(U,V)}{L^{n_1+n_2+1}}+ O_{C^\infty}\left( \frac{1}{L^{n_1+n_2+2}}\right).\end{equation*}
Similarly, \eqref{outdiagpart} becomes
\begin{equation*}
 \left\langle \chi_L^{\otimes N}\Phi_{1}^{(p)} \otimes \Phi_2^{(q)}, \tilde{I}_{\tau} \chi_L^{\otimes N}\Phi_{1}^{(k)} \otimes \Phi_2^{(l)} \right\rangle=O_{C^\infty}\left( \frac{1}{L^{n_1+n_2+2}}\right),\end{equation*}
for all $(p,q)\neq(k,l)$.
As a consequence, instead of \eqref{Ipart} we have
\begin{equation}\label{IpartPoleslead}\left\langle 	\chi_L^{\otimes N}\Psi ,    \tilde{I}_{\tau}   	\chi_L^{\otimes N}\Psi\right\rangle=\frac{\mathcal{F}^{(n_1,n_2)}(U, V)}{L^{n_1+n_2+1}} + O_{C^\infty} \left( \frac{1}{L^{n_1+n_2+2}}\right).
\end{equation}
Inserting \eqref{IpartPoleslead} 
in \eqref{eq:Etaunopoles},
we arrive at~\eqref{eq:thmbis}.
\qed

\bigskip

\section{Proof of Proposition \ref{prop:fullexp}: A smoothed version of the ground state energy and expansion of its derivatives}\label{sec:firstderivative}
In order to prove Proposition \ref{prop:fullexp}, we would like to use that, for each $L$ large enough, any local pseudo-minimum of the function $(U,V) \mapsto \cE_L(U,V)$, recall \eqref{def:Etau} and \eqref{ELUV}, should be a critical point with a positive semidefinite Hessian but, in principle, $\cE_L$ could be non-differentiable, due to the infimum in its definition.

We treat both the \textbf{(SR)} and \textbf{(NR)} cases. We could not adapt   the analysis of \cite{al} in Section 3 so that it works for both cases, as in the \textbf{(SR)} case major additional difficulties occur. The first one is, like before, that the kinetic energy is a nonlocal operator. Moreover, the  singularities of the Coulomb potentials in $H_\tau$ move when  $\tau$ changes and, therefore, the second derivative of the interaction terms with respect to rotations  gives a term that is not locally $L^1$. In \cite{al}, this problem was compensated using that the  ground states of $H_{1,\tau}$ and $H_{2,\tau}$ are in  $H^2$ in the  \textbf{(NR)} case.   
In the \textbf{(SR)} case, at most $H^1$ regularity can be expected. But the proof of~\cite{al} cannot be adapted to this weaker regularity. 
 Moreover, in this case, it is not even clear whether the ground states are in $H^1$, since, as we explained below Equation \eqref{rel-bound}, the domain of $H_\tau$ may contain functions that are not in $H^1$.   For this double reason, the strategy of \cite{al} is not applicable in our setting  and we have to proceed completely differently.  

To overcome these difficulties, we will use ideas of the proof of Theorem 1 in \cite{Hunziker-86}.  Hunziker proved  in the \textbf{(NR)} case that there is a family of unitary transformations $U(\Xi)$ such that
$U^{-1}(\Xi) H_N(Y + \Xi,\cZ) U(\Xi)$ depends smoothly on $\Xi$, when $\Xi$ is in a neighborhood of $0$.  The unitary transformations were chosen so that the motion of the singularities of the Coulomb potentials was reversed. 
Here, we will adapt this method so that it firstly provides quantitative information for the derivatives and secondly  becomes applicable in the \textbf{(SR)} case as well. We note that the semirelativistic Laplacian is known to be analytic with respect to dilatations, see \cite{Weder} and in particular Lemma 1 for a precise statement.

In fact, in the \textbf{(NR)} case, our strategy  provides a simpler proof than the one of Section~3 of~\cite{al} and makes it possible to provide information not only for the first two derivatives, but for higher derivatives as well, see Proposition \ref{reg:Etilde} below. Recall that, in our proof, we treat both  the \textbf{(SR)} and the \textbf{(NR)} cases at the same time and we specify when we consider only one of them.

We assume that Condition \ref{neutr} is satisfied. 
Let ${\tb}:=(L,\overline{U},\overline{V})$ be such that the function $\cE_L$, defined in \eqref{ELUV}, has a local pseudo-minimum in $(\overline{U}, \overline{V})$ -- recall Definition \ref{def:locpseu}. Recall that  $\hat{H}_{\tb}$ was defined in \eqref{Hhat}. By Theorem \ref{teo:feschbach} and~\eqref{eq:RanP}, there exists some $\Psitb\in\cG_1\otimes\cG_2$ with $\|\Psi_{\tb}\|=1$ such that a ground state of $\hat{H}_{\tb}$ is given by
\begin{equation}\label{Phitau0}
\Phi_{\tb}:=Q_N K_{\tb} W_{\tb}\Psitb - R_{\tb},
\end{equation}
where
\begin{equation}\label{Rtau0}
R_{\tb}:=(H_{\tb}^\bot-\cE_{\tb})^{-1}  P_{\tb}^\bot  (H_{\tb}-E_\infty) Q_N K_{\tb} W_{\tb}\Psitb.
\end{equation}
 In other words, 
\begin{equation*}
\cE_{\tb}=\inf\sigma(\hat{H}_{\tb})=\frac{\langle \Phi_{\tb}, H_{\tb} \Phi_{\tb} \rangle}{\|\Phi_{\tb}\|^2}.
\end{equation*}

Note that in \eqref{Rtau0} we could subtract $E_\infty$ since $P_{\tb}^\bot Q_N K_{\tb} W_{\tb} \Psi_{\tb}=0$ by \eqref{eq:RanP}.
A key observation here is that, if for some open neighborhood $\Omega\subset \SO(3)\times \SO(3)$, $(\Phi_\tau)_{\tau=(L,U,V),(U,V)\in\Omega}$ is a family of functions so that $\Phi_{\tb}$ is given by \eqref{Phitau0}, then $(\overline{U},\overline{V})$ is a local pseudo-minimum not only of $\cE_{\tau}$ but also of the new function 
\begin{equation}\label{eq:trick_F-S-upper-bound}
\tilde{\cE}_{\tau}:=\frac{\langle \Phi_\tau, H_\tau \Phi_\tau \rangle}{\|\Phi_\tau\|^2}.
\end{equation}
The observation simply follows from the fact that $\tilde{\cE}_{\tau} \geq \cE_{\tau}$, by the definition of $\cE_{\tau}$ in \eqref{def:Etau}, and the equality $\cE_{\tb}=\tilde{\cE}_{\tb}$.

We want to choose the family  $\Phi_\tau$ appropriately so that $\tilde{\cE}_{\tau}$ has the same leading order  as $\cE_\tau$ when $L$ is large and, at the same time, is smooth. This is an idea that already appeared in \cite{al}, but here we choose a  different family. Moreover, the rest of the proof completely changes. 

     We  consider the family of vector fields $\mt:\R^3 \to \R^3$ with
\begin{multline}
\mt(x):=U \overline{U}^{-1} x \chi_\frac{8L}{3}(x)  + \Big[Le_1 + V\overline{V}^{-1}(x-Le_1) \Big]\chi_\frac{8L}{3}(x-Le_1)   \label{def:mut}
+ x\Big(1-\chi_\frac{8L}{3}(x)-\chi_\frac{8L}{3}(x-Le_1)\Big),
\end{multline}
where  $\chi_\frac{8L}{3}$ was defined in \eqref{def:chiL}.
 Note that, by \eqref{def:chiL},
 \begin{equation}\label{chichichi}
 \chi_\frac{8L}{3} \chi_{2L}= \chi_{2L}, \qquad  \chi_{2L}\chi_{\frac{4L}3}=\chi_{\frac{4L}3}
 \end{equation}
 and
 \begin{equation}\label{chichisupp}
 \supp{\chi_{\frac{8L}{3}}} \cap  (\supp\chi_{\frac{8L}{3}}-Le_1)= \emptyset.
 \end{equation}
The idea of the definition \eqref{def:mut} is that, if $x$ is close to the $j$-th molecule ($j=1$ or 2), then it is rotated together with the nuclei around the center of the $j$-th molecule. Indeed, using \eqref{def:mut}-\eqref{chichisupp}, we find that
\begin{equation}\label{mtclose}
\mt(x)=U \overline{U}^{-1} x \text{ on } \supp \chi_{2L}, \text{ } \mt(x)=Le_1 + V\overline{V}^{-1}(x-Le_1) \text{ on }  (\supp \chi_{2L}-Le_1),
\end{equation}
compare with \eqref{def:Ytau}. 
If $x$ is far from both molecules, namely outside of the supports of the functions
$\chi_\frac{8L}{3}$ and $\chi_\frac{8L}{3}(.-Le_1)$, then  $\mt(x)=x$ by \eqref{def:mut}.  

A simple computation gives that
\begin{equation}\label{eq:mtvt}
\mt(x)=x + v_\tau(x),
\end{equation}
where
\begin{equation}\label{def:vt}
v_\tau(x):=(U-\overline{U})\overline{U}^{-1}x \chi_\frac{8L}{3}(x) + (V-\overline{V})\overline{V}^{-1}(x-Le_1) \chi_\frac{8L}{3}(x-Le_1).
\end{equation}
We have that, for all $(U,V) \in \SO(3)\times \SO(3)$ and $L>1$,
\begin{equation}\label{eq:vLip}
\|v_\tau'\|_{L^\infty(\R^3,\R^{3\times3})}\leq C  \sup_{y \in \R^3, \|y\|=1}(\|(U-\overline{U})y\|_{\R^3} + \|(V-\overline{V})y\|_{\R^3}),
\end{equation}
where 
\begin{equation*}
C:=\|(x \chi_{\frac{8L}{3}}(x))'\|_{L^\infty
(\R^3)}
\end{equation*}
is uniformly bounded with respect to $L$ due to \eqref{def:chiL}. 
 In particular, there exists some open neighborhood $\Omega$ of $(\overline{U},\overline{V})$ such that
 \begin{equation}\label{omegaproperty}
  (U,V)\in\Omega \implies v_\tau(\cdot) \text{ is a contraction uniformly in } L>1. 
  \end{equation}

Following ideas of \cite{Hunziker-86}, we  consider the family of  transformations $\ut:L^2(\R^{3N}) \to L^2(\R^{3N})$ defined for $(U,V)\in\Omega$, through
\begin{equation}\label{utdef}
(\ut \Psi)(x_1,\dots,x_N):= \left(\prod_{j=1}^N J_{\mt}(x_j)\right)\Psi(\mt(x_1),\dots,\mt(x_N)),
\end{equation} 
where $J_f(x)$ denotes the determinant of the Jacobian matrix of $f$ at the point $x$. 
 It follows from~\eqref{omegaproperty} that, if $(U,V)\in\Omega$, then $\mt(\cdot)$ is a bijection. Indeed, for any $y \in \R^3$, the vector field $x \mapsto y-v_\tau(x)$ is a contraction and therefore has a unique fixed point. As a consequence, $\mt$ has an inverse map. Since $\mt \in C^\infty$ and, by~\eqref{eq:mtvt} and~\eqref{eq:vLip}, $\mt'$ is invertible, it follows, by the implicit function theorem, that $\mt$ is a $C^\infty$ diffeomorphism, which implies that the transformations $\ut$ are unitary.

We now define
\begin{equation}\label{Phitautdef}
\Phi_\tau:=Q_N K_{\tau} W_{\tau}\Psitb - \uti R_{\tb},
\end{equation}
where recall that $R_{\tb}$ was defined in \eqref{Rtau0}.  When $\tau=\tb$, \eqref{Phitautdef} coincides with \eqref{Phitau0}. In \cite{Hunziker-86}, a more general family of unitary transformations was considered by choosing the family $\mt$ in a  more general way. Here, we choose $\mt$ more specifically such that the electrons close to a molecule rotate together with its nuclei.  This observation will enable us to obtain quantitative information for the derivatives with the help, \textit{e.g.}, of equation \eqref{eq:utKtauKtau0}  below.

We remark that, because of \eqref{norm2term} and the boundedness of the resolvent $(H_{\tb}^\bot-\cE_{\tb})^{-1}$ from $L^2(\R^{3N})$ to the form domain of $H_{\tb}$, which is $H^{1/2}(\R^{3N})$ in the \textbf{(SR)} case and $H^{1}(\R^{3N})$ in the \textbf{(NR)} case,
\begin{equation}\label{norm:Rtau0}
 R_{\tb}=O_{(C^\infty,H^{q})}\left(\frac{1}{L^4}\right), \quad
	q = \begin{cases} 
		1, &\text{in the \textbf{(NR)} case}\\
		\frac{1}{2}, &\text{in the \textbf{(SR)} case}.
	\end{cases}
\end{equation}
 From \eqref{norm:Rtau0}, \eqref{testh12}, which in the \textbf{(NR)} case holds with $H^1$ regularity and the smoothness of the functions in the definition of $\uti$, it follows that $\Phi_{\tau}$ is in the form domain of $H_{\tau}$. The $C^\infty$ estimate follows here for free, as $R_{\tb}$ does not depend on $\tau$.  Note that, in the \textbf{(SR)} case, the uniformity with respect to $\tb$ for large $L$ follows from \eqref{rel-bound}.

We choose to insert the family defined by \eqref{Phitautdef} in \eqref{eq:trick_F-S-upper-bound}. We find that
\begin{equation}\label{EtauEinf}
\tilde{\cE}_{\tau}-E_\ii=\frac{\langle \Phi_{\tau}, (H_{\tau}-E_\ii) \Phi_{\tau} \rangle}{\|\Phi_{\tau}\|^2}.
\end{equation}
We emphasize the fact that, on the right-hand side of \eqref{Phitautdef}, $\Psitb$ and $R_{\tb}$ remain fixed when $\tau$ varies. This will help us to prove that $\tilde{\cE}_{\tau}$ is a smooth function of $\tau$   and we will moreover provide quantitative information for its derivatives.
We prove the following result:
\begin{proposition}\label{reg:Etilde}
	Let us assume that Conditions~\ref{poleslead} and \ref{neutr} are satisfied. 
 We assume that $n_1+n_2<5$, where recall that $n_1$ and $n_2$ are the indices of the first non-vanishing multipole moments of the molecules.	
For $\tilde{\cE}_\tau$ defined by~\eqref{eq:trick_F-S-upper-bound} and~\eqref{Phitautdef}, we have
 \begin{equation}\label{eq:regEtilde}
  \tilde{\cE}_{\tau}=E_\ii+\frac{\mathcal{F}^{(n_1,n_2)}(U,V)}{L^{n_1+n_2+1}}  +  O_{C^\infty} \left( \frac{1}{L^{n_1+n_2+2}}\right),
 \end{equation}
in the sense of~\eqref{C2error}-\eqref{def:C2error}.
\end{proposition}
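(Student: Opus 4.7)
The plan is to use unitarity of $\ut$ to push the Rayleigh quotient defining $\tilde{\cE}_\tau$ into coordinates in which the Coulomb singularities of $H_\tau$ are frozen at the positions corresponding to $\tb$, and then to read off both the leading multipolar term and the $O_{C^\infty}$ error bound from arguments mimicking those of Subsections \ref{subsec:upper}--\ref{sec:lowerbound}. Setting $\tilde{H}_\tau:=\ut H_\tau \uti$, unitarity gives
$$\tilde{\cE}_\tau-E_\infty = \frac{\langle \ut\Phi_\tau,(\tilde{H}_\tau-E_\infty)\ut\Phi_\tau\rangle}{\|\ut\Phi_\tau\|^2},$$
which reduces the problem to smoothness and expansion of numerator and denominator separately. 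The map $\mt$ has been tailored (see \eqref{mtclose}) so that $\ut K_\tau$ coincides with $K_{\tb}$ on the supports of both $\chi_L^{\otimes N}$ and $\chi_{4L/3}^{\otimes N}$ (the Jacobian being unity because we rotate), and hence
$$\ut\Phi_\tau = Q_N K_{\tb}\chi_L^{\otimes N}\Psitb \;-\; Q_N K_{\tb}\chi_{4L/3}^{\otimes N}\ut\xi_\tau\Psitb \;-\; R_{\tb}.$$
The first and third terms are $\tau$-independent; the middle one is $O_{(C^\infty,L^2)}(L^{-3})$ because of \eqref{norm:Rtau} combined with the smooth $\tau$-dependence of $\ut$, uniform in $L$ thanks to \eqref{eq:vLip}.

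The next step is to analyse $\tilde{H}_\tau$. For $k\leq M_1$ the Coulomb factor $|x_j-Uy_k|^{-1}$ becomes $|\overline{U}^{-1}x_j-y_k|^{-1}=|x_j-\overline{U}y_k|^{-1}$ on $\operatorname{supp}\chi_L^{\otimes N}$, so intra-molecular singularities are frozen at $\tau$-independent positions (symmetric for the second molecule). The remaining Coulomb terms of $H_\tau$ are inter-molecular ones whose singularities sit at distance $\gtrsim L$ from the support of $\ut\Phi_\tau$ and are therefore smooth in $\tau$. The kinetic part $\ut T_j \uti$ is a smooth family of operators as well: in the \textbf{(NR)} case this is a standard conjugation of $-\Delta_j$ by a smooth diffeomorphism, while in the \textbf{(SR)} case one uses a Balakrishnan-type representation
$$T_j \;=\; \frac{1}{\pi}\int_0^\infty\frac{-\Delta_j}{-\Delta_j+1+s}\frac{ds}{\sqrt{s}}\;+\;\text{const}$$
to reduce everything to conjugation of resolvents $(-\Delta_j+1+s)^{-1}$, whose smoothness under $\ut$ can be established with bounds uniform in $s$ and $L$. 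The upshot is a decomposition $\tilde{H}_\tau-E_\infty = (H_{\tb}-E_\infty) + \widetilde{\cI}_\tau + \cK_\tau$, where $\widetilde{\cI}_\tau$ is the smooth inter-molecular interaction and $\cK_\tau$ is a kinetic correction that vanishes at $\tau=\tb$ and is smooth in $\tau$ in the sense of \eqref{C2error}.

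Inserting this decomposition into the Rayleigh quotient and repeating, term by term, the bookkeeping that led to \eqref{PSWtot} and \eqref{oneovernormvarphi} gives
$$\tilde{\cE}_\tau-E_\infty \;=\; \langle\chi_L^{\otimes N}\Psitb,\widetilde{\cI}_\tau\chi_L^{\otimes N}\Psitb\rangle \;-\; \frac{C_{\mathrm{vdW}}(\Psitb,U,V)}{L^6} \;+\; O_{C^\infty}\!\left(\frac{1}{L^7}\right),$$
every error now being automatically $O_{C^\infty}$ because no Coulomb singularity depends on $\tau$ anymore. Applying Lemma \ref{thm:multipol2} to the leading integral and invoking Condition \ref{poleslead} together with the polarization argument of Subsection \ref{sec:lowerbound} identifies the dominant contribution as $\cF^{(n_1,n_2)}(U,V)/L^{n_1+n_2+1}$. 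Since $n_1+n_2<5$, the van der Waals term and the higher multipolar remainders are all absorbed into the $O_{C^\infty}(L^{-(n_1+n_2+2)})$ error, yielding \eqref{eq:regEtilde}.

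The main obstacle will be the $C^\infty$ control, uniform in both $L$ and the spectral parameter $s$, of the conjugated non-local kinetic energy in the \textbf{(SR)} case, together with verifying that $\uti$ preserves the form domain $H^{1/2}$ uniformly for $(U,V)$ close to $(\overline{U},\overline{V})$. Both rely on the bi-Lipschitz nature of $\mt$ with constants independent of $L$, \eqref{eq:vLip}, but translating this into uniform resolvent estimates requires a careful refinement of the Hunziker-type arguments of \cite{Hunziker-86} to the semirelativistic setting.
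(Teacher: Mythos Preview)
Your approach is essentially the paper's: both hinge on the unitary $u_\tau$, the identity $u_\tau K_\tau\Phi = K_{\tb}\Phi$ for $\Phi$ supported in $\supp\chi_{2L}^{\otimes N}$, and a Hunziker-type analysis of $u_\tau T_j u_\tau^{-1}$. The organisational difference (you conjugate globally, the paper conjugates only the pieces involving $R_{\tb}$) is cosmetic, since by unitarity the $Q_N K_\tau W_\tau\Psitb$-diagonal is literally unchanged by conjugation. One small slip: in your expression for $u_\tau\Phi_\tau$ the middle term should read $Q_N K_{\tb}\chi_{4L/3}^{\otimes N}\xi_\tau\Psitb$, the $u_\tau$ having already been absorbed into $K_\tau\mapsto K_{\tb}$.

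The gap is the phrase ``repeating the bookkeeping that led to \eqref{PSWtot}''. That calculation covers only the $W_\tau\Psitb$-diagonal, which needs no conjugation and follows directly from \eqref{PSWtot} and \eqref{IpartPoleslead}. The remaining pieces of the Rayleigh quotient --- the cross term with $R_{\tb}$ and the $R_{\tb}$-diagonal --- are \emph{not} handled by Section~\ref{sec:vanderwaals} and are where the real work lies. For the cross term the paper does not rely on $\tilde H_\tau$: it applies $H_\tau$ to the compactly supported factor, passes to $(H_\infty+\tilde I_\tau-E_\infty)W_\tau\Psitb$, and then must split off the non-local tail $(1-\chi_{2L}^{\otimes N})T_j W_\tau\Psitb$ and control $u_\tau K_\tau$ on it via the explicit Bessel kernel of $T$ (Lemma~\ref{expdecay:realT}); your sketch does not account for this leakage. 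For the $R_{\tb}$-diagonal the paper does conjugate, but then needs a double-commutator expansion (equations \eqref{rootderdec1}--\eqref{der:LaplaceSR}) to rewrite $\tfrac{\d^n}{\d t^n}\sqrt{D(t)}$ as $F_n(t)\sqrt{D(t)}+C_n(t)$ with $F_n$ bounded on $H^{1/2}$ and $C_n$ bounded on $L^2$. Your Balakrishnan representation is indeed the starting point, but uniform-in-$s$ resolvent bounds alone do not deliver this: the commutator trick is essential precisely because $R_{\tb}$ is only known to lie in $H^{1/2}$, not $H^1$, so one cannot simply differentiate under the integral and estimate $D'(t)(D(t)+s)^{-1}$ on $R_{\tb}$ directly.
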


\begin{proof}[Proof of Proposition~\ref{reg:Etilde}]
 In view of~\eqref{EtauEinf}, we begin by estimating $\|\Phi_{\tau}\|^2$. 
Using \eqref{Phitautdef} together with  \eqref{norm:Rtau0} and the fact that  $\ut$ commutes with $Q_N$, we find 

\begin{equation}\label{Phitautnorm}
\|\Phi_{\tau}\|^2=\|Q_N K_{\tau} W_{\tau}\Psitb\|^2   - 2 \text{Re} \langle R_{\tb},  Q_N \ut K_{\tau} W_{\tau}\Psitb \rangle+ O_{C^\infty}\left(\frac{1}{L^8}\right).
\end{equation}
 
From  \eqref{def:Ktau}, \eqref{def:mut} and \eqref{utdef}, it follows that
\begin{multline}
(u_\tau K_{\tau} \Psi)(x_1,\dots,x_N) \label{utKtau}= \left(\prod_{j=1}^N J_{\mt}(x_j)\right)
 (K_{\tb} \Psi)(\elt(x_1),\dots, \elt(x_{N_1}),\nt(x_{N_1+1}),\dots,\nt(x_{N})),
\end{multline}
where
\begin{equation}\label{def:eltnt}
\elt(x):=\overline{U}U^{-1} \mt(x), \qquad \nt(x):=\overline{V}V^{-1}(\mt(x)-Le_1) + Le_1.
\end{equation}
Using \eqref{mtclose} and \eqref{def:eltnt}, it follows that
\begin{equation}\label{eltntsuppchi}
\elt(x)=x, \quad \forall x \in \supp \chi_{2L},
 \qquad \nt(x)=x, \quad\forall x \in (\supp \chi_{2L}-Le_1).
\end{equation}
Moreover, \eqref{mtclose} implies the equality
\begin{equation}\label{mtJacclose}
J_{\mt}(x)=1, \quad \forall x \in \supp \chi_{2L} \cup (\supp \chi_{2L}-Le_1).
\end{equation}
Combining \eqref{utKtau}, \eqref{eltntsuppchi} and \eqref{mtJacclose}, we arrive at
\begin{equation}\label{eq:utKtauKtau0}
\ut K_{\tau}\Phi=K_{\tb}\Phi,   \text{ when } \supp \Phi \subset\supp \chi_{2L}^{\otimes N}.
\end{equation}
In particular, since by \eqref{def:Wtau}   $W_{\tau}\Psitb$ is supported on 
 $\supp \chi_{\frac{4L}{3}}^{\otimes N} \subset \supp \chi_{2L}^{\otimes N}$, \eqref{eq:utKtauKtau0} implies that
\begin{equation}\label{utktcancelation}
\ut K_{\tau} W_{\tau}\Psitb= K_{\tb}  W_{\tau}\Psitb.
\end{equation}
   Using \eqref{norm:GS}, \eqref{eq:normfunctionorder},  \eqref{norm:Rtau0} and \eqref{utktcancelation},
   we find
\begin{equation}\label{Phitautnorm1}
 \langle R_{\tb},  Q_N \ut K_{\tau} W_{\tau} \Psitb \rangle=\langle R_{\tb},  Q_N  K_{\tb} \chi_L^{\otimes N} \Psitb \rangle + O_{C^\infty}\left(\frac{1}{L^7}\right).
\end{equation}
Since by \eqref{Rtau0} $R_{\tb}=P_{\tb}^\bot R_{\tb}$, using \eqref{diffPPi} and \eqref{norm:Rtau0}, we see that
\begin{equation}\label{Phitautnorm2}
\langle R_{\tb},  Q_N K_{\tb} \chi_L^{\otimes N} \Psitb \rangle = \langle R_{\tb}, \Pi_{\tb}^\bot Q_N K_{\tb}  \chi_L^{\otimes N} \Psitb \rangle + O_{C^\infty}\left(\frac{1}{L^7}\right)=O_{C^\infty}\left(\frac{1}{L^7}\right),
\end{equation}
where in the last step we used that  $\Pi_{\tb}^\bot Q_N K_{\tb}  \chi_L^{\otimes N} \Psitb=0$, which is immediate consequence of \eqref{def:RanPitau}.
The $C^\infty$ estimate follows for free, since the terms appearing in \eqref{Phitautnorm2} are $\tau$ independent.  
From \eqref{Phitautnorm}, \eqref{Phitautnorm1} and \eqref{Phitautnorm2}, we find that
\begin{equation}\label{Phitautnormsquare}
\|\Phi_{\tau}\|^2=\|Q_N K_{\tau} W_{\tau}\Psitb\|^2  + O_{C^\infty}\left(\frac{1}{L^7}\right).
\end{equation}

We now estimate the numerator of the right-hand side of \eqref{EtauEinf}. By~\eqref{Phitautdef}, we can decompose 
\begin{align}
 \label{num:term1}\langle \Phi_{\tau}, (H_{\tau}-E_\ii) \Phi_{\tau} \rangle&=\langle  Q_N K_{\tau} W_{\tau}\Psitb, (H_{\tau}-E_\ii)Q_N K_{\tau} W_{\tau}\Psitb\rangle \\\label{num:term2}&\quad-2\Re\langle \uti R_{\tb},(H_{\tau}-E_\ii)Q_N K_{\tau} W_{\tau}\Psitb\rangle\\&\quad + \langle R_{\tb},\ut(H_{\tau}-E_\ii)\uti R_{\tb}\rangle.\label{num:term3}
\end{align}
Equations~\eqref{eq:linpart1}, \eqref{IpartPoleslead} and~\eqref{Phitautnormsquare}  imply that, when $n_1+n_2<5$,
\begin{align}\nonumber
 \frac{\langle  Q_N K_{\tau} W_{\tau}\Psitb, (H_{\tau}-E_\ii)Q_N K_{\tau} W_{\tau}\Psitb\rangle}{\|\Phi_{\tau}\|^2}
 &
 =\frac{\langle  Q_N K_{\tau} W_{\tau}\Psitb, (H_{\tau}-E_\ii)Q_N K_{\tau} W_{\tau}\Psitb\rangle}{\|Q_N K_{\tau} W_{\tau}\Psitb\|^2  + O_{C^\infty}\left(\frac{1}{L^7}\right)}\\\label{eq:sec2der} &=   \frac{\mathcal{F}^{(n_1,n_2)}(U,V)}{L^{n_1+n_2+1}} 
  +  O_{C^\infty} \left( \frac{1}{L^{n_1+n_2+2}}\right).
\end{align}
 Note that this is the only point where we use Condition~\ref{poles} in the proof of Proposition~\ref{reg:Etilde}.
 
We will now estimate the terms~\eqref{num:term2} and~\eqref{num:term3}.  
We begin with~\eqref{num:term2}. By~\eqref{eq:commQN} and the fact that $\ut$ is unitary and commutes with $Q_N$, we have
\begin{align}\nonumber
 \langle \uti R_{\tb},(H_{\tau}-E_\ii)Q_N K_{\tau} W_{\tau}\Psitb\rangle =&
\langle  R_{\tb},Q_N\ut  K_{\tau}(H_{\infty}+\widetilde{I}_{\tau}-E_\ii)  W_{\tau}\Psitb\rangle\\
=&\langle  R_{\tb},\ut  K_{\tau}\chi_{2L}^{\otimes N}(H_{\infty}+\widetilde{I}_{\tau}-E_\ii) W_{\tau}\Psitb\rangle\label{term:close}\\
&+\langle  R_{\tb},\ut  K_{\tau}  (1-\chi_{2L}^{\otimes N})(H_{\infty}+\widetilde{I}_{\tau}-E_\ii) W_{\tau}\Psitb\rangle,\label{term:faraway}
\end{align}
where in the last step we used the equality $Q_N R_{\tb}=R_{\tb}$ and the decomposition $1= \chi_{2L}^{\otimes N} + (1-\chi_{2L}^{\otimes N})$. 
On the one hand, using \eqref{eq:utKtauKtau0}, we find
\begin{align}
 \langle  R_{\tb},\ut K_{\tau}\chi_{2L}^{\otimes N}(H_{\infty}+\widetilde{I}_{\tau}-E_\ii) W_{\tau}\Psitb\rangle \label{eq:predercutoff}=\langle  R_{\tb}, K_{\tb} \chi_{2L}^{\otimes N}(H_{\infty}  +\widetilde{I}_{\tau}-E_\ii) W_{\tau}\Psitb\rangle.
\end{align}
The only part which depends on $\tau$ in the right-hand side of \eqref{eq:predercutoff} is $(H_{\infty}+\widetilde{I}_{\tau}-E_\ii) W_{\tau}\Psitb$. Therefore, inserting \eqref{boundterm2} and~\eqref{norm:Rtau0} in \eqref{eq:predercutoff}, we find
\begin{equation}\label{eq:dercutoff}
 \langle  R_{\tb}, K_{\tb} \chi_{2L}^{\otimes N}(H_{\infty}+\widetilde{I}_{\tau}-E_\ii) W_{\tau}\Psitb\rangle=O_{C^\infty}\left(\frac{1}{L^7}\right).
\end{equation}

Let us now bound~\eqref{term:faraway}. Here, we have to distinguish between the two types of kinetic energy. In the \textbf{(NR)} case, since $ W_{\tau}\Psitb$ and $(1-\chi_{2L}^{\otimes N})$ have disjoint supports and the Hamiltonian is a local operator, we immediately see that this term is 0.
For the~\textbf{(SR)} case, we need to work more. The only nonlocal terms in the Hamiltonian are the kinetic energy operators. Furthermore, the functions $W_{\tau}\Psitb$ and $(1-\chi_{2L}^{\otimes N})K_{\tau}^{-1}\ut^{-1}R_{\tb}$ are antisymmetric with respect to permutations of variables leaving invariant the  set $\{1,\dots,N_1\}$. Therefore, we can write
\begin{align}
 \langle  R_{\tb},\ut K_{\tau}(1-\chi_{2L}^{\otimes N})(H_{\infty}+\widetilde{I}_{\tau}-E_\ii) W_{\tau}  \Psitb\rangle&=N_1\langle  R_{\tb},\ut K_{\tau}(1-\chi_{2L}^{\otimes N})T_1  W_{\tau}\Psitb\rangle\label{der:1stmol}
\\&\quad+N_2\langle  R_{\tb},\ut K_{\tau}(1-\chi_{2L}^{\otimes N})T_N W_{\tau} \Psitb\rangle,\label{der:2ndmol}
\end{align}
 where recall that $T_i$ is the semirelativistic kinetic energy operator applied to the $i$-th electron.

 We begin by taking some fixed $\Phi\in L^2(\R^{3N})$. By \eqref{def:chiL}, the distance between the supports of $\chi_{4L/3}^{\otimes N}$ and $1-\chi_{2L}^{\otimes N}$ is bigger than $7L/12$. Therefore, we see with Lemma~\ref{expdecay:realT} that $(1-\chi_{2L}^{\otimes N}) T_1 \chi_{4L/3}^{\otimes N} \Phi$ is in $L^2(\R^{3N})$.
Moreover,~\eqref{int-kern} implies that, for almost all $(x_1,X) \in \R^3 \times \R^{3(N-1)}$  where $X:=(x_2,\dots,x_N)$, we have
 \begin{align}\label{eq:Ktwo}
 \Big((1-\chi_{2L}^{\otimes N}) T_1 \chi_{4L/3}^{\otimes N} \Phi\Big)(x_1,X)
  = -\frac{1-\chi_{2L}(x_1)}{2 \pi^2} \int_{\R^3} \frac{K_2(|x_1-y|)}{|x_1-y|^2} (\chi_{4L/3}^{\otimes N} \Phi)(y,X) \d y,
  \end{align}
    where $K_2$ is the modified Bessel function of the second kind. Note that we could replace $1-\chi_{2L}^{\otimes N}(x_1,X)$ with $1-\chi_{2L}(x_1)$ in the right-hand side of \eqref{eq:Ktwo} due to \eqref{chichichi}. Equation
\eqref{eq:Ktwo}  together with \eqref{utKtau}, \eqref{eltntsuppchi} and \eqref{mtJacclose} implies
 \begin{align}
\ut K_\tau \Big((1-\chi_{2L}^{\otimes N}) T_1 \chi_{4L/3}^{\otimes N} \Phi\Big)(x_1,X)
 \label{eq:yintegration}= -\frac{1-\chi_{2L}(x_1)}{2 \pi^2} J_{\mt}(x_1) \int_{\R^3} 
  \frac{K_2(|\elt(x_1)-y|)}{|\elt(x_1)-y|^2} [K_{\tb}(\chi_{4L/3}^{\otimes N} \Phi)](y,X) \d y.
 \end{align}
 
 Let us now consider the integral in the right-hand side of~\eqref{eq:yintegration}.
 Note that, by \eqref{def:chiL}, for $|x_1|\leq L/5$, we have $1-\chi_{2L}(x_1)=0$ and thus the right-hand side of \eqref{eq:yintegration} is zero. Therefore, we consider only the case $|x_1|\geq L/5$. This implies that
 \begin{equation}\label{elxfar}
 |\elt(x_1)| \geq \frac{L}{5},
 \end{equation}
 because $\elt$ is bijective and, due to \eqref{eltntsuppchi}, $\elt(B(0,L/5))=B(0,L/5)$.

  On the other hand, $[K_{\tb}(\chi_{4L/3}^{\otimes N} \Phi)](y,X)$ is not 0 only if $y$ is in the support of $\chi_{4L/3}$, \textit{i.e.}, when $|y|\leq L/6$.
 Therefore by \eqref{elxfar}, it follows that $|\elt(x_1)-y| \geq \frac{L}{30}$ in all cases where the right-hand side of \eqref{eq:yintegration} is not 0.

Controlling the derivatives with respect to rotations of the quantity in~\eqref{eq:yintegration} is not as obvious as in the other cases and we need to go back to the definition of~\eqref{C2error}. We choose arbitrary  $A,B \in \Gamma$, as defined in \eqref{def:so3}, and introduce
\begin{equation}\label{eq:taut}
\taut:=(L,e^{At} \overline{U},e^{Bt} \overline{V}), \qquad t \in \R.
\end{equation}

Recall that $\Omega$ is a neighborhood of $(\overline{U}, \overline{V})$ such that \eqref{omegaproperty} holds. There exists $\eta>0$ such that, for all $t\in(-\eta,\eta)$ and all $A,B \in \Gamma$, $(e^{At} \overline{U},e^{Bt} \overline{V})\in \Omega$, so we can apply what we have proven until now.
 Using~\cite[Formula~9.6.29]{AS} and~\cite[Formula~9.7.2]{AS}, we find that, for all $n \in \N_0:=\N \cup \{0\}$, there exists $C_n >0$ such that for all $z\geq 1$ 
 \begin{equation*}
 |K_2^{(n)}(z)| \leq C_n e^{-z/2}.
 \end{equation*}
Moreover, by \eqref{def:mut} and \eqref{def:eltnt}, for all $n\in\N_0$, there exist $D_n>0$, independent of $A$ and $B$, such that for all $t\in(-\eta,\eta)$,
\begin{equation*}
\left|\frac{\d^n}{\d t^n} l_{\tau_t}(x)\right| \leq D_n (1+|x|), \quad \forall x \in \R^3.
\end{equation*}

As a consequence, for all $x_1$ and $y$ such that $|x_1|\geq L/5$ and $|y|\leq L/6$, the function 
\begin{equation}\label{gdef}
g_{(x_1,y)}(t): = \frac{K_2(|l_{\tau_t}(x_1)-y|)}{|l_{\tau_t}(x_1)-y|^2},
\end{equation}
 is infinitely differentiable. Moreover, there exists a $d>0$ such that, for all $n \in \N_0$, there exists $M_n>0$, such that,  for all $L$ large enough, for all $|x_1|\geq L/5$ and $|y|\leq L/6$, all $A$, $B\in\Gamma$ and all $t\in(-\eta,\eta)$,
\begin{equation}\label{gder}
|g_{(x_1,y)}^{(n)}(t)|\leq M_n e^{-|l_{\tau_t}(x_1)-y|/2}\leq M_n e^{-d|x_1-y|},
\end{equation}
where, in the last step, we used the fact that, if $|x_1|\geq L/5$ and $|y|\leq L/6$, then $|l_{\tau_t}(x_1)-y|\geq \frac{1}{2}|x_1-y|$,  provided that $t$ is in a small neighborhood of zero, which is independent of $A$ and $B$. In fact, we have  $|l_{\tau_t}(x_1)-y|\geq \frac{1}{11}|x_1-y|$ uniformly for all times.
Moreover, from \eqref{def:mut}, it follows that
\begin{equation}\label{Jtaubound}
J_{\mt}=O_{(C^\infty,L^\infty)}(1).
\end{equation}
 Equation \eqref{gdef} together with \eqref{gder} enables the application of the dominated convergence theorem in \eqref{eq:yintegration} and together with \eqref{Jtaubound}
we can extract that  
\begin{equation}
\ut K_\tau
 (1-\chi_{2L}^{\otimes N}) T_1 \chi_{4L/3}^{\otimes N} = O_{(C^\infty,\cB(L^2))} (e^{-dL})\label{der:int-kern},
\end{equation}
because of Young's inequality and the fact that $\|e^{-d|.|} \chi_{|x| \geq \frac{L}{30}}\|_{L^1}$ is exponentially decaying in $L$. 

From~\eqref{norm:Rtau0}, \eqref{der:int-kern}, \eqref{eq:normfunctionorder} and the Leibniz rule, we find that, 
\begin{equation}
 \langle  R_{\tb},u_{\tau} K_{\tau}(1-\chi_{2L}^{\otimes N})T_1 W_{\tau}\Psitb\rangle = O_{C^\infty}(e^{-dL}).\label{der:1-chi}
\end{equation}
This bounds the derivatives of the term~\eqref{der:1stmol}. 
It similarly follows that
\begin{equation}
\langle  R_{\tb},\ut K_{\tau}(1-\chi_{2L}^{\otimes N})T_N W_{\tau}\Psitb\rangle = O_{C^\infty}(e^{-dL}).\label{der:2-chi}
\end{equation}
Inserting \eqref{der:1-chi} and \eqref{der:2-chi} in \eqref{der:1stmol}-\eqref{der:2ndmol}, we find that
\begin{equation}\label{eq:rtutKt}
\langle  R_{\tb},\ut K_{\tau}(1-\chi_{2L}^{\otimes N})(H_{\infty}+\widetilde{I}_{\tau}-E_\ii) W_{\tau}  \Psitb\rangle=  O_{C^\infty}(e^{-dL}).
\end{equation}
 Inserting \eqref{eq:dercutoff} and \eqref{eq:rtutKt}  into~\eqref{term:close}--\eqref{term:faraway}, we find that
\begin{equation}\label{der:crossterm}
 \langle \uti R_{\tb},(H_{\tau}-E_\ii)Q_N K_{\tau} W_{\tau}\Psitb\rangle=O_{C^\infty}\left(\frac{1}{L^7}\right),
\end{equation}
for both \textbf{(NR)} and the \textbf{(SR)} cases, because in the  \textbf{(NR)} case the left-hand sides of \eqref{der:1-chi} and \eqref{der:2-chi} are zero, as we discussed above.

We now bound~\eqref{num:term3}. Here, we treat the \textbf{(NR)} and \textbf{(SR)} cases simultaneously.
We  directly see with~\eqref{def:HN} that
\begin{equation}\label{decomp:Hunz}
 \ut(H_{\tau}-E_\ii)\uti=\sum_{i=1}^N\ut T_i\uti+\alpha\ut I_N(Y(\tau),\cZ)\uti-E_\infty.
\end{equation}

We begin with the potentials, which are the same in the \textbf{(NR)} and in the \textbf{(SR)} cases. We start with the electron-electron interactions. From \eqref{utdef} and \eqref{eq:mtvt}, we see that, for all $x_i$ and $x_j$, $i,j=1,...,N$, $i\neq j$ and for all $\tau$ close enough to $\tb$,
\begin{equation*}
 \ut\frac{1}{|x_i-x_j|}\uti=\frac{1}{|x_i-x_j+v_\tau(x_i)-v_\tau(x_j)|}=(1+\omega)^{-1/2}\frac{1}{|x_i-x_j|},
\end{equation*}
where
\begin{equation*}
 \omega:=\frac{1}{|x_i-x_j|^2}\left(2(x_i-x_j)\cdot(v_\tau(x_i)-v_\tau(x_j))+|v_\tau(x_i)-v_\tau(x_j)|^2\right).
\end{equation*}
Since $\omega$ is a polynomial function of values of the function $v_\tau$ and  $v_\tau$  is smooth in $\tau$, $\omega$  also depends smoothly on $U$ and $V$. Moreover, due to~\eqref{eq:vLip}, there exists an open neighborhood $\Omega'$ of $(\overline{U},\overline{V})$ such that, for all $x$, $y\in\R^3$ and  all $(U,V)\in\Omega'$,  $|v_\tau(x)-v_\tau(y)|< |x-y|/3$, which implies that $|\omega|<7/9$. Consequently, $(1+\omega)^{-1/2}$ is a smooth function of $U$ and $V$ on $\Omega'$.  Arguing as in the proof of \eqref{eq:vLip}, it follows that every derivative of $v_\tau$ with respect to $x$ is bounded uniformly in $L>1$. As a consequence, for all $n \in \N_0$, there exists $C_n>0$ such that, for all $t$ close enough to $0$ and all $A,B \in \Gamma$,
\begin{equation}\label{eq:elelderivatives}
\left| \frac{\d^n}{\d t^n} u_{\taut}\frac{1}{|x_i-x_j|}u_{\taut}^{-1} \right|\leq  \frac{C_n}{|x_i-x_j|}, \qquad \forall i,j \in \{1,\dots,N\}, \text{ with } i \neq j,
\end{equation}
where recall that $\taut$ was defined in \eqref{eq:taut}. 
To deal with the electron-nucleus attractions, we observe, with the help of \eqref{mtclose}, \eqref{def:Ytau} and recalling that in the current setting $U=e^{tA}\overline{U}, V=e^{tB}\overline{V}$, that
\begin{equation}\label{Ytautransformed}
 Y(\tau)=(m_\taut (y_1),\dots,m_\taut(y_M)).
 \end{equation}
 Moreover,
\begin{equation*}
u_{\taut}\frac{1}{|x_i-m_\taut(y_j)|}u_{\taut}^{-1}=\frac{1}{|m_\taut(x_i)-m_\taut(y_j)|},
\end{equation*}
so we can argue as in the proof of \eqref{eq:elelderivatives} to conclude that, for all $n \in \N_0$, there exists $D_n>0$ such that, for all $t$ close enough to $0$,
\begin{equation}\label{eq:elnuderivatives}
\left| \frac{\d^n}{\d t^n}u_{\taut}\frac{1}{|x_i-\mt(y_j)|}u_{\taut}^{-1}\right| \leq  \frac{D_n}{|x_i-y_{j,0}|}, \qquad \forall i  \in \{1,\dots,N\}, \forall j  \in \{1,\dots,M\},
\end{equation}
where $y_{j,0}$ is the position of the  $j$-th nucleus in the configuration $Y(\tb)$. 
We now consider the interactions between the nuclei. Since we can write that
\[\left\langle R_{\tb},u_{\taut}\frac{1}{|m_\taut(y_k)-m_\taut(y_l)|}u_{\taut}^{-1} R_{\tb}\right\rangle=\frac{1}{|m_\taut(y_k)-m_\taut(y_l)|}\langle R_{\tb},R_{\tb}\rangle,
\]
we obtain that 
\begin{equation}\label{eq:nunu1L8}
 \left\langle R_{\tb},u_{\tau}\frac{1}{|m_\tau(y_k)-m_\tau(y_l)|}u_{\tau}^{-1} R_{\tb}\right\rangle =O_{C^\infty}\left(\frac{1}{L^8}\right).
\end{equation}
 Using \eqref{def:IN}, \eqref{eq:elelderivatives}, \eqref{Ytautransformed}, \eqref{eq:elnuderivatives} and \eqref{eq:nunu1L8},  we can derive that
\begin{equation}\label{eq:derivative2ndterm}
\langle R_{\tb}, u_{\tau} I_N(Y (\tau),\cZ)u_{\tau}^{-1} R_{\tb} \rangle  \leq O_{C^\infty}( \|R_{\tb}\|_{H^\frac{1}{2}}^2),
\end{equation}
where in the last step we also used Kato's inequality $\frac{1}{|x|}\leq \frac{\pi}{2}\sqrt{1-\Delta}$, see \cite[Theorem 2.1]{Herbst}. Using \eqref{eq:derivative2ndterm} and \eqref{norm:Rtau0}, we find that
\begin{equation}\label{eq:derivative2ndtermagain}
  \langle R_{\tb}, \ut I_N(Y (\tau),\cZ)\uti R_{\tb} \rangle  = O_{C^\infty}\left(\frac{1}{L^8}\right),
\end{equation}
for both the \textbf{(NR)} and \textbf{(SR)} cases. 

Finally, we bound the kinetic energy terms of the right-hand side of~\eqref{decomp:Hunz}. Let us take $t\in\R$ and $A,B \in \Gamma$. For $t$ small enough so that $(e^{tA},e^{tB})\in\Omega$, where $\Omega$ was defined in \eqref{omegaproperty}, we define the following operator on $L^2(\R^3)$:
\begin{equation*}
D(t):= 1 - u_\taut \Delta u_\taut^{-1}=u_\taut(1 -  \Delta )u_\taut^{-1},
\end{equation*}
where recall that $\taut$ was defined in \eqref{eq:taut}. 
Since, for all $t$, $u_\taut$ is unitary and $-\Delta\geq0$, we have that $D(t)\geq 1$ in the sense of quadratic forms.
We find, following~\cite[Theorem~1]{Hunziker-86}, that there exist functions $(a^{ij}_t)$, $(b_t^k)$, $c_t$, which are all $C_c^\infty$ with compact support uniformly in $t$ and depend smoothly on $t$, such that
\begin{equation}\label{eq:Bt}
 D(t)=-\Delta+\sum_{i,j}a^{ij}_t(x)\frac{\partial}{\partial x_i}\frac{\partial}{\partial x_j}+\sum_{k}b^{k}_t(x)\frac{\partial}{\partial x_k}+c_t(x).
\end{equation}
Moreover, when $t$ goes to 0, all the functions $a_t^{ij}, b_t^k, c_t$ converge to 0 in $C^n$ norm for all $n \in \N_0$, uniformly in  $A,B \in \Gamma$. 
This together with \eqref{norm:Rtau0} implies that, in the~\textbf{(NR)} case, each of the terms $\langle R_{\tb},\ut T_i \uti R_{\tb}\rangle$ depends smoothly on $t$. Moreover, for all $n\geq 1$,
\begin{align*}
 \frac{\d^n}{\d t^n}\langle R_{\tb},u_\taut T_i u_\taut^{-1} R_{\tb}\rangle=
 \sum_{i,j}\left\langle R_{\tb},\frac{\d^na^{ij}_t}{\d t^n}\frac{\partial}{\partial x_i}\frac{\partial}{\partial x_j}R_{\tb}\right\rangle
 +\sum_{k}\left\langle R_{\tb},\frac{\d ^nb^{k}_t}{\d t^n}\frac{\partial}{\partial x_k}R_{\tb}\right\rangle+\left\langle R_{\tb},\frac{\d^nc_t}{\d t^n}R_{\tb}\right\rangle.
\end{align*}
As a consequence, in the \textbf{(NR)} case, we immediately obtain that
\begin{equation}\label{der:LaplaceNR}
 \langle R_{\tb},\ut T_i \uti R_{\tb}\rangle =O_{C^ \infty}\left(\frac{1}{L^8}\right),
\end{equation}
due to~\eqref{norm:Rtau0}.

For the \textbf{(SR)} case, we use the well-known identity $\sqrt{u}=1/\pi \int_0^\infty u/[(u+w)\sqrt{w}]\d w$ for all $u>0$.
It follows,  by the spectral theorem, that for all $t$
\begin{equation}\label{eq:rootrepr}
u_\taut T_i u_\taut^{-1} + 1 =u_\taut \sqrt{1 -\Delta} u_\taut^{-1}=\sqrt{D(t)}=\frac{1}{\pi} \int_0^\infty D(t)(D(t)+w)^{-1}\frac{\d w}{\sqrt{w}}.
\end{equation}
Therefore, with the help of the product rule, we find
\begin{multline}
\frac{\d}{\d t}\sqrt{D(t)}=\frac{1}{\pi} \int_0^\infty D'(t)(D(t)+w)^{-1}\frac{\d w}{\sqrt{w}}  \label{sqrtder}- \frac{1}{\pi} \int_0^\infty  D(t)(D(t)+w)^{-1}  D'(t)(D(t)+w)^{-1}\frac{\d w}{\sqrt{w}},
\end{multline}
in the sense of quadratic forms on $H^{\frac{1}{2}}(\R^{3})$. A rigorous justification can be achieved if we first consider quadratic forms on $H^2(\R^{3})$ and apply the dominated convergence theorem. We omit the details.  
Using the decomposition $D(t)=(D(t)+w)-w$,
 we find that
\begin{align}\nonumber
 \frac{1}{\pi} \int_0^\infty  D(t)(D(t)+w)^{-1}  D'(t)(D(t)+w)^{-1}\frac{\d w}{\sqrt{w}}&\\ \label{sqrtsecondterm}
 =  \frac{1}{\pi} \int_0^\infty D'(t)(D(t)+w)^{-1}\frac{\d w}{\sqrt{w}}  - & \frac{1}{\pi} \int_0^\infty   (D(t)+w)^{-1}  D'(t)(D(t)+w)^{-1} \sqrt{w} \,\d w.
\end{align}
Combining \eqref{sqrtder} and \eqref{sqrtsecondterm}, we obtain that 
\begin{equation}
\frac{\d}{\d t}\sqrt{D(t)}=  \label{rootderdec1} \frac{1}{\pi} \int_0^\infty   (D(t)+w)^{-1}  D'(t)(D(t)+w)^{-1} \sqrt{w} \,\d w.
\end{equation}
Using the commutator formula $[A^{-1},C]= A^{-1} [C,A] A^{-1}$, we find
\begin{align}\nonumber
 (D(t)+w)^{-1} D'(t) = D'(t) (D(t)+w)^{-1}  + (D(t)+w)^{-1}  [D'(t),D(t)](D(t)+w)^{-1},
 \end{align}
and, using the same commutator formula once more, we arrive at
\begin{multline}
(D(t)+w)^{-1} D'(t) = D'(t) (D(t)+w)^{-1}  + [D'(t),D(t)] (D(t)+w)^{-2} \\ \label{commutatorformula} + (D(t)+w)^{-1} [[D'(t),D(t)],D(t)] (D(t)+w)^{-2}.
\end{multline}
Inserting  the equalities $\int_0^\infty (D(t)+w)^{-2}\sqrt{w}\d w=\frac{\pi}{2}(D(t))^{-\frac{1}{2}}$ and  $\int_0^\infty (D(t)+w)^{-3}\sqrt{w}\d w=\frac{\pi}{8}(D(t))^{-\frac{3}{2}}$ together with \eqref{commutatorformula}  in  \eqref{rootderdec1}, we find that
 \begin{align}\nonumber
 \frac{\d}{\d t}\sqrt{D(t)}= F(t) \sqrt{D(t)} + C(t), 
 \end{align}
 where 
 $$F(t):=\frac{1}{2}  D'(t) D^{-1}(t) + \frac{1}{8}[D'(t),D(t)] D(t)^{-2},$$
 and  $$C(t):=\frac{1}{\pi} \int_0^\infty  (D(t)+w)^{-1} [[D'(t),D(t)],D(t)] (D(t)+w)^{-3}  \sqrt{w}\,\d w.$$ 
 But for all $s \in \R$ the operator $D'(t)$ is bounded from $H^{s+2}$ to $H^{s}$ due to~\eqref{eq:Bt}, as it is a second order differential operator with $C_c^\infty$ coefficients. 
  On the other hand, for $t$ in a neighborhood of 0 independent of the matrices $A$ and $B$, the operator $D(t)$ is elliptic, since it is a small perturbation of the Laplacian. Thus, by elliptic regularity, $D^{-1}(t)$ is bounded from $H^s$ to $H^{s+2}$ for all $s \in \R$. Consequently, the operator $F(t)$ is bounded from $H^{1/2}(\R^3)$ to itself for $t$ small enough.
  
 Concerning  $C(t)$, one can see that the second commutator $[[D'(t),D(t)],D(t)]$ is a 4-th order differential operator and it is bounded from $H^s(\R^3)$ to $H^{s-4}(\R^3)$, for all $s \in \R$. Consequently, the  operator $D(t)^{-1} [[D'(t),D(t)],D(t)] D(t)^{-1}$ is bounded from $H^{1/2}(\R^3)$ to itself. Moreover,  $(D(t)+w)^{-1}D(t) \leq 1$  for all  $w \geq 0$. There exists, therefore, a $\Theta>0$ such that, for all $\psi \in L^2(\R^{3})$  and all $t$ in a neighborhood of 0 which is independent of the matrices $A, B\in\Gamma$,
 \begin{equation*}
  \|C(t)\psi\|\leq \Theta\left\|\int_0^\infty(D(t)+w)^{-2}\sqrt{w}\d w \psi\right\|.
 \end{equation*}
Since $D(t)\geq 1$ for all $t$ and $\int_0^\infty(1+w)^{-2}\sqrt{w}\d w<\infty$, the operator $C(t)$ is bounded.

 Let $F_1(t):=F(t)$ and $C_1(t):=C(t)$. In order to bound the $n$-th derivative of $\sqrt{D(t)}$
 for all $n\in\N$, we define inductively   $F_{n+1}(t):=F_n'(t)+F_n(t)F(t)$ and $C_{n+1}(t):=F_n(t)C(t)+C_n'(t)$. Then, for each $n$, $F_n$ and $C_n$ are smooth families of operators. 
 We get, by induction, that:
 \begin{equation*}
  \frac{\d^n}{\d t^n}\sqrt{D(t)}= F_n(t) \sqrt{D(t)} + C_n(t).
 \end{equation*}
 Moreover, for each $t$ close enough to 0, $F_n(t)$ is bounded from $H^{1/2}(\R^3)$ to itself and $C_n(t)$ is bounded from $L^2(\R^3)$ to itself, as one can show by induction too, since $D^{(n)}(t)$ is bounded from $H^{s+2}(\R^3)$ to $H^s(\R^3)$ for all $s \in \R$ and $n \in \N$.
This implies that there exists a constant $\Lambda_n$ such that
\begin{equation}\label{der:LaplaceSR}
 \left|\frac{\d^n}{\d t^n}\langle R_{\tb},\sqrt{D(t)}R_{\tb}\rangle\right|\leq \Lambda_n\|R_{\tb}\|_{H^{1/2}}^2=O\left(\frac{1}{L^8}\right),
\end{equation}
where we used again~\eqref{norm:Rtau0}.
 Inserting~\eqref{eq:derivative2ndtermagain} and~\eqref{der:LaplaceNR} (in the \textbf{(NR)} case) or~\eqref{der:LaplaceSR} (in the \textbf{(SR)} case) in~\eqref{decomp:Hunz},
 we find that
 \begin{equation}\label{der:3rdterm}
\langle R_{\tb},\ut(H_{\tau}-E_\ii)\uti R_{\tb}\rangle=O_{C^\infty}\left(\frac{1}{L^8}\right).
 \end{equation}
 Finally, inserting~\eqref{eq:sec2der}, \eqref{der:crossterm} and~\eqref{der:3rdterm} in~\eqref{num:term1}--\eqref{num:term3},
 we find that
 \begin{equation}
\frac{\langle \Phi_{\tau}, (H_{\tau}-E_\ii) \Phi_{\tau} \rangle}{\|\Phi_{\tau}\|^2}= \frac{\mathcal{F}^{(n_1,n_2)}(U,V)}{L^{n_1+n_2+1}} 
+  O_{C^\infty} \left( \frac{1}{L^{n_1+n_2+2}}\right) + \frac{1}{\|\Phi_{\tau}\|^2} O_{C^\infty}\left(\frac{1}{L^7}\right),
 \end{equation}
 which together  with \eqref{eq:spQnKtauWtau}, \eqref{EtauEinf}, \eqref{Phitautnormsquare} and the hypothesis $n_1 + n_2 <5$
 leads to~\eqref{eq:regEtilde}. This concludes the proof of Proposition \ref{reg:Etilde}.
   \end{proof}

  To conclude the proof of Proposition~\ref{prop:fullexp},  we use the fact that $\tilde{\cE}_{\tau}$ has a local pseudo-minimum at $(\overline{U},\overline{V})$. Thus, if we take any matrices $A$ and $B$ as in~\eqref{cond:pseudomin}, and define $\tau_t:=(L,e^{tA}\overline{U},e^{tB}\overline{V})$, we find that
  \[\frac{\d\tilde{\cE}_{\taut}}{\d t}\Big|_{t=0}=0,\qquad \frac{\d^2\tilde{\cE}_{\taut}}{\d t^2}\Big|_{t=0}\geq0.\]
  This implies, together with Proposition~\ref{reg:Etilde}, that, for any $\delta>0$, we can find an $L_m>0$ such that, for all $L \geq L_m$,
  \begin{equation*}
  \left| \frac{\d\mathcal{F}^{(n_1,n_2)}(e^{tA}\overline{U},e^{tB}\overline{V})}{\d t}\Big|_{t=0}\right|\leq\delta,\qquad \frac{\d^2\mathcal{F}^{(n_1,n_2)}(e^{tA}\overline{U},e^{tB}\overline{V})}{\d t^2}\Big|_{t=0}\geq-\delta,
  \end{equation*}
  for all  $A,B\in\Gamma$ independently of the position of the local pseudominimum $(\overline{U},\overline{V})$. 
This concludes the proof of Proposition~\ref{prop:fullexp}.
\bigskip
\appendix
\section{Some properties of the semirelativistic kinetic energy operator}\label{appA}
In this appendix, we give some properties of the semirelativistic kinetic energy operator $T:=\sqrt{1-\Delta}-1$  on $L^2(\R^3)$, as defined in \eqref{def:T}.  Since it is a pseudodifferential and not a differential operator, it is  not as easy to deal with as the Laplace operator. One reason is that it is  not a local operator. The following lemma enables us to control this nonlocality. 

\begin{lemma}\label{expdecay:realT}
 Let $R>0$.	Then, for all  $f,g\in L^{2}(\R^3)$ such that $\dist(\supp f, \supp g)\geq R$,
 \begin{equation}\label{int-kern}
 \langle f, Tg\rangle=-\frac{1}{2\pi^2}\int_{\R^3}\int_{\R^3}\frac{K_2(|x-y|)}{|x-y|^2}\overline{f(x)}g(y)\,\d x\d y,
 \end{equation}
 where $K_2$ is the modified Bessel function of the second kind.
 Moreover, 
 for all $f\in L^{2}(\R^3)$, the linear form $Tf\in H^{-1}(\R^3)$ can be extended to a bounded linear form on \[\{g\in L^2(\R^3),\, \dist(\supp f, \supp g)\geq R\}.\] Furthermore, there exists $C>0$ such that, for $R$ large enough, the above linear form has norm less than $Ce^{-R}\|f\|_{L^2}$.
\end{lemma}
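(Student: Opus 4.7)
The plan is to derive the explicit integral kernel of $T=\sqrt{1-\Delta}-1$ on $\{x\neq y\}$ via Fourier analysis and the subordinator identity
\begin{equation*}
\sqrt{a}-1=\frac{1}{2\sqrt{\pi}}\int_0^\infty(e^{-t}-e^{-at})\,\frac{\d t}{t^{3/2}},\qquad a>0,
\end{equation*}
which, applied with $a=1+|p|^2$, realizes the Fourier multiplier of $T$ as a superposition of heat-kernel symbols $e^{-t(1+|p|^2)}$. Both assertions of the lemma will then follow from estimates on this kernel.

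First I will take $f,g$ in the Schwartz class with $\dist(\supp f,\supp g)\geq R>0$. Plancherel combined with Fubini (valid because of the rapid decay of $\hat f,\hat g$) will give
\begin{equation*}
\langle f,Tg\rangle=\frac{1}{2\sqrt{\pi}}\int_0^\infty\frac{e^{-t}}{t^{3/2}}\bigl(\langle f,g\rangle-\langle f,e^{t\Delta}g\rangle\bigr)\d t.
\end{equation*}
The first pairing vanishes by disjointness of supports. For the second, swapping the order of integration and using that the $\R^3$ heat kernel equals $(4\pi t)^{-3/2}e^{-|x-y|^2/(4t)}$ produces
\begin{equation*}
\langle f,Tg\rangle=-\frac{1}{2\sqrt{\pi}(4\pi)^{3/2}}\int_{\R^6}\overline{f(x)}g(y)\biggl[\int_0^\infty\frac{e^{-t-|x-y|^2/(4t)}}{t^{3}}\d t\biggr]\d x\,\d y.
\end{equation*}
Then the Basset representation $K_\nu(z)=\tfrac12(z/2)^\nu\int_0^\infty e^{-t-z^2/(4t)}t^{-\nu-1}\d t$ at $\nu=2$ identifies the inner integral as $8K_2(|x-y|)/|x-y|^2$, and the arithmetic $8/(2\sqrt{\pi}(4\pi)^{3/2})=1/(2\pi^2)$ yields identity \eqref{int-kern} for such $f,g$.

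For the extension to $L^2$ and the quantitative estimate, I will set $K(z):=-K_2(|z|)/(2\pi^2|z|^2)$ and invoke the asymptotics $K_2(r)\sim\sqrt{\pi/(2r)}\,e^{-r}$ as $r\to\infty$ to get, for $R$ sufficiently large,
\begin{equation*}
\int_{|z|\geq R}|K(z)|\,\d z=\frac{2}{\pi}\int_R^\infty K_2(r)\,\d r\leq Ce^{-R}.
\end{equation*}
Young's inequality will then imply that the sesquilinear form $(f,g)\mapsto\int\overline{f(x)}g(y)K(x-y)\,\d x\,\d y$ is continuous on the subset of $L^2\times L^2$ subject to $\dist(\supp f,\supp g)\geq R$, with norm bounded by $Ce^{-R}$. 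Approximating $f,g\in L^2$ by Schwartz functions obtained through mollification at scale $\eta<R/4$ preserves the support-distance constraint down to $R/2$, and the continuity just established transfers \eqref{int-kern} from the Schwartz subclass to all such $L^2$ pairs; this simultaneously extends the linear form $Tf$ and supplies the norm bound $\leq Ce^{-R}\|f\|_{L^2}$. The delicate point of the proof --- more a bookkeeping issue than a genuine obstacle --- is that all the Fubini and distributional Fourier manipulations must be absolutely convergent, which happens precisely thanks to the disjoint-support hypothesis neutralizing the (non-integrable) singularity of $K$ at the origin.
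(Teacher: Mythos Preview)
Your argument is correct and reaches the same kernel and the same Young/asymptotic estimate, but the derivation of \eqref{int-kern} is genuinely different from the paper's. The paper quotes the explicit semigroup kernel of $e^{-tT}$ from \cite[Section~7.11]{LL}, writes $\langle f,Tg\rangle=\lim_{t\to 0}t^{-1}\big(\langle f,g\rangle-\langle f,e^{-tT}g\rangle\big)$, and passes to the limit by dominated convergence; because the Lieb--Loss kernel is already valid for all $L^2$ pairs, no density step is needed. Your route instead realizes the symbol $\sqrt{1+|p|^2}-1$ via the subordinator identity as a superposition of heat symbols, inserts the heat kernel, and identifies the resulting $t$-integral through the Basset representation of $K_2$; this is self-contained (it does not rely on the Lieb--Loss formula) but forces the extra approximation step to pass from smooth compactly supported test functions to general $L^2$. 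One cosmetic point: your ``Schwartz'' functions must actually be taken in $C_c^\infty$, since a generic Schwartz function has full support and the condition $\dist(\supp f,\supp g)\geq R$ would be vacuous; correspondingly, the mollification of an arbitrary $L^2$ function need not be Schwartz, so you should first truncate and then mollify. With that adjustment the argument goes through.
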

\begin{proof}
	
	We know from~\cite[Section 7.11]{LL} that, for all $f$, $g\in L^2(\R^3)$ and all $t>0$,
	\[\langle f, e^{-tT}g\rangle=\frac{e^t}{2\pi^2}\int_{\R^3}\int_{\R^3}\frac{t}{t^2+|x-y|^2}K_ 2\Big(\sqrt{t^2 + |x-y|^2}\Big)\overline{f(x)}g(y)\,\d x\d y,\]
where the factor $e^t$ originated from the 1 that was subtracted from the operator $\sqrt{1-\Delta}$ in the definition of $T$.
	
	If $f$ and $g$ have disjoint supports, we can write that
	\begin{align}\nonumber
		\frac{1}{t}\left(\langle f, g\rangle-\langle f, e^{-tT}g\rangle\right)&=-\frac{1}{t}\langle f, e^{-tT}g\rangle\\
		&=\frac{-e^t}{2\pi^2}\int_{\supp g}\int_{\supp f}\frac{K_ 2\Big(\sqrt{t^2 + |x-y|^2}\Big)}{t^2+|x-y|^2}\overline{f(x)}g(y)\d x\d y.\label{eq:Tlimit}
	\end{align}
Moreover, for all $t>0$, $x\in \supp f$ and $y\in \supp g$, 
	\begin{align}
	\left|\frac{K_ 2\Big(\sqrt{t^2 + |x-y|^2}\Big)}{t^2+|x-y|^2}\overline{f(x)}g(y)\right| \leq 	\frac{K_ 2(|x-y|)}{|x-y|^2}\left|\overline{f(x)}g(y)\right|,
	\end{align}
	because $K_2$ is a positive and decreasing function, see~\cite[Formula~9.6.29]{AS}.
		If  $\dist(\supp f, \supp g)\geq R$, then, by Young's inequality
		\begin{equation}\label{bound:kern}
		\int_{\supp g}\int_{\supp f} \frac{K_ 2(|x-y|)}{|x-y|^2}\left|\overline{f(x)}g(y)\right| \d x \d y \leq \|f\|_{L^2} \|g\|_{L^2} \int_{|x|\geq R}\frac{K_2(|x|)}{|x|^2} \d x.
		\end{equation}
		But the integrand on the right hand side of \eqref{bound:kern} is continuous on the domain of integration. 	
		 Moreover, as we can see for example in the result 9.7.2 of~\cite{AS}, $K_2(x)\sim\sqrt{\frac{\pi}{2x}}e^{-x}$ for large $x>0$. It follows that the right hand side of \eqref{bound:kern} is finite. 		 
We can thus use the dominated convergence theorem to take the limit when $t\to0$ in \eqref{eq:Tlimit} and find  that for all such $f$ and $g$,  $\langle f,Tg\rangle$ is well-defined and \eqref{int-kern} holds.	
	Moreover, if $R$ is large enough then  \eqref{int-kern},\eqref{bound:kern} and the asymptotic behavior of $K_2(x)$ for large $x$ lead to
	\begin{equation*}
		|\langle f,Tg\rangle|\leq C e^{-R}\|f\|_{L^2}\|g\|_{L^2},
	\end{equation*}	as desired.
\end{proof}
\bigskip

Another difficulty with the semirelativistic kinetic energy operator is that there is no Leibniz rule. As a consequence, commutators between this operator and a multiplication operator cannot be explicitly computed. Nevertheless, we are able to bound some commutators with the following lemma. 

\bigskip

\begin{lemma}\label{lem:expcommut1overL}
	Let $\zeta \in C_0^{\infty}(\mathbb{R}^3) $, with  supp $\zeta\subset B_{1/4}(0)$
and denote $\zeta_R(x): = \zeta\left( \frac{x}{R}\right)$. Then,
	there exists $C>0$ such that we have for all $R>1$, in operator norm,  
\begin{equation*}
		\|[T,\zeta_R]\|_{\cB(L^2)} \leq \frac{C}{R}.
	\end{equation*}
\end{lemma}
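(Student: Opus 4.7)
The plan is to work on the Fourier side, where $T$ becomes a multiplication operator and the commutator acquires a very explicit integral kernel. Writing $T = m(D)$ with symbol $m(\xi) := \sqrt{1+|\xi|^2}-1$ and using the paper's Fourier convention \eqref{def:Fourier}, one checks that
\begin{equation*}
\widehat{[T,\zeta_R]\psi}(\xi) = (2\pi)^{-3/2}\!\int_{\R^3}\!\bigl(m(\xi)-m(\eta)\bigr)\,\hat\zeta_R(\xi-\eta)\,\hat\psi(\eta)\,\d\eta,
\end{equation*}
so that the operator $\mathcal{F}[T,\zeta_R]\mathcal{F}^{-1}$ has Schwartz kernel $K(\xi,\eta)=(2\pi)^{-3/2}(m(\xi)-m(\eta))\hat\zeta_R(\xi-\eta)$.

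The key observation is that $m$ is globally $1$-Lipschitz: since $\nabla m(\xi)=\xi/\sqrt{1+|\xi|^2}$, one has $|\nabla m|<1$, hence $|m(\xi)-m(\eta)|\le|\xi-\eta|$ for all $\xi,\eta\in\R^3$. Therefore
\begin{equation*}
|K(\xi,\eta)|\le (2\pi)^{-3/2}\,|\xi-\eta|\,|\hat\zeta_R(\xi-\eta)|=:k(\xi-\eta),
\end{equation*}
and $k$ depends only on the difference of its arguments. For any $\psi\in L^2(\R^3)$, this pointwise domination gives $|\mathcal{F}[T,\zeta_R]\psi(\xi)|\le (k*|\mathcal{F}\psi|)(\xi)$, and Young's convolution inequality yields
\begin{equation*}
\|[T,\zeta_R]\|_{\cB(L^2)}\le \|k\|_{L^1(\R^3)}=(2\pi)^{-3/2}\int_{\R^3}|z|\,|\hat\zeta_R(z)|\,\d z.
\end{equation*}

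To extract the $1/R$ decay, I would use the scaling $\hat\zeta_R(z)=R^3\hat\zeta(Rz)$ coming from $\zeta_R(x)=\zeta(x/R)$, so that after the change of variables $w=Rz$,
\begin{equation*}
\int_{\R^3}|z|\,|\hat\zeta_R(z)|\,\d z = \frac{1}{R}\int_{\R^3}|w|\,|\hat\zeta(w)|\,\d w.
\end{equation*}
The remaining integral is finite because $\zeta\in C_c^\infty(\R^3)$ forces $\hat\zeta\in\mathcal{S}(\R^3)$, so setting $C:=(2\pi)^{-3/2}\int|w||\hat\zeta(w)|\,\d w$ gives the claim. The support assumption $\supp\zeta\subset B_{1/4}(0)$ plays no role in this argument; only smoothness and compact support (via Schwartz decay of $\hat\zeta$) are used. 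There is no serious obstacle here—the only non-obvious ingredient is the Lipschitz bound on the symbol $m$, which replaces the Leibniz rule that is unavailable for the non-local operator $T$.
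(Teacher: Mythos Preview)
Your proof is correct and follows essentially the same route as the paper: both pass to the Fourier side, exploit the global $1$-Lipschitz bound on the symbol $\sqrt{1+|\cdot|^2}$, apply Young's convolution inequality to bound the operator norm by $(2\pi)^{-3/2}\||\cdot|\hat\zeta_R\|_{L^1}$, and then extract the $1/R$ factor via scaling. The only cosmetic difference is that the paper first works with $\psi\in H^1$ and then extends by density, whereas you phrase the kernel bound directly; your remark that the support condition on $\zeta$ is not actually used also matches the paper's argument.
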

\begin{proof}
	We  first consider $\psi\in H^1(\R^3)$. In view of the definition of $T$ in~\eqref{def:T} and of \eqref{hcm1} we use the Fourier transformation as defined in \eqref{def:Fourier}. Let us denote, for all $p\in\R^3$, $\widetilde{T}(p):=\sqrt{|p|^2+1}$. We define
	\[\Xi_R:=\widetilde{T}\cF(\zeta_R\psi)-\frac{1}{(2\pi)^{3/2}}\cF({\zeta}_R) *(\widetilde{T}\cF(\psi)).\]
	Then $\Xi_R= \cF\big([T,\zeta_R ]\psi\big)$ so, by Plancherel identity,
	\begin{equation}\label{commPlanch}
	 \|[T,\zeta_R ]\psi\|_{L^2(\R^3)}=\|\Xi_R\|_{L^2(\R^3)}.
	 \end{equation}
	
	For almost every $p$, we have
	\begin{align*}
		(2\pi)^{3/2}\Xi_R(p)&=\widetilde{T}(p)(\cF(\zeta_R) *\cF(\psi))(p)-\cF(\zeta_R) *(\widetilde{T}\cF(\psi))(p)\\
		&=\widetilde{T}(p)\int_{\R^3} \cF(\zeta_R) (q)\ \cF(\psi)(p-q)\d q-\int_{\R^3}\cF(\zeta_R) (q)\ \widetilde{T}(p-q)\ \cF(\psi)(p-q)\d q\\
		&=\int_{\R^3}\cF(\zeta_R) (q) \ \cF(\psi)(p-q)\ (\widetilde{T}(p)-\widetilde{T}(p-q))\d q.
	\end{align*}
	Therefore, since the function $\widetilde{T}$ is Lipschitz with Lipschitz constant 1,\[(2\pi)^{3/2}|\Xi_R(p)|\leq\int_{\R^3}|\cF(\zeta_R) (q)||\cF(\psi)(p-q)||\widetilde{T}(p)-\widetilde{T}(p-q)|\d q\leq\int_{\R^3}|q||\cF(\zeta_R) (q)||\cF(\psi)(p-q)|\d q\]
	so we have \[(2\pi)^{3/2} \|\Xi_R\|_{L^2(\R^3)}\leq \||\cdot|\cF(\zeta_R) \|_{L^1(\R^3)}\|\cF(\psi)\|_{L^2(\R^3)}.\] Since $\|\cF(\psi)\|_{L^2(\R^3)}=\|\psi\|_{L^2(\R^3)}$, using \eqref{commPlanch} we find
	\[\|[T,\zeta_R ]\psi\|_{L^2(\R^3)}\leq\frac{1}{(2\pi)^{3/2}}\||\cdot|\cF(\zeta_R) \|_{L^1(\R^3)}\|\psi\|_{L^2(\R^3)}.\]
	By density of $H^1$ in $L^2$, we can extend this inequality to all $\psi\in L^2(\R^3)$, which implies that
	\begin{equation}\label{TzetaR}
	\|[T,\zeta_R ]\|_{\cB(L^2)}\leq\frac{1}{(2\pi)^{3/2}}\||\cdot|\cF(\zeta_R) \|_{L^1(\R^3)}.
	\end{equation}
	
	Let us estimate $\||\cdot|\cF(\zeta_R) \|_{L^1(\R^3)}$.
	We have that
	\begin{align*}
		\int_{\mathbb{R}^3}|q||\cF(\zeta_R)(q)|\d q&=\frac{1}{(2\pi)^{3/2}}\int_{\mathbb{R}^3}|q|\left|\int_{\mathbb{R}^3}e^{-\iu q\cdot x}\zeta\left(\frac{x}{R}\right)\d x\right|\d q\nonumber
		=\frac{1}{(2\pi)^{3/2}}\int_{\mathbb{R}^3}\left|\frac{p}{R}\right|\left|\int_{\mathbb{R}^3}e^{-\iu p\cdot y}\zeta(y)\d y\right|\d p,
	\end{align*}
	where we have performed the changes of variable $y=\frac{x}{R}$ and $p=Rq$. We find that the norm is equal to $R^{-1}(2\pi)^{-3/2}\int_{\R^3}|p|\left|\int_{\R^3} e^{-\iu p\cdot y}\zeta(y)\d y\right|\d p$. Since $\zeta$ is a Schwartz function, we know that this integral is finite. As a consequence,
	\begin{align}\label{est-chi}
		\int_{\mathbb{R}^3}|q||\cF(\zeta_R)(q)|\d q \leq \frac{C}{R},
	\end{align}
	which together with \eqref{TzetaR} concludes the proof. 
\end{proof}
One consequence of this result is that we can get a result analogous to what is called IMS localization formula in the non-relativistic case, see \textit{e.g.} \cite[Theorem~3.2]{CFKS}.

Recall that $\chi_1$ was defined in \eqref{def:chiL}.
Let us define $v_1:=\chi_1$, $v_{2}:=v_1(\cdot-e_1)$ and  $v_{0}:=1-v_{1}-v_{2}$. We define too $V:=\sqrt{v_0^2+v_1^2+v_2^2}$, which by construction satisfies $V\geq1/\sqrt{2}$. For $R>0$ and  $i=0,1,2$, we define the functions
\begin{equation}\label{def:JiR}
	J_{i,R}(x):=\frac{v_i(x/R)}{V(x/R)}.
\end{equation}
We have that \[\sum_{i=0}^2J_{i,R}(x)^2=1\] for all $x$ and $R$.

We  define the \emph{one-particle localization error} for $\psi \in H^{1/2}(\mathbb{R}^{3})$ by
\begin{equation*}
	\text{Err}[\psi] := \langle \psi,T\psi \rangle - \sum_{i=0}^2 \langle J_{i,R}\psi,T J_{i,R}\psi\rangle.
\end{equation*}
\begin{lemma}
	There exists $C>0$  such that, for all $R>0$ and $\psi\in H^{1/2}(\R^3)$,
	\begin{equation*}\left|\mathrm{Err}[\psi]\right|\leq\frac{C}{R}\|\psi\|_{L^2(\R^3)}^2.\end{equation*}
\end{lemma}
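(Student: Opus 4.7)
The plan is to exploit the partition-of-unity relation $\sum_{i=0}^2 J_{i,R}^2 = 1$ in order to rewrite $\mathrm{Err}[\psi]$ as a double commutator, and then to bound it using the single-commutator estimate of Lemma~\ref{lem:expcommut1overL}. A direct expansion gives
\begin{equation*}
\sum_{i=0}^2 \bigl[J_{i,R},[J_{i,R},T]\bigr] = \sum_{i=0}^2\bigl(J_{i,R}^2 T + T J_{i,R}^2 - 2 J_{i,R} T J_{i,R}\bigr) = 2T - 2\sum_{i=0}^2 J_{i,R} T J_{i,R},
\end{equation*}
since $\sum_i J_{i,R}^2 = 1$. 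Therefore
\begin{equation*}
\mathrm{Err}[\psi] = \Bigl\langle \psi,\Bigl(T - \sum_{i=0}^2 J_{i,R} T J_{i,R}\Bigr)\psi\Bigr\rangle = \frac{1}{2}\sum_{i=0}^2 \langle \psi, [J_{i,R},[J_{i,R},T]]\psi\rangle.
\end{equation*}

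Next, using that $J_{i,R}$ and $T$ are self-adjoint, so $[J_{i,R},T]^* = -[J_{i,R},T]$, one checks
\begin{equation*}
\langle \psi,[J_{i,R},[J_{i,R},T]]\psi\rangle = \langle J_{i,R}\psi,[J_{i,R},T]\psi\rangle + \langle [J_{i,R},T]\psi, J_{i,R}\psi\rangle = 2\,\mathrm{Re}\,\langle J_{i,R}\psi, [J_{i,R},T]\psi\rangle,
\end{equation*}
which combined with Cauchy-Schwarz yields
\begin{equation*}
|\mathrm{Err}[\psi]| \leq \sum_{i=0}^2 \|J_{i,R}\|_{L^\infty}\,\|[J_{i,R},T]\|_{\cB(L^2)}\,\|\psi\|_{L^2}^2.
\end{equation*}
The bound $\|J_{i,R}\|_{L^\infty}\leq \sqrt{2}$ is immediate from $V\geq 1/\sqrt{2}$, so it remains to estimate each $\|[J_{i,R},T]\|_{\cB(L^2)}$.

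For $i=1,2$, the function $\eta_i := v_i/V$ is smooth and compactly supported, so $J_{i,R}(x) = \eta_i(x/R)$ falls under (a trivial translation/support-widening extension of) Lemma~\ref{lem:expcommut1overL}. The only place where the support hypothesis $\supp\zeta\subset B_{1/4}(0)$ entered the proof of that lemma was to ensure that $\zeta$ is a Schwartz function; an inspection of the argument shows that any Schwartz $\eta$ yields $\||\cdot|\widehat{\eta_R}\|_{L^1}\leq C_\eta/R$, which is all that is needed. For $i=0$, we write $J_{0,R} = 1 - \widetilde{\eta}_0(x/R)$ with $\widetilde{\eta}_0 := 1 - v_0/V$ compactly supported, so that $[J_{0,R},T] = -[\widetilde{\eta}_0(\cdot/R),T]$ and the same estimate applies. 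Summing the three contributions gives the claimed $|\mathrm{Err}[\psi]|\leq C\|\psi\|_{L^2}^2/R$. The only genuine subtlety is that mild broadening of Lemma~\ref{lem:expcommut1overL} to non-$B_{1/4}(0)$-supported cutoffs; beyond that, the argument is a routine commutator computation.
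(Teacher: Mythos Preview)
Your proof is correct and follows essentially the same route as the paper: both reduce $\mathrm{Err}[\psi]$ to a sum of terms bounded by $\|J_{i,R}\|_{L^\infty}\|[T,J_{i,R}]\|_{\cB(L^2)}\|\psi\|^2$, apply Lemma~\ref{lem:expcommut1overL} for $i=1,2$, and handle $i=0$ via $[T,J_{0,R}]=-[T,1-J_{0,R}]$ with $1-J_{0,R}$ compactly supported. The paper simply writes $\mathrm{Err}[\psi]=\sum_i\langle\psi,[T,J_{i,R}]J_{i,R}\psi\rangle$ in one line rather than going through the double-commutator identity, and uses the sharper bound $\|J_{i,R}\|_{L^\infty}\leq 1$ (since $|v_i|\leq V$), but these are cosmetic differences.
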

\begin{proof}
	We remark that $\langle \psi,T\psi \rangle = \sum_{i=0}^2 \langle \psi,T J_{i,R}^2\psi\rangle$ so
	\[|\text{Err}[\psi]|\leq\sum_{i=0}^2 |\langle \psi,[T ,J_{i,R}]J_{i,R}\psi\rangle|\leq \sum_{i=0}^2\|[T ,J_{i,R}]\|_{\cB(L^2)}\|\psi\|_{L^2(\R^3)}^2\]
	since $\|J_{i,R}\|_{L^\infty}\leq1$ for all $i$ and $R$.
	For $i=1$ or 2, the function $J_{i,R}$ satisfies the hypothesis of Lemma~\ref{lem:expcommut1overL} so $\|[T ,J_{i,R}]\|_{\cB(L^2)}\leq C/R$. Concerning the term $J_{0,R}$, we use the fact that $[T ,J_{0,R}]=-[T ,1-J_{0,R}]$. But $1-J_{0,R}$ is compactly supported. Thus, $1-J_{0,R}$ satisfies the hypotheses of Lemma~\ref{lem:expcommut1overL}, and we can conclude with the same argument.
\end{proof}
This result can be extended to the case of $N$ electrons.
Let $\gamma=(\gamma_1,...,\gamma_N)\in\{0,1,2\}^N$.

 We define
\begin{equation}\label{def:Jgamma}
	J_{\gamma,R}(x_1,\dots,x_N) :=  \prod_{i=1}^N J_{\gamma_i,R}(x_i),
\end{equation}
where $J_{\gamma_i,R}$ was defined in \eqref{def:JiR}.

We  define the \emph{$N$-particle localization error} for $\psi \in H^{1/2}(\mathbb{R}^{3N})$ by
\begin{equation*}
	\text{Err}[\psi] := \Big\langle \psi,\sum_{i=1}^NT_i\psi \Big\rangle - \sum_{\gamma \in \{0,1,2\}^N} \Big\langle J_{\gamma,R}\psi,\sum_{i=1}^NT_i J_{\gamma,R}\psi\Big\rangle.
\end{equation*}
\begin{proposition}\label{thm:imsloc}There exists $C>0$  such that, for all $R>0$ and $\psi\in H^{1/2}(\R^{3N})$,
	\begin{equation}\label{eq:imsloc}
		|\mathrm{Err}[\psi]|\leq\frac{C}{R}\|\psi\|^2_{L^2(\R^{3N})}.
	\end{equation}
	
\end{proposition}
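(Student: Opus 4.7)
The plan is to reduce the $N$-particle localization error to a sum of $N$ one-particle errors, one for each variable, using that the operator $T_i$ acts only on the $i$-th coordinate and is therefore local with respect to the other coordinates. Concretely, for each fixed $i$, I would split $\gamma=(\gamma_i,\gamma^{(i)})$ with $\gamma^{(i)}\in\{0,1,2\}^{N-1}$ and factor $J_{\gamma,R}=J_{\gamma_i,R}(x_i)\,\tilde J^{(i)}_{\gamma^{(i)}}$, where $\tilde J^{(i)}_{\gamma^{(i)}}(x):=\prod_{j\neq i}J_{\gamma_j,R}(x_j)$ depends only on the variables $x_j$, $j\neq i$. Since $T_i$ commutes with the multiplication operator $\tilde J^{(i)}_{\gamma^{(i)}}$ and $\sum_{\gamma^{(i)}}(\tilde J^{(i)}_{\gamma^{(i)}})^2=1$, summing over $\gamma^{(i)}$ collapses the inner factor and yields the identity
\[
\sum_{\gamma\in\{0,1,2\}^N}\langle J_{\gamma,R}\psi,T_i J_{\gamma,R}\psi\rangle = \sum_{\gamma_i=0}^{2}\langle J_{\gamma_i,R}(x_i)\psi,T_i\,J_{\gamma_i,R}(x_i)\psi\rangle.
\]

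With this reduction in hand, I would introduce the one-particle error
\[
\mathrm{Err}_i[\psi]:=\langle\psi,T_i\psi\rangle - \sum_{\gamma_i=0}^{2}\langle J_{\gamma_i,R}(x_i)\psi,T_i J_{\gamma_i,R}(x_i)\psi\rangle,
\]
so that $\mathrm{Err}[\psi]=\sum_{i=1}^{N}\mathrm{Err}_i[\psi]$. Using once more $\sum_{\gamma_i}J_{\gamma_i,R}(x_i)^2=1$, the standard commutator identity gives
\[
\mathrm{Err}_i[\psi] = \sum_{\gamma_i=0}^{2}\langle\psi,\,[T_i,J_{\gamma_i,R}(x_i)]\,J_{\gamma_i,R}(x_i)\psi\rangle,
\]
which, together with $|J_{\gamma_i,R}|\le 1$ and Cauchy--Schwarz, yields
\[
|\mathrm{Err}_i[\psi]| \le \Big(\sum_{\gamma_i=0}^{2}\|[T_i,J_{\gamma_i,R}(x_i)]\|_{\cB(L^2(\R^{3N}))}\Big)\|\psi\|_{L^2(\R^{3N})}^2.
\]

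The final ingredient is to bound each commutator norm by $C/R$. Because $T_i=I\otimes\cdots\otimes T\otimes\cdots\otimes I$ and $J_{\gamma_i,R}(x_i)$ acts as multiplication in the $i$-th factor only, the tensor-product structure gives $\|[T_i,J_{\gamma_i,R}(x_i)]\|_{\cB(L^2(\R^{3N}))}=\|[T,J_{\gamma_i,R}]\|_{\cB(L^2(\R^3))}$. Lemma~\ref{lem:expcommut1overL} applies directly to $J_{1,R}$ (whose cut-off $v_1=\chi_1$ is supported in $B_{1/8}(0)\subset B_{1/4}(0)$), to $J_{2,R}$ after translating by $e_1$ (which only changes $\widehat{\zeta_R}$ by a phase and so leaves $\||\cdot|\widehat{\zeta_R}\|_{L^1}$ invariant), and to $J_{0,R}$ via $[T,J_{0,R}]=-[T,1-J_{0,R}]$ after observing that $1-J_{0,R}(x)=\zeta(x/R)$ for some $\zeta\in C_c^\infty(\R^3)$ (since $V-v_0=v_1+v_2$ is smooth and supported in $B_{1/8}(0)\cup B_{1/8}(e_1)$, and $V\ge 1/\sqrt{2}$ is smooth). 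Summing over $i$ then delivers the desired bound $|\mathrm{Err}[\psi]|\le (C/R)\|\psi\|_{L^2}^2$ after absorbing $N$ and the sum over $\gamma_i$ into the constant.

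\paragraph{Main obstacle.} There is no essential analytic obstacle: the whole difficulty of the semirelativistic case (nonlocality of $T$) was already packaged into Lemma~\ref{lem:expcommut1overL}, and what remains is the bookkeeping needed to leverage the fact that $T_i$ is local in all variables other than $x_i$. The only point requiring a small amount of care is the verification that Lemma~\ref{lem:expcommut1overL} applies to the three functions $J_{0,R},J_{1,R},J_{2,R}$ individually, via translation for $J_{2,R}$ and via the complement $1-J_{0,R}$ for $J_{0,R}$.
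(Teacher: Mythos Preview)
Your proposal is correct and follows essentially the same approach as the paper, which simply says ``the proof is similar to the one-particle case'' and leaves the details to the reader. Your reduction $\mathrm{Err}[\psi]=\sum_{i=1}^N\mathrm{Err}_i[\psi]$ via the factorization $J_{\gamma,R}=J_{\gamma_i,R}(x_i)\tilde J^{(i)}_{\gamma^{(i)}}$ together with $\sum_{\gamma^{(i)}}(\tilde J^{(i)}_{\gamma^{(i)}})^2=1$ is exactly the natural way to make that remark precise, and your handling of $J_{2,R}$ by translation and of $J_{0,R}$ via $[T,J_{0,R}]=-[T,1-J_{0,R}]$ mirrors the paper's one-particle argument verbatim.
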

\begin{proof}
	The proof is similar to the one-particle case.
\end{proof}
\begin{remark}
	As explained in~\cite[Theorem~3.1]{bhv}, it is even possible to have a control of the localization error by $C/R^2$. Nevertheless, we do not need it in our work and it requires a  longer proof.
\end{remark}

The next lemma states that, even if the commutator is not local, we are  able to control this nonlocality  in the sense that, if we apply it to a function which has some decay at infinity, then we obtain something which is decaying, at least in a quadratic form sense. 
\begin{lemma}\label{lem:expcommutdecay}
	Let us consider, for $R>0$, the function $\chi_R$ defined  in~\eqref{def:chiL}. Then there exists $C>0$ such that,
	for all functions $F:\R^+\to [1,\infty)$, all $R$ large enough and all $\xi,\eta \in L^2(\mathbb{R}^3)$  such that $F(|\cdot|)\eta \in L^2(\mathbb{R}^3)$,
	\begin{equation*}
		|\langle \eta , [T,\chi_R]\xi\rangle| \leq C\max\left\{\frac{1}{R}\sup_{r\in[R/20,+\infty)}\frac{1}{F(r)},e^{-3R/80}\right\}\|F(|\cdot|)\eta\|_{L^2(\R^3)}\|\xi\|_{L^2(\R^3)}.
	\end{equation*}
	
\end{lemma}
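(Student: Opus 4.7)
The plan is to split $\eta$ according to a further cutoff and treat the two resulting pieces by two different mechanisms: the large-distance part uses the decay encoded in $F$ together with the operator-norm bound on the commutator from Lemma~\ref{lem:expcommut1overL}, while the near-origin part is handled by exploiting the exponential decay of the heat kernel representation of $T$ from Lemma~\ref{expdecay:realT}, after algebraically trading the commutator $[T,\chi_R]$ for an off-diagonal term with disjoint supports. Concretely, fix a smooth cutoff $\phi_R$ with $\phi_R(x)=1$ on $\{|x|\le R/20\}$ and $\supp\phi_R\subset\{|x|\le R/16\}$, so that $\phi_R$ takes values in $[0,1]$ and, crucially, $\chi_R\phi_R=\phi_R$ (since $\chi_R=1$ on $\{|x|\le R/10\}\supset\supp\phi_R$). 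Decompose $\eta=\phi_R\eta+(1-\phi_R)\eta$ and write
\[
\langle\eta,[T,\chi_R]\xi\rangle=\langle(1-\phi_R)\eta,[T,\chi_R]\xi\rangle+\langle\phi_R\eta,[T,\chi_R]\xi\rangle.
\]

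For the first piece, note that $(1-\phi_R)\eta$ is supported in $\{|x|\ge R/20\}$, so that $|(1-\phi_R)\eta(x)|\le |\eta(x)|F(|x|)\cdot\sup_{r\ge R/20}F(r)^{-1}$ pointwise, which gives $\|(1-\phi_R)\eta\|_{L^2}\le \sup_{r\ge R/20}F(r)^{-1}\cdot\|F(|\cdot|)\eta\|_{L^2}$. Combining this with the Cauchy--Schwarz inequality and the bound $\|[T,\chi_R]\|_{\cB(L^2)}\le C/R$ from Lemma~\ref{lem:expcommut1overL} yields
\[
|\langle(1-\phi_R)\eta,[T,\chi_R]\xi\rangle|\le \frac{C}{R}\sup_{r\ge R/20}\frac{1}{F(r)}\;\|F(|\cdot|)\eta\|_{L^2}\|\xi\|_{L^2},
\]
which is the first term of the desired estimate.

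For the second piece, the identity $\chi_R\phi_R=\phi_R$ and the self-adjointness of multiplication by $\chi_R$ give, first for Schwartz $\xi$ and then by density using that both sides are continuous in $\xi\in L^2$,
\[
\langle\phi_R\eta,[T,\chi_R]\xi\rangle=\langle\phi_R\eta,T\chi_R\xi\rangle-\langle\chi_R\phi_R\eta,T\xi\rangle=\langle\phi_R\eta,T(\chi_R-1)\xi\rangle=-\langle\phi_R\eta,T(1-\chi_R)\xi\rangle.
\]
Now $\phi_R\eta$ is supported in $\{|x|\le R/16\}$ and $(1-\chi_R)\xi$ is supported in $\{|x|\ge R/10\}$, so the distance between the two supports is at least $R/10-R/16=3R/80$. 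Applying Lemma~\ref{expdecay:realT} with this separation (which requires $R$ large enough so that $3R/80$ exceeds the threshold in that lemma), together with $\|\phi_R\eta\|_{L^2}\le\|\eta\|_{L^2}\le\|F(|\cdot|)\eta\|_{L^2}$ (using $F\ge1$) and $\|(1-\chi_R)\xi\|_{L^2}\le\|\xi\|_{L^2}$, gives
\[
|\langle\phi_R\eta,[T,\chi_R]\xi\rangle|\le Ce^{-3R/80}\|F(|\cdot|)\eta\|_{L^2}\|\xi\|_{L^2}.
\]
Adding the two contributions and using $a+b\le2\max\{a,b\}$ yields the claimed inequality. The only subtle point is the passage-to-the-limit step when upgrading the algebraic manipulation from Schwartz $\xi$ to general $\xi\in L^2$, but this is immediate because $[T,\chi_R]$ maps $L^2$ to $L^2$ by Lemma~\ref{lem:expcommut1overL} and the right-hand side $-\langle\phi_R\eta,T(1-\chi_R)\xi\rangle$ is $L^2$-continuous in $\xi$ by the separated-supports bound of Lemma~\ref{expdecay:realT}.
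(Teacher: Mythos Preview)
Your proof is correct and follows essentially the same approach as the paper's own proof: split $\eta$ via an inner cutoff (the paper uses $\chi_{R/2}$, which has exactly the properties you impose on $\phi_R$), bound the far piece using the $F$-decay together with Lemma~\ref{lem:expcommut1overL}, and bound the near piece by rewriting the commutator as an off-diagonal term $-\langle\phi_R\eta,T(1-\chi_R)\xi\rangle$ with separated supports at distance $3R/80$ and applying Lemma~\ref{expdecay:realT}. The algebraic manipulation and the resulting constants match the paper's argument step for step.
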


\begin{remark}
	Applying Lemma \ref{lem:expcommutdecay} for $F(x)=e^{ax}$, $a>0$ we find
	\begin{equation}\label{expsmall}
		|\langle \eta , [T,\chi_R]\xi\rangle| \leq Ce^{-\min(a/20,3/80)R}\|e^{a|\cdot|}\eta\|_{L^2(\R^3)}\|\xi\|_{L^2(\R^3)}.
	\end{equation}
\end{remark}

\begin{proof}
	
	We introduce the function $\chi_{R/2}^c := 1 - \chi_{R/2}$ and decompose
	\begin{equation}\label{exp:commut}
		\langle \eta , [T ,\chi_R] \xi\rangle = \langle\chi_{R/2} \eta , [T ,\chi_R] \xi\rangle + \langle \chi_{R/2}^c \eta ,[T ,\chi_R] \xi\rangle .\end{equation}
	
	On the one hand, 
	\begin{align}
		|\langle\chi_{R/2}^c \eta , [T ,\chi_R] \xi\rangle|&\leq\|F(|\cdot|)\eta\|_{L^2(\R^3)}\left\|\frac{1}{F(|\cdot|)}\chi_{R/2}^c\right\|_{L^\infty(\R^3)}\|[T ,\chi_R] \xi\|_{L^2(\R^3)}\nonumber\\
 		&\leq \frac{C}{R}\left\|\frac{1}{F(|\cdot|)}\chi_{R/2}^c\right\|_{L^\infty(\R^3)}\|F(|\cdot|)\eta\|_{L^2(\R^3)}\|\xi\|_{L^2(\R^3)}\nonumber
	\\&\leq\frac{C}{R}\sup_{r \in [R/20,+\infty)}\frac{1}{F(r)}\|F(|\cdot|)\eta\|_{L^2(\R^3)}\|\xi\|_{L^2(\R^3)},\label{bound:1stcommut}
	\end{align}
	where we used Lemma \ref{lem:expcommut1overL} to go from the first to the second line.
	
	On the other hand, 
	\begin{equation*}
		\langle\chi_{R/2} \eta , [T ,\chi_R] \xi\rangle=-\langle\chi_{R/2} \eta , [T ,\chi_R^c] \xi\rangle=-\langle\chi_{R/2} \eta , T \chi_R^c \xi\rangle
	\end{equation*}
	since $\chi_R^c\chi_{R/2}=0$. But $\chi_{R/2} \eta$ and $\chi_R^c \xi$ are both $L^2$ functions and the distance between their support is $3R/80$. Hence, we get from Lemma~\ref{expdecay:realT} that, for $R$ large enough:
	\begin{equation}\label{bound:2ndcommut}
		|\langle\chi_{R/2} \eta , T \chi_R^c \xi\rangle|\leq Ce^{-3R/80} \|\eta\|_{L^2(\R^3)} \|\xi\|_{L^2(\R^3)}.
	\end{equation}
	Inserting~\eqref{bound:1stcommut} and~\eqref{bound:2ndcommut} into~\eqref{exp:commut} concludes the proof.
\end{proof}

\bigskip

\section{Proof of Theorems \ref{thm:HVZ}, \ref{thm:expdecaystat} and \ref{thm:Zhislin}}\label{pfthmspec}

\subsection{Proof of Theorem \ref{thm:HVZ}}
The proof  closely follows the one which was given for atoms in~\cite[Appendix~A]{bhv}. It is split into 2 parts. In the first one, we prove that 
\[[\inf \sigma (\hat{H}_{N-1}(Y,\cZ)), \infty  ) \subset \sigma_{\mathrm{ess}}(\hat{H}_{N}(Y,\cZ))\] and in the second one we prove the reverse inclusion.

``Easy part'':
We  prove here
\begin{equation}\label{HVZEasydirection}
	[\inf \sigma (\hat{H}_{N-1}(Y,\cZ)), \infty  ) \subset \sigma(\hat{H}_{N}(Y,\cZ)). 
\end{equation}
Since the left-hand side is an interval, it will be in the essential spectrum.

We denote $E_{N-1}:= \inf \sigma (\hat{H}_{N-1}(Y,\cZ))$. Let us pick $y \in [E_{N-1},\infty)$. We want to prove that $y \in \sigma(\hat{H}_{N}(Y,\cZ))$. By assumption on $y$, 
there exists $a \geq 0$ such that $y= E_{N-1} +a$. Let $\epsilon >0$. Then, since $E_{N-1} \in  \sigma (\hat{H}_{N-1}(Y,\cZ))$,
there exists by Weyl's criterion (cf.~\cite[Theorem~VII.12]{RS1}) a $\phi_{N-1,\epsilon} \in \Dom(\hat{H}_{N-1}(Y,\cZ))$, normalized,  such that
\begin{equation}\label{phimistake}
	\|(H_{N-1}(Y,\cZ)-E_{N-1}) \phi_{N-1,\epsilon}\| \leq \frac{\epsilon}{3\sqrt{N}}.
\end{equation}
 But compactly supported functions are dense in $\Dom(\hat{H}_{N-1}(Y,\cZ))$ in the graph norm. Indeed, if we consider a function $\zeta\in\Dom(\hat{H}_{N-1}(Y,\cZ))$ and, for $R>0$, the function $\chi_R$ defined in \eqref{def:chiL}, we have that
\[\|\hat{H}_{N-1}(Y,\cZ)(\chi_R^{\otimes N-1}\zeta-\zeta)\|\leq\|(1-\chi_R^{\otimes N-1})\hat{H}_{N-1}(Y,\cZ)\zeta\|+\left\|\sum_{i=1}^N[T_i,\chi_R^{\otimes N-1}]\zeta\right\|.\] Since  $\zeta\in\Dom(\hat{H}_{N-1}(Y,\cZ))$, we have $\hat{H}_{N-1}(Y,\cZ)\zeta\in L^2(\R^{3(N-1)})$ and therefore 
$$\|(1-\chi_R^{\otimes N-1})\hat{H}_{N-1}(Y,\cZ)\zeta\|\to 0, \quad \text{when } R\to\infty. $$
On the other hand, by Lemma~\ref{lem:expcommut1overL}, $\|\sum_{i=1}^N[T_i,\chi_R^{\otimes N-1}]\zeta\|\leq C/R\|\zeta\|$, which goes to 0 as well when $R$ goes to $\infty$. As a consequence of this density argument, we can assume, without loss of generality, that $\phi_{N-1,\epsilon}$ is compactly supported.
Moreover, since $\sigma(T) = [0,+\infty) $ and therefore $a \in \sigma(T)$, there exists $f_\epsilon \in C^\infty_c(\R^3)$, normalized, such that 
\begin{equation}\label{Tmistake}
	\|(T-a)f_\epsilon\|\leq  \frac{\epsilon}{3\sqrt{N}}.
\end{equation}

For $h\in\R^3$, we consider the function $f_{\epsilon,h}$ defined on $\R^3$ by
$f_{\epsilon,h}(x):= f_\epsilon(x-h)$.
We further define
\begin{equation}\label{def:geh}
	g_{\epsilon,h}:=\frac{Q_N (\phi_{N-1,\epsilon} \otimes f_{\epsilon,h}) }{\| Q_N(\phi_{N-1,\epsilon} \otimes f_{\epsilon,h})\|}.
\end{equation}

We will prove that, if we choose $|h|$ large enough, $g_{\epsilon,h}$ is in the domain of $\hat{H}_{N}(Y,\cZ)$ and
\begin{equation*}
	\|(H_{N}(Y,\cZ)-y)g_{\epsilon,h}\|< \epsilon.
\end{equation*}

Let us first compute  the norm which appears in the denominator of~\eqref{def:geh}. We observe that, if $|h|$ is large enough, then the  electronic densities (defined similarly to~\eqref{def-elec-dens}) of $\phi_{N-1,\epsilon}$ and  $f_{\epsilon,h}$ have disjoint supports. We can therefore argue similarly to the proof of~\eqref{norm-QNphi}, with $N_1=N-1$, to find   

\begin{equation*}
	\| Q_N(\phi_{N-1,\epsilon} \otimes f_{\epsilon,h})\|=\frac{1}{\sqrt{N}}.
\end{equation*}

As a consequence we have that
\begin{align}\label{HNg}
	\|(H_{N}(Y,\cZ)-y)g_{\epsilon,h}\|
	\leq \sqrt{N}  \|(H_{N}(Y,\cZ)-y)(\phi_{N-1,\epsilon} \otimes f_{\epsilon,h})\|,
\end{align}
where in the last step we used that $Q_N$ commutes with $H_N(Y,\cZ)$ and that it is an orthogonal projection. 
But we have that
\begin{equation*}
	(H_{N}(Y,\cZ)-y) =(H_{N-1}(Y,\cZ)-E_{N-1})
	+(T_N - a) + \sum_{i=1}^{N-1} \frac{\alpha}{|x_i-x_N|} - \sum_{j=1}^{M} \frac{Z_j\alpha}{|y_j-x_N|},
\end{equation*}
so we get from \eqref{phimistake}, \eqref{Tmistake} and~\eqref{HNg} that
\begin{equation*}
	\|(H_{N}(Y,\cZ)-y)g_{\epsilon,h}\| \leq \frac{2 \epsilon }{3} + \sqrt{N} \left\|\left(\sum_{i=1}^{N-1}\frac{\alpha}{|x_i-x_N|} - \sum_{j=1}^{M} \frac{Z_j \alpha}{|y_j-x_N|} \right)(\phi_{N-1,\epsilon} \otimes f_{\epsilon,h}) \right\|.
\end{equation*}
We can choose $|h|$ large enough so that, for all $x_N\in\supp f_{\epsilon,h}$ and $(x_1,...,x_{N-1})\in\supp \phi_{N-1,\epsilon}$,  $ \sum_{i=1}^{N-1}\frac{\alpha}{|x_i-x_N|}<\epsilon/(6\sqrt{N})$ and $\sum_{j=1}^{M} \frac{Z_j \alpha}{|y_j-x_N|}<\epsilon/(6\sqrt{N})$. In such a case, we have that $g_{\epsilon,h}$ is in the domain of $\hat{H}_N(Y,\cZ)$ and
\begin{equation*}
	\|(\hat{H}_{N}(Y,\cZ)-y)g_{\epsilon,h}\| 
	<  \epsilon, 
\end{equation*}
concluding the proof of \eqref{HVZEasydirection} since $\epsilon$ is arbitrarily small.

\medskip

``Hard part": Let us prove that, for all $N$, $Y$, $\cZ$, 
\begin{equation*}
	\sigma_{\text{ess}}(\hat{H}_{N}(Y,\cZ)) \subset [E_{N-1}, \infty).
\end{equation*}
Let $\psi_n \in \text{Ran}~Q_N$ be a Weyl sequence for $\hat{H}_{N}(Y,\cZ)$: for some  $\lambda\in\sigma_{\text{ess}}(\hat{H}_{N}(Y,\cZ))$, $\|(\hat{H}_{N}(Y,\cZ)-\lambda)\psi_n\|\to0$, with $\|\psi_n\|=1$ for all $n$ and $\psi_n\rightharpoonup 0$.
We will show that
\begin{equation}\label{HVZhardgoal}
	\lim_{n \rightarrow \infty} 
	\langle \psi_n, \hat{H}_{N}(Y,\cZ) \psi_n \rangle \geq E_{N-1}. 
\end{equation}

To this purpose, we use a method similar to the one given in Proposition~\ref{thm:imsloc}. The difference is that we consider only 2 functions instead of 3. We define again $v_1:=\chi_1$ and we choose $v_0:=1-v_1$. We define then, for $\beta\in\{0,1\}^N$, $J_{\beta,R}$ similarly as in \eqref{def:Jgamma}.  On  $\supp J_{\beta,R}$,  $x_i$ is close to the origin if $\beta_i=1$ and far from the origin if $\beta_i=0$. 
We have the same bound on the error as in Proposition~\ref{thm:imsloc}, 
implying that there exist $C>0$ such that,  for all $n\in\N$ and $R>1$,
\begin{equation}\label{semIMSHVZ}
	\langle \psi_n, H_{N}(Y,\cZ) \psi_n \rangle \geq \sum_{\beta\in\{0,1\}^N} \langle J_{\beta,R} \psi_n, \hat{H}_{N}(Y,\cZ) J_{\beta,R} \psi_n \rangle - \frac{C}{R}.
\end{equation}

Let us estimate the sum in the right-hand side of~\eqref{semIMSHVZ}.
First, we can observe that, for $B:=(1,...,1)$, corresponding to the case where all electrons are close to the origin, the function $J_{B,R}\psi_n\in\mathrm{Ran}Q_N$ and thus
\begin{equation}\label{semIMSHVZclose}
	\langle J_{B,R} \psi_n, H_{N}(Y,\cZ) J_{B,R} \psi_n \rangle \geq E_N \|J_{B,R} \psi_n\|^2,
\end{equation}
where we recall that $E_N=\inf \sigma(\hat{H}_{N}(Y,\cZ))$. 

Now, for $\beta\neq B$, \emph{i.e.} when at least one electron is far away,
\begin{multline}\label{decomp}
	\langle J_{\beta,R} \psi_n, H_{N}(Y,\cZ) J_{\beta,R} \psi_n \rangle =\langle J_{\beta,R} \psi_n, H_{\beta}(Y,\cZ) J_{\beta,R} \psi_n\rangle \\+ \langle J_{\beta,R} \psi_n, (H_{N}(Y,\cZ)-H_{\beta}(Y,\cZ)) J_{\beta,R} \psi_n\rangle,
\end{multline} 
where $H_{\beta}(Y,\cZ)$ is the Hamiltonian describing the electrons with index $i$ such that $\beta_i=1$, interacting  between them and with all nuclei, namely \[H_{\beta}(Y,\cZ):=\sum_{\{i,\beta_i=1\}}\left(T_i-\sum_{j=1}^M\frac{Z_j \alpha}{|x_i-y_j|}\right)+\sum_{\substack{i<j\\\beta_i=\beta_j=1}}\frac{\alpha}{|x_i-x_j|}+\sum_{1 \leq k<l\leq M}\frac{Z_kZ_l\alpha}{|y_k-y_l|}.\]

On the one hand, \[H_{N}(Y,\cZ)-H_{\beta}(Y,\cZ)\geq -\sum_{\{i,\beta_i=0\}}\sum_{j=1}^M\frac{Z_j\alpha}{|x_i-y_j|}.\] But, for $R$ large enough, if $x_i$ is in  the support of $v_0$, there exists a constant $C'$ such that, for all $j$, $|x_i-y_j|\geq C'R$. As a consequence, there exists $C>0$ such that
\begin{equation}\label{term:diff}
	\langle J_{\beta,R} \psi_n, (H_{N}(Y,\cZ)-H_{\beta}(Y,\cZ)) J_{\beta,R} \psi_n\rangle\geq -\frac{C}{R}.
\end{equation}

Let $K=\sum_{i=1}^{N} \beta_i$, namely $K$ describes the number of electrons that are close to the origin. We observe that
 $J_{\beta,R} \psi_n$ is antisymmetric with respect to the $K$ variables $x_i$ for which $\beta_i=1$. As a consequence, 
\begin{equation}\label{terme1}
	\langle J_{\beta,R} \psi_n, H_{\beta}(Y,\cZ) J_{\beta,R} \psi_n\rangle \geq  
	E_{K} \| J_{\beta,R} \psi_n \|^2
	\geq E_{N-1} \| J_{\beta,R} \psi_n \|^2
\end{equation} since, by \eqref{HVZEasydirection}, the sequence $(E_k)$ is nonincreasing. 
Inserting~\eqref{term:diff} and \eqref{terme1} into~\eqref{decomp} and combining it with \eqref{semIMSHVZ} and \eqref{semIMSHVZclose},  we obtain that
\begin{equation*}
	\langle \psi_n, H_{N}(Y,\cZ) \psi_n \rangle \geq E_N \|J_{B,R} \psi_n\|^2  + E_{N-1} \sum_{\beta\neq B}  \|J_{\beta,R} \psi_n\|^2 - \frac{C}{R},
\end{equation*}
and, since $ \sum_{\beta\neq B} J_{\beta,R}^2= 1-J_{B,R}^2$ and $\|\psi_n\|=1$, we find that
\begin{equation*}
	\langle \psi_n, H_{N}(Y,\cZ) \psi_n \rangle \geq E_{N-1} +  (E_N - E_{N-1}) \|J_{B,R} \psi_n\|^2 -\frac{C}{R}.
\end{equation*}

We will now prove that, for all $R>0$, we have that
\begin{equation*}
	\lim_{n\to \infty}\| J_{B,R} \psi_n \|  = 0
\end{equation*}
and this will conclude the proof of \eqref{HVZhardgoal} and therefore of Theorem~\ref{thm:HVZ}. 
This comes from the fact that, for $c>0$ large, we have the following decomposition: 
\begin{equation*}
J_{B,R} \psi_n = \underbrace{J_{B,R} \left(\sum_{i=1}^N T_{i} +c\right)^{-\frac{1}{2}}}_{\text{compact operator }}  \underbrace{\left(\sum_{i=1}^N T_{i} +c\right)^\frac{1}{2} (H_{N}(Y,\cZ) +c)^{-\frac{1}{2}}}_{\text{ bounded operator }} \underbrace{(H_{N}(Y,\cZ) +c)^{-\frac{1}{2}}}_{\text{ bounded}} \underbrace{(H_{N}(Y,\cZ) +c)\psi_n}_{\rightharpoonup 0}.
\end{equation*}
 The weak convergence $(H_{N}(Y,\cZ) +c)\psi_n \rightharpoonup 0$ follows from the decomposition \[H_{N}(Y,\cZ) +c=(H_{N}(Y,\cZ)-\lambda) +(\lambda+c).\] We know from the definition of $(\psi_n)$ that $(H_{N}(Y,\cZ)-\lambda)\psi_n\to0$ and that $(\lambda+c)\psi_n\rightharpoonup0$. The operator  $(H_{N}(Y,\cZ) +c)^{-\frac{1}{2}}$ is bounded since $H_{N}(Y,\cZ) + c$ is strictly positive for $c$ large enough. The boundedness of 
$(\sum_{i=1}^N T_i +c)^\frac{1}{2} (H_{N}(Y,\cZ) +c)^{-\frac{1}{2}}$ follows from~\eqref{rel-bound}. Finally, the operator $J_{B,R} \left(\sum_{i=1}^N T_{i} +c\right)^{-\frac{1}{2}}$ is compact, since it is the norm limit of the sequence    of Hilbert-Schmidt operators $J_{B,R} \left(\sum_{i=1}^N T_{i} +c\right)^{-\frac{1}{2}} \mathbbm{1}_{\{\sum_{i=1}^N T_{i} +c<n\}}$. We thus obtain that
$\|J_{B,R} \psi_n\| \rightarrow 0$, concluding the proof of the theorem. 
\qed

\bigskip

\subsection{Proof of Theorem \ref{thm:expdecaystat}}

We prove in the following  exponential decay of  approximate eigenfunctions for the Hamiltonian $\hat{H}_N$. The proof closely follows the proof of Theorem 2.1 in \cite{bhv}.

Let $(\gamma_s)$ be as in the statement of Theorem~\ref{thm:expdecaystat}. For $\nu\geq 0$ and $\epsilon \geq 0$,
we set for $r\in\R^+$ \begin{equation*}
	G_{\nu,\epsilon}(r):=\frac{\nu r}{1+\epsilon r}.	          \end{equation*}
We then define  on $\R^{3N}$ the function $F_{\nu,\epsilon}$ by
\begin{equation*}
	F_{\nu,\epsilon}(x_1,...,x_N):=\sum_{i=1}^NG_{\nu,\epsilon}(|x_i|).
\end{equation*}
Moreover, for the function $\chi_R$ defined in \eqref{def:chiL}, we introduce
\begin{equation}\label{def:xieps}
	\xi_{\eps} := (1-\chi_R^{\otimes N}) e^{F_{\nu,\eps}}.
\end{equation}

By the definition of $\widetilde{\Sigma}$ in \eqref{ionthreshstatistic}, there exists a function $\theta(R)$ with $\lim_{R \rightarrow + \infty} \theta(R) = 0$ such that for all $s \in \cI$
\begin{align*}
	(\widetilde{\Sigma} - \mu - \theta(R))\; \|\xi_{\eps}\gamma_s\|^2 
	\leq \langle \xi_{\eps}\gamma_s, (H_{N}(Y,\cZ) -\mu) \xi_{\eps}\gamma_s\rangle 
	 = \Re  \langle \gamma_s, \xi_{\eps}^2 \Gamma_s \rangle + \Re  \left\langle \xi_{\eps}\gamma_s,  \sum_{i=1}^N[T_i,\xi_{\eps}] \gamma_s \right\rangle,
\end{align*}
where in the last step we used \eqref{gs:unif} and that $\xi_\epsilon$ commutes with potentials.
Using the assumption that, for all $s \in \cI$, $\|e^{a |\,\cdot\,|} \Gamma_s\|_{L^2(\mathbb{R}^{3N})}\leq C_1$, we can bound, for all $R$, $\epsilon$, $\nu<\frac{a}{2\sqrt{N}}$ and $s$, the first term on the right-hand side and find
\begin{equation}\label{midstep:exp}
	(\widetilde{\Sigma} - \mu - \theta(R))\; \|\xi_{\eps}\gamma_s\|^2 \leq C_1 \|\gamma_s \| + \Re  \left\langle \xi_{\eps}\gamma_s,  \sum_{i=1}^N[T_i,\xi_{\eps}] \gamma_s \right\rangle. 
\end{equation}

By \cite[Lemma~C.10, Remark C.7]{bhv}, we know that for all $\nu\in(0,1)$, there exist $C_{\nu } > 0$ such that, for all $R$,
\begin{equation}\label{lem:C10}
	\left|\Re  \left\langle \xi_{\eps}\gamma_s, \sum_{i=1}^N[T_i,\xi_{\eps}] \gamma_s \right\rangle\right| \leq NC_\nu\left(\frac{L_\chi}{R}+\nu\right)^2 \left\|  e^{F_{\nu,\eps}} \gamma_s\right\|^2 ,  
\end{equation}
where $L_\chi$ is the Lipschitz constant of $\chi$, recall \eqref{def:chiL}. Moreover, $C_\nu$ goes to a finite value when $\nu$ goes to 0.
Let us introduce the abbreviation $\delta:=NC_\nu\left(\frac{L_\chi}{R}+\nu\right)^2$.
Plugging~\eqref{lem:C10} into \eqref{midstep:exp} yields, in view of~\eqref{def:xieps},
\begin{align*}
	(\widetilde{\Sigma} - \mu - \theta(R))\; \|\xi_{\eps}\gamma_s\|^2 &\leq  C_1 \|\gamma_s \| +\delta \left\|  e^{F_{\nu,\eps}} \gamma_s\right\|^2 \\
	&\leq C_1 \|\gamma_s \| +2\delta \|\xi_{\eps}\gamma_s\|^2 + 2\delta \left\|  \chi_R^{\otimes N} e^{F_{\nu,\eps}} \gamma_s\right\|^2  \\
	&\leq C_1 \|\gamma_s \| +2\delta \|\xi_{\eps}\gamma_s\|^2 +2\delta  e^{N\nu R/4}\|\gamma_s\|^2 ,
\end{align*}
where in the last line we used that, for all $x \in \R^{3N}$,
\begin{equation}\label{expcomp}
\chi_R^{\otimes N} (x) e^{F_{\nu,\eps}(x)} \leq e^{N\nu R/8}.
\end{equation}
 We have therefore obtained
\begin{equation}\label{laststep:exp}
	(\widetilde{\Sigma} - \mu - \theta(R) -2\delta) \|\xi_{\eps} \gamma_s\|^2 \leq C_1 \|\gamma_s \| + 2\delta  e^{N\nu R/4}\|\gamma_s\|^2 .
\end{equation}
But
\[\lim_{R\to\infty}\theta(R)+2\delta=2NC_\nu\nu^2,\]
so, for all $\nu$ such that $2NC_\nu\nu^2<\widetilde{\Sigma} - \mu$, we can find a radius $R$ such that 

\begin{equation*}
	g:=(\widetilde{\Sigma} - \mu - \theta(R) -2\delta) >0,
\end{equation*}
and together with \eqref{laststep:exp} we obtain:
\begin{equation*}
	g\|\xi_{\eps} \gamma_s\|^2 < C_1 \|\gamma_s\| +2\delta e^{N\nu R/4}\|\gamma_s\|^2 \leq C_R,
\end{equation*}
where in the last step we used the uniform boundedness of $\|\gamma_s\|$, see \eqref{gs:unif}. The constant $C_R>0$ does not depend on $\eps$. 
Since $\xi_{\eps}$ converges monotonically to $(1-\chi_R^{\otimes N})e^{\nu |x|}$ for $\eps \rightarrow 0$, the monotone convergence theorem for the previous inequality gives that for all $s$ and $\nu$, we have $\|(1-\chi_R^{\otimes N})e^{\nu |x|}\gamma_s\|^2\leq g^{-1}C_R$, which together with \eqref{expcomp} concludes the proof of Theorem \ref{thm:expdecaystat}.
\qed

\bigskip

\subsection{Proof of Theorem \ref{thm:Zhislin}}

We begin by showing that there exists at least one eigenvalue below $\sigma_{\text{ess}}(\hat{H}_{N}(Y,\cZ))$. As \textit{e.g.} in \cite[Appendix~B]{bhv}, we prove it by induction on the number of electrons $N$.

Let us first consider the case $N=1$. Here, we do not need to take symmetry  into account. Let $f \in C_c^\infty (\R^3)$ be such that  $\|f\|_{L^2}=1$ and $\supp f \subset \{x \in \R^3: \frac{5}{4} \leq |x| \leq \frac{7}{4}\}$.  For $R>0$, we define
\begin{equation*}
	f_R(x):=\frac{1}{R^{\frac{3}{2}}} f\left(\frac{x}{R}\right).
\end{equation*}

Then, for all $R>0$, $\|f_R\|=1$ and
\begin{align} \nonumber
	\langle f_R, H_{1}(Y,\cZ) f_R \rangle 
	= & \left\langle f_R, \left(T_1 - \sum_{j=1}^M \frac{Z_j \alpha}{|x_1-y_j|} \right) f_R \right\rangle + \sum_{1\leq i<j \leq M} \frac{ Z_i Z_j \alpha}{|y_i-y_j|}  \\ \nonumber
	\leq & \left\langle f_R,  \left(-\Delta_1 - \sum_{j=1}^M \frac{Z_j \alpha}{|x-y_j|}\right)  f_R \right\rangle + \sum_{1\leq i<j\leq M} \frac{ Z_i Z_j \alpha}{|y_i-y_j|} \\
	\leq  & \frac{1}{R^2}  \langle f,  -\Delta_1  f\rangle- \sum_{j=1}^M \left\langle f, \frac{Z_j \alpha}{|Rx_1-y_j|}  f \right\rangle + \sum_{1\leq i<j \leq M} \frac{ Z_i Z_j\alpha}{|y_i-y_j|},\nonumber
\end{align}
where we have used that $T_1 \leq -\Delta_1$ and a change of variables in the second and third line, respectively. 
With a Taylor expansion, we see that the second term decays as $-\sum_{j=1}^M \alpha Z_j/R$ when $R$ becomes large. This proves that, for $R$ large enough,
\[\langle f_R, H_{1}(Y,\cZ) f_R \rangle< \sum_{1\leq i<j\leq M} \frac{ Z_i Z_j \alpha}{|y_i-y_j|}= \inf \sigma_{\text{ess}}(H_1(Y,\cZ)),\]
implying that $\hat{H}_1(Y,\cZ)$ has at least one eigenvalue below the infimum of its essential spectrum.	

We consider now the induction step. Let us assume, for some $N \in \N$ with $N<|Z|+1$,
\[E_{N-1}:=\inf \sigma(\hat{H}_{N-1}(Y,\cZ))< \inf\sigma_{\text{ess}}(\hat{H}_{N-1}(Y,\cZ)),\] which implies that there exists a ground state $\Psi_{N-1}\in \Dom(\hat{H}_{N-1}(Y,\cZ))$ with $\|\psi_{N-1}\|=1$ such that
$H_{N-1}(Y,\cZ) \Psi_{N-1}= E_{N-1} \Psi_{N-1}$.
By the assumption of Theorem \ref{thm:Zhislin}, $|\cZ|>N-1$ so  we can choose and fix $\delta >0$ with 
\begin{equation}\label{deltachoice}
\delta<\frac{|\cZ| + 1 - N}{2(N-1 + |\cZ|)}.
\end{equation}
For $R$ large enough, $\chi_{\delta R}^{\otimes N-1}\Psi_{N-1}$ is not identically 0, where  $\chi_{\delta R}$ was defined in \eqref{def:chiL}.
Thus, we can define the sequence of trial functions:
\begin{equation}\label{def:PhiNR}
	\Phi_{N,R}:=\frac{Q_N((\chi_{\delta R}^{\otimes N-1}\Psi_{N-1} )\otimes f_R)}{\|Q_N((\chi_{\delta R}^{\otimes N-1}\Psi_{N-1}) \otimes f_R)\|}.
\end{equation}
For all $R$, we have $\Phi_{N,R}\in H^{1/2}(\R^{3N})\cap\Ran Q_N$.

Arguing as in the proof of~\eqref{eq:getridofQ}, we find that
\begin{align}\label{HN-ENPhi}
	\langle \Phi_{N,R},(\hat{H}_N-E_{N-1})\Phi_{N,R}\rangle=\frac{\langle (\chi_{\delta R}^{\otimes N-1}\Psi_{N-1} )\otimes f_R,(H_N-E_{N-1})(\chi_{\delta R}^{\otimes N-1}\Psi_{N-1} )\otimes f_R \rangle}{\|(\chi_{\delta R}^{\otimes N-1}\Psi_{N-1}) \otimes f_R\|^2}.
\end{align}

Let us first estimate the denominator of the right-hand side of~\eqref{HN-ENPhi}. Since, by Corollary~\ref{cor:expeigendecay}, $\Psi_{N-1}$ is exponentially decaying and  we have chosen $\Psi_{N-1}$ and $f_R$ with norm 1, we find that 
\begin{equation}\label{norm:chiPsi}\|(\chi_{\delta R}^{\otimes N-1}\Psi_{N-1}) \otimes f_R\|=1+O(e^{-cR}).\end{equation}

To estimate the numerator of the right-hand side of~\eqref{HN-ENPhi}, let us split 
\begin{equation}\label{split}
	H_N -E_{N-1} = H_N - H_{N-1} + H_{N-1} - E_{N-1}.
\end{equation}

Using that $( H_{N-1} - E_{N-1})$ does not act on the coordinate $x_N$ and that $( H_{N-1} - E_{N-1})\Psi_{N-1}=0$, we find 
\begin{align*} \langle(\chi_{\delta R}^{\otimes N-1}&\Psi_{N-1}\otimes f_R),(H_{N-1}-E_{N-1})(\chi_{\delta R}^{\otimes N-1}\Psi_{N-1}\otimes f_R)\rangle
=\Big\langle\chi_{\delta R}^{\otimes N-1}\Psi_{N-1},\sum_{j=1}^{N-1}[T_j,\chi_{\delta R}^{\otimes N-1}]\Psi_{N-1}\Big\rangle.
\end{align*}

As a consequence, since, by Corollary~\ref{cor:expeigendecay}, $\Psi_{N-1}$  is exponentially decaying, we see, using \eqref{expsmall}, that 

\begin{equation}\label{diag0}
	\langle(\chi_{\delta R}^{\otimes N-1}\Psi_{N-1}\otimes f_R),(H_{N-1}-E_{N-1})(\chi_{\delta R}^{\otimes N-1}\Psi_{N-1}\otimes f_R)\rangle =O(e^{-dR}).
\end{equation}
 On the other hand, using  $T\leq\frac{-\Delta}{2}$, we find
\begin{multline}\label{HNHN-1}
	\langle(\chi_{\delta R}^{\otimes N-1}\Psi_{N-1}\otimes f_R),(H_{N}-H_{N-1})(\chi_{\delta R}^{\otimes N-1}\Psi_{N-1}\otimes f_R)\rangle  \\
	\leq \left\langle\chi_{\delta R}^{\otimes N-1}\Psi_{N-1}\otimes f_R,\left(-\frac{\Delta_N}{2}-\sum_{j=1}^M\frac{ Z_j \alpha}{|x_N-y_j|}+\sum_{j=1}^{N-1}\frac{\alpha}{|x_N-x_j|}\right)(\chi_{\delta R}^{\otimes N-1}\Psi_{N-1}\otimes f_R)\right\rangle.
\end{multline}

Assume that $R$ is large enough such that for all $j$, $|y_j|\leq \delta R$.  For $x_N$ in the support of $f_R$, we have that $|x_N|=(1+\theta)R$, for some $\theta\in [\frac14,\frac34]$. Therefore, 
\begin{equation}\label{boundattr}
 \frac{1}{|x_N-y_j|}\geq\frac{1}{|x_N|+|y_j|}\geq\frac{1}{(1+\h + \delta)R}.
\end{equation}

Similarly, for $(x_1,...,x_{N-1})$ in the support of $\chi_{\delta R}^{\otimes N-1}\Psi_{N-1}$, we have $|x_j|<\delta R$  for all $j=1,...,N-1$. Therefore, 
\begin{equation}\label{boundrep}
\frac{1}{|x_N-x_j|}\leq\frac{1}{|x_N|-|x_j|}\leq\frac{1}{(1 +\h -\delta) R}.
\end{equation}
Using \eqref{boundattr} and \eqref{boundrep} and recalling \eqref{def:PhiNR} we find that, on $\supp \Phi_{N,R}$,
\begin{align*}
 -\sum_{j=1}^M \frac{ Z_j }{|x_N-y_j|}+\sum_{j=1}^{N-1}\frac{1}{|x_N-x_j|}\leq 
 \frac{(1+\h)(N-1-|\cZ|) + \delta(N-1+|\cZ|)}{(1 +\h + \delta)(1 +\h -\delta)R}.
\end{align*}
As a consequence, using \eqref{deltachoice}, we find, on $\supp \Phi_{N,R}$,
\begin{equation}\label{bound:pot}
 -\sum_{j=1}^N\frac{ Z_j }{|x_N-y_j|}+\sum_{j=1}^{N-1}\frac{1}{|x_N-x_j|}\leq \frac{(\frac{1}{2}+\h)(N-1-|\cZ|)}{(1 +\h + \delta)(1 +\h -\delta)R} \leq \frac{N-1-|\cZ|}{8R},
\end{equation}
where in the last step we used that $\frac{1}{2} + \h \geq \frac12$ and that
$(1+\h +\delta)(1+\h-\delta) \leq 4$ for all $\h \in [1/4,3/4]$ as well as the assumed inequality  $N-1-|\cZ|<0$.

Inserting  first~ $\langle f_R, (-\Delta_{x})f_R \rangle= \frac{1}{R^2}\langle f_1, (-\Delta_{x})f_1 \rangle$ 
	 and~\eqref{bound:pot} into~\eqref{HNHN-1} and then combining it with \eqref{diag0} and ~\eqref{split}, we find that
\begin{equation}\label{1ev}
	\langle(\chi_{\delta R}^{\otimes N-1}\Psi_{N-1}\otimes f_R),(H_{N}-E_{N-1})(\chi_{\delta R}^{\otimes N-1}\Psi_{N-1}\otimes f_R)\rangle  \leq -\frac{\alpha(|\cZ|+1-N)}{8R}+\frac{D_1}{R^2},\end{equation}
for some constant $D_1>0$.

Using~\eqref{def:PhiNR}, \eqref{norm:chiPsi} and~\eqref{1ev}, we find that there exists $C>0$ such that, for $R$ large enough,
\begin{equation}\label{ineq:HEPhi}
	\langle \Phi_{N,R},(\hat{H}_N-E_{N-1})\Phi_{N,R}\rangle\leq-\frac{C}{R}+\frac{D_1}{R^2},
\end{equation}
which implies that $\hat{H}_N$ has an eigenvalue below its essential spectrum since $E_{N-1}=\inf\sigma_{ess}(\hat{H}_N)$.


Let $n\in\N$.  We will prove that $\hat{H}_N-E_{N-1}$ has at least $n$ negative eigenvalues.  As far as we know, this part of the proof is new, even for atoms, since~\cite{bhv} does not consider this question. Let us consider a normed linear combination of $\Phi_{N,2^k}$'s with $k$ between $2n+1$ and $3n$: \begin{equation}\label{formXi}\Xi:=\sum_{i=1}^{n}c_i\Phi_{N,2^{2n+i}}, \quad \text{with} \quad \sum_{i=1}^n|c_i|^2=1.
\end{equation}
 Then, 
\begin{multline}\label{decompXi}\langle\Xi,(\hat{H}_N-E_{N-1})\Xi\rangle=\sum_{i=1}^n|c_i|^2
	\langle\Phi_{N,2^{2n+i}},(\hat{H}_N-E_{N-1})\Phi_{N,2^{2n+i}}\rangle
\\	+\sum_{i\neq j} \overline{c_i} c_j \langle\Phi_{N,2^{2n+i}},(\hat{H}_N-E_{N-1})\Phi_{N,2^{2n+j}}\rangle.
\end{multline}

Using~\eqref{ineq:HEPhi}, we find that, if $n$ is large enough,
\begin{align}\label{negdiagterms}
	\sum_{i=1}^n|c_i|^2
	\langle\Phi_{N,2^{2n+i}},(\hat{H}_N-E_{N-1})\Phi_{N,2^{2n+i}}\rangle \leq-\sum_{i=1}^n|c_i|^2\frac{C}{2^{2n+i}}+\frac{D_1}{4^{2n}}\leq -\frac{C}{8^n}+\frac{D_1}{16^n}.
\end{align}

For the { terms where $i\neq j$, we know that the functions $\Phi_{N,2^{2n+i}}$ and $\Phi_{N,2^{2n+j}}$ have disjoint supports in the $N$-th variable. As a consequence, the only parts which are not 0 are of the form
\[\langle\Phi_{N,2^{2n+i}},T_N \Phi_{N,2^{2n+j}}\rangle=\langle f_{2^{2n+i}},T f_{2^{2n+j}}\rangle.\]
 Since the supports of $f_{2^{2n+i}}$ and $f_{2^{2n+j}}$ have distance at least $2^{2n}$, we get from Lemma~\ref{expdecay:realT} that there exists $C>0$ such that\begin{equation}\label{smallnondiag} \langle\Phi_{N,2^{2n+i}}, T_N \Phi_{N,2^{2n+j}}\rangle\leq Ce^{-2^{2n}}.\end{equation}
Inserting~\eqref{negdiagterms} and~\eqref{smallnondiag} into~\eqref{decompXi}, we get that \[\langle\Xi,(\hat{H}_N-E_{N-1})\Xi\rangle<0\] for all such $\Xi$ in the form~\eqref{formXi}, given that $n$ is large enough.

The linear space spanned by the $\phi_{N,2^k}$ for $k$ between $2n+1$ and $3n$ is thus an $n$-dimensional space on which the quadratic form associated with $\hat{H}_N-E_{N-1}$ is negative. Therefore, $\hat{H}_N$ has at least $n$ eigenvalues below its essential spectrum. Since this is true for arbitrarily large $n$, there are infinitely many eigenvalues below the bottom of the essential spectrum.
\qed
\section*{Acknowledgements} 
 The research of all authors was funded by the Deutsche Forschungsgemeinschaft (DFG, German Research Foundation) - Project-ID 258734477  - SFB 1173. The one of S.Z. was also  supported by the Basque Government through the BERC 2022-2025 program and by the Ministry of Science and Innovation: BCAM Severo Ochoa accreditation SEV-2017-0718 and PID2020-112948GB-I00 funded by MCIN/AEI/10.13039/501100011033 and by "ERDF A way of making Europe". The research of M.O. was partially supported by the grant 0135-00166B from the Independent Research Fund Denmark.
  I.A. is grateful to  Mathieu Lewin for numerous inspiring discussions  on the problem of isomerizations.  He also acknowledges discussions with Roland Schnaubelt, which improved the presentation of the Introduction and of Section 1. We thank Semjon Vugalter and Jean-Marie Barbaroux for suggesting to us the problem of extending the results of \cite{al} in the semirelativistic case. All authors are grateful to Dirk Hundertmark for communicating  that the results of \cite{bhv} work if the irreducibility assumption is replaced by the assumption that all ground states of all atoms are spherically symmetric and for discussing the importance of this. This was an inspiration for us to relax the irreducibility assumption, as well.
\bigskip

\bibliographystyle{elsarticle-num.bst}

\end{document}